\pgfplotsset{compat=1.10}
\newtheorem{theorem}{Theorem}[section]
\newtheorem{corollary}{Corollary}[theorem]
\newtheorem{lemma}[theorem]{Lemma}
\newtheorem{proposition}[theorem]{Proposition}
\newtheorem{definition}[theorem]{Definition}
\providecommand{\customgenericname}{}
\newcommand{\newcustomtheorem}[2]{%
	\newenvironment{#1}[1]
	{%
		\renewcommand\customgenericname{#2}%
		\renewcommand\theinnercustomgeneric{##1}%
		\innercustomgeneric
	}
	{\endinnercustomgeneric}
}
\DeclareMathOperator*{\argmax}{argmax} % thin space, limits underneath in displays
\begin{document}
	
	\newgeometry{top=10mm, bottom=20mm}

\title{\textbf{Trade among moral agents with information asymmetries}\thanks{I am grateful to Ingela Alger, Eric Danan, Pierre Fleckinger, Ernesto Gavassa-Pérez, Peter Hammond, Pau Juan-Bartroli, Frederick Koessler, Pauline Morault, Esteban Mu\~{n}oz-Sobrado, R\'{e}gis Renault and Boris van Leeuwen for helpful discussions. I also thank audiences at the Economic Science Association (ESA Bologna 2022) and the Foundations of Utility and Risk (FUR Brisbane 2024) conferences, at seminars at Catholic University of Uruguay, CY Cergy Paris University, Universidad de la Rep\'{u}blica, and Uruguayan Central Bank for feedback.}}
\date{ \normalsize \today }

\author{Jos\'e Ignacio Rivero-Wildemauwe\thanks{Department of Management and Business, Universidad Católica del Uruguay, Montevideo, Uruguay. \\\color{blue}joseignacio.rivero@ucu.edu.uy}}

\maketitle
\thispagestyle{empty}
{\small \noindent {\textbf{Abstract:} Two agents trade an item in a simultaneous offer setting, where the exchange takes place if and only if the buyer's bid price weakly exceeds the seller's ask price. Each agent is randomly assigned the buyer or seller role. Both agents are characterized by a certain degree of Kantian morality, whereby they pick their bidding strategy behind a Veil of Ignorance, taking into account how the outcome would be affected if their trading partner adopted their strategy. I consider two variants with asymmetric information, respectively allowing buyers to have private information about their valuation or sellers to be privately informed about the item's quality. I show that when all trades are socially desirable, even the slightest degree of morality guarantees that the outcome is fully efficient. In turn, when quality is uncertain and some exchanges are socially undesirable, full efficiency is only achieved with sufficiently high moral standards. Moral concerns also ensure equal \textit{ex-ante} treatment of the two agents in equilibrium. Finally, I show that if agents are altruistic rather than moral, inefficiencies persist even with a substantial degree of altruism.}}

\medskip 

{\noindent \small \textbf{JEL codes}: D03; D82; D91; C78
	
\noindent \textbf{Keywords:} bilateral trade; adverse selection; \textit{homo moralis}; Veil of Ignorance; altruism }
\restoregeometry

\newpage

	\section{Introduction}\label{intro} 
	
	Asymmetric information constitutes one of the main departures from perfect competition. It can produce significant efficiency losses in trade, as mutually beneficial exchanges may fail to occur. The \cite{myerson1983} theorem is a stark illustration of this issue. While the effects of information asymmetries have been largely studied through models populated by purely self-interested materialistic agents, there is now substantial evidence documenting deviations from pure material self-interest and a vast literature modelling this behaviour (see reviews by \cite{fehr2003} or  \cite{camerer2003}). 
	
This paper focuses more specifically on the implications of agents weighing how they themselves would fare if \textit{hypothetically}, other individuals also adopted their own course of action. This kind of universalization reasoning is potentially very consequential in bilateral trade interactions, as it might prevent agents from fully exploiting informational advantages, rendering the outcomes of such interactions more efficient. The existing theoretical literature already shows that universalization ethics concerns help to remedy market failures in a host of different settings (see \cite{alger2016b}). Moreover, there exists by now a large body of empirical evidence that justifies its serious study .\footnote{Experimental evidence consistent with the presence of Kantian morality is reported in \cite{levine2020} and Alger and Rivero-Wildemauwe (2023). Kantian preferences’ large out-of-sample predictive power is documented in \cite{miettinen2020} and \cite{vanleeuwen2023}. For evidence of deontological thinking, see for example \cite{CapraroRand2018}, \cite{CapraroTappin2018}, \cite{bursztyn2019}, \cite{Capraro2019}, \cite{bilancini2020} and \cite{sutter2020}. For the theory grounding these particular moral motivations in evolutionary processes, refer to \cite{alger2013} and \cite{alger2020}.} 

	Here I analyse the extent to which universalization ethics concerns (or \textit{homo moralis} preferences) \textit{à la} \cite{alger2013} can solve inefficiencies originated in an unequal distribution of information in bilateral trade.\footnote{Theirs is not the only way to consider universalization ethics in such a setting. A very relevant alternative is the approach described in \cite{roemer2010}, where the notion of a ``Kantian Equilibrium'' is proposed. This is defined as a strategy profile where no player would like all players to deviate in the same way.} In this context, an agent who finds herself in the \textit{Seller} role will take into account her own payoff if she were a buyer and happened to meet a seller who employs her own strategy. Likewise, an individual in the \textit{Buyer} role applies the same reasoning with the roles reversed. Exactly how much agents care about this counter-factual scenario is measured by a parameter henceforth referred to as the \textit{degree of morality}. 
	 
	In order to capture this precise setting, I assume that agents play a bilateral trading game where one is cast into the \textit{Buyer} role and the other one into the \textit{Seller} role, but they make their decisions behind a \textit{Veil of Ignorance}. That is, without knowing what the final role distribution will be. This in turns implies that each participant needs to choose their \textit{Seller} strategy as well as their \textit{Buyer} strategy before learning their role.

	After roles have been assigned, agents play a game where they trade an item in a simultaneous offer setting. The exchange takes place if and only if the \textit{Buyer}’s bid price weakly exceeds the \textit{Seller}’s ask price. This simple setting manages to effectively capture bargaining processes where agents are able to commit to not accepting offers worse than the ones they stipulate in their strategy prior to entering the negotiation process proper.  
	
	Due to the simultaneous-move nature of the game, the framework does not assume asymmetries in bargaining power (\cite{wildemauwe2023} studies a sequential game between potentially moral where the \textit{Buyer} holds full bargaining power). It also incorporates information asymmetries between \textit{Buyer} and \textit{Seller}. More precisely, I consider two kinds of canonical information asymmetries that have been well studied under the assumption that agents only strive to maximise their own material welfare. The first one is the case where the \textit{Buyer}'s valuation for the good is her own private information and the \textit{Seller} can only resort to uniform pricing. The second one is adverse selection. In this scenario, the information asymmetry favours the \textit{Seller}, who knows the quality of her product (which may be high or low), while the \textit{Buyer} can only form an expectation about it. 

%%%%%%%%%%%%%%%%%%%%%%%%%%%%%%%%%%%%%%%%%
	In the case where the \textit{Seller} is uncertain about the \textit{Buyer}'s valuation, the model always produces two kinds of equilibria. The first one is fully efficient as no consumers are excluded from the market. This result is a consequence of the one-shot nature of the game I consider. The second one features low-valuation consumers being excluded from the market (as in the standard monopoly model) even though costs are below the lowest consumer valuation. Introducing moral concerns completely mutes this source of inefficiency, with only the most efficient profiles (those where all consumers are served) surviving as equilibria.
%%%%%%%%%%%%%%%%%%%%%%%%%%%%%%%%%%%%%%%%%

	In the adverse selection setting I consider two variants. Firstly, I examine the case where the trade of both qualities produces a net positive surplus. Secondly I tackle the case where trade of the low quality item is socially undesirable, as its cost is higher than the consumer's valuation for it.
	
	When trade of both qualities produces a net positive surplus, a sufficiently high expected quality leads to equilibria that feature either no trade, only trade of the low quality or trade of both qualities (the latter being the efficient result). In turn, if expected quality is low, only equilibria with no trade or where high quality sellers are driven out of the market are possible. Introducing even the slightest degree of morality (as long as agents are not completely Kantian) fully restores efficiency, as only equilibria with trade of both qualities are possible in this case. 
	
	So far, both the case where the \textit{Seller} is uncertain about the \textit{Buyer}'s valuation and the adverse selection setting with desirable trade lead to the same conclusion: going from a game populated by agents only concerned with their own material payoff (\textit{homo oeconomicus}) to the same one but with agents holding moral concerns eliminates all but the efficient equilibria. These surviving equilibria are also symmetric in strategies, producing an \textit{ex-ante} egalitarian result. It can be stated then that introducing \textit{homo moralis} preferences in these contexts (starting from the benchmark self-interested materialistic agents) narrows down the equilibrium set, rather than introducing equilibrium behaviour that would not arise with \textit{homo oeconomicus}. It is worth underscoring that for this narrowing down to occur, all it takes is for agents to be even slightly morally concerned. In other words, the model does not feature a positive threshold degree of morality for the efficient equilibria to be the only outcome possible, as I instead find a stark discontinuity at zero, which implies that even the smallest degree of morality suffices to eliminate all inefficient equilibria.
	
	The situation where trade of the low quality item is undesirable presents important differences with respect to the previous ones. Firstly, the most efficient result, which is having only trade of the high quality item, is not an equilibrium when agents are \textit{homo oeconomicus}. However, a sufficiently high degree of morality leads to these kind of profiles becoming equilibria and moreover, the only kind possible. Thus, in this case morality does not act as a mere equilibrium selection device. Rather, it allows for the existence of a fully efficient equilibrium and, given a high enough degree of morality, it also selects on it.
	
	Finally, I return to all the model variants described above and compare my results with the case where players are endowed with altruistic preferences \textit{à la} \cite{becker1976}. The general conclusion is that it takes a relatively high degree of altruism to fully restore efficiency even in the settings where all trade is desirable. This result contrasts sharply with the effects of morality, as in most cases partially Kantian agents manage to coordinate on the efficient equilibria regardless of their degree of morality.

	This paper is broadly related to the theoretical literature that studies the effects of pro-social preferences and moral motivations on the equilibrium outcomes in a host of strategic interactions (see e.g. \cite{arrow73}, \cite{becker1976}, \cite{andreoni1990}, \cite{bernheim1994}, \cite{levine1998}, \cite{fehr1999}, \cite{akerlof2000}, \cite{bernabeu2006}, \cite{alger2007}, \cite{ellingsen2008}, \cite{englmaier2010}, \cite{dufwenberg2011}). However, the closest references are those that analyse the effects of moral concerns \textit{à la} \cite{alger2013}. 		

	In this line, \cite{alger2017} study several canonical games and compare the outcomes when players have standard, altruistic and moral preferences. Meanwhile, \cite{sarkisian2017} analyses the optimal contract in a moral hazard problem where a risk-neutral principal hires two risk-averse agents to perform a joint task and may be endowed with either standard, altruistic and moral preferences. \cite{ayoubi2020a} rely on Kantian moral concerns to explain why agents may engage in costly pro-environmental behaviour. \cite{munoz2022} tackles optimal linear and non-linear taxation problems with moral agents. Finally, \cite{juan2024moral} study a standard Nash bargaining game where individuals have moral concerns.
	
	The aforementioned references have in common the fact that they deal with symmetric games where information is equally distributed. In contrast, this paper focuses on contexts with different information structures and importantly, where information asymmetries are central. This in turn implies that the very nature of the interaction is by definition asymmetric. The \textit{Veil of Ignorance} approach that I formalise here allows me to carry out the analysis of this \textit{a priory} asymmetric situation through the lens of \textit{homo moralis}, and thus constitutes a contribution to this literature. This modelling strategy thus allows me to document the first set of results on how moral concerns interact with asymmetric information in a bilateral trade setting.

	The rest of the paper is organised as follows. Section \ref{sec:prelim} presents the general framework and the benchmark bilateral trade game under complete information. Section \ref{sec:incompinfo:valunc} extends the model to consider the case where buyers are heterogeneous in their valuations but the \textit{Seller} can only set one price. Section \ref{sec:incompinfo:quality} analyses the adverse selection case, where the buyer does not know the product's quality. Section \ref{sec:alt} revisits the previous models now populated by altruistic agents. Finally, Section \ref{sec:disc} provides an overall discussion of my results.

%%	
%	%----------------------------------------------------------------------------------------
%	
%	
\section{Framework} \label{sec:prelim}

		As stated in the Introduction, a simple and tractable way to consider agents capable of universalising their choices in an asymmetric setting is to take an \textit{ex-ante} approach, analysing decisions made behind a  ``Veil of Ignorance'' with respect to the role distribution. In this line, I set up an environment where two identical agents play one of two games, with Nature deciding which one is to be played: either the one that has Player 1 in the first role (say, \textit{Seller}) and therefore Player 2 in the second role (\textit{Buyer}) or vice-versa.
				
		More formally, consider $G^1$, a ``contingent game'' between players $1$ and $2$. Player 1 chooses strategy $s_1 \in S$ and Player 2 chooses strategy $b_2 \in B$. They respectively obtain the payoffs $\pi^s(s_1,b_2)$ and  $\pi^b(s_1,b_2)$, where $\pi^s,\pi^b: S\times B \rightarrow \mathbb{R}$. In addition, take the contingent game $G^2$, a game between players $1$ and $2$ where Player 1 chooses strategy $b_1 \in B$ and Player 2 chooses strategy $s_2 \in S$. They respectively obtain the payoffs $\pi^b(s_2,b_1)$ and  $\pi^s(s_2,b_1)$.
		
		Next, define $G$ as the game where Nature randomly chooses whether $G^1$ or $G^2$ is played, with equal probabilities. The \textit{ex-ante} material payoff obtained by Player $i\in\{1,2\}$ when using strategy $(s_i,b_i)\in S\times B$ against Player $j$'s strategy $(s_j,b_j)\in S\times B$ (with $i\neq j$) is:
		
		\begin{equation}\label{exantepayoff}
	\pi\left((s_i,b_i),(s_j,b_j))\right) = \dfrac{\pi^s(s_i,b_j)+\pi^b(s_j,b_i)}{2}.
		\end{equation}
		
		Finally, define the \textit{utility} that Player $i$ derives from choosing  $(s_i,b_i)\in S\times B$ against $(s_j,b_j)\in S\times B$ as:		
		\begin{equation}\label{exanteutility}
			\begin{aligned}
		U\left((s_i,b_i),(s_j,b_j))\right) = (1-\kappa)\cdot\pi\left((s_i,b_i),(s_j,b_j))\right)+\kappa\cdot\pi\left((s_i,b_i),(s_i,b_i))\right),
				\end{aligned}
		\end{equation}

		\noindent where $\kappa\in[0,1]$ is a parameter representing the player's \textit{degree of morality}. When choosing a strategy in the game $G$, each agent maximizes a convex combination of her own payoff and the one she would obtain if the other player chose the same strategy as her. The term $\pi\left((s_i,b_i),(s_i,b_i)\right)$ reflects the moral element in the agent's decision-making process, as it provides an answer to the question ``what would my payoff be if the other player behaved as I do?'', while $\kappa \in [0,1]$ weighs how much the agent cares about this moral component. The second term in (\ref{exanteutility}) can thus be interpreted as a way to reflect Kant's categorical imperative, ``to act only on the maxim that you would at the same time will to be a universal law'' (\cite{kant2011}), mediated by $\kappa$.
		
		At one end of the morality spectrum are the agents with $\kappa=0$, to whom I refer as \textit{homo oeconomicus}. This is the ``standard'' case of individuals who are merely trying to maximise their own material payoff. On the other end are players with $\kappa=1$, whom (as \cite{alger2013}) I denominate \textit{homo kantiensis}. These purely Kantian agents only care about their (hypothetical) payoff when the opposing party employs the same strategy as theirs. That is, they only think about what would happen if both players in the game behaved in the same way as they do and decide irrespectively of the actual strategy chosen by the other player in their game. Finally, $\kappa \in (0,1)$ gives way to partially moral agents or \textit{homo moralis}, who put some weight on the morality of their actions while at the same time considering their own material payoff.
					
		The setting I am proposing can be used for the Veil-of-Ignorance analysis of any two-player game and is not limited to bilateral trade or bargaining processes. In order to specifically tackle a given two-player interaction, it suffices to appropriately define $S, B,\pi^s,\pi^b$ and specify $\kappa$. In this paper, contingent game $G^i$ is a game where a \textit{Seller} is in possession of an object and posts a price for it, while the \textit{Buyer} simultaneously decides which prices she is willing to accept. In this line, it can also be viewed as a \textit{contingent market} where Player $i$ is the \textit{Seller} and Player $j$ is the \textit{Buyer}. If the \textit{Seller}'s price belongs to the \textit{Buyer}'s set of acceptable prices, trade takes place at that price. If it does not, trade does not occur. 
		
%		Importantly, players make their decisions \textit{before} learning their role, and thus need to devise a set of acceptable prices for the case they end up as buyers and a selling price for the case they are assigned the seller role.	The Veil-of-Ignorance viewpoint allows me to capture the distinguishing characteristics of \textit{homo moralis} agents: the ability to view the situation if the role distribution was different and actually care about it.\footnote{Notice, in addition that there exist numerous examples of markets where it is reasonable to think that individuals will eventually play both roles. One could argue that this \textit{ex-ante} approach may be somewhat artificial in the sense that agents participating in a market are generally aware of which side of the market they are on. However, the \textit{ex-ante} perspective will only have consequences on the equilibria attained if agents have moral concerns. In contrast, the equilibria with consequentialistic agents (\textit{homo oeconomicus} or altruists) will be the same both from the \textit{ex-ante} point of view or from an \textit{ex-post} approach where agents decide after learning whether they are sellers or buyers. This will be made clearer further on.}  
%		
				
	\subsection{Benchmark: Complete information} \label{sec:compinfo}

	I now present the complete information game that I will use as a benchmark. The agent assigned to the \textit{Seller} role (say, Player $i$) sets a price $p_i \in \mathbb{R}_+$. Simultaneously, the player cast as a \textit{Buyer} (say, Player $j$) chooses a threshold $\bar p_j \in \mathbb{R}_+$
	that defines her set of acceptable prices as all those that belong to the interval $[0, \bar p_j]$.\footnote{Throughout the paper, I restrict attention to strategy spaces where buyers' sets of acceptable prices are defined by thresholds.} The agent assigned to the seller role faces a cost of $r\in\mathbb{R}_+$ if they sell the object, while the player in the buyer role derives some consumption utility $v\in \mathbb{R}_+$	from acquiring the item. If no trade happens, the payoff obtained by each agent is nil. 
	
	In terms of the framework laid out at the beginning of the Section, the above describes the contingent game $G^i$, where $s_i$ is $p_i \in \mathbb{R}_+$ and $b_j$ is $\bar p_j \in \mathbb{R}_+$. This  means that $S\times B = \mathbb{R}^2_+$. The contingent payoffs $\pi^s(p_i,\bar p_j$) and $\pi^b(p_i,\bar p_j)$ are then defined as: 
	
	\begin{equation*}\label{contpayoff}
	\begin{aligned}
		\pi^s(p_i,\bar p_j)=: \mathbb{1}\{ p_i \leq \bar p_j \} (p_i - r),\\
		\pi^b(p_i,\bar p_j)=:\mathbb{1}\{p_i \leq \bar p_j \} (v - p_i) .
	\end{aligned}
	\end{equation*}
	
\noindent Substituting $\pi^s(p_i,\bar p_j$) and $\pi^b(p_i,\bar p_j)$ in expressions (\ref{exantepayoff}) and (\ref{exanteutility}) we obtain the \textit{ex-ante} payoff and utility, respectively:
	\begin{equation}\label{eq:payoff:compinfo}
		\pi\left((p_i,\overline{p}_i);(p_j,\overline{p}_j)\right)= \dfrac{1}{2}\left[\mathbb{1}\{ p_i \leq \overline{p}_j \} (p_i - r)  + \mathbb{1}\{p_j \leq \bar p_i \} (v - p_j)\right],	
	\end{equation}
	
	\begin{equation}\label{eq:compinfo:hm:ut2}
 	\begin{aligned}
	 U\left((p_i,\overline{p}_i);(p_j,\overline{p}_j)\right)=&(1-\kappa) \cdot \dfrac{1}{2}\left[\mathbb{1}\{ p_i \leq \overline{p}_j \} (p_i - r)  + \mathbb{1}\{p_j \leq \bar p_i \} (v - p_j) \right] +\\
	 &\kappa\cdot\dfrac{1}{2}\cdot\mathbb{1}\{ p_i  \leq \bar p_i \}(v-r)	 
	 	\end{aligned}
	\end{equation}

	Notice that the term inside the parenthesis in the last line of Expression (\ref{eq:compinfo:hm:ut2}) is the total net trade surplus. I consider a materially efficient outcome (or, simply put, an efficient outcome) as the one that maximises this surplus. As a result, the efficient equilibria in the complete information game are those that induce trade in both contingent markets.
	
	My main concern throughout the paper is efficiency (although I do study the equity properties of the equilibrium outcomes whenever relevant). Notice that when considering a given strategy profile of the bilateral trade game, in each contingent market it can either generate trade (if the price in that market is no larger than the threshold) or not. As a consequence, we can classify each profile into a type $A/B$ where $A$ indicates whether there is trade in the first contingent market and $B$ in the second one. For the complete information game studied in this Section, profiles can be of four types: Trade/Trade, Trade/No Trade, No Trade/Trade or No Trade/No Trade.
	
	I begin with the version of the model where $\kappa = 0$ (so players only care about their own material payoff), which I use as a benchmark. In order to find the game's Nash equilibria in pure strategies for this case, I start by noticing that, when deciding on her price if in the \textit{Seller} role, the only way this action affects an agent's payoff is through its relation with the other player's threshold. Likewise, her action as \textit{Buyer} only influences her payoff through its interaction with the other player's price. It then follows that each contingent game can be analysed separately, and the equilibrium set of the bilateral trade game is comprised of all the ordered pairs formed by an equilibrium in $G^1$ and an equilibrium in $G^2$. Proposition \ref{proposition:compinfo:selfish:maineq} summarises them.\footnote{All proofs are relegated to the Appendix.}

		\begin{proposition}\label{proposition:compinfo:selfish:maineq} 
			Consider the complete information bilateral trade game with $\kappa = 0$. The Nash equilibria in pure strategies are:		
			\begin{itemize}
				\item Trade/Trade: $\overline{p}^*_i= p_j^*\in[r,v]$ for all $i\neq j \in \{1,2\} $
				\item Trade/No Trade or No Trade/Trade: $\overline{p}^*_i= p_j^*\in[r,v] , \overline{p}^*_j<r, p^*_i > v$ for all $i\neq j \in \{1,2\} $
				\item No Trade/No Trade: $\overline{p}^*_i<r,p^*_j > v$ for all $i\neq j \in \{1,2\} $
			\end{itemize}				
		
		\end{proposition}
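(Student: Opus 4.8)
The plan is to exploit the separability already noted in the text. Since $\kappa=0$, the payoff in (\ref{eq:payoff:compinfo}) is additively separable across the two contingent markets: a player's seller action $p_i$ enters her payoff only through the rival's threshold, and her buyer action $\bar p_i$ only through the rival's price. Hence a profile is a Nash equilibrium of the bilateral trade game if and only if $(p_1^*,\bar p_2^*)$ is a Nash equilibrium of $G^1$ and $(p_2^*,\bar p_1^*)$ is a Nash equilibrium of $G^2$. It therefore suffices to characterize the pure-strategy equilibria of a single contingent game $G^i$ --- a seller choosing $p_i\in\mathbb{R}_+$ against a buyer choosing $\bar p_j\in\mathbb{R}_+$ with payoffs $\pi^s,\pi^b$ --- and then take the Cartesian product of the two equilibrium sets; the four types in the statement are exactly the four ways of pairing a \emph{trade} or \emph{no-trade} equilibrium of $G^1$ with one of $G^2$.

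For a single contingent game I would first write down the two best-response correspondences. Given $\bar p_j$, the seller's payoff $\mathbb{1}\{p_i\le\bar p_j\}(p_i-r)$ is maximized by pricing exactly at $p_i=\bar p_j$ whenever $\bar p_j\ge r$ (trading at the highest acceptable price) and by any $p_i>\bar p_j$ whenever $\bar p_j\le r$ (refusing a loss-making sale); at $\bar p_j=r$ both are optimal. Symmetrically, given $p_i$, the buyer trades (sets $\bar p_j\ge p_i$) when $p_i\le v$ and refuses (sets $\bar p_j<p_i$) when $p_i\ge v$, being indifferent at $p_i=v$. Intersecting these correspondences yields two families of fixed points. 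In any equilibrium with trade one must have $p_i^*=\bar p_j^*$ --- otherwise the seller strictly gains by raising her price up to the threshold --- and the individual-rationality constraints $p_i^*-r\ge 0$ and $v-p_i^*\ge 0$ then force $p_i^*=\bar p_j^*\in[r,v]$. In any equilibrium with no trade the non-deviation conditions are that the seller does not wish to undercut to the threshold ($\bar p_j^*\le r$) and the buyer does not wish to raise her threshold to the price ($p_i^*\ge v$), which, together with $p_i^*>\bar p_j^*$, delimit the no-trade region.

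Having characterized $G^i$, I would combine the two contingent games: pairing trade with trade gives the Trade/Trade profiles $\bar p_i^*=p_j^*\in[r,v]$; pairing trade in one market with no trade in the other gives the Trade/No Trade (resp. No Trade/Trade) profiles, where the traded market satisfies $\bar p_i^*=p_j^*\in[r,v]$ while the untraded market has a low threshold and a high price; and pairing no trade with no trade gives the No Trade/No Trade profiles. The step requiring the most care is exhaustiveness together with the boundary behaviour: one must verify that no ``interior'' trade profile with $p_i^*<\bar p_j^*$ survives (ruled out by the seller's upward deviation) and that nothing outside these two families is a mutual best response. The knife-edge indifferences --- the seller at $\bar p_j=r$ and the buyer at $p_i=v$ --- are what determine whether the inequalities bounding the no-trade region are weak or strict, and I would flag them explicitly, noting throughout that the statement presumes $r\le v$ so that $[r,v]$ is nonempty.
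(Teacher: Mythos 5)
Your proposal is correct and takes essentially the same route as the paper's proof: exploit the $\kappa=0$ separability to reduce the bilateral game to the two contingent games, write the seller's and buyer's best-reply correspondences, and intersect them, then form all pairs. If anything, your explicit handling of the knife-edge indifferences at $\bar p_j = r$ and $p_i = v$ is more careful than the paper's, whose stated best replies and no-trade region use strict inequalities and thus gloss over the boundary no-trade equilibria (e.g.\ $\bar p^*_j = r$, $p^*_i > v$), which are also mutual best responses.
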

	
		The complete information game between two \textit{homo oeconomicus} has three types of equilibrium profiles. The first one is the Trade/Trade result, where exchanges take place in both contingent markets at prices between seller cost and consumer valuation. The second kind are No Trade/No Trade equilibria, where there is no trade in either contingent market. Here, thresholds are set below seller costs and prices are above consumer valuation. The third sort of equilibria are those with contingent trade, where there is trade in one of the contingent markets (at a price between cost and valuation) but not in the other (where the threshold is set below cost and the price is above consumer valuation).

		%%%%%%%%%%%%%		%%%%%%%%%%%%%										
		Recall that I assume throughout that $r<v$, so a materially efficient outcome in this model is for trade to actually happen. We see then that the second and third types of equilibrium profiles (those with no trade or with contingent trade) constitute a sort of coordination failure, a negotiation breakdown that is brought about by the simultaneity in the agents' decisions. In this regard, we could say that the particular set-up of the game is ``stacked'' against the occurrence of socially beneficial exchanges. The following sections focus on additional inefficiencies stemming from asymmetric information and explore whether they are mitigated in the presence of moral concerns.
		%%%%%%%%%%%%%		REESCRIBIR		%%%%%%%%%%%%%										
		
		Having established this benchmark, I explore whether considering partially moral agents (for whom $\kappa \in(0,1)$) modifies this result.\footnote{The approach of solving the contingent game to find the game's equilibrium set is not applicable in this case. This is due to the fact that when setting her price, each agent needs to take not only her opponent's threshold into consideration, but also her own. As a consequence, I take the alternative and admittedly more cumbersome route of analysing each equilibrium type (no-trade, full trade and contingent trade), and discarding those profiles for which profitable deviations are available (see Lemmas \ref{lemma:compinfo:hm:contnotrade} to \ref{lemma:compinfo:hm:fulltrade1} in the Appendix). Following this, I show that the remaining candidate equilibria are indeed mutual best responses and characterise them (see Lemma \ref{lemma:compinfo:hm:fulltrade3}).} The following Proposition lays out the bilateral trade game's equilibrium set with \textit{homo moralis} agents.
	\begin{proposition}\label{prop:compinfo:hm:eq}
		Consider the complete information bilateral trade game with $\kappa \in (0,1)$. The Nash equilibria in pure strategies are $r\leq p^*_i= \bar p^*_j = p^*_j = \bar p^*_i\leq v, \forall i,j\in\{1,2\}, i\neq j$. Moreover, it suffices for only one of the players to have $\kappa>0$ to eliminate No Trade/No trade, Trade/No Trade or No Trade/Trade equilibria.
	\end{proposition}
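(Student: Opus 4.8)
The plan is to attack the best-response problem directly, since (as the footnote observes) the coupling through the agent's own threshold prevents solving the two contingent games separately. Fix Player $j$'s strategy $(p_j,\bar p_j)$ and read off from (\ref{eq:compinfo:hm:ut2}) Player $i$'s objective,
\[
\tfrac{1-\kappa}{2}\,\mathbb{1}\{p_i\le \bar p_j\}(p_i-r)+\tfrac{1-\kappa}{2}\,\mathbb{1}\{p_j\le \bar p_i\}(v-p_j)+\tfrac{\kappa}{2}\,\mathbb{1}\{p_i\le \bar p_i\}(v-r),
\]
in which the seller price $p_i$ enters only the first term, the threshold $\bar p_i$ only the last two, and the two choices interact solely through the moral indicator $\mathbb{1}\{p_i\le\bar p_i\}$. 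The observation that organises the whole argument is that raising $\bar p_i$ is costless: it can only turn the indicators $\mathbb{1}\{p_j\le\bar p_i\}$ and $\mathbb{1}\{p_i\le\bar p_i\}$ from $0$ to $1$, and both coefficients are non-negative once $p_j\le v$. Consequently, in any best response the agent sets $\bar p_i\ge\max\{p_i,p_j\}$, pocketing the strictly positive moral rent $\tfrac{\kappa}{2}(v-r)$ at no material cost, and is then left to pick $p_i$ to maximise the seller term alone, i.e.\ $p_i=\bar p_j$ when $\bar p_j\ge r$ and ``no sale'' otherwise.

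I would first rule out the three inefficient configurations, which simultaneously delivers the ``moreover'' claim. In the No Trade/No Trade and contingent-trade profiles of Proposition \ref{proposition:compinfo:selfish:maineq}, the non-trading directions force the opponent's threshold to $r$ or below and her price to $v$ or above; these same inequalities (now required to block trade-creating deviations in either direction) must hold in any candidate equilibrium, and they imply $p_i>v>r\ge$ the relevant threshold, so the morally concerned player's indicator $\mathbb{1}\{p_i\le\bar p_i\}$ is off. A player with $\kappa>0$ can then move to any $p_i=\bar p_i\in[r,v]$, which activates the moral term for a strict gain of $\tfrac{\kappa}{2}(v-r)$ while, by the bounds just described, leaving every material indicator unchanged. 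Hence a single morally concerned player destroys each such profile, and with both $\kappa\in(0,1)$ none of the three types survives.

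It then remains to characterise the Trade/Trade equilibria. By the best-response structure above, trade in both directions forces $p_i=\bar p_j\ge r$ and $p_j=\bar p_i\ge r$ (each seller charges the highest price the partner accepts), together with $\bar p_i\ge\max\{p_i,p_j\}$ and $\bar p_j\ge\max\{p_i,p_j\}$ (each agent keeps her own threshold above both prices to collect the moral rent and the buyer surplus). Substituting $p_i=\bar p_j$ and $p_j=\bar p_i$ into the two threshold inequalities gives $\bar p_i\ge\bar p_j$ and $\bar p_j\ge\bar p_i$, hence $\bar p_i=\bar p_j$ and therefore $p_i=\bar p_j=p_j=\bar p_i=:x$; the requirement that neither agent sell below cost nor buy above value pins $x\in[r,v]$. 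For the converse I would verify that at such a profile each of the three terms attains its maximal attainable value ($\tfrac{1-\kappa}{2}(x-r)$ under $p_i\le\bar p_j=x$, $\tfrac{1-\kappa}{2}(v-x)$, and the full moral rent), so that the agent's utility equals $\tfrac12(v-r)$ and no deviation can improve on it, confirming these are mutual best responses.

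The main obstacle is precisely the coupling introduced by the moral indicator: because the seller price and the threshold jointly determine whether $\tfrac{\kappa}{2}(v-r)$ is collected, one cannot optimise the two contingent markets in isolation and must track joint deviations (raising one's own threshold alongside one's price). The attendant delicacy is the knife-edge behaviour at prices equal to $r$ or $v$, where trades yield zero surplus and indifference renders several candidate deviations only weakly profitable; these boundary profiles have to be checked one by one, and it is where the one-sided ``moreover'' statement is most sensitive, which is also why the clean characterisation is stated for $\kappa\in(0,1)$ rather than at the endpoints.
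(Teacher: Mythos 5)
Your proposal follows essentially the same route as the paper: eliminate profiles with no trade in some contingent market via deviations that switch on the moral indicator at no material cost, force the equalities $p_i=\bar p_j=p_j=\bar p_i$ in any full-trade equilibrium, and verify that the symmetric profiles with common value in $[r,v]$ are mutual best responses. Your full-trade analysis and your No Trade/No Trade elimination are correct and correspond to the paper's Lemmas \ref{lemma:compinfo:hm:fulltrade1}, \ref{lemma:compinfo:hm:fulltrade3} and \ref{lemma:compinfo:hm:notrade}.

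There is, however, a genuine gap in your elimination of the contingent-trade (Trade/No Trade) profiles. In such a candidate equilibrium---say Player 1 sells to Player 2 at $p_1=\bar p_2\in[r,v]$, with $\bar p_1\le r$ and $p_2\ge v$ in the non-trading direction---\emph{neither} player is in the configuration $p_i>v>r\ge\bar p_i$ that your argument invokes: Player 2 has $p_2\ge v$ but her own threshold is $\bar p_2=p_1\in[r,v]$, and Player 1 has $\bar p_1\le r$ but her own price is $p_1\in[r,v]$. (Their moral indicators are generically still off, but for different reasons.) Consequently the asserted deviation---``move to any $p_i=\bar p_i\in[r,v]$, leaving every material indicator unchanged''---is false here, because the deviating player is actively trading and that trade is not invariant to the move: if Player 1 sets $p_1'=\bar p_1'=x$ with $x\neq p_1$, she either sells at a strictly lower price or stops selling, losing up to $\tfrac{1-\kappa}{2}(p_1-r)$; if Player 2 sets $x<p_1$, she forfeits the buyer surplus $\tfrac{1-\kappa}{2}(v-p_1)$. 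For small $\kappa$ these losses exceed the moral gain $\tfrac{\kappa}{2}(v-r)$, so not every member of your deviation family is profitable, and the step as written does not close the case. The repair is to choose the deviation so that it preserves the existing trade: the trading seller keeps $p_1'=p_1$ and raises only her threshold to $\bar p_1'=p_1$, while the trading buyer picks $x\in[p_1,v]$. This is exactly the case analysis carried out in the paper's Lemmas \ref{lemma:compinfo:hm:contnotrade} and \ref{lemma:compinfo:hm:NT_+_T}. With that correction your argument, including the ``moreover'' claim that a single morally concerned player suffices, goes through.
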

		
	Proposition \ref{prop:compinfo:hm:eq} states that only symmetric, full-trade profiles with prices in the $[r,v]$ interval are Nash equilibria in pure strategies of the bilateral trade game. This equilibrium set is a subset of the equilibria resulting from the game between \textit{homo oeconomicus} agents.
		
	Partial morality fully restores efficiency in the complete information set-up, as it effectively precludes any no-trade or contingent-trade outcome from taking place. Even more strikingly, in order for the conclusion to go through, it is sufficient for only one of the players to be endowed with \textit{homo moralis} preferences. In fact, the attainment of full efficiency does not depend on the intensity of players' moral concerns as long as they are not fully Kantian.	
	
	It is worth further reflecting on the implications of Proposition \ref{prop:compinfo:hm:eq} by contrasting the equilibria to the version of the model with \textit{homo oeconomicus}. Recall that those profiles that constitute equilibria for the latter but not for \textit{homo moralis} are those that feature no trade in at least one contingent market, and those with trade in both but at different prices. The intuition is fairly simple. Notice that these profiles necessarily imply that at least one of the players is setting a threshold higher than her own price. While this is completely irrelevant in the \textit{homo oeconomicus} case, \textit{homo moralis} agents will find deviations from these strategies morally profitable, as they care about setting a price that they themselves would be willing to accept. Moreover, they can do this either without facing material losses or with only infinitesimally large ones (thus trading off a strictly positive moral gain against an arbitrarily small material loss in this latter case).
	
	A final interesting remark is that due to the symmetry of the bilateral trade game under analysis, a symmetric profile will result in the same \textit{ex-ante} utility. So in this way, partial moral concerns lead to complete \textit{ex-ante} equality: before the role distribution is revealed, each player can expect to obtain the same level of utility.
	
	It is reasonable to wonder whether moral motivations produce the same effects on bilateral trade's efficiency when inefficiencies result from information asymmetries between the \textit{Buyer} and the \textit{Seller}. This is a pervasive issue that ranks amongst the most studied explanations for why markets may fail to deliver fully efficient results. In particular, the case where the seller is not completely certain of the buyer's willingness to pay for the item is formalised as early as in \cite{pigou1920}'s classic analysis of monopoly with uniform pricing. In turn, the situation where buyers do not know the quality of the object being bargained over, which could either be low or high was brought forth notably by \cite{akerlof1970}. The following sections extend the complete information model described here to embed those two cases in the bargaining framework.

	%----------------------------------------------------------------------------------------
	%	VALUATION UNCERTAINTY
	%----------------------------------------------------------------------------------------
	\section{Asymmetric information: Heterogeneous buyer valuations} \label{sec:incompinfo:valunc}

In order to analyse the valuation uncertainty case, I re-define the contingent game $G^i$, introducing a move by Nature where it randomly decides the \textit{Buyer}'s valuation. This can be high ($v_h \in \mathbb{R}_+ $) with probability $\lambda$ or low ($v_l\in \mathbb{R}_+$) with probability $1-\lambda$. The \textit{Seller}'s cost is $r\in \mathbb{R}_+$, as in Section \ref{sec:compinfo}, and I assume that $0<r<v_l<v_h$ (so the efficient outcome is for both types of consumers to be served). The agent in the \textit{Buyer} role is informed of her valuation, while the \textit{Seller} only knows the probability $\lambda$. As a consequence, the former now sets two thresholds, while the latter sets only one price (as in Section \ref{sec:compinfo}). Contingent game $G^i$ thus represents the standard monopoly problem with uniform pricing, with the difference that price and thresholds are decided simultaneously.
	
Strategy $s_i$ is defined as $p_i \in \mathbb{R}_+$ (as in Section \ref{sec:compinfo}), while $b_j$ is defined as $(\bar p_{jh},\bar p_{jl}) \in \mathbb{R}^2_+$. Meanwhile, the contingent game's payoffs are re-defined as:
\begin{equation}\label{contpayoffvalunc}
\begin{aligned}
&\pi^s(p_i,\bar p_{jh},\bar p_{jl})=: \lambda \mathbb{1}\{ p_{i}\leq \overline{p}_{jh} \} (p_{i} - r)+(1-\lambda)\mathbb{1}\{ p_{i}\leq \overline{p}_{jl} \} (p_{i} - r)\\
&\pi^b(p_i,\bar p_{jh},\bar p_{jl})=: \lambda \mathbb{1}\{ p_{i}\leq \overline{p}_{jh} \} (v_h - p_{i}) +  (1-\lambda)\mathbb{1}\{ p_{i}\leq \overline{p}_{jl} \} (v_l - p_{i}).
\end{aligned}
\end{equation}

\noindent Substituting the contingent payoffs in Expression (\ref{exantepayoff}), we obtain the \textit{ex-ante} utility:
\begin{equation}\label{eq:compinfo:hm:ut3}
\small
\begin{aligned}
\begin{split}
U\left((p_{i},\overline{p}_{ih},\overline{p}_{il});(p_{j},\overline{p}_{jh},\overline{p}_{jl})\right)= 
& (1-\kappa) \cdot \dfrac{1}{2}\big[ \lambda \mathbb{1}\{ p_{i}\leq \overline{p}_{jh}\}(p_{i} - r) +(1-\lambda)\mathbb{1}\{ p_{i}\leq \overline{p}_{jl}\} (p_{i} - r) +\\ 
&\lambda \mathbb{1}\{ p_{j}\leq \overline{p}_{ih} \} (v_h - p_{j}) +  (1-\lambda)\mathbb{1}\{ p_{j}\leq \overline{p}_{il} \} (v_l - p_{j}) \big] +\\
&\kappa\cdot\dfrac{1}{2}\big[ \lambda \mathbb{1}\{ p_{i}\leq \overline{p}_{ih}\}(v_h - r) +(1-\lambda)\mathbb{1}\{ p_{i}\leq \overline{p}_{il}\} (v_l - r) \big]
\end{split}
\end{aligned}
\end{equation}	
	
The first two lines of Expression (\ref{eq:compinfo:hm:ut3}) state that the \textit{Seller}'s payoff in the contingent game is the weighted average of the surplus obtained by trading with either type of \textit{Buyer}, which is either the difference between price and cost or zero (if that particular consumer type is not served). The weights are each consumer type's probability. In turn, the \textit{Buyer}'s payoff is the weighted average of the surplus that she would get if she was a high valuation buyer and that corresponding to being a low valuation one. Meanwhile, the third line (which represents the moral term) is just half the expected total surplus obtained in game $G$.
	
Having two types of consumers (high and low-valuation) requires more cumbersome expressions to characterise equilibrium prices and thresholds. As a consequence, in Proposition \ref{proposition:incompinfo:valunc:selfish} I only state whether trade takes place and with which type of consumer, relegating the full characterisation to Proposition \ref{prop32star} in the Appendix. Definition \ref{def:BH} describes every kind of outcome that can arise in either contingent market. I then use this classification to state the game's equilibrium types.

	\begin{definition}\label{def:BH}
		Consider the contingent market where $i$ is in the \textit{Seller} role and $j\neq i$ is in the \textit{Buyer} role. The contingent game's outcomes can be of the following types:
		\begin{itemize}
			\item Full trade: both types of buyers are served. This outcome realizes if and only if $ p_{i}\leq \min\{\bar p_{jh},\bar p_{jl}\}$.
			\item High valuation: only high valuation buyers are served. This outcome realizes if and only if $ \bar p_{jl}<  p_{i} \leq  \bar p_{jh}$.
			\item Low valuation: only low valuation buyers are served. This outcome realizes if and only if $ \bar p_{jh}<  p_{i} \leq  \bar p_{jl}$.
			\item No trade: no buyers are served. This outcome realizes if and only if $  p_{i} > \max  \{ \bar p_{jh},\bar p_{jl}\}$.
		\end{itemize}
	\end{definition}

	\begin{proposition}\label{proposition:incompinfo:valunc:selfish}
		Consider the heterogeneous buyer valuations bilateral trade game with $\kappa = 0$. The Nash equilibria in pure strategies are of the type A/B. $A,B \in \{ \text{Full trade},\text{High val.},\allowbreak\text{No trade}\}$.	
	\end{proposition}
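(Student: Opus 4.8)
The plan is to exploit the same separability that underlies Proposition \ref{proposition:compinfo:selfish:maineq}. When $\kappa=0$ the moral term in (\ref{eq:compinfo:hm:ut3}) vanishes, so a player's price enters her utility only through the opponent's thresholds, and her thresholds enter only through the opponent's price. Hence the two contingent games decouple and an equilibrium of the full game is exactly an ordered pair formed by an equilibrium of $G^1$ and an equilibrium of $G^2$. It therefore suffices to show that no pure-strategy equilibrium of a single contingent game induces the \emph{Low valuation} outcome of Definition \ref{def:BH}: the admissible contingent outcomes would then be Full trade, High valuation, or No trade, and taking products yields the claimed $A/B$ classification.

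The first step is to pin down the \textit{Buyer}'s best response. Fixing the \textit{Seller}'s price $p$, the buyer's contingent payoff $\lambda\mathbb{1}\{p\le\bar p_{jh}\}(v_h-p)+(1-\lambda)\mathbb{1}\{p\le\bar p_{jl}\}(v_l-p)$ is additively separable across the two thresholds, so each is chosen independently. The high threshold is set to accept ($\bar p_{jh}\ge p$) whenever $p<v_h$ and to reject ($\bar p_{jh}<p$) whenever $p>v_h$, with indifference exactly at $p=v_h$; the low threshold behaves identically with $v_l$ in place of $v_h$.

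Next I would argue by contradiction. Suppose some contingent equilibrium induces a Low valuation outcome, i.e.\ $\bar p_{jh}<p\le\bar p_{jl}$. Since the low type is served ($p\le\bar p_{jl}$) and the buyer best-responds, the previous step forces $p\le v_l$, for if $p>v_l$ then rejecting with $\bar p_{jl}<p$ would be the unique low-threshold best response. But then $p\le v_l<v_h$, so $p<v_h$ strictly, and the high threshold's best response is to accept, i.e.\ $\bar p_{jh}\ge p$. This contradicts $\bar p_{jh}<p$. Thus the Low valuation outcome is incompatible with the buyer optimizing, regardless of what the seller does.

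The exclusion above (which uses only the buyer's optimality) rules out Low valuation in each contingent market; combined with the decomposition it shows that every market must realize Full trade, High valuation, or No trade, giving the stated $A/B$ types. I do not expect the qualitative exclusion to be the obstacle, as it is immediate from $v_l<v_h$. The genuine work lies elsewhere and is deferred to the full characterization in Proposition \ref{prop32star}: verifying that equilibria of each admissible type actually exist and describing the exact supporting prices and thresholds. Within this proposition the only point demanding care is the handling of the boundary/indifference cases (ties at $p=v_l$ or $p=v_h$), which must be checked so that a knife-edge price cannot quietly resurrect a Low-valuation configuration.
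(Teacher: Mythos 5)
Your proposal shares the paper's skeleton: the same $\kappa=0$ separability argument reducing the bilateral trade game to its two contingent games, and the same threshold-by-threshold description of the buyer's best replies. The genuine difference is in how the Low valuation outcome of Definition \ref{def:BH} is eliminated. The paper never isolates this as a separate step; it also computes the seller's best-reply correspondence (the $\lambda$-dependent case analysis in Proposition \ref{prop32star}) and simply reads the result off the intersections, which turn out to be exactly the Full trade, High valuation and No trade profiles. Your buyer-only contradiction (a served low type forces $p\le v_l$, hence $p<v_h$, hence the high threshold must accept) is cleaner for that purpose, needs nothing about the seller, and handles the knife-edge cases correctly; it is a complete proof of the containment that every pure-strategy equilibrium is of type $A/B$ with $A,B\in\{\text{Full trade},\text{High valuation},\text{No trade}\}$.

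However, the proposition as the paper reads it carries a second half: the text immediately following it states that the equilibria are \emph{all} the pairs that can be formed from this set, i.e.\ that each admissible type is actually realized, and the paper's proof (which is exactly the proof of Proposition \ref{prop32star}) delivers this. Here your proposal has a gap that is not merely cosmetic: you defer existence to Proposition \ref{prop32star}, but that proposition is the fine-grained version of the very statement under examination, so the deferral is circular rather than an appeal to independent work. Closing it requires the seller's side, which your argument never touches: one must verify, for instance, that $p^*=\bar p_h^*=\bar p_l^*\in[r,v_l]$ sustains Full trade, that $v_l<p^*=\bar p_h^*\le v_h$ together with $\bar p_l^*\le\lambda\bar p_h^*+(1-\lambda)r$ sustains High valuation (this inequality is precisely the seller-optimality requirement that excluding low types be more profitable than pooling), and that $\max\{\bar p_h^*,\bar p_l^*\}<r$ with $p^*>v_h$ sustains No trade. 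That seller-side verification is where the $\lambda$-dependence of the result lives, and it is the part of the paper's proof your proposal omits.
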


For a given contingent market, two of the three possible outcomes are familiar results in monopoly settings with uniform pricing. The first one is when all consumer types are being served (``Full trade''), while the second one has low-valuation consumers excluded from the market (``High valuation''). Recall that the latter is inefficient, as even the low consumer valuation is above seller cost. Unsurprisingly, the range of prices that support equilibria with exclusion is increasing in the probability of the consumer being of high valuation $\lambda$. The third kind is the negotiation failure result (``No trade''), a consequence of the simultaneity of decisions assumed in my framework that we had already encountered in the previous section. The bilateral trade game's equilibria are all the pairs that can be formed from this set. 

Introducing buyer heterogeneity in this otherwise unmodified context then naturally brings about a new source of inefficiency: asymmetric information. More specifically, the fact that the \textit{Seller} is unsure about the consumer's willingness to pay for the item makes it possible for exclusion equilibria to exist. They occur when the maximum prices accepted by the low and high valuation consumers are different enough so that, given the share of valuation consumers, the seller is better off excluding.
	
Considering partially moral agents drastically changes the types of equilibria attainable. I state it in the following Proposition, relegating the full characterisation of prices and thresholds to the Appendix, as before.\footnote{For the sake of completeness, I also present in the Appendix, in Lemma \ref{proposition:valunc:hk:maineq} the equilibria with fully moral agents. The takeaway is the same as in the complete information case.}   
	 	
	\begin{proposition}\label{proposition:incompinfo:valunc:hm}
		Consider the heterogeneous buyer valuations bilateral trade game with $\kappa \in(0,1)$. The only kind of Nash equilibria in pure strategies is Full trade/Full trade. 
		\end{proposition}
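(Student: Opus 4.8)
The plan is to proceed exactly as in the complete-information case (Proposition \ref{prop:compinfo:hm:eq}): because the moral summand in (\ref{eq:compinfo:hm:ut3}) ties each player's own price $p_i$ to her own thresholds $(\bar p_{ih},\bar p_{il})$, the game no longer decomposes into independent contingent markets, so I cannot just combine contingent-game equilibria; instead I would run through the outcome types of Definition \ref{def:BH} and discard any profile admitting a profitable deviation. Two reductions come first. I would note that the two thresholds enter player $j$'s utility separably, and since the high type's gain weakly dominates the low type's at every price in both the material channel ($v_h-p_i\ge v_l-p_i$) and the moral channel ($v_h-r\ge v_l-r$), the optimal high threshold is weakly above the optimal low one; hence $\bar p_{jh}\ge\bar p_{jl}$ in any best response, which already excludes the ``Low valuation'' outcome (consistent with its absence from Proposition \ref{proposition:incompinfo:valunc:selfish}). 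The second, load-bearing, reduction is a \emph{moral-harvest} principle: since setting $\bar p_{il}\ge p_i$ switches on the strictly positive reward $\tfrac12\kappa(1-\lambda)(v_l-r)$, a player always wants her own low threshold weakly above her own price, and she can collect this reward \emph{without} committing to a loss-making purchase by parking $\bar p_{il}$ in the gap $[p_i,p_j)$ whenever a genuine low-type trade at the opponent's price $p_j$ would be unprofitable. A best response forgoes this reward only on the knife edge $p_i>p_j>v_l$ (and analogously for the high type).

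Next I would rule out No trade. In a no-trade market the selling player earns zero there and cannot make the sale profitable (otherwise lowering her price would already pay off, even at $\kappa=0$, forcing the opponent's thresholds below $r$), so the only remaining lever is the moral term. If both players were extracting full moral surplus we would have $p_i\le\min\{\bar p_{ih},\bar p_{il}\}$ for each $i$, whereas No trade where $i$ sells forces $p_i>\max\{\bar p_{jh},\bar p_{jl}\}$; chaining the two roles produces the impossible cycle $p_1>p_2>p_1$. Hence some player sits below her full moral surplus, and by the moral-harvest principle she strictly gains by raising the offending threshold (creating the gap when needed) at zero material cost. The residual ``No trade in one market only'' configurations I would close by the same bookkeeping: a robust no-trade market pins the opponent's thresholds at or below $r$, so any player who would still capture full moral surplus must price at or below $r$ and thus sell at a loss (or at zero) in the other market, which she can strictly improve upon by raising her price against the opponent's accepting thresholds.

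The main obstacle, and the step I would treat most carefully, is ruling out the ``High valuation'' (exclusion) outcome, precisely because at $\kappa=0$ it \emph{is} an equilibrium, so the deviation must genuinely exploit morality. The naive fix — the excluding seller cuts her price to serve both types — fails for small $\kappa$, since the forgone markup $\lambda(p_i-r)-(\bar p_{jl}-r)\ge 0$ can swamp the moral gain $\tfrac12\kappa(1-\lambda)(v_l-r)$. The deviation that works keeps the seller's price essentially at its exclusionary ceiling (preserving both the seller profit $\approx\tfrac12(1-\kappa)\lambda(p_i-r)$ and the high-type moral reward) but lifts her own \emph{low} threshold into the gap just below the opponent's price, i.e. $\bar p_{il}\in[p_i,p_j)$, switching on the low-type moral indicator without triggering a loss-making low-type purchase. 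Taking $p_i$ to its exclusionary ceiling from below, a direct computation shows the deviation payoff approaches the exclusion payoff $\tfrac12\lambda(v_h-r)$ plus $\tfrac12\kappa(1-\lambda)(v_l-r)>0$, so it strictly dominates for every $\kappa\in(0,1)$ and $\lambda\in(0,1)$. The same harvest-in-the-gap construction disposes of the mixed types (exclusion or no trade in one market, anything in the other), since it is a unilateral deviation by the excluding seller that is insensitive to the other contingent market.

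Finally I would confirm the surviving type is nonempty and sharp by verifying directly that a symmetric Full trade/Full trade profile with $p_1=p_2=p$ for $r\le p\le v_l$ and thresholds weakly above $p$ is a mutual best response — no seller can raise her price without dropping the low type and losing her own low-type moral surplus, and no buyer can lower a threshold without forgoing a beneficial trade or moral surplus — which collapses the equilibrium set to Full trade/Full trade and matches the explicit characterisation deferred to Proposition \ref{prop32star}.
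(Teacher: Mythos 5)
Your elimination steps essentially retrace the paper's own proof: the ``moral--harvest'' deviation (park your own low threshold at, or an $\epsilon$ below, the opponent's price so that the indicator $\mathbb{1}\{p_i\le\bar p_{il}\}$ switches on without triggering a loss-making purchase) is exactly the deviation used in Lemmas \ref{lemma:incompinfo:valunc:hm:3}, \ref{lemma:incompinfo:valunc:hm:9} and \ref{lemma:incompinfo:valunc:hm:10}, and your limiting payoff for the exclusion case, $\tfrac12\lambda(v_h-r)+\tfrac12\kappa(1-\lambda)(v_l-r)$, is precisely the computation behind part (2) of Lemma \ref{lemma:incompinfo:valunc:hm:9}. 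The genuine gap is your final verification step. It is false that every symmetric Full trade/Full trade profile with $p_1=p_2=p\in[r,v_l]$ and thresholds weakly above $p$ is an equilibrium, and the justification you give --- ``no seller can raise her price without dropping the low type and losing her own low-type moral surplus'' --- misreads the moral term. That term depends \emph{only} on the deviator's own strategy, so a seller who raises her price can raise her own thresholds in tandem and keep the moral reward fully switched on: morality never anchors the price. Concretely, take $p_1=p_2=r$, $\bar p_{1l}=\bar p_{2l}=r$ and $\bar p_{1h}=\bar p_{2h}=v_h$, a profile in your claimed set. Player 1's deviation $p_1'=\bar p_{1h}'=\bar p_{1l}'=v_h$ leaves her purchases (both states, at price $r$) and her moral term unchanged, while her seller profit rises from $0$ to $\tfrac{1-\kappa}{2}\lambda(v_h-r)>0$; the profile is not an equilibrium for any $\kappa\in(0,1)$.

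What actually blocks such deviations is material, not moral: in equilibrium the common price must equal the opponent's minimum threshold (otherwise a seller raises her price to that minimum, carrying her own thresholds along), and the opponent's high threshold must make exclusion materially unprofitable, $\lambda(\bar p_{jh}-r)\le \bar p_{jl}-r$. These are exactly parts (3)--(5) of Lemma \ref{lemma:incompinfo:valunc:hm:10}, i.e.\ the interval condition $\bar p_{jh}^*\in\left[(1-\lambda)\bar p_{jl}^*+\lambda r,\tfrac1\lambda \bar p_{jl}^*-\tfrac{1-\lambda}{\lambda}r\right]$ of Proposition \ref{prop33star} --- the $\kappa\in(0,1)$ characterisation, not Proposition \ref{prop32star}, which you cite but which is the $\kappa=0$ case. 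The existence half of your last step is easily repaired: the fully symmetric profile with all prices and all thresholds equal to a common $p\in[r,v_l]$ does satisfy these conditions and is a mutual best response. But as written, your verification certifies non-equilibria as equilibria, so the proof is incomplete at exactly the point where the equilibrium set must be pinned down.
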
			

As in Section \ref{sec:compinfo}, these profiles are a subset of the equilibria in the bilateral trading game with \textit{homo oeconomicus} agents. Similarly to the case with complete information, partial moral concerns eliminate inefficient equilibria originated in the coordination failure brought about by the one-shot assumption. But the extension analysed here shows that this is not the only effect they have. In fact, they also do away with inefficiencies stemming from the information asymmetries, as they also preclude exclusion. Importantly, the precise degree of morality is still not relevant as long as it remains strictly between 0 and 1. Therefore, we again find a marked discontinuity in the effects of introducing morality in the agents' decision making process. Going from $\kappa = 0$ to a strictly positive degree of morality, however small (and as long as it does not equal 1) suffices to do away with every inefficient equilibrium in this case as well.

%	%----------------------------------------------------------------------------------------
%	%	QUALITY UNCERTAINTY - DESIRABLE
%	%----------------------------------------------------------------------------------------
	\section{Asymmetric information: Adverse selection} \label{sec:incompinfo:quality}
	
	In contrast to the previous Section, I study here a common values setting where the information asymmetry concerns both costs and valuation. More precisely, I consider a bilateral trading game where the \textit{Seller} may be in possession of either a high quality object or a low quality one. This is echoed in the consumer's valuation, which is $v_h$ for the high quality and $v_l$ for the low one. Importantly, I also assume that high-quality sellers face a higher cost than low-quality ones, namely $r_h$ and $r_l$, with $r_h>r_l$.  
	
	While the \textit{Seller} knows her item's quality, the \textit{Buyer} only knows the probability distribution. To reflect this in the definition of contingent game $G^i$, I assume that Nature first randomly chooses whether the object is of high or low quality with probabilities $\lambda$ and $1-\lambda$, respectively. The \textit{Seller} is then informed of the result of this draw, while the buyer only knows $\lambda$. In order to capture this new information structure in a tractable way, we must then have the \textit{Seller} posting two prices, one for when she is of the low quality type and one for when she is of the high quality kind. In turn, the agent in the \textit{Buyer} role posts a threshold above which she will not trade (just as in Section \ref{sec:compinfo}). All of this means that strategy $s_i$ is now defined as $(p_{ih},p_{il}) \in \mathbb{R}^2_+$, and $b_j$ as $(\bar p_j) \in \mathbb{R}_+$. The contingent game's payoffs take the form:	
	\begin{equation*}\label{contpayoffquality}
	\begin{aligned}
	&\pi^s(p_{ih},p_{il},\overline{p}_j)=: \lambda \mathbb{1}\{ p_{ih}\leq \bar p_j \} (p_{ih} - r_h) +(1-\lambda)\mathbb{1}\{ p_{il}\leq \bar p_j \} (p_{il} - r_l)\\
	&\pi^b(p_{ih},p_{il},\overline{p}_j)=: \lambda \mathbb{1}\{ p_{ih}\leq \bar p_j \} (v_h - p_{ih}) +  (1-\lambda)\mathbb{1}\{ p_{il}\leq \bar p_j \} (v_l - p_{il}).
	\end{aligned}
	\end{equation*}
		
\noindent Substituting these contingent payoffs in (\ref{exanteutility}) gives the utility function for the quality uncertainty version of the bilateral trading game:
\begin{equation}\label{eq:compinfo:hm:ut4}
	\footnotesize
	\begin{aligned}
		U\left((p_{ih},p_{il},\overline{p}_i);(p_{jh},p_{jl},\overline{p}_j)\right)= 
		& (1-\kappa) \cdot \dfrac{1}{2}\big[ \lambda \mathbb{1}\{ p_{i}\leq \overline{p}_{jh}\}(p_{i} - r_h) +(1-\lambda)\mathbb{1}\{ p_{i}\leq \overline{p}_{jl}\} (p_{i} - r_l) + \\ 
		&\lambda \mathbb{1}\{ p_{j}\leq \overline{p}_{ih} \} (v_h - p_{j}) +  (1-\lambda)\mathbb{1}\{ p_{j}\leq \overline{p}_{il} \} (v_l - p_{j}) \big] +\\
		&\kappa\cdot\dfrac{1}{2}\big[ \lambda \mathbb{1}\{ p_{i}\leq \overline{p}_{ih}\}(v_h - r_h) +(1-\lambda)\mathbb{1}\{ p_{i}\leq \overline{p}_{il}\} (v_l - r_l) \big] 
	\end{aligned}
\end{equation}	

	Consistent with the previous sections, the first line in Expression (\ref{eq:compinfo:hm:ut4}) represent the expected material payoff that an agent $i$ choosing $(p_{ih},p_{il},\overline{p}_i)\in \mathbb{R}^3_+$ pitted against an individual  who plays $(p_{jh},p_{jl},\overline{p}_j)\in \mathbb{R}^3_+$ would obtain when in the role of a \textit{Seller}. It is the weighted average surplus that she would get if in the possession of a high quality object and the surplus corresponding to having a low quality item. Again, the weights are the probabilities of getting the high quality object or the low quality one. If $j$ acquires the good (that is, if $p_{iQ}\leq\overline{p}_j$, with $Q \in \{h,l\}$), she receives the price she set and looses an amount equivalent to the cost of the good, $r_Q$. Her total surplus from selling is then $p_{iQ}-r_Q$. In turn, if the aforementioned exchange does not occur, the payoff is zero. 
	
	The second line is the expected material surplus that the agent would obtain when in the role of a \textit{Buyer}. It is the weighted average of the surplus that she would get if paired with a high quality seller and that corresponding to being matched with a low quality seller. The weights are the probabilities of the object being of high quality ($\lambda$) or of low quality (1-$\lambda$). In each case, Player $i$'s surplus corresponding to her potential buyer role is equivalent to her valuation $v_Q$ for the good minus the price $p_{jQ}$ she paid if it is the case that Player $j$ sells her the object (which happens only if $p_{jQ}\leq\overline{p}_i$), while she obtains $0$ if not. 
	
	As in the previous sections, the equilibrium set of the bilateral trading game with quality uncertainty and \textit{homo oeconomicus} agents can be found by studying the contingent game between a \textit{Buyer} and a \textit{Seller}. I next define all the possible types of result that can arise in the contingent game in Definition \ref{def:adv}.
	
	\begin{definition}\label{def:adv}
			Consider the contingent market where $i$ is in the \textit{Seller} role and $j\neq i$ is in the \textit{Buyer} role. The contingent game's outcomes can be of the following types:
		\begin{itemize}
			\item Full trade: both qualities are traded. This outcome realizes if and only if $\max\{p_{ih},p_{il}\} \leq \bar p_{j}$.
			\item Low quality: only low quality is traded. This outcome realizes if and only if $p_{il}\leq \bar p_{j} <  p_{ih}$.
			\item High quality: only high quality is traded. This outcome realizes if and only if $p_{ih}\leq \bar p_{j} <  p_{il}$.
			\item No trade: no quality is traded. This outcome realizes if and only if $ \bar p_{j} < \min\{p_{ih},p_{il}\}$.
		\end{itemize}
	\end{definition}	
		
	The equilibria are all pairs of  profiles that can be formed with the elements belonging to the contingent game's equilibrium set. In order to make the reading more efficient, I first define two relevant quantities that will be used in this and the following sections. Firstly, the expected consumer valuation, which is the average of the high and low quality valuations weighed by their respective probabilities ($\lambda$ and 1- $\lambda$). 	
	\begin{equation}\label{ve}
		v_e \equiv \lambda v_h + (1-\lambda) v_l.
	\end{equation}
	
\noindent Secondly, the minimum value of $\lambda$ for which the expected consumer valuation is weakly above the high-quality cost ($r_h\leq v_e$):
	
	\begin{equation}\label{le}
\lambda_e \equiv \frac{r_h-v_l}{v_h-v_l}.
	\end{equation}

	I study here two cases of this adverse selection setting. The first one is when not only the exchange of the high quality object but also of the low quality item produces a positive surplus and thus, is socially desirable. The second one is when the trade of the low quality good results in a negative surplus and it is then efficient from a social standpoint to avoid it exchanging hands.
	
\subsection{Desirable trade} \label{subsec:incompinfo:quality}
	I present in Proposition \ref{prop:incompinfo:selfish:maineq} the equilibria for the case where all trade is socially desirable and agents are of the \textit{homo oeconomicus} type. The contingent market outcomes listed are those defined in Definition \ref{def:adv}.
	
	\begin{proposition}\label{prop:incompinfo:selfish:maineq} 
		Consider the adverse selection bilateral trade game with $\kappa = 0$ and assume $0<r_l<v_l<r_h<v_h$. The Nash equilibria in pure strategies are:
		\begin{itemize}
		\item If $\lambda \geq \lambda_e$, A/B where $A,B \in \{ \text{Full trade},\text{Low quality},\text{No trade}\}$.

		\item If $\lambda < \lambda_e$, A/B where $A,B \in \{\text{Low quality},\text{No trade}\}$
	
		\end{itemize}
		\end{proposition}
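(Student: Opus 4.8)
The plan is to exploit the decoupling property already established for the $\kappa=0$ case: since each agent's \textit{Seller} action affects her payoff only through the opponent's threshold, and her \textit{Buyer} action only through the opponent's prices, the bilateral game separates into two independent contingent games, and its pure-strategy equilibrium set is exactly the product of the contingent game's equilibrium set with itself (the two contingent games $G^1$ and $G^2$ being structurally identical). Hence it suffices to characterize the pure-strategy Nash equilibria of a single contingent game, in which the \textit{Seller} posts $(p_h,p_l)$ and the \textit{Buyer} posts a threshold $\bar p$, and then to read off the admissible outcome types via Definition \ref{def:adv}.

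First I would compute the \textit{Seller}'s best-response correspondence as a function of $\bar p$. Since revenue from each quality is increasing in its price subject to acceptance, she sets each price equal to $\bar p$ when trading that quality is profitable and prices it out otherwise. Using $r_l<r_h$ this gives: full trade with $p_h=p_l=\bar p$ when $\bar p\geq r_h$; low-quality-only with $p_l=\bar p$, $p_h>\bar p$ when $r_l\leq\bar p<r_h$; and no trade when $\bar p<r_l$. A key immediate consequence is that the ``High quality'' outcome is never a \textit{Seller} best response: serving the high quality requires $\bar p\geq r_h>r_l$, but then serving the low quality is also strictly profitable, so the \textit{Seller} never excludes low while serving high. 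This rules out High-quality equilibria outright. Next I would compute the \textit{Buyer}'s best response to given $(p_h,p_l)$: the single-threshold constraint restricts her feasible accept-sets to ``accept both'' ($\bar p\geq\max\{p_h,p_l\}$), ``accept the cheaper quality only'', or ``accept neither'', and from the surplus comparisons I extract that accepting low is worthwhile iff $p_l\leq v_l$, that adding high on top of low is worthwhile iff $p_h\leq v_h$ (marginal surplus $\lambda(v_h-p_h)$), and that pooled acceptance at a common price $p$ is worthwhile iff $p\leq v_e$.

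Finally I would intersect the two correspondences outcome by outcome. Full trade forces $p_h=p_l=\bar p$ with $\bar p\geq r_h$ on the \textit{Seller} side and the pooled condition $\bar p\leq v_e$ on the \textit{Buyer} side, which are jointly feasible iff $r_h\leq v_e$, i.e. iff $\lambda\geq\lambda_e$ by the definition of $\lambda_e$ in (\ref{le}). Low-quality-only is sustained by $\bar p=p_l\in[r_l,v_l]$ together with $p_h\geq v_h$ (so the \textit{Buyer} does not wish to raise her threshold to capture the high quality), imposing no restriction on $\lambda$. No trade is sustained for every $\lambda$ by a low threshold $\bar p<r_l$ facing prices placed above $v_l$ and $v_h$, the \textit{Seller} being indifferent among all prices exceeding $\bar p$. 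Collecting these yields the contingent equilibrium outcome set $\{\text{Full trade},\text{Low quality},\text{No trade}\}$ when $\lambda\geq\lambda_e$ and $\{\text{Low quality},\text{No trade}\}$ when $\lambda<\lambda_e$; taking the product with itself gives the stated A/B classification.

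I expect the main obstacle to be the careful bookkeeping forced by the \textit{Buyer}'s single-threshold constraint: in particular, verifying that in the low-quality-only equilibrium the \textit{Seller}'s indifference over the unsold high price is compatible with the \textit{Buyer}'s requirement $p_h\geq v_h$, and handling the boundary indifferences (at $\bar p=r_h$, $\bar p=r_l$, $p_h=v_h$, and $\bar p=v_e$) without either dropping genuine equilibria or admitting spurious ones. The economically substantive step is the comparison $v_e\gtrless r_h$, which is precisely what $\lambda_e$ encodes and which cleanly separates the two regimes.
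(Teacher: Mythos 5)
Your proposal is correct and takes essentially the same route the paper uses for all its $\kappa=0$ results: decouple the bilateral game into two independent contingent games (so the equilibrium set is the product of the contingent-game equilibria) and intersect the \textit{Seller}'s and \textit{Buyer}'s best-reply correspondences, which is exactly the argument the appendix spells out for the undesirable-trade analogue, Proposition \ref{prop:incompinfo:selfish:maineqi} (the paper never writes a separate proof for this desirable-trade case). Your two key steps --- that a High-quality outcome is never a \textit{Seller} best response because $\bar p\geq r_h>r_l$ makes serving the low quality strictly profitable, and that full trade forces $p_h=p_l=\bar p\in[r_h,v_e]$ so it exists iff $\lambda\geq\lambda_e$ --- are precisely what the paper's best-reply intersection delivers.
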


	When the expected buyer valuation $v_e$ is (weakly) above the high quality cost $r_h$ or, equivalently, $\lambda \geq \lambda_e$, there are six types of equilibria. The first kind are those where trade takes place no matter the role distribution or quality of the goods. Next, there are equilibria where Player $i$ sells to Player $j$ only when in possession of a bad quality item, but the latter sells when her good is either quality. The following class of equilibria has Player $i$ not selling to Player $j$ and the latter selling to Player $i$ no matter the quality. The fourth kind of equilibria are those where only trade of the bad quality item takes place in both contingent markets. The game also features equilibria where Player $i$ does not sell to Player $j$ but the latter sells when her good is of low quality. The last class of equilibria are those where trade does not take place at all. Notice that in this case, inefficient outcomes arise because of the coordination failure introduced by the simultaneous-action assumption, mirroring the complete information case.
	
	In turn, when $\lambda < \lambda_e$, the equilibria laid out in Proposition \ref{prop:incompinfo:selfish:maineq} are of three types. The first one is that where only the bad quality good is traded in both contingent markets. The second sort is the case where the bad quality good is traded in one contingent market but the other one features no trade. Finally, there are equilibria where no trade occurs in either contingent market. The expected consumer valuation $v_e$ being below high quality cost $r_h$ implies that the asymmetric information problem is much more harmful for market efficiency, as the high quality item is never traded, even though it is socially desirable to do so. This situation resembles the ``lemons'' problem, as high quality sellers are driven out of the market and, even when the coordination failure is avoided (and trade takes place) only bad quality sellers remain.
	
	%	The bargaining game with quality uncertainty (where it is assumed that trade of both qualities is socially desirable) mostly reproduces the characteristics of the full information version. In particular, when agents are \textit{homo oeconomicus} it features a coordination failure that derives in the existence of equilibria with no trade whatsoever. 
	
	While considering valuation uncertainty and \textit{homo oeconomicus} players brought about the possibility of equilibria where low-valuation consumers are not served (like in the traditional monopoly pricing problem), the introduction of quality uncertainty adds a different potential inefficiency, that of adverse selection (where only the bad quality item is traded). In fact, for sufficiently low expectations about product quality, the most efficient profiles (those featuring full trade) do not even belong to the equilibrium set. This renders the adverse selection setting more problematic than the heterogeneous buyers case in terms of efficiency, as with \textit{homo oeconomicus} agents the efficient equilibria are not even attainable for sufficiently low values of the $\lambda$.
	
	I now explore whether considering partially moral agents does away with the various inefficient equilibria found in the \textit{homo oeconomicus} version of the model. Proposition \ref{prop:incompinfo:fulltrade4} presents the bilateral trade game's equilibrium types when agents are \textit{homo moralis}. The detailed characterisation is relegated to Proposition \ref{prop43star} in the Appendix.
	
		\begin{proposition}\label{prop:incompinfo:fulltrade4}
			Consider the adverse selection bilateral trade game with $\kappa \in (0,1)$ and assume $0<r_l<v_l<r_h<v_h$. If $\lambda \geq \lambda_e$, the only Nash equilibria in pure strategies are of the type Full trade/Full trade. If $\lambda < \lambda_e$, there are no Nash equilibria in pure strategies. 
		\end{proposition}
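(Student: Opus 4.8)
The plan is to classify every candidate profile by the pair of contingent-market outcomes of Definition \ref{def:adv}, to eliminate each of the combinations other than Full trade/Full trade by exhibiting a profitable deviation, and finally to characterise when a Full trade/Full trade profile can be sustained. The new difficulty relative to the $\kappa=0$ analysis of Proposition \ref{prop:incompinfo:selfish:maineq} is that the two contingent games can no longer be solved in isolation: the moral term $\tfrac{\kappa}{2}\big[\lambda\mathbb{1}\{p_{ih}\le\bar p_i\}(v_h-r_h)+(1-\lambda)\mathbb{1}\{p_{il}\le\bar p_i\}(v_l-r_l)\big]$ ties player $i$'s own seller prices $(p_{ih},p_{il})$, which govern the market where she sells, to her own buyer threshold $\bar p_i$, which governs the market where she buys. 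I would therefore follow the route of the complete-information case, discarding each outcome type for which a profitable deviation exists and relegating the explicit strategy characterisation to Proposition \ref{prop43star}.

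The engine of the whole argument is a \emph{moral-capture} observation. Because $v_h>r_h$ and $v_l>r_l$, each increment of the moral term is strictly positive, so every player strictly prefers her own price for a given quality $Q$ to lie weakly below her own threshold. Crucially, she can secure this credit cheaply: if she does not wish to sell $Q$ to her partner she can set $p_{iQ}$ just above the partner's threshold $\bar p_j$ (so that no sale, and hence no material loss, occurs) while keeping $p_{iQ}\le\bar p_i$; if selling is profitable (partner's threshold above $r_Q$) she can instead sell at a price in $[r_Q,\bar p_j]$ that still clears her own threshold. The first step is to use this to show that, in any pure-strategy equilibrium, each player holds her moral credit on both qualities, which forces a tight relationship between her prices and her own threshold.

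Armed with this, I would rule out every type but Full trade/Full trade. The key contradiction is that if the high quality fails to trade in the market where player $1$ sells, with player $1$ retaining her moral credit, then $\bar p_2<p_{1h}\le\bar p_1$, i.e. $\bar p_1>\bar p_2$; by symmetry, the high quality failing in the market where player $2$ sells forces $\bar p_2>\bar p_1$, and both cannot hold. Together with the seller's individual rationality---a seller parts with the high-quality good only at a price no smaller than $r_h$---this pins the high quality (and, a fortiori, the low quality) as trading in both markets, leaving only the Full trade/Full trade type. The surviving symmetric profiles collapse to $p_{ih}=p_{il}=\bar p_i=\bar p$, and two deviations bound $\bar p$: lowering all of one's own instruments just below $\bar p$, so as to drop out as a buyer while still selling to the partner, changes utility by $\tfrac{1-\kappa}{2}(\bar p-v_e)$ and is profitable unless $\bar p\le v_e$; conversely, raising one's own threshold and high price just above the partner's threshold, declining the below-cost high-quality sale while retaining the moral credit, changes utility by $-\tfrac{1-\kappa}{2}\lambda(\bar p-r_h)$ and is profitable unless $\bar p\ge r_h$. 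The sustainable price window is therefore $[r_h,v_e]$.

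By (\ref{ve})--(\ref{le}) this window is non-empty precisely when $r_h\le v_e$, that is when $\lambda\ge\lambda_e$. Hence for $\lambda\ge\lambda_e$ the Full trade/Full trade profiles with $\bar p\in[r_h,v_e]$ are equilibria and, all other types having been discarded, the only ones; for $\lambda<\lambda_e$ the window is empty and no profile survives, so there is no pure-strategy equilibrium. The step I expect to be the main obstacle is not this dichotomy but the exhaustive elimination of the mixed and asymmetric configurations: once the two thresholds differ, moral capture is no longer costless, because the lower-threshold seller must trade the extra extraction available at the partner's higher threshold against the moral credit she would forfeit, so the argument has to combine the moral-capture lemma, seller individual rationality and buyer participation while tracking the several possible orderings of the four prices and two thresholds.
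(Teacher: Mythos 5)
Your skeleton is the right one---classify profiles by the outcome pairs of Definition \ref{def:adv}, eliminate everything except Full trade/Full trade via profitable deviations, then show the surviving symmetric full-trade price must lie in $[r_h,v_e]$, a window that is non-empty exactly when $\lambda\geq\lambda_e$---and your two bounding deviations are correct and deliver precisely the price range the paper announces after the proposition. A literal comparison with the paper's own proof is impossible, because it does not exist: the main text defers to a ``Proposition 4.3*'' that never appears in the appendix, so the relevant benchmark is the parallel, fully worked-out appendices (Lemmas \ref{lemma:compinfo:hm:contnotrade}--\ref{lemma:compinfo:hm:fulltrade3} for complete information, Lemmas \ref{lemma:incompinfo:inef:hm:notrade1}--\ref{lemma:incompinfo:inef:hm:bad10} for the $v_l<r_l$ case), each of which is an exhaustive chain of elimination lemmas.

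The genuine gap is the one you flag yourself and then leave open: the moral-capture lemma cannot be established as a standalone first step, so the architecture ``capture first, then ordering contradictions'' does not go through. Concretely, take $\bar p_1=p_{1l}=r_l$, $p_{2h}=p_{2l}=\bar p_2>v_e$, and $p_{1h}>\bar p_2$ (player 1 sells only the low quality at cost, player 2 sells nothing). Player 1 holds no credit on the high quality, and both capture routes you describe are materially costly: lowering $p_{1h}$ to her own threshold $r_l<r_h$ means selling the high-quality good below cost, while keeping $p_{1h}'>\bar p_2$ forces $\bar p_1'>\bar p_2$, which triggers buying both of player 2's goods at a price above $v_e$. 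For $\kappa$ small, either loss dominates the moral gain $\tfrac{\kappa}{2}\lambda(v_h-r_h)$, so no costless capture exists---yet the proposition must hold for every $\kappa\in(0,1)$. Such profiles are indeed not equilibria, but the killing deviation is purely material: moving all three of player 1's instruments up to $\bar p_2$ yields a gain of $\tfrac{1-\kappa}{2}\left(v_e-\lambda r_h-(1-\lambda)r_l\right)>0$ plus a moral gain, profitability resting on the positivity of expected surplus rather than on capture. The same problem infects your ``seller individual rationality'' step: exiting a below-cost sale while retaining credit also requires $\bar p_i>\bar p_j$, so it too is not a free-standing lemma. In short, the hard configurations (asymmetric thresholds, prices filling the gaps) must each be eliminated by tailored deviations, exactly the dozen-lemma case analysis the paper performs for $v_l<r_l$, and none of that work is done here. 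Two smaller omissions compound this: you assert, but do not prove, that Full trade/Full trade forces all six coordinates to coincide (the analogue of Lemma \ref{lemma:compinfo:hm:fulltrade1}), and you never verify sufficiency---that at a symmetric profile with $\bar p\in[r_h,v_e]$ \emph{no} deviation, not just your two, is profitable---which is what existence for $\lambda\geq\lambda_e$ requires.
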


\begin{figure}[H]
	\centering
	\begin{tikzpicture}[thick, scale=1.5]
	
	\begin{axis}[
	width=10cm,
	height=4cm,
	axis lines = middle,
	xtick={0,1.8},
	ytick={0,600},
	yticklabels={$0$, $1$},
	xticklabels={$0$, $1$},
	clip=false,
	scaled ticks=false,
	xlabel = {$\lambda$},
	ylabel = {$\kappa$},
	xmin=0, 
	xmax=2,
	ymin=0, 
	ymax=1000]

	\filldraw [name path = rect, fill=black, opacity = 1] (0.65*100,-1) rectangle (1*180,1);  
		\filldraw [name path = rect, fill=blue, opacity = 1] (0,-1) rectangle (0.65*100,1);      

		\filldraw [fill=gray, opacity = 0.2] (0.65*100,1) rectangle (1*180,60);    

	\addplot[thick, samples=1000, dashed,domain=0:600,black, name path=three] coordinates {(0.65,0)(0.65,600)}
	node[pos=-.2]{$\lambda_e$};
	
	\path[name path = axis] (axis cs:0,0) -- (axis cs:1,0);
	\node[right] at (80, -7) {\tiny $A,B \in \{\text{Full t.}, \text{Low q.}, \text{No t.}\}$};
	\node[right] at (83, 30) {\tiny Full t./Full t.};
	
	\node[right, blue] at (-0.5, -7) {\tiny $A,B \in \{\text{Low q.},$};
	\node[right, blue] at (-0.5, -15) {\tiny $\text{No t.}\}$};
	
	\node[right] at (20, 20) {\tiny No eq.};
	
	%% Legend
	%
	%%\node[above right, draw, align = left] at (35, 110) {
	%%
	%%\fbox{} Only trade of good q. \\ 
	%%\fcolorbox{black}{blue!30}{} Trade of good q. and pooling \\
	%%\fcolorbox{black}{green!30}{} Pooling 
	%%
	%%};
	%
	%
%	\node[right] at (50, 90) {High q./High q.};
%	
%	\node[right] at (70, 58) {High q./High q.};
%	\node[right] at (70, 52) {Full t./High q.};
%	
%	\node[right] at (50, 20) {Full t./Full t.};
%	\node[right] at (12, 4) {No t./No t.};
%	\node[right] at (35, 4) {Full t./Full t., Full t./No t., No t./No t.};
	%
	%\node[above right, draw, align = left, fill = black] at (45, 118) {};
	%\node[right] at (48, 119) {Full t./Full t., Full t./No t., No t./No t.};
	%%\node[right] at (63, 119) {Full-trade, pooling + no-trade, no-trade};
	%
	%\node[above right, draw, align = left, pattern = mynewdots] at (45, 110) {};
	%\node[right] at (48, 111) {No t./No t.};
	\end{axis}
	
	\end{tikzpicture}
	\caption{Equilibrium regions in the $(\lambda,\kappa)$ space, with $\kappa \in [0,1)$.}
	\label{fig:equil_quality_uncert}
\end{figure}
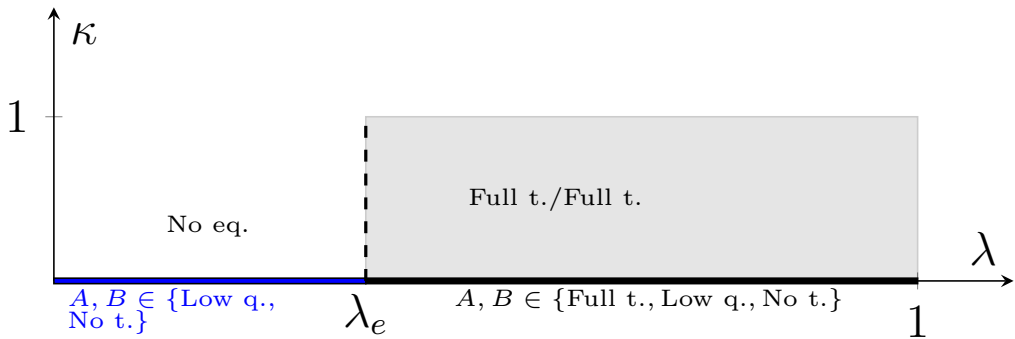

Proposition \ref{prop:incompinfo:fulltrade4} states that when the probability of the item being of good quality is large enough (or, equivalently, the high-quality cost $r_h$ is weakly below the expected consumption valuation $v_e$) the only Nash equilibria in  pure strategies are symmetric and feature full trade in both contingent markets. Furthermore, trade takes place at prices between the high-quality cost $r_h$ and the expected valuation $v_e$. Notice again, that these profiles are a subset of the equilibria in the \textit{homo oeconomicus} model.

%%%%%%%%%%%%%%%%%%%%%%%%%%%%%%%%%
Hence, in the case where agents are sufficiently ``optimistic'' about quality (so that $\lambda\geq \lambda_e$), going from a bilateral trade game where agents are endowed with \textit{homo oeconomicus} preferences to one where they are \textit{homo moralis} effectively eliminates all equilibria but the most materially efficient ones. In this context, these are the ones where both qualities are traded in both contingent markets. Thus, (partial) moral concerns not only prevent negotiation failures, as was already clear from the benchmark complete information model, but also eliminate adverse selection. Moreover and also in keeping with the findings for the previous variants of the model (complete information and heterogeneous buyer valuations), \textit{homo moralis} preferences result in a completely \textit{ex-ante} egalitarian result, as trade takes place in both contingent markets at the same price. 
%%%%%%%%%%%%%%%%%%%%%%%%%%%%%%%%%

%%%%%%%%%%%%%%%%%%%%%%%%%%%%%%%%%
%\textbf{1) why equilibria must be full trade: starting from a no trade situation, there is not material loss from fussing around with price and threshold appropriately and hence obtaining the moral bonus due to the fact that trade is efficient: I can manage to trade with myself without having to trade with the other guy.}

The intuition behind the result that profiles with no trade of either quality in at least one contingent market are not equilibria when agents are \textit{homo moralis} is the same as for the full information benchmark. Starting from a situation with none or partial trade, agents can adjust their strategies so as not to modify the actual trade taking place but managing to set prices and thresholds such that the latter would not be above the former. Given that trade of both qualities is socially desirable, these deviations produce moral gains that are not offset by material losses and are thus profitable.

Secondly and also in line with the results for the full information case, asymmetric full trade profiles cannot be equilibria either. This is due to the fact that in a asymmetric profile with full trade in both contingent markets, there is at least one agent setting at least one price above her threshold. This provides an opportunity to make a moral gain and avoid material losses by simultaneously adjusting prices and thresholds. 

Finally, it is worth noting that when the the probability of having a high quality item is not high enough ($\lambda<\lambda_e$), there exist no Nash equilibria in pure strategies. This derives from the fact that \textit{homo moralis} will always find it profitable to deviate from a profile with full trade. The reason is that such profiles have either the \textit{Seller} trading at a price below cost or the the \textit{Buyer} paying above her expected valuation, so there are material gains to be had from deviating without sustaining moral losses. Agents will also be able to profitably deviate from profiles featuring contingent or no trade through moral gains.\footnote{The existence of a mixed-strategy equilibrium would be guaranteed by Nash's theorem if we consider a finite grid of prices and thresholds. Furthermore, if the grid is between and including costs and valuations for each quality and if it is fine enough, then all the pure strategy equilibria described in the paper remain so.}
%	
%	%----------------------------------------------------------------------------------------
%		QUALITY UNCERTAINTY - UNDESIRABLE
%	----------------------------------------------------------------------------------------
	\subsection{Socially undesirable trade}\label{sec:inef}

In all of the above, I have assumed that costs are always lower than consumer valuations and thus that trading always produces a social surplus. In this Section I modify the model laid out in Section \ref{sec:incompinfo:quality} in order to reflect a situation where it would be socially inefficient to trade the low quality good. In this case, the sellers' valuation for the good ($r_l$) is actually higher than the buyers' ($v_l$). In line with this, I assume now that $0<v_l<r_l<r_h<v_h$. Proposition \ref{prop:incompinfo:selfish:maineqi} characterises the equilibria for the benchmark \textit{homo oeconomicus} case. Recall that quantities $v_e$ and $\lambda_e$ represent, respectively, the expected consumer valuation and the value of the probability of high quality above which the former is larger than the high quality cost $r_h$. They are defined in expressions (\ref{ve}) and  (\ref{le}).

\begin{proposition}\label{prop:incompinfo:selfish:maineqi} 
	Consider the adverse selection bilateral trade game with $\kappa = 0$ and assume that $0<v_l<r_l<r_h<v_h$. Then:
	
	\begin{enumerate}
	\item If $\lambda \geq \lambda_e$, the Nash equilibria in pure strategies of the bargaining game between two \textit{homo oeconomicus} are of the kind A/B where $A,B \in \{ \text{Full trade},\text{No trade}\}$
	\item If $\lambda < \lambda_e$, the Nash equilibria in pure strategies of the bargaining game are of the type No trade/No trade	

		\end{enumerate}
\end{proposition}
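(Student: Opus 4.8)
The plan is to lean on the decomposition already established for the $\kappa = 0$ case: because a player's seller decision affects her material payoff only through the opponent's threshold, and her buyer decision only through the opponent's prices, the bilateral game factorizes, so its pure Nash equilibria are exactly the ordered pairs formed by an equilibrium of $G^1$ and an equilibrium of $G^2$. I would therefore reduce the entire statement to classifying the pure-strategy equilibria of a single contingent market --- a \textit{Seller} choosing $(p_{ih},p_{il})$ against a \textit{Buyer} choosing a threshold $\bar p_j$ --- in terms of the four outcome types of Definition \ref{def:adv}, and then read off the products to obtain the $A/B$ statements.

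First I would pin down the \textit{Seller}'s best response to a fixed threshold $\bar p_j$. Selling quality $Q\in\{h,l\}$ is profitable only if $\bar p_j \geq r_Q$, and when it is profitable she optimally posts $p_{iQ}=\bar p_j$. Since $r_l<r_h$, this produces three regimes: full trade with $p_{ih}=p_{il}=\bar p_j$ when $\bar p_j \geq r_h$; trade of the low quality only when $r_l \leq \bar p_j < r_h$; and no trade when $\bar p_j < r_l$ (with indifference at the boundaries). Given this, I would check, type by type, whether the \textit{Buyer}'s threshold can simultaneously be a best response.

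Two types are then eliminated directly. A \textbf{Low quality} outcome would require the \textit{Buyer} to accept a price $p_{il}\geq r_l$, but $v_l<r_l$ makes her surplus $v_l-p_{il}<0$, so she strictly prefers a lower threshold and rejects --- not a best response. A \textbf{High quality} outcome would require the \textit{Seller} to sell high while withholding low, yet selling high profitably needs $\bar p_j\geq r_h>r_l$, at which point low is profitable too, so she never leaves it unsold. For the survivors: \textbf{No trade} is an equilibrium for every $\lambda$, since against any $\bar p_j<r_l$ the \textit{Seller} is indifferent among all price pairs that trigger no sale and may post $p_{ih}>v_h,\ p_{il}>v_l$, which deter the \textit{Buyer} (every purchase would yield negative surplus) and make a low threshold her best response. \textbf{Full trade} requires the pooled price $p_{ih}=p_{il}=\bar p_j\geq r_h$ together with the \textit{Buyer} weakly preferring to accept the bundle, i.e. $v_e-\bar p_j\geq 0$; these are jointly satisfiable iff $r_h\leq v_e$, which is exactly $\lambda\geq\lambda_e$. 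Hence the contingent-market equilibrium outcomes are $\{\text{No trade}\}$ for all $\lambda$, plus $\{\text{Full trade}\}$ precisely when $\lambda\geq\lambda_e$, and taking ordered pairs yields the two bullets.

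The step I expect to require the most care is the \textit{Buyer}'s best-response analysis under the single-threshold constraint. Because she posts one threshold rather than choosing quality by quality, the relevant comparisons depend on the ordering of $p_{ih}$ and $p_{il}$, and the full-trade case hinges on a pooling effect: the \textit{Buyer} tolerates the socially undesirable low-quality trade --- on which she strictly loses, since $p_{il}=\bar p_j\geq r_h>v_l$ --- only because the pooled price prevents her from separating and the high-quality surplus makes the average bundle acceptable whenever $\lambda\geq\lambda_e$. Verifying that no intermediate threshold yields a profitable deviation in either the full-trade or the no-trade configuration, and that the \textit{Seller}'s indifference among no-trade price vectors can always be resolved in a buyer-deterring way, is where the argument must be made watertight.
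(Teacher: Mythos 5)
Your proposal is correct and follows essentially the same route as the paper: it exploits the factorization of the bilateral game into the two contingent games, derives the seller's quality-by-quality best reply ($p_{iQ}=\bar p_j$ exactly when $\bar p_j \geq r_Q$) and the buyer's acceptance condition, and concludes that the contingent-market equilibria are precisely No trade (for every $\lambda$) and pooled-price Full trade with $\bar p^* = p_{ih}^* = p_{il}^* \in [r_h, v_e]$ (if and only if $\lambda \geq \lambda_e$). The only difference is presentational --- you eliminate the Low quality and High quality outcome types one by one, whereas the paper writes out the full best-reply correspondences and intersects them --- so nothing substantive separates the two arguments.
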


When the low quality object's cost is higher than its consumer valuation, the most efficient result is that only the high quality item be traded. However, Proposition \ref{prop:incompinfo:selfish:maineqi} shows that this type of profiles are not equilibria when agents are \textit{homo oeconomicus}. In fact, in a given contingent market, either both qualities are traded or non are, with the former equilibria only existing when $\lambda \geq \lambda_e$ (that is, the probability of the object being of high quality is large enough). This constitutes a relevant difference with the versions of the model analysed so far, as the most efficient profiles possible were in fact part of the equilibrium set when $\lambda \geq \lambda_e$. In addition, the case where $\lambda<\lambda_e$ leads to equilibria with no trade at all, whereas in the previous case it could also lead to exclusion of the high quality in addition to no-trade. How does the equilibrium set change when agents are of the \textit{homo moralis} type? Proposition \ref{prop:incompinfo:inef:hm:eq} provides the answer to that question. 

In order to facilitate the exposition, I define the following thresholds for $\lambda$ and $\kappa$:
$$\lambda_1 := \frac{r_h-r_l}{v_h-r_l}$$
%
%$$\lambda_2 := \frac{r_h-r_l+r_h-v_l}{r_h-r_l + v_h-v_l}$$
%
$$\kappa_1:= \frac{r_h-r_l}{r_h-v_l}$$
%
%$$\kappa_a(\lambda) := \frac{\lambda (v_h -v_l) + v_l - r_l}{\lambda (v_h - r_l) + r_l - v_l} \text{, for }\lambda \in [\lambda^e,1]$$
%
%$$\kappa_b(\lambda) := \frac{\lambda (v_h - v_l) + v_l - r_h}{\lambda (v_h - r_l)+r_l - r_h}\text{, for }\lambda \in [\lambda^e,1]$$
%
$$\kappa_2(\lambda) := \min\left\{\frac{\lambda (v_h -v_l) + v_l - r_l}{\lambda (v_h - r_l) + r_l - v_l},\frac{\lambda (v_h - v_l) + v_l - r_h}{\lambda (v_h - r_l)+r_l - r_h}\right\}\text{, for }\lambda \in [\lambda_e,1]$$

%The threshold $\lambda_1$ is the minimum value of $\lambda$ for which $ $. Meanwhile, $\kappa_1$ is the degree of morality for which BLA BLA BLA. Finally, $\kappa_2$ is the value of $\kappa$ such that BLA BLA.

Notice that $\lambda_1 < \lambda_e$. In addition, $\kappa_2(\lambda)$ is continuous and strictly increasing with $\kappa_2(\lambda_e) = 0$ and $\kappa_2(1) \in (\kappa_1,1)$. 

%
%
% both $\kappa_a(\lambda)$ and $\kappa_b(\lambda)$ are continuous and strictly increasing in $\lambda$ on the domain $\lambda \in  [\lambda^e,1]$. Furthermore, $\kappa_a(\lambda)\geq \kappa_b(\lambda)$ for $\lambda \leq  \lambda_2$ and vice-versa. This can be shown by computing the difference $\kappa_a(\lambda)-\kappa_b(\lambda)$ and studying its sign:
%
%$$d(\lambda) = \kappa_a(\lambda)-\kappa_b(\lambda) = \dfrac{(v_l-r_l)[\lambda(v_h-v_l+r_h-r_l)-2r_h+v_l+r_l]}{(\lambda(v_h-r_l)+r_l-r_h)(\lambda(v_h-r_l)+r_l-v_l)}.$$ The sign diagram is presented in Figure \ref{fig:sign}.

%\begin{figure}[H]
%	
%	\centering
%\begin{tikzpicture}
%\draw [-to](0,0) -- (8,0);
%\draw (1,-0.1) -- (1,0.1);
%\draw (3,-0.1) -- (3,0.1);
%\draw (5,-0.1) -- (5,0.1);
%\draw (1,0) node[below=1ex] {$0$};
%\draw (3,0) node[below=1ex] {$\lambda^e$};
%\draw (5,0) node[below=1ex] {$\lambda_2$};
%\draw (7,0) node[below=1ex] {$1$};
%\draw (3,0) node[above] {$+$};
%\draw (4,0) node[above] {$+$};
%\draw (5,0) node[above] {$0$};
%\draw (6,0) node[above] {$-$};
%\draw (7,0) node[above] {$-$};
%\end{tikzpicture}
%			\caption{Sign diagram of function $d(\lambda)$.}
%\label{fig:sign}
%\end{figure}

\begin{proposition}\label{prop:incompinfo:inef:hm:eq}
	Consider the adverse selection bilateral trade game with $\kappa \in (0,1)$ and assume that $0 < v_l < r_l < r_h < v_h$. The only kind of pure strategy Nash equilibria are:
		
	\begin{enumerate}
		\item 	High quality/High quality: if and only if $ \lambda \geq \lambda_1, \kappa \geq  \kappa_1$.
		\item 	Full trade/Full trade: if and only if $ \lambda \geq \lambda_e, \kappa \leq  \kappa_2(\lambda)$.
		\item   High quality/Full trade or Full trade/High quality: if and only if $ \kappa \in [\kappa_1,\kappa_2(\lambda)]$
		\end{enumerate}
%	\begin{itemize}		
%		\item 	Only trade of good quality: $ r_h\leq p_{jg}^* = \bar p^*_i = p_{ig}^* =\bar p^*_j \leq \min\left\{(1-\lambda)r_l + \lambda v_h,\frac{r_l - \kappa v_l}{1-\kappa} \right\} \text{ and }p^*<\min\{p_{jb}^*,p_{ib}^*\};\forall \hspace{0.2cm} i,j\in\{1,2\}, i\neq j.$ These profiles are equilibria if and only if:
%		
%		$$ \lambda \geq \lambda_1, $$
%		
%		$$  \kappa \geq  \kappa_1.$$	
%			
%		\item 	Trade of good quality and pooling:  $ \frac{r_l - \kappa v_l}{1-\kappa}=p_{ig}^*=p_{jg}^*=p_b^*=\bar p^*_i = \bar p^*_j < p_{jb}^*;i,j \in\{1,2\},i\neq j.$ These profiles are equilibria if and only if: 
%		
%		$$\lambda \geq \lambda_2,$$
%		
%		$$  \kappa \in[\kappa_1,\kappa_2] $$	 
%	
%		\item 	Full trade: $ \bar p^*_i = p_{jg}^* = p_{jb}^* \in \left[\max\left\{r_h, \frac{r_l - \kappa v_l}{1-\kappa}\right\},v^e + \frac{\kappa (1-\lambda)}{1-\kappa}(v_l - r_l)\right];\forall \hspace{0.2cm} i,j\in\{1,2\}, i\neq j.$ These profiles are equilibria if and only if: 
%		
%			$$ \lambda \geq  \lambda^e,$$
%			
%			$$\kappa \leq \min\left\{\kappa_2 ,\kappa_3 \right\}.$$
%		
%	\end{itemize}	
\end{proposition}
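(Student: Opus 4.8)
The plan is to proceed exactly as in the complete-information and desirable-trade analyses: market by market and deviation by deviation, exploiting the additive structure of (\ref{eq:compinfo:hm:ut4}). Each agent's utility decomposes into a material \emph{selling} term (her own prices against the opponent's threshold), a material \emph{buying} term (her own threshold against the opponent's prices), and a moral term that depends only on her own triple $(p_{ih},p_{il},\bar p_i)$. The decisive feature of the regime $0<v_l<r_l<r_h<v_h$ is that the moral term \emph{rewards} self-trading the high quality (since $v_h-r_h>0$) and \emph{penalizes} self-trading the low quality (since $v_l-r_l<0$); so in any candidate equilibrium an agent strictly prefers to arrange $p_{ih}\le\bar p_i<p_{il}$ for her own strategy whenever this carries no offsetting material cost. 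I would first use this to discard every profile in which some contingent market is of type No trade or Low quality (in the sense of Definition \ref{def:adv}): by adjusting her own price and threshold in tandem the deviator can switch her own configuration to ``high-only'', capturing the moral gain $\tfrac12\lambda(v_h-r_h)$ without any material loss. This eliminates all sixteen outcome combinations except those built from Full trade and High quality blocks, i.e. the three types in the statement together with their symmetric counterparts.

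The second step pins down prices within each surviving type by the ``raise the price up to the opponent's threshold'' argument used earlier: if a traded good is priced strictly below the buyer's threshold, the seller raises that price to the threshold, strictly increasing revenue without disturbing her moral term. This forces $p_{ih}=\bar p_j$ in a High quality market and $p_{ih}=p_{il}=\bar p_j$ in a Full trade market, so every candidate is described by a single price $P$ per market, equal to the relevant threshold.

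Then I turn to the deviation accounting that produces the thresholds. For a symmetric High quality/High quality profile at price $P$ I would check three deviations: (i) ``stop the loss-making high sale but keep the moral gain by raising the threshold'', profitable unless $P\ge r_h$; (ii) ``also sell the low quality at $P$ while still buying high, accepting the moral-low penalty'', whose unprofitability reads $(1-\kappa)(P-r_l)\le\kappa(r_l-v_l)$; and (iii) ``sell low and forgo buying high so as to preserve the high-only configuration'', whose unprofitability reduces to $P\le\lambda v_h+(1-\lambda)r_l$. Both (ii) and (iii) are slackest at the smallest admissible price $P=r_h$, where they become $\kappa\ge\kappa_1$ and $\lambda\ge\lambda_1$; hence a High quality/High quality equilibrium exists iff $\kappa\ge\kappa_1$ and $\lambda\ge\lambda_1$. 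For the symmetric Full trade/Full trade profile at price $P\in[r_h,v_e]$ (the lower bound again from the loss-making-sale deviation, the range nonempty exactly when $\lambda\ge\lambda_e$), the two binding deviations are ``buy both but drop the low trade'', giving $\kappa\le (P-r_l)/(P-v_l)$, and ``lower the threshold below the low price to shed the moral-low penalty while still selling low'', giving $\kappa\le (v_e-P)/\big(\lambda(v_h-r_l)+r_l-P\big)$. The first bound increases in $P$ and the second decreases in $P$, so an equilibrium exists for some admissible $P$ iff $\kappa$ lies below $\max_{P\in[r_h,v_e]}\min\{\cdot,\cdot\}$; solving the two bounds simultaneously shows their crossing value equals $(v_e-r_l)/\big(\lambda(v_h-r_l)+r_l-v_l\big)$, while the left-endpoint value is $(v_e-r_h)/\big(\lambda(v_h-r_l)+r_l-r_h\big)$, so the upper envelope is exactly $\kappa_2(\lambda)$. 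The asymmetric High quality/Full trade profiles are then obtained by combining the two sets of constraints, one player facing the High-quality conditions and the other the Full-trade conditions, which yields the joint requirement $\kappa\in[\kappa_1,\kappa_2(\lambda)]$.

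The main obstacle will be the bookkeeping forced by the threshold serving three roles at once: it governs whether the agent buys the high good, whether she buys the low good, and which self-configuration (hence which moral payoff) she realizes. Because the opponent posts a single threshold and, in a Full trade market, a single price for both qualities, the deviator cannot buy only the high good, so the \emph{only} way to escape the moral-low penalty while keeping the profitable low sale is to drop out of buying altogether; isolating this deviation (the second $\kappa_2$ bound) and proving that the enumerated deviations are \emph{exhaustive} — so that the necessary conditions are also sufficient for mutual best response — is the delicate part. I would close by verifying that the surviving profiles are genuine mutual best responses (monotonicity of each bound in $P$ guarantees an admissible price throughout the stated regions) and, as in the previous sections, by noting that a sufficiently fine finite grid of prices and thresholds makes the existence claims rigorous while preserving all the listed pure-strategy equilibria.
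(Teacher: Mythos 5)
Your deviation accounting for the two \emph{symmetric} candidate profiles is essentially the paper's own (Lemmas \ref{lemma:incompinfo:inef:hm:bad5}, \ref{lemma:incompinfo:inef:hm:bad6}, \ref{lemma:incompinfo:inef:hm:bad9prime} and \ref{lemma:incompinfo:inef:hm:bad10}): your three HQ/HQ conditions amount to $P\in\bigl[r_h,\min\bigl\{\lambda v_h+(1-\lambda)r_l,\tfrac{r_l-\kappa v_l}{1-\kappa}\bigr\}\bigr]$, and your two FT/FT bounds are algebraic rewritings of the paper's $P\geq B:=\tfrac{r_l-\kappa v_l}{1-\kappa}$ and $P\leq D:=v_e+\tfrac{\kappa(1-\lambda)}{1-\kappa}(v_l-r_l)$, with your envelope claim matching $\kappa_2(\lambda)=\min\{\kappa_a(\lambda),\kappa_b(\lambda)\}$. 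The first genuine gap is your elimination step. The ``costless switch to high-only'' requires placing $p_{ih}'\leq\bar p_i'$ strictly above the opponent's threshold $\bar p_j$ (so as not to start selling) and strictly below the opponent's lowest price (so as not to start buying); this is feasible only if $\bar p_j<\min\{p_{jh},p_{jl}\}$, and it can fail for \emph{both} players at once as soon as one of them already holds the high-only configuration. Concretely, with $v_h=10,r_h=8,r_l=6,v_l=4$, take $p_{1h}=50\leq\bar p_1=60<p_{1l}=70$ and $\bar p_2=40<p_{2h}=p_{2l}=70$: this is a No trade/No trade profile in which Player 1 gains nothing from the switch (she already has it) and Player 2 cannot perform it because $\bar p_1\geq p_{1h}$. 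The profile is indeed not an equilibrium, but the killing deviation is \emph{material}: since $p_{1h}>v_h$ forces $\bar p_1\geq p_{1h}>r_h$, Player 2 profits by cutting $p_{2h}$ down to $\bar p_1$ and selling high above cost. This case analysis (price below $v_h$: raise the threshold; price above $v_h$: cut the price) is exactly what the paper's Lemmas \ref{lemma:incompinfo:inef:hm:notrade1}--\ref{lemma:incompinfo:inef:hm:notrade4} and \ref{lemma:incompinfo:inef:hm:bad3} supply and your plan omits. Similarly, eliminating Low quality markets (Lemmas \ref{lemma:incompinfo:inef:hm:bad}--\ref{lemma:incompinfo:inef:hm:bad2} and Corollary \ref{cor:incompinfo:inef:hm:bad2}) rests on the material incompatibility created by $v_l<r_l$ (no buyer pays above $v_l$, no seller accepts below $r_l$), plus $\epsilon$-adjustments protecting the deviator's own high-quality self-trade, not on a free moral switch. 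Finally, you never rule out within-type \emph{asymmetric} profiles (different prices across the two contingent markets), which the paper does in Lemmas \ref{lemma:incompinfo:inef:hm:bad4}, \ref{lemma:incompinfo:inef:hm:bad7} and \ref{lemma:incompinfo:inef:hm:bad9}; without this, the ``if and only if'' claims are not established.

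The HQ/FT step is also not a valid argument, even though the interval you announce is correct. Deviation payoffs depend on the opponent's strategy, so the asymmetric profile's constraints are \emph{not} the union of the two symmetric sets. At a common price $P$, the full-trade seller who drops her threshold stops buying only the \emph{high} good (her opponent sells nothing else), giving the bound $P\leq A:=v_h+\tfrac{\kappa(1-\lambda)}{(1-\kappa)\lambda}(v_l-r_l)$ rather than $P\leq D$; and the high-only seller who starts selling low while dropping out of buying loses $v_e-P$ (her opponent sells \emph{both} goods), giving $P\leq C:=\tfrac{v_e+(1-\lambda)r_l}{2-\lambda}$ rather than $P\leq\lambda v_h+(1-\lambda)r_l$. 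The paper's Lemma \ref{lemma:incompinfo:inef:hm:bad8} derives $P=B\in[r_h,\min\{A,C,v_e\}]$ from scratch and then needs the non-obvious implications $r_h\leq v_e\Rightarrow C\leq v_e$ and $B\leq C\Rightarrow C\leq A$ to collapse this to $\kappa\in[\kappa_1,\kappa_a(\lambda)]$, i.e.\ $[\kappa_1,\kappa_2(\lambda)]$ on its nonempty range $\lambda\geq\lambda_2$. That your ``combination'' lands on the same interval is a coincidence of this parametrisation (one can check $D<\lambda v_h+(1-\lambda)r_l$ always and $B\leq D\iff B\leq C$), not a derivation: as written, this step asserts the conclusion.
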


%REESCRIBIR ESTO. NOTAR QUE ME FALTAN LOS EQUILIBRIOS CUANDO LAMBDA<LAMBDAe Y KAPPA ESTA ENTRE KAPPA2 Y KAPPA1.
Figure \ref{fig:hm_inef} summarises propositions \ref{prop:incompinfo:selfish:maineqi} and \ref{prop:incompinfo:inef:hm:eq}. The first relevant point to be made about Proposition \ref{prop:incompinfo:inef:hm:eq} is that it now takes a sufficiently high degree of morality to do away with all but the most efficient equilibria. This contrasts sharply with the conclusions for the preceding sections, where only efficient equilibria are attained for \textit{any} $\kappa \in (0,1)$. Even though a requirement on expected quality persists, sufficiently moral agents require a lower probability of good quality to reach these equilibria. In the same line, the existence of pure strategy Nash equilibria with partially moral agents now requires a lower expectation of good quality than in the previous case, provided that agents are moral ($\kappa\geq \kappa_1$) enough, since $\lambda_1<\lambda_e$.   

Considering partially moral agents brings about additional interesting changes to the equilibrium set of the game under study. In contrast with the previous cases where all trade was socially desirable, under the present parametrisation, \textit{homo moralis} agents reach equilibria that are not attainable by \textit{homo oeconomicus}. In other words, so far the introduction of partially moral concerns has had the effect of ``refining'' the equilibrium sets in the game between by \textit{homo oeconomicus} players, as every equilibrium profile with \textit{homo moralis} belonged to the former. This is not the case now because any profile with trade of exclusively the good quality in either or both contingent markets does not constitute an equilibrium with \textit{homo oeconomicus}.

This kind of equilibrium is particularly relevant, as only surplus-generating exchanges are undertaken. They exist when the agents' degree of morality and expectation about product quality are sufficiently high (i.e., the former is above $\kappa_1$ while the latter is above $\lambda_1$). The simple intuition behind such a result is that, given that trade is occurring at a price above high-quality cost $r_h$ and thus above low-quality cost $r_l$, only sufficiently moral agents will not be tempted to reduce their low quality price to sell their low quality item. It is worth pointing out that the minimum probability for the object to be of good quality required for the existence of this type of equilibria is strictly below that required for any trade to take place between \textit{homo oeconomicus}, or in other words ($\lambda_1<\lambda_e$).

The second type of equilibrium found in the \textit{homo moralis} model is Full trade/Full trade. This is the only class already present in the \textit{homo oeconomicus} case. It exists when the agents' degree of morality is not ``too high'' (below $\kappa_2(\lambda)$) and their expected consumer valuation is above the high quality cost, just as the \textit{homo oeconomicus} case. The fact that they exist only for $\lambda\geq\lambda_e$ is also in line with the result in the previous Section, where I assumed that all trade is socially desirable. The difference is that now there is an upper bound on the degree of morality ($\kappa_2(\lambda)$) for which the equilibria exist. This is because high degrees of morality now cause deviations that impede trade of the low quality to become profitable.

The third type are those with only the good quality being exchanged in one contingent market and full trade in the other one. The minimum degree of morality for which they exist is the same as the High quality/High quality equilibria, but its maximum is also bounded from above by $\kappa_2(\lambda)$, exactly like Full trade/Full trade profiles. This type of equilibria exist in the intersection of the sets of $\lambda$ and $\kappa$ values where the first two types are possible. That means that for a given $(\lambda,\kappa)$ vector that supports High quality/Full trade equilibria, High quality/High quality and Full trade/Full trade also exist. As reported in Proposition \ref{prop45star} in the Appendix, the price at which exchanges take place is exactly the same in both contingent markets and moreover, it is unique and increasing in the degree of morality. Intuitively, this is because more moral players who are selling both items will require a larger price not to deviate and cease the trade of the bad quality object. Simultaneously, for those selling only the good quality, they will be enticed to cease bad quality trade if the price is sufficiently small. The maximum price below which these deviations are profitable is increasing in $\kappa$.

Finally, it is worth pointing out the nonexistence of pure-strategy Nash equilibria. When

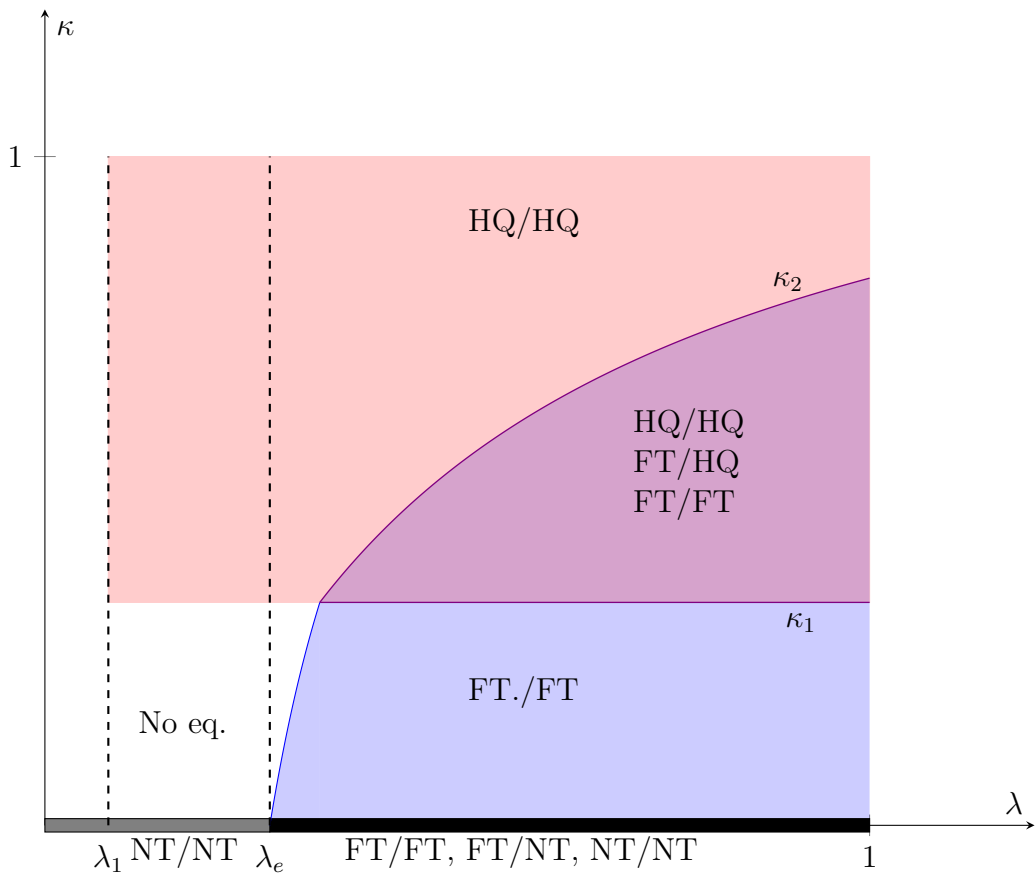
\begin{figure}[H]
	\centering
\begin{tikzpicture}[thick, scale=1.9]

\begin{axis}[
axis lines = middle,
xtick={0,1},
ytick={0,1},
clip=false,
xlabel = {$\lambda$},
ylabel = {$\kappa$},
xmin=0, 
xmax=1.2,
ymin=0, 
ymax=1.22]

\filldraw [name path=rect, pattern = dots, fill=red, draw=red, opacity=0.2] (1/13*100,1/3*100) rectangle (1*100,1*100);

\filldraw [name path = rect, fill=black, opacity = 1] (3/11*100,-1) rectangle (1*100,1);    
\filldraw [fill=gray] (0,-1) rectangle (3/11*100,1);    
%
% Plot k1
\addplot[name path = A, color = violet, line width=0.5pt, 
domain = 1/3:1,
mark=none, 
samples=1000] {1/3}
node [very near end, below, black] {$\kappa_1$};
%
%
%
% Plot k2
\addplot [name path = B, color = violet, line width=0.5pt,
domain = 1/3:1,
samples = 1000] {(x*55-10)/(x*45+10)}
node [very near end, above, black] {$\kappa_2$};
%%
% Plot k3
\addplot [name path = C,color = blue,
domain = 3/11:1/3,
samples = 1000] {(x*55-15)/(x*45-5)};
\addplot[thick, samples=1000, dashed,domain=0:1,black, name path=three] coordinates {(1/13,0)(1/13,1)}
node[pos=-0.05]{$\lambda_1$};
%
%%\addplot[thick, samples=1000, dashed,domain=0:1,black, name path=three] coordinates {(1/3,0)(1/3,1)}
%%node[pos=-0.05]{$\lambda_2$};	
%
\addplot[thick, samples=1000, dashed,domain=0:1,black, name path=three] coordinates {(3/11,0)(3/11,1)}
node[pos=-0.05]{$\lambda_e$};
%
%\addplot fill between [of = A and B, soft clip = {domain=1/3:1}, every segment no 0/.style={color=teal, opacity=0.5},];
%
%
%
\path[name path=axis] (axis cs:0,0) -- (axis cs:1,0);
%%
%\addplot fill between[
%of = B and axis,
%soft clip={domain=0.43308472931:0.8396425434},
%every segment no 0/.style = {pattern=north east lines, pattern color=gray},
%];
%%
\addplot  fill between[
of= C  and axis,
soft clip={domain=0:1/3},
every segment no 0/.style={color=blue, opacity=0.2}];
];
\addplot fill between[
of= B  and axis,
soft clip={domain=1/3:1},
every segment no 0/.style={color=blue, opacity=0.2}];
%% Legend
%
%%\node[above right, draw, align = left] at (35, 110) {
%%
%%\fbox{} Only trade of good q. \\ 
%%\fcolorbox{black}{blue!30}{} Trade of good q. and pooling \\
%%\fcolorbox{black}{green!30}{} Pooling 
%%
%%};
%
%
\node[right] at (50, 90) {HQ/HQ};

\node[right] at (70, 60) {HQ/HQ};
\node[right] at (70, 54) {FT/HQ};
\node[right] at (70, 48) {FT/FT};

\node[right] at (50, 20) {FT./FT};
\node[right] at (9, -4) {NT/NT};
\node[right] at (35, -4) {FT/FT, FT/NT, NT/NT};
\node[right] at (10, 15) {No eq.};
%
%\node[above right, draw, align = left, fill = black] at (45, 118) {};
%\node[right] at (48, 119) {Full t./Full t., Full t./No t., No t./No t.};
%%\node[right] at (63, 119) {Full-trade, pooling + no-trade, no-trade};
%
%\node[above right, draw, align = left, pattern = mynewdots] at (45, 110) {};
%\node[right] at (48, 111) {No t./No t.};
\end{axis}

\end{tikzpicture}
			\caption{Equilibrium regions in the $(\lambda,\kappa)$ space. Shaded areas indicate types of equilibria for $\kappa \in [0,1)$.}
			\label{fig:hm_inef}
\end{figure}

%	%----------------------------------------------------------------------------------------
%	%	ALTRUISM
%	%----------------------------------------------------------------------------------------
	\section{A comparison with altruism}\label{sec:alt}

I have shown in the previous sections that (partially) Kantian agents manage to prevent negotiation failures and significantly reduce or even eliminate the inefficiencies stemming from information asymmetries. Cases where all potential exchanges produce a positive surplus feature particularly strong effects of morality on market outcomes: it suffices for agents to be partially moral for all but the most efficient profiles to cease being equilibria.

In this section I modify the previous models to consider agents with altruistic preferences. I describe as ``altruistic'' an individual who cares not only about her own material welfare but also about that of others. Both moralists and altruists deviate from the standard assumption that agents only care for their material payoff. Moreover, they may be behaviourally identical in certain contexts (see \cite{alger2016b}). However, the rationale for their observed choices is radically different. While the former care about the hypothetical material payoff they would get if their strategy is imitated by the other agent, altruists care about the material payoffs induced by their own strategy and their opponent's \textit{actual} strategy. Confusing morally-driven players and altruists could then lead to very different predictions when studying the reaction of agents to changes in the environment. 

An altruistic agent is partially concerned by the other player's material payoff. The utility function presented in (\ref{exanteutility}) when the \textit{ex-ante} game was defined needs to be modified accordingly. A standard way of introducing altruism is:
\begin{equation}\label{eq:compinfo:alt:utility}
U^{alt}\big((s_i,b_i),(s_j,b_j)\big)= \pi\big((s_i,b_i),(s_j,b_j))\big) + \alpha\cdot\pi\big((s_j,b_j),(s_i,b_i)\big), 
\end{equation}

\noindent where $\alpha \in[0,1]$ represents the degree of altruism. Equation (\ref{eq:compinfo:alt:utility}) states that the utility of an individual playing strategy $(s_i,b_i)$ against a player who employs strategy $(s_j,b_j)$ depends on her material payoff but is also positively affected by the material payoff obtained by her opposing party ($\pi\big((s_j,b_j),(s_i,b_i)\big)$ represents the payoff obtained by the player employing strategy $(s_j,b_j)$ against $(s_i,b_i)$). The more altruistic this player is (that is, the larger $\alpha$ is), the more utility she derives from her opponent's material well-being.\footnote{Although quite conventional, this representation it is not the only one. See for example \cite{bruhin2018} for an expression which allows for different degrees of altruism depending on whether the agent has a higher or a lower payoff than her opponent.} 

In the remainder of this Section I revisit the three games studied so far but I assume that agents are altruistic rather than morally concerned. The main conclusion is that in every setting considered, efficient results are attained only when altruism is high enough, and even a very high degree of altruism may not lead to full efficiency in some cases. In addition and echoing the \textit{homo oeconomicus} case, equilibria in the game between altruistic agents can be asymmetric. This implies that situations where one player has a larger \textit{ex-ante} utility 
%
%Starting with the complete information game, I define the ``altruistic cost'' and  the ``altruistic valuation'': 
%\begin{equation}
%r^{alt}=\frac{r-\alpha v}{1-\alpha},
%\end{equation}
%\begin{equation}
%v^{alt}=\frac{v-\alpha r}{1-\alpha}.
%\end{equation}
%
%I use them 

In Proposition \ref{prop:compinfo:alt:maineq} I present the equilibria of the complete information game introduced in Section \ref{sec:compinfo} when agents have altruistic preferences. 

%Unsurprisingly, the assumption that the consumer's valuation is larger than the seller's cost means that $r^{alt}$ is lower than the material cost $r$ and decreasing in the degree of altruism. That is, for given values of $r$ and and $v$ such that $r<v$, a more altruistic seller will put a larger weight on the buyer's material benefit and thus she will be willing to part with the object at lower prices. Conversely, the altruistic valuation $v^{alt}$ is larger than the material valuation $v$ and increasing in the degree of altruism, as the buyer values the seller's surplus and thus will be willing to purchase at higher prices.

\begin{proposition}\label{prop:compinfo:alt:maineq} 
	Consider the complete information game with $\alpha \in (0,1)$. Then:
	\begin{itemize}
		\item If $\alpha \leq \frac{r}{v}$, the Nash equilibria in pure strategies are of the type A/B where $A,B \in \{\text{Trade},\text{No trade}\}$.	
		\item If $\alpha > \frac{r}{v}$, the Nash equilibria in pure strategies are of the type Trade/Trade	
	\end{itemize}
\end{proposition}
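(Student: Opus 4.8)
The plan is to exploit the same separability that made the $\kappa=0$ benchmark (Proposition \ref{proposition:compinfo:selfish:maineq}) tractable. Substituting the complete-information \textit{ex-ante} payoff (\ref{eq:payoff:compinfo}) into the altruistic utility (\ref{eq:compinfo:alt:utility}) and collecting terms around the two indicator functions, player $i$'s utility becomes
\begin{equation*}
U^{alt}_i=\tfrac{1}{2}\mathbb{1}\{p_i\le\bar p_j\}\big[(1-\alpha)p_i+\alpha v-r\big]+\tfrac{1}{2}\mathbb{1}\{p_j\le\bar p_i\}\big[v-\alpha r-(1-\alpha)p_j\big].
\end{equation*}
Crucially, unlike the \textit{homo moralis} utility (\ref{eq:compinfo:hm:ut2}), this expression contains no term coupling a player's own price $p_i$ with her own threshold $\bar p_i$: the variable $p_i$ enters only the market in which $i$ sells, and $\bar p_i$ only the market in which $i$ buys. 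Hence the game again decomposes into two independent contingent games, and its equilibrium set is the product of the two contingent equilibrium sets, with the caveat that each contingent game is now a genuine two-player game with altruism-adjusted payoffs rather than a pair of isolated optimisations.

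First I would solve the contingent market. With seller price $p$ and buyer threshold $\bar p$, trade yields the seller $(1-\alpha)p+\alpha v-r$ (strictly increasing in $p$ because $\alpha<1$) and the buyer $v-\alpha r-(1-\alpha)p$ (strictly decreasing in $p$). This gives clean best responses: facing a threshold $\bar p$, the seller trades at the highest admissible price $p=\bar p$ whenever $(1-\alpha)\bar p+\alpha v-r\ge 0$, i.e. $\bar p\ge\frac{r-\alpha v}{1-\alpha}$, and refuses otherwise; facing a price $p$, the buyer accepts (sets $\bar p\ge p$) whenever $p\le\frac{v-\alpha r}{1-\alpha}$ and refuses otherwise.

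From here the equilibrium types follow. Because the seller strictly prefers a higher price, any trade equilibrium must have $p=\bar p$, and mutual optimality reduces to $\frac{r-\alpha v}{1-\alpha}\le p=\bar p\le\frac{v-\alpha r}{1-\alpha}$; this interval is nonempty precisely because $r<v$, so Trade equilibria exist for every $\alpha$. A No Trade equilibrium instead requires a nonnegative threshold at which the seller declines to trade, i.e. a $\bar p\ge 0$ with $\bar p\le\frac{r-\alpha v}{1-\alpha}$, paired with a sufficiently high price. Such a threshold exists iff $\frac{r-\alpha v}{1-\alpha}\ge 0$, equivalently $\alpha\le\frac{r}{v}$; when $\alpha>\frac{r}{v}$ the seller strictly prefers to trade even at $\bar p=0$ (where her trade payoff is $\alpha v-r>0$), so No Trade is never a seller best response. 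Taking the product over the two markets yields the stated dichotomy: A/B with $A,B\in\{\text{Trade},\text{No trade}\}$ when $\alpha\le r/v$, and Trade/Trade when $\alpha>r/v$.

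The step needing the most care is the boundary $\alpha=r/v$. There $\frac{r-\alpha v}{1-\alpha}=0$, so a No Trade equilibrium survives only through the seller's \emph{weak} indifference at the single admissible threshold $\bar p=0$ (her trade payoff being exactly zero), sustained by any price above $\frac{v-\alpha r}{1-\alpha}$. This is exactly why the statement uses the weak inequality $\alpha\le r/v$, and it highlights that the whole dichotomy is driven not by any interior first-order condition but by the nonnegativity constraint on prices and thresholds, which must be carried through the argument throughout.
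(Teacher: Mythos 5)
Your proof is correct and takes essentially the same route as the paper's: decompose the game into the two contingent markets (using the fact that altruistic utility never couples a player's own price with her own threshold), compute best replies in terms of the altruism-adjusted reservation values $r^{alt}=\frac{r-\alpha v}{1-\alpha}$ and $v^{alt}=\frac{v-\alpha r}{1-\alpha}$, and observe that a No Trade equilibrium requires a nonnegative threshold weakly below $r^{alt}$, which is impossible exactly when $\alpha>\frac{r}{v}$. Your explicit treatment of the boundary $\alpha=\frac{r}{v}$ via the seller's weak indifference at $\bar p=0$ is in fact slightly more careful than the paper's own argument, whose strict-inequality characterization of No Trade profiles ($\bar p^*<r^{alt}$, $p^*>v^{alt}$), read literally, would conflict with the weak inequality $\alpha\le\frac{r}{v}$ in the statement.
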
	
%%%%%%%%% 
% REDACTAR N POCO

When comparing the set of equilibria in the complete information game between altruistic agents with the \textit{homo oeconomicus} version, it is immediately clear that they are qualitatively similar for low levels of altruism ($\alpha \leq \frac{r}{v}$). In turn, for high levels of altruism the only possible equilibria are those of the Trade/Trade type. Therefore, a first conclusion is that only sufficiently high altruism restores efficiency in this game. An additional interesting point is that nothing prevents two equally altruistic agents to trade at different prices in the two contingent markets, a feature that the model with \textit{homo oeconomicus} also shares. What drives this result is the deontological nature of moral decision-making and more precisely, the fact that a player's decisions in a given contingent market are linked to her decisions in the other market through the moral Kantian term. This effect is not present when agents are consequentialistic (be it \textit{homo oeconomicus} or altruistic).

The above effects are a feature of the different versions of the bargaining game. However, considering information asymmetries brings about other interesting differences in the outcomes of the bargaining process between altruists when compared to the benchmark \textit{homo oeconomicus} and to \textit{homo moralis}. I now turn to the buyer heterogeneity game with altruistic agents. Proposition \ref{proposition:incompinfo:valunc:alt} lays out the equilibrium types for this version of the model.\footnote{The full characterisation is relegated to the Appendix because comparing prices and thresholds yields the same conclusion as in the complete information game.}
% For that, I now define three ``altruistic costs'': the ``high-valuation altruistic cost'', the ``low-valuation altruistic cost'' and the ``expected-valuation altruistic cost''. 
%
%\begin{equation}
%r_h^{alt} = \frac{r - \alpha v_h}{1 - \alpha},
%\end{equation}
%
%\begin{equation}
%r_l^{alt} = \frac{r - \alpha v_l}{1 - \alpha},
%\end{equation}
%
%\begin{equation}
%r_e^{alt} = \frac{r - \alpha (\lambda v_h + (1-\lambda) v_l)}{1 - \alpha}.
%\end{equation}
%
%Notice that the object's material cost ($r$) is the same irrespective of whether the consumer acquiring it is of high or low-valuation. However, since the altruistic seller actually cares about the buyer's surplus, she thus cares about the purchaser's valuation. 
%
%This is reflected in her altruistic cost, which is lower than the material cost. Moreover, the high-valuation cost is actually lower than the low-valuation one. The seller's unique cost in the model with \textit{homo oeconomicus} thus becomes three altruistic costs, depending on whether the buyer is known to be a high (low) valuation consumer or if her valuation is unknown. In addition, I define the high and low ``altruistic valuations''. Respectively:
% 
% \begin{equation}
%v_h^{alt} = \frac{v_h  - \alpha r}{1 - \alpha},
% \end{equation}
% 
% \begin{equation}
% v_l^{alt} = \frac{v_l  - \alpha r}{1 - \alpha}.
% \end{equation}
%
%Naturally, the altruistic valuations are larger than their purely material counterparts. 

\begin{proposition}\label{proposition:incompinfo:valunc:alt} 
		Consider the buyer heterogeneity game between two agents with the same degree of altruism $\alpha \ (0,1)$. Then:

		\begin{enumerate}
			\item If $\alpha \leq \frac{r}{v_h}$, the Nash equilibria in pure strategies are of types A/B \\ where $A,B \in \{ \text{Full trade},\text{High valuation},\text{No trade}\}$.

				\item If $\frac{r}{v_h}>\alpha$, the Nash equilibria in pure strategies are of types A/B  \\ where $A,B \in \{ \text{Full trade},\text{High valuation}\}$.	
	%			\begin{itemize}
%					\item Full trade/Full trade
					
%					Pooling in both contingent markets. For all $i,j \in \{1,2\}, i\neq j$ : 
%						\begin{itemize}
%							\item If $\bar p^{l*}_j<\bar p^{h*}_j, r_l^{alt} \leq p^*_i= \bar p^{l*}_j \leq v_l^{alt}$ and $\bar p^{h*}_i \in [p^{l*}_i,\frac{p^{l*}_j}{\lambda}-\left(\frac{1-\lambda}{\lambda}\right)r_l^{alt}]$ 
%							
%							\item If $\bar p^{l*}_j \geq \bar p^{h*}_j, r_e^{alt}  \leq p^*_i= \bar p^{h*}_j \leq v_l^{alt}$ and $\bar p^{l*}_j \in [p^{h*}_j,\frac{p^{h*}_j-r_h^{alt}}{1-\lambda}]$ 
%							
%						\end{itemize}
				
			%		\item Full trade/High valuation
					
%					Pooling and exclusion. For $i,j \in \{1,2\}, i\neq j$ : 
%				
%						\begin{itemize}
%							\item If $\bar p^{l*}_j<\bar p^{h*}_j, r_l^{alt} \leq p^*_i= \bar p^{l*}_j \leq v_l^{alt}$ and $\bar p^{h*}_i \in [p^{l*}_i,\frac{p^{l*}_j}{\lambda}-\left(\frac{1-\lambda}{\lambda}\right)r_l^{alt}]$ 
%							
%							\item If $\bar p^{l*}_j \geq \bar p^{h*}_j, r_e^{alt}  \leq p^*_i= \bar p^{h*}_j \leq v_l^{alt}$ and $\bar p^{l*}_j \in [p^{h*}_j,\frac{p^{h*}_j-r_h^{alt}}{1-\lambda}]$ 
%							
%							\item $ v_l^{alt} < p^*_j= \bar p^{h*}_i \leq v_h^{alt}$ and $\bar p^{l*}_j \in [0,\frac{p^{h*}_j-r_h^{alt}}{1-\lambda}]$ 
%						\end{itemize}	

				%	\item High valuation/High valuation
				%	
%					Exclusion in both contingent markets. For all $i,j \in \{1,2\}, i\neq j$: $ v_l^{alt} < p^*_j= \bar p^{h*}_i \leq v_h^{alt}$ and $\bar p^{l*}_j \in [0,\frac{p^{h*}_j-r_h^{alt}}{1-\lambda}]$ 
			%	\end{itemize}

		\end{enumerate}
\end{proposition}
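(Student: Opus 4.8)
The plan is to exploit that altruistic preferences, like \textit{homo oeconomicus} preferences, are consequentialist, so the two contingent markets decouple and can be solved separately (in contrast to the \textit{homo moralis} case). Writing $U^{alt}$ from \eqref{eq:compinfo:alt:utility} with the contingent payoffs \eqref{contpayoffvalunc}, the terms containing Player $i$'s price $p_i$ — her own seller surplus plus $\alpha$ times the buyer surplus she confers on her opponent — depend only on the opponent's thresholds $(\bar p_{jh},\bar p_{jl})$, while the terms containing her thresholds $(\bar p_{ih},\bar p_{il})$ depend only on the opponent's price $p_j$. Hence a pure-strategy equilibrium of the full game is exactly an ordered pair consisting of an equilibrium of the market in which $i$ sells and $j$ buys and one of the market with roles reversed, which produces the $A/B$ structure. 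It therefore suffices to characterise the equilibria of a single contingent market.

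In that market the seller chooses $p$ to maximise $\lambda\,\mathbb{1}\{p\le\bar p_h\}\,m_h(p)+(1-\lambda)\,\mathbb{1}\{p\le\bar p_l\}\,m_l(p)$, where the effective margins $m_h(p):=(1-\alpha)p-r+\alpha v_h$ and $m_l(p):=(1-\alpha)p-r+\alpha v_l$ are strictly increasing in $p$; the buyer accepts a price in state $Q\in\{h,l\}$ exactly when her effective surplus $(v_Q-p)+\alpha(p-r)$ is non-negative, i.e. when $p\le\hat p_Q:=(v_Q-\alpha r)/(1-\alpha)$. First I would record two structural facts: $\hat p_h>\hat p_l>r$ (using $v_h>v_l>r$), and since a low-valuation acceptance ($p\le\hat p_l$) forces a high-valuation acceptance ($p\le\hat p_h$), the buyer's best response never posts $\bar p_h<\bar p_l$. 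This rules out the ``Low valuation'' outcome of Definition \ref{def:BH}, leaving only Full trade, High valuation and No trade as candidate contingent outcomes.

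Next I would verify that Full trade and High valuation equilibria do arise (for the parameter ranges spelled out in the appendix characterisation). Writing $\bar v:=\lambda v_h+(1-\lambda)v_l$, take for Full trade the buyer to post equal thresholds $\bar p_h=\bar p_l=p$ with $p\le\hat p_l$: the seller's only trade-preserving price is $p$ (raising it kills both trades, lowering it shrinks both margins), his payoff $(1-\alpha)p-r+\alpha\bar v$ is non-negative on a non-empty range ending at $\hat p_l$, and the buyer strictly prefers to accept both — mutual best responses. For High valuation, once $p>\hat p_l$ the excluded low type is indifferent over all $\bar p_l<p$, so she can post a low $\bar p_l$ that makes serving both unattractive relative to serving only the high type at $p=\bar p_h\le\hat p_h$, sustaining exclusion precisely when $\lambda m_h(\bar p_h)\ge\lambda m_h(\bar p_l)+(1-\lambda)m_l(\bar p_l)$. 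I would present these as intervals of supporting prices and thresholds, relegating the exact endpoints to the appendix.

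The crux — and the source of the $r/v_h$ cutoff — is the No trade analysis. A no-trade profile is an equilibrium only if the buyer's posted thresholds deter every profitable seller deviation; the most tempting deviation is to sell to the high type at price $\bar p_h$, which is unprofitable exactly when $m_h(\bar p_h)=(1-\alpha)\bar p_h-r+\alpha v_h\le 0$, i.e. $\bar p_h\le(r-\alpha v_h)/(1-\alpha)$. Since thresholds must be non-negative, such a deterring threshold exists if and only if $(r-\alpha v_h)/(1-\alpha)\ge 0$, that is $\alpha\le r/v_h$; in that regime one completes the equilibrium by also posting $\bar p_l\le(r-\alpha v_l)/(1-\alpha)$ (feasible since $\alpha\le r/v_h<r/v_l$) and having the seller set any $p>\hat p_h$, so the buyer too prefers to reject. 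When $\alpha>r/v_h$ one has $m_h(p)\ge\alpha v_h-r>0$ for every $p\ge 0$, so whatever non-negative $\bar p_h$ the buyer posts the seller strictly gains by matching it and selling to the high type, and no-trade unravels. I expect this deterrence-feasibility step to be the main obstacle to state cleanly, since it must simultaneously dismiss the seller's serve-both and serve-low deviations and confirm the buyer's own incentive to reject; the high-type deviation, however, is what pins the threshold. Collecting the three surviving contingent outcomes then gives $A,B\in\{\text{Full trade},\text{High valuation},\text{No trade}\}$ when $\alpha\le r/v_h$ and $A,B\in\{\text{Full trade},\text{High valuation}\}$ when $\alpha>r/v_h$.
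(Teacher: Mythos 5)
Your architecture is exactly the paper's: both arguments use the consequentialist decoupling of the two contingent markets, then intersect buyer and seller best replies within one contingent market, and your acceptance cutoffs $\hat p_Q=(v_Q-\alpha r)/(1-\alpha)$ and deterrence levels $(r-\alpha v_Q)/(1-\alpha)$ are precisely the paper's altruism-adjusted valuations $v_Q^{alt}$ and costs $r_Q^{alt}$. Your elimination of the Low-valuation outcome and your existence constructions for Full trade and High valuation also track the paper's appendix characterisation, and you correctly read the second bullet's ``$\frac{r}{v_h}>\alpha$'' as the evident typo for $\alpha>\frac{r}{v_h}$.

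There is, however, a genuine gap in the step you yourself identify as the crux. Your unravelling claim for $\alpha>r/v_h$ --- that ``whatever non-negative $\bar p_h$ the buyer posts the seller strictly gains by matching it and selling to the high type'' --- presumes $\bar p_l<\bar p_h$. If instead $\bar p_l\ge\bar p_h$, the deviation $p'=\bar p_h$ trades in \emph{both} states and earns the pooled margin $(1-\alpha)\bar p_h+\alpha v_e-r$, where $v_e\equiv\lambda v_h+(1-\lambda)v_l$, and this is not sign-pinned by $\alpha>r/v_h$. Concretely, take $\bar p_h=\bar p_l=0$ and any seller price $p>\hat p_h$: the buyer is best-responding (rejection is strictly optimal at such $p$, and she is indifferent over all threshold pairs below $p$), serving the high type alone is infeasible since no price satisfies $\bar p_l<p'\le\bar p_h$, and the only trading deviation, $p'=0$, yields $\alpha v_e-r$, which is non-positive whenever $\alpha\le r/v_e$. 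Since $r/v_e>r/v_h$ for $\lambda<1$, no-trade profiles of this form survive on $\alpha\in(r/v_h,\,r/v_e]$, so your proof of part 2 does not close as written: the binding deterrence constraint is not the high-type margin $m_h$ alone but the pooled margin once the excluded state's threshold sits weakly above $\bar p_h$. It is only fair to note that the paper's own appendix proof shares this blind spot --- its seller best-reply correspondence lists no-trade only under $\max\{\bar p^{h},\bar p^{l}\}<r_h^{alt}$ and never treats the configuration $\bar p_l\ge\bar p_h$ with $\bar p_h\le(r-\alpha v_e)/(1-\alpha)$ and $\bar p_l\le(r-\alpha v_l)/(1-\alpha)$ --- so your proposal faithfully reproduces the published reasoning; but to actually establish the stated cutoff one must either supply an argument ruling out these profiles or accept that the no-trade region extends to $\alpha\le r/v_e$.
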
	

Proposition \ref{proposition:incompinfo:valunc:alt} shows that in the heterogeneous buyers setting, altruism does not restore full efficiency, as equilibria with exclusion of low valuation consumers are always possible for any $\alpha\in(0,1)$. Intuitively, this result originates in the fact that the altruistic seller actually perceives a lower cost when trading with a high valuation consumer.
  
%Proposition \ref{proposition:incompinfo:valunc:alt} shows that for low degrees of altruism (more precisely, for $\alpha < \frac{r}{v_h}$), the equilibria are qualitatively similar to the benchmark \textit{homo oeconomicus} case. However, it should be noted that the sets of equilibria featuring exclusion or pooling are larger when compared to it, while that of equilibria with no trade are smaller. Intermediate degrees of altruism ($\alpha \geq \frac{r}{v_h}$) suffice to eliminate equilibria with no trade. This is due to the fact that altruistic sellers will be willing to exchange the good with a high-valuation consumer at any non-negative price.

I now turn to the adverse selection game in the case where it is socially desirable to trade both qualities. I define function $\lambda^{alt}_{e1}(\alpha)$  to be used in the propositions that follow. 
%\begin{equation}
%v^{alt}_e=\lambda \frac{v_h-\alpha r_h}{1-\alpha} + (1-\lambda)\frac{v_l-\alpha r_l}{1-\alpha}.
%\end{equation}
%It is increasing in the degree of altruism $\alpha$. The value of $\lambda$ necessary for the altruistic expected valuation to surpass the altruistic high quality cost is:
\begin{equation}
\lambda^{alt}_{e1}(\alpha) = \begin{cases} 
\dfrac{r_h - v_l + \alpha(r_l - v_h)}{v_h - v_l + \alpha(r_l - r_h)} & \text{ if } \alpha \leq \dfrac{r_{h}-v_{l}}{v_{h}-r_{l}} \\
0 & \text{otherwise }\\
\end{cases}
\end{equation}

For the case where $v_l<r_l$, I additionally define $\lambda^{alt}_{e2}(\alpha)$:

 \begin{equation}
\lambda^{alt}_{e2}(\alpha) = \begin{cases} 
\dfrac{r_l - v_l + \alpha(r_l - v_l)}{v_h - v_l + \alpha(r_l - r_h)} & \text{ if } \alpha \leq \dfrac{v_{h}-r_{l}}{r_{h}-v_{l}} \\
1 & \text{otherwise }\\
\end{cases}
\end{equation}

\noindent Notice that $\lambda^{alt}_{e1}(\alpha)$ is continuous and decreasing in $\alpha$. Moreover, $\lambda^{alt}_{e1}(0) =\lambda_e$, while $\lambda^{alt}_{e1}\left(\frac{r_{h}-v_{l}}{v_{h}-r_{l}}\right) =0$. In turn, function $\lambda^{alt}_{e2}(\alpha)$ is continuous and increasing in $\alpha$ for $v_l<r_l$, with $\lambda^{alt}_{e2}(0)=\frac{r_{l}-v_{l}}{v_{h}-v_{l}}$. 

Proposition \ref{proposition:incompinfo:qual:alt} presents the game's equilibrium types for different degrees of altruism when low quality trade is socially desirable ($r_l < v_l$), while Proposition \ref{proposition:incompinfo:altinef} does the same for the opposite case ($v_l<r_l$).

\begin{proposition}\label{proposition:incompinfo:qual:alt}
	Consider the adverse selection game between two agents with the same degree of altruism $\alpha \in (0,1)$ and assume $0<r_l < v_l < r_h < v_h$. Then, the Nash equilibria are of type $A/B\text{, where } A,B\in T(\alpha,\lambda)$ with:
	
	\begin{itemize}
		\item Full trade $\in T(\alpha,\lambda)\text{ if and only if } \lambda \geq \lambda^{alt}_{e1}(\alpha)$ 
		\item Low quality $\in T(\alpha,\lambda) \text{ if and only if } \alpha < \min \left\{\frac{r_h}{v_h},\frac{r_h-r_l}{v_h-v_l}\right\}$ 
		\item High quality $\in T(\alpha,\lambda) \text{ if and only if } \frac{r_h-r_l}{v_h-v_l}< \alpha<\frac{r_l}{v_l}$ 
		\item No trade $\in T(\alpha,\lambda) \text{ if and only if }  \alpha < \min \left\{\frac{r_h}{v_h},\frac{r_l}{v_l}\right\}$ 		
	\end{itemize}
\end{proposition}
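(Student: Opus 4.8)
The plan is to exploit the fact that, unlike the Kantian utility in (\ref{exanteutility}), the altruistic utility in (\ref{eq:compinfo:alt:utility}) is consequentialist and therefore still \emph{decomposes} across the two contingent markets, so the analysis can proceed market-by-market exactly as in the \textit{homo oeconomicus} benchmark of Proposition \ref{prop:incompinfo:selfish:maineq}. First I would substitute the adverse-selection contingent payoffs into (\ref{eq:compinfo:alt:utility}) and regroup. Writing $U^{alt}_i = \tfrac12\pi^s(s_i,b_j) + \tfrac12\pi^b(s_j,b_i) + \tfrac{\alpha}{2}\pi^s(s_j,b_i) + \tfrac{\alpha}{2}\pi^b(s_i,b_j)$, the terms containing player $i$'s seller prices $(p_{ih},p_{il})$ collect into a function of $(s_i,b_j)$ alone, while those containing her threshold $\bar p_i$ collect into a function of $(s_j,b_i)$ alone. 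Hence $i$'s best seller reply depends only on $j$'s threshold and her best threshold only on $j$'s prices, so the game splits into two independent, symmetric contingent markets. The equilibrium set of $G$ is then exactly the set of pairs $A/B$ with $A,B$ drawn from the common contingent-market equilibrium set $T(\alpha,\lambda)$, and it remains only to determine $T(\alpha,\lambda)$.

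Second, within a contingent market I would derive the agents' effective reservation values. Since each bracketed term is linear and increasing in the relevant price (as $\alpha<1$), the seller optimally pushes any price she means to honour up to the buyer's threshold, and she is willing to sell quality $q\in\{h,l\}$ at price $p$ iff $(p-r_q)+\alpha(v_q-p)\ge 0$, i.e. iff $p\ge \tilde r_q:=\frac{r_q-\alpha v_q}{1-\alpha}$; symmetrically the buyer accepts quality $q$ at price $p$ iff $p\le \tilde v_q:=\frac{v_q-\alpha r_q}{1-\alpha}$. The one delicate point, which I flag as the crux, is that the buyer commits to a \emph{single} threshold and so cannot always cherry-pick qualities: when the seller pools both prices at a common $p^*$ (as she does in any full-trade profile, since her objective is separable and increasing in each price), the buyer's choice is all-or-nothing and she accepts iff the \emph{blended} surplus is nonnegative, $p^*\le \tilde v_e:=\frac{v_e-\alpha r_e}{1-\alpha}$, with $r_e:=\lambda r_h+(1-\lambda)r_l$.

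Finally, I would run the mutual-best-response test for each of the four outcome types of Definition \ref{def:adv}, asking for which $(\alpha,\lambda)$ a supporting price--threshold configuration exists and then checking it is a genuine equilibrium (no profitable price or threshold deviation, which is routine once the $\tilde r_q,\tilde v_q,\tilde v_e$ are in hand). No trade needs a common threshold below both effective costs, hence $\min\{\tilde r_h,\tilde r_l\}\ge 0$, i.e. $\alpha<\min\{r_h/v_h,\,r_l/v_l\}$. A Low-quality profile needs $\tilde r_l\le \bar p<\tilde r_h$ with $\tilde r_h>0$, giving $\alpha<\min\{r_h/v_h,\,(r_h-r_l)/(v_h-v_l)\}$, the second term being exactly the condition $\tilde r_l<\tilde r_h$; a High-quality profile needs the reverse ordering $\tilde r_h\le\bar p\le\tilde r_l$ with $\tilde r_l\ge 0$ and buyer acceptance of the high good (automatic since $\tilde r_h\le\tilde v_h$), giving $(r_h-r_l)/(v_h-v_l)<\alpha<r_l/v_l$. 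Full trade needs $\max\{\tilde r_h,\tilde r_l\}\le \tilde v_e$, and the main remaining work is the algebra showing this is equivalent to $\lambda\ge\lambda^{alt}_{e1}(\alpha)$, including the sign change of the numerator $r_h-v_l-\alpha(v_h-r_l)$ that produces the piecewise definition at $\alpha=\frac{r_h-v_l}{v_h-r_l}$ (one checks that whichever of $\tilde r_h,\tilde r_l$ is binding yields the same threshold in the relevant range). The hardest part is thus not any single deviation check but correctly tracking the single-threshold bundling constraint on the buyer and reducing the full-trade condition to $\lambda^{alt}_{e1}(\alpha)$.
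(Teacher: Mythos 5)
Your proposal is correct and takes essentially the same approach as the paper: the appendix proofs of the companion altruism results (Propositions \ref{prop:compinfo:alt:maineq} and \ref{proposition:incompinfo:valunc:alt}) proceed by exactly this decomposition of the separable altruistic utility into two independent contingent markets with altruism-adjusted reservation prices, and the paper simply omits a written proof of this particular proposition, which your argument evidently completes. Your deferred algebra does go through: the full-trade condition $\max\bigl\{\tilde r_h,\tilde r_l\bigr\}\le\tilde v_e$ reduces to $\lambda\ge\lambda^{alt}_{e1}(\alpha)$ because whenever $\tilde r_l$ is the binding cost (i.e.\ $\alpha>\tfrac{r_h-r_l}{v_h-v_l}$, which exceeds $\tfrac{r_h-v_l}{v_h-r_l}$ under the maintained ordering) the inequality holds for every $\lambda\in[0,1]$ and the piecewise definition sets $\lambda^{alt}_{e1}(\alpha)=0$, so the two branches agree.
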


\begin{proposition}\label{proposition:incompinfo:altinef} 
	Consider the adverse selection game between two agents with the same degree of altruism $\alpha \in (0,1)$ and assume $0<v_l<r_l<r_h<v_h$. Then: 
	
	\begin{itemize}
		\item Full trade $\in T(\alpha,\lambda) \iff \lambda \geq \max \{\lambda^{alt}_{e1}(\alpha),\lambda^{alt}_{e2}(\alpha)\}$ 
		\item High quality $\in T(\alpha,\lambda) \iff \alpha > \frac{r_h-r_l}{v_h-v_l}$	
		\item No trade $\in T(\alpha,\lambda) \iff  \alpha < \frac{r_h}{v_h}$ 
		
		%	\item Low quality $\in T(\alpha,\lambda) \iff \alpha \leq \min \left\{\frac{r_l}{v_l},\frac{r_h-r_l}{v_h-v_l}\right\}$ 
		%	\item High quality $\in T(\alpha,\lambda) \iff \alpha \in \left(\frac{r_h-v_l}{v_h-r_l},\frac{r_l}{v_l} \right)$ 
		%	\item No trade $\in T(\alpha,\lambda) \iff \alpha \in (\frac{r_h}{v_h},\frac{r_l}{v_l})$ 		
	\end{itemize}
	
\end{proposition}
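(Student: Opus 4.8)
The plan is to collapse the two-role bilateral game into a single contingent market, recast altruism as a distortion of costs and valuations, and then read off which of the four outcomes in Definition \ref{def:adv} can be sustained. Expanding the altruistic utility (\ref{eq:compinfo:alt:utility}) and grouping terms shows that Player $i$'s seller component $(p_{ih},p_{il})$ enters only through $\pi^s(s_i,b_j)+\alpha\,\pi^b(s_i,b_j)$ (i.e.\ the contingent market $G^i$), while her threshold $\bar p_i$ enters only through $\pi^b(s_j,b_i)+\alpha\,\pi^s(s_j,b_i)$ (the market $G^j$). Since these two blocks share no decision variable of Player $i$, her best response decomposes, and exactly as in the $\kappa=0$ analysis the equilibrium set of $G$ is the product of the equilibrium sets of the two contingent markets. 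It therefore suffices to characterise the contingent equilibria of a single market, in which the seller maximises $\pi^s+\alpha\pi^b$ and the buyer maximises $\pi^b+\alpha\pi^s$.

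Next I would introduce effective parameters $\tilde r_Q:=\frac{r_Q-\alpha v_Q}{1-\alpha}$ and $\tilde v_Q:=\frac{v_Q-\alpha r_Q}{1-\alpha}$ for $Q\in\{h,l\}$. The seller's adjusted surplus from trading quality $Q$ at price $p_Q\le\bar p$ equals $(1-\alpha)(p_Q-\tilde r_Q)$, which is increasing in $p_Q$; hence she trades $Q$ iff $\bar p\ge\tilde r_Q$, optimally setting $p_Q=\bar p$. Symmetrically, the buyer's adjusted surplus from accepting $Q$ at price $p_Q$ is $(1-\alpha)(\tilde v_Q-p_Q)$, so she accepts iff $p_Q\le\tilde v_Q$. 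I would then record the orderings that drive the result: $\tilde r_h>\tilde r_l$ iff $\alpha<\frac{r_h-r_l}{v_h-v_l}$ (and the reverse otherwise); $\tilde r_h<\tilde v_h$ and $\tilde r_l>\tilde v_l$, so altruism preserves the sign of each quality's social surplus; and $\tilde r_Q>0$ iff $\alpha<r_Q/v_Q$, where $r_h/v_h<1<r_l/v_l$.

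Third, I would test each outcome of Definition \ref{def:adv} as a mutual best response. For Full trade the seller sets $p_h=p_l=\bar p$, optimal iff $\bar p\ge\max\{\tilde r_h,\tilde r_l\}$, while the buyer's payoff from accepting both collapses to $(1-\alpha)(\tilde v_e-\bar p)$ with $\tilde v_e:=\lambda\tilde v_h+(1-\lambda)\tilde v_l=\frac{v_e-\alpha r_e}{1-\alpha}$ and $r_e:=\lambda r_h+(1-\lambda)r_l$, so she participates iff $\bar p\le\tilde v_e$; a supporting $\bar p$ exists iff $\max\{\tilde r_h,\tilde r_l\}\le\tilde v_e$. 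Direct algebra gives $\tilde r_h\le\tilde v_e\iff\lambda\ge\lambda^{alt}_{e1}(\alpha)$ and $\tilde r_l\le\tilde v_e\iff\lambda\ge\lambda^{alt}_{e2}(\alpha)$, with the branch switches matching (e.g.\ $\tilde r_h\le\tilde v_l$ exactly when $\alpha\ge\frac{r_h-v_l}{v_h-r_l}$, where $\lambda^{alt}_{e1}$ flattens to $0$), so Full trade survives iff $\lambda\ge\max\{\lambda^{alt}_{e1}(\alpha),\lambda^{alt}_{e2}(\alpha)\}$. For High quality the seller trades high but not low, i.e.\ $\tilde r_h\le\bar p<\tilde r_l$, which needs $\tilde r_h<\tilde r_l$, and the buyer accepts high iff $\bar p\le\tilde v_h$; since $\tilde r_h<\tilde v_h$ always holds, a supporting $\bar p$ exists iff $\tilde r_h<\tilde r_l$, i.e.\ $\alpha>\frac{r_h-r_l}{v_h-v_l}$, with no restriction on $\lambda$. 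Low quality is impossible, as it would require the buyer to accept low at a price $\ge\tilde r_l>\tilde v_l$, violating her acceptance rule. For No trade the seller must prefer not to trade at the threshold, i.e.\ $\bar p<\min\{\tilde r_h,\tilde r_l\}$, which admits an admissible $\bar p\ge0$ iff $\min\{\tilde r_h,\tilde r_l\}>0$, i.e.\ iff $\alpha<r_h/v_h$; the buyer is then deterred by the seller posting prices above $\tilde v_h$ and $\tilde v_l$, compatible with her indifference across non-trading prices. Conversely, if $\alpha\ge r_h/v_h$ then $\tilde r_h\le0$, so the seller trades high at every admissible threshold and No trade cannot arise.

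Finally, I would assemble the pieces: the surviving contingent outcomes are precisely Full trade, High quality and No trade under the stated conditions, and by the product structure the game's equilibria are all A/B pairs with $A,B$ drawn from this set, which is the claim. The main obstacle is the third step---establishing genuine mutual best responses rather than mere existence of consistent prices, in particular the two-sided deterrence argument underpinning the No trade condition $\alpha<r_h/v_h$ and the algebraic identification of $\max\{\tilde r_h,\tilde r_l\}\le\tilde v_e$ with the piecewise functions $\lambda^{alt}_{e1}$ and $\lambda^{alt}_{e2}$, including their flat branches.
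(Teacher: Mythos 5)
Your proposal is correct and follows what is essentially the paper's own approach: the appendix in fact contains no proof of this particular proposition, but the proofs it does give for Propositions \ref{prop:compinfo:alt:maineq} and \ref{proposition:incompinfo:valunc:alt} proceed exactly as you do, exploiting the fact that altruistic (consequentialist) preferences preserve the contingent-market decomposition and then working with altruism-adjusted parameters — your $\tilde r_Q=\frac{r_Q-\alpha v_Q}{1-\alpha}$ and $\tilde v_Q=\frac{v_Q-\alpha r_Q}{1-\alpha}$ are precisely the paper's $r^{alt}$ and $v^{alt}$. Your algebra also checks out against the statement: $\max\{\tilde r_h,\tilde r_l\}\leq \lambda\tilde v_h+(1-\lambda)\tilde v_l$ reproduces $\lambda \geq \max\{\lambda^{alt}_{e1}(\alpha),\lambda^{alt}_{e2}(\alpha)\}$ including the flat branches, $\tilde r_h<\tilde r_l \iff \alpha>\frac{r_h-r_l}{v_h-v_l}$ gives the High-quality condition with no restriction on $\lambda$, $\tilde r_l>\tilde v_l$ rules out Low-quality outcomes, and $\tilde r_h>0 \iff \alpha<\frac{r_h}{v_h}$ gives the No-trade condition under the same tie-breaking convention (trade when the threshold equals the adjusted cost) that the paper adopts in its complete-information best-reply correspondences.
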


When the exchange of all qualities is socially desirable, a sufficiently high degree of altruism does achieve full efficiency, eliminating all equilibria but those featuring full trade in both contingent markets. In contrast, the fully efficient result when there is some socially undesirable trade is not guaranteed by a very high degree of altruism. This is due to the fact that even though a sufficiently large $\alpha$ is indeed a necessary and sufficient condition for the existence of equilibria featuring only trade of the high quality good, it is not sufficient to eliminate equilibria with full trade in some contingent market.

\section{Discussion}\label{sec:disc}

In this paper I explore a simple bilateral trading process between two agents who may have moral concerns of a Kantian nature. In it, the agent in possession of the good being bargained over proposes a price, while the potential buyer simultaneously sets a threshold. The good only gets exchanged (at the proposed price) if the threshold is at least as high as the price. Importantly, I take an \textit{ex-ante} approach, analysing the game between two agents who face the same odds of finding themselves in either role. This framework is well suited to study the interaction between two agents who in addition to wondering ``what is the strategy that would leave me materially better-off?'', may also pose themselves the counter-factual question: ``how would I fare if my counter-party used the same strategy I am considering?''

My set-up does not rule out coordination failures and is flexible enough to allow for the study of the interaction between different kinds of information structures and moral preferences. I focus on two kinds of canonical information asymmetries. The first one originates when the \textit{Seller} is uncertain about the \textit{Buyer}'s valuation for the product, while the second one is the well known adverse selection problem where the \textit{Buyer} is not informed about the product's quality but the \textit{Seller} is. I tackle two versions of the latter: one where trade is always socially beneficial irrespective of the quality, and another where only trade of the highest quality is desirable.  
		
I show that considering moral preferences has important consequences on the outcome of the market under study. In situations where all trade is desirable (and irrespective of whether there is complete information, valuation uncertainty or adverse selection) and the version of the game populated by agents who are only concerned with their material payoff has efficient equilibria in addition to inefficient ones, partial morality eliminates the latter, leaving only the most efficient equilibria. It also results in a completely \textit{ex-ante} egalitarian outcome, in the sense that expected utility levels are the same. 

In the case with adverse selection and undesirable trade of the low quality items, when agents are of the \textit{homo oeconomicus} type, the most efficient equilibria (trade of only high quality goods) are not part of the equilibrium set. It takes a sufficiently high degree of morality for these profiles to become equilibria and an even larger one for them to be the sole kind of equilibrium possible.

The above contrasts sharply with another kind of pro-social preferences that have received a lot of attention in the literature: altruistic preferences. Indeed, I show that altruism is much less capable of eliminating inefficiencies stemming from information asymmetries. In particular, with altruistic agents, efficient equilibria remain as the sole equilibrium type only when information is equally distributed or in the case of adverse selection with desirable trade and, moreover, this is true only when the degree of altruism is high enough.

My results show that although moral preferences may go a long way in making the outcome of the bilateral trade process more efficient, they are not all-powerful. Notably, the most damaging incarnation of the adverse selection problem in this context (that in which trading low quality items actually reduces total surplus) is completely solved only for very high degrees of morality.

The latter not-withstanding, the inefficiencies introduced by the coordination problem as well as those originating in the information problem are strongly mitigated by considering moral preferences. This suggests that these kind of constraints might be less harmful than what standard models predict, especially in contexts well suited for the kind of partially deontological approach to decision-making that characterises \textit{homo moralis}.

%%%----------------------------------------------------------------------------------------
%%%	BIBLIOGRAPHY
%%%----------------------------------------------------------------------------------------
%
	\nocite{smith2002}

 \newpage

\begin{appendices}
%%%%%%%%%%%%%%%%%%%%%%%%%%%%%%%%%%%%%%%%%%%%%%%%%%%%%%%%%%%%%%%%%%%%%%%	

%%%%%%%%%%%%%%%%%%%%%%%%%%%%%%%%%%%%%%%%%%%%%%%%%%%%%%%%%%%%%%%%%%%%%%%%%%%%%%%%%%%%%%%%%%%%%%%%%%%%%%%%%%%%%%%%%%%%%%%%%%%%%%%%%%%%%%%%%%%%%%%%%%%%%%%%%%%%%%%%%%%%%%%%%%%%%%%%%%%%%%%%%%%%%%%%%%%%%%%%
%%%%%%%%%%%%%%%%%%%%%%%%%%%%%%%%%%%%%%%%%%%%%%%%%%%%%%%%%%%%%%%%%%%%%%%%%%%%%%%%%%%%%%%%%%%%%%%%%%%%%%%%%%%%%%%%%%%%%%%%%%%%%%%%%%%%%%%%%%%%%%%%%%%%%%%%%%%%%%%%%%%%%%%%%%%%%%%%%%%%%%%%%%%%%%%%%%%%%%%%

%%%%%%%%%%%%%%%%%%%%%%%%%%%%%%%%%%%%%%%%%%%%%%%%%%%%%%%%%%%%%%%%%%%%%%%%%%%%%%%%%%%%%%%%%%%%%%%%%%%%%%%%%%%%%%%%%%%%%%%%%%%%%%%%%%%%%%%%%%%%%%%%%%%%%%%%%%%%%%%%%%%%%%%%%%%%%%%%%%%%%%%%%%%%%%%%%%%%%%%%
%%%%%%%%%%%%%%%%%%%%%%%%%%%%%%%%%%%%%%%%%%%%%%%%%%%%%%%%%%%%%%%%%%%%%%%%%%%%%%%%%%%%%%%%%%%%%%%%%%%%%%%%%%%%%%%%%%%%%%%%%%%%%%%%%%%%%%%%%%%%%%%%%%%%%%%%%%%%%%%%%%%%%%%%%%%%%%%%%%%%%%%%%%%%%%%%%%%%%%%%

%%%%%%%%%%%%%%%%%%%%%%%%%%%%%%%%%%%%%%%%%%%%%%%%%%%%%%%%%%%%%%%%%%%%%%%%%%%%%%%%%%%%%%%%%%%%%%%%%%%%%%%%%%%%%%%%%%%%%%%
%%%%%%%%%%%%%%%%%%%%%%%%%%%%%%%%%%%%%%%%%%%%%%%%%%%%%%%%%%%%%%%%%%%%%%%%%%%%%%%%%%%%%%%%%%%%%%%%%%%%%%%%%%%%%%%%%%%%%%%
%%%%%%%%%%%%%%%%%%%%%%%%%%%%%%%%%%%%%%%%%%%%%%%%%%%%%%%%%%%%%%%%%%%%%%%%%%%%%%%%%%%%%%%%%%%%%%%%%%%%%%%%%%%%%%%%%%%%%%%
%%%%%%%%%%%%%%%%%%%%%%%%%%%%%%%%%%%%%%%%%%%%%%%%%%%%%%%%%%%%%%%%%%%%%%%%%%%%%%%%%%%%%%%%%%%%%%%%%%%%%%%%%%%%%%%%%%%%%%%

\section{Appendix to Section \ref{sec:compinfo}}

\noindent \textbf{Proof of Proposition \ref{proposition:compinfo:selfish:maineq}}
\begin{proof}[\unskip\nopunct]

The price set by a given player is completely independent of her threshold and only affects her utility through its interaction with her rival's threshold. Thus, the bargaining game's equilibrium set is all the pairs of profiles that can be formed from the equilibria of the contingent game. 

Consider now the contingent game. The \textit{Seller}'s and \textit{Buyer}'s best reply correspondences are:

	\[\bar p^{BR} = \argmax_{\bar p} U^{B}(\bar p;p)  = \begin{cases} 
	\overline{p}\geq p & \text{if } p\leq v \\
	\overline{p} < p & \text{if } p > v\\
	\end{cases}
	\] 
	
	\[ p^{BR} = \argmax_{p} U^{S}(p;\bar p) = \begin{cases} 
	p = 	\overline{p}  & \text{if } \overline{p} \geq r\\
	p>\overline{p}  & \text{if } \overline{p}< r\\
	\end{cases}
	\]  
	
	These correspondences intercept whenever $\overline{p}^*= p^*\in[r,v]$ or $\overline{p}^*<r, p^*>v$.
\end{proof}
%%%%%%%%%%%%%%%%%%%%%%%%%%%%%%%%%%%%%%%%%%%%%%%%%%%%%%%%%%%%%%%%%%%%%%%%		

%
The best-reply correspondences in the contingent game can be presented graphically in the $(p,\overline{p}) \in \mathbb{R}_+^2$ plane. I do so in Figure \ref{fig:bresponses:compinfo:selfish}. Notice that the contingent game has two types of distinct equilibrium profiles. The first are equilibria with trade, where the \textit{Buyer} sets a threshold in the $[r,v]$ interval and the \textit{Seller} proposes a price equal to the \textit{Buyer}'s threshold. The second are no-trade equilibria, with the \textit{Seller} choosing a price above the the \textit{Buyer}'s valuation $v$, who in turn sets a threshold below the \textit{Seller}'s cost $r$.

\begin{figure}[h]  
	\caption{Buyer and seller best-responses}\label{fig:bresponses:compinfo:selfish}
	
	\centering 
	\begin{tikzpicture}
	\draw[thick,<->] (0,7) node[above]{$\overline{p}$}--(0,0)--(8,0) node[right]{$p$};

	%\draw (0,0)--(6,6) node[above right]{$\overline{p}=q$};
	\node[inner sep=2pt,label=above right:{$\overline{p}=p$}] at (6,6) {};
	
	\draw [thick, red](4,4)--(6,6);
	\draw [thick, blue](0,0)--(2,2);
	\draw [very thick, black](2,2)--(4,4);

	\node[inner sep=2pt,label=left:{$r$}] at (0,2) {};
	\node[inner sep=2pt,label=left:{$v$}] at (0,4) {};
	\node[inner sep=2pt,label=below:{$r$}] at (2,0) {};
	\node[inner sep=2pt,label=below:{$v$}] at (4,0) {};

	\path[pattern=vertical lines,pattern color=blue] (0,0)--(0,6)--(4,4);
	
	\path[pattern=vertical lines,pattern color=blue] (0,6)--(4,4)--(4,6);
	
	\path[pattern=vertical lines,pattern color=blue] (4,0)--(4,4)--(6,6)--(6,0);
	
	\path[pattern=horizontal lines,pattern color=red] (0,0)--(2,2)--(6,2)--(6,0);
	
	\end{tikzpicture}
	%\subcaption*{Blue: buyer best response $\overline{p}$ to $q$. Red: seller best response $q$ to $\overline{p}$.}
\end{figure}
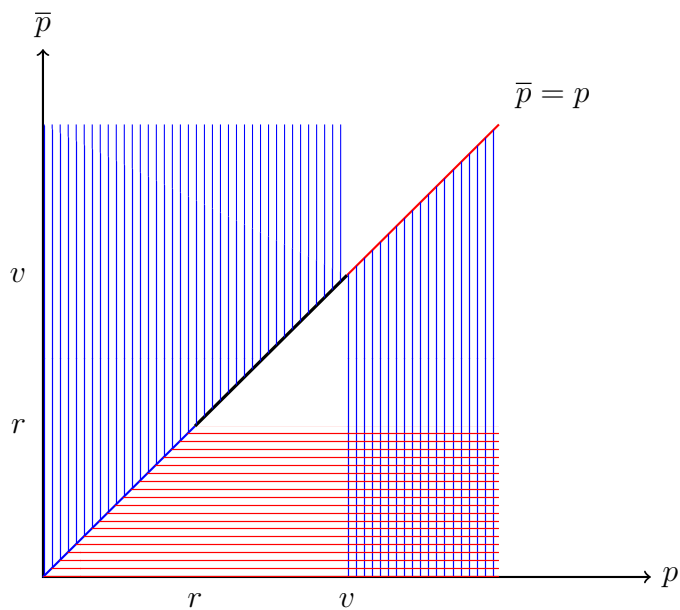	
%%%%%%%%%%%%%%%%%%%%%%%%%%%%%%%%%%%%%%%%%%%%%%%%%%%%%%%%%%%%%%%%%%%%%%%%	
%%%%%%%%%%%%%%%%%%%%%%%%%%%%%%%%%%%%%%%%%%%%%%%%%%%%%%%%%%%%%%%%%%%%%%%%
\noindent \textbf{Proof of Proposition \ref{prop:compinfo:hm:eq}}
The proof follows directly from Lemmas \ref{lemma:compinfo:hm:contnotrade} to \ref{lemma:compinfo:hm:fulltrade3}, presented below.

\begin{lemma}\label{lemma:compinfo:hm:contnotrade}
	Consider the bilateral trade game with complete information and $\kappa\in(0,1)$. Profiles where at least one contingent market features no trade and its price is set below consumer valuation or its threshold above seller cost are not Nash equilibria in pure strategies.	
	%	Profiles where $\bar{p}_i^*<p^*_j$ and $p^*_j\leq v$ or $\bar p^*_i\geq r$ for $i\neq j, i,j \in\{1,2\}$ are not Nash equilibria in pure strategies of the game between two \textit{homo moralis}.
\end{lemma}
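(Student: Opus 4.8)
The plan is to show that any such profile admits a strictly profitable unilateral deviation, so it cannot be a Nash equilibrium. Fix a profile $((p_1,\bar p_1),(p_2,\bar p_2))$ and, exploiting the symmetry of the two contingent markets, assume without loss of generality that the market in which Player $1$ is the \textit{Seller} and Player $2$ the \textit{Buyer} has no trade, i.e. $p_1 > \bar p_2$. In this market the relevant price is $p_1$ and the relevant threshold is $\bar p_2$, so the hypothesis of the Lemma reads $p_1 < v$ or $\bar p_2 > r$, and I would treat these two (non-exclusive) cases separately. Before doing so, I would record the key structural fact from (\ref{eq:compinfo:hm:ut2}): each player's utility splits into a seller term, a buyer term, and the moral term $\tfrac{\kappa}{2}\,\mathbb{1}\{p_i\le\bar p_i\}(v-r)$; since $v>r$ and $\kappa<1$, the material part carries the strictly positive weight $(1-\kappa)$, while the moral indicator $\mathbb{1}\{p_i\le\bar p_i\}$ is \emph{weakly monotone} — it can only (weakly) rise when a player lowers her own price or raises her own threshold.

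In the case $p_1<v$, I would let Player $2$ raise her threshold from $\bar p_2$ to $p_1$, leaving all else fixed. This turns on trade in the stalled market, so her buyer term jumps from $0$ to $(1-\kappa)\tfrac12(v-p_1)>0$; her seller term in the other market is untouched, and her own moral indicator $\mathbb{1}\{p_2\le\bar p_2\}$ can only weakly increase — a strict improvement. In the case $\bar p_2>r$, I would symmetrically let Player $1$ lower her price from $p_1$ to $\bar p_2$: her seller term jumps from $0$ to $(1-\kappa)\tfrac12(\bar p_2-r)>0$, her buyer term is unaffected, and her moral indicator $\mathbb{1}\{p_1\le\bar p_1\}$ weakly increases — again a strict improvement. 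In either case one player has a profitable deviation, so the profile is not a Nash equilibrium.

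I do not expect a genuine obstacle here; the one point demanding care is the bookkeeping around the moral term. A careless deviation could in principle switch off the deviator's own $\mathbb{1}\{p_i\le\bar p_i\}$ and wipe out the moral gain $\tfrac{\kappa}{2}(v-r)$, which would muddy the sign of the net change. The monotonicity observation above is precisely what forecloses this: a buyer raising her threshold or a seller lowering her price never shrinks her own moral indicator, so the strictly positive material gain is never offset. This is also why the strict inequalities in the hypothesis ($p_1<v$, $\bar p_2>r$) are exactly what is needed, and why the result requires $\kappa<1$ (so that $(1-\kappa)>0$).
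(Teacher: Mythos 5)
Your proposal is correct and follows essentially the same route as the paper's proof: in the case where the stalled market's price is below $v$, the buyer in that market raises her threshold to the price, and in the case where the threshold is above $r$, the seller lowers her price to the threshold; in both deviations the material term strictly increases while the deviator's own moral indicator $\mathbb{1}\{p_i \le \bar p_i\}$ can only weakly rise. Your explicit monotonicity observation about the moral indicator is exactly the point the paper handles by enumerating subcases, so the two arguments coincide in substance.
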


\begin{proof}
	Assume, without loss of generality, that $\bar p^*_1 < p^*_2\leq v$. Player 1 can profitably deviate by increasing her threshold to $\bar p'_1 = p^*_2$. The first term of her utility function --see Expression (\ref{eq:compinfo:hm:ut2})-- increases because $\bar p'_1 =  p^*_2\leq v$. The second term either increases (if $p^*_2 \geq p^*_1$) or remains the same (if $p^*_1 \leq \bar p^*_1$ or $p^*_2<\bar  p^*_1$).
	
	In turn, suppose $r\leq \bar p^*_2< p^*_1$. Player 1 can profitably deviate by reducing $p^*_1$ down to $p'_1=\bar p^*_2$. The first term of her utility function increases because $ p'_1 = \bar p^*_2\geq r$. The second term either increases (if $\bar p^*_1 \in [\bar p^*_2,p^*_1)$) or remains the same (if $\bar p^*_1 < \bar p^*_2$ or $\bar p^*_1\geq p^*_1$).
\end{proof}

%%%%%%%%%%%%%%%%%%%%%%%%%%%%%%%%%%%%%%%%%%%%%%%%%%%%%%%%%%%%%%%%%%%%%%%%	
\begin{lemma}\label{lemma:compinfo:hm:notrade}
	Consider the bilateral trade game with complete information and $\kappa\in(0,1)$. No-trade/No-trade profiles with prices above valuation and thresholds below costs are not Nash equilibria in pure strategies.	
	
	%	Profiles with $p^*_i>v$, $p^*_j>v,\overline{p}^*_j<r,\overline{p}^*_i<r,i\neq j, i,j \in\{1,2\}$ are not Nash equilibria in pure strategies of the game between two \textit{homo moralis}.
\end{lemma}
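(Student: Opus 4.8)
The plan is to exhibit, for one player, a deviation that leaves the material part of the utility pinned at zero while strictly activating the Kantian moral term. By the symmetry of the game it suffices to treat Player 1.

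First I would record the value of the candidate profile. At a No-trade/No-trade profile with $p_1^*, p_2^* > v$ and $\bar p_1^*, \bar p_2^* < r$, all three indicators in Expression (\ref{eq:compinfo:hm:ut2}) vanish: since $p_1^* > v > r > \bar p_2^*$ we have $\mathbb{1}\{p_1^* \leq \bar p_2^*\}=0$; since $p_2^* > v > r > \bar p_1^*$ we have $\mathbb{1}\{p_2^* \leq \bar p_1^*\}=0$; and since $p_1^* > v > r > \bar p_1^*$ the moral indicator $\mathbb{1}\{p_1^* \leq \bar p_1^*\}=0$. Hence Player 1's utility at the profile equals $0$.

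Next I would let Player 1 deviate to $p_1' = \bar p_1' = c$ for any $c \in [r,v]$, holding Player 2's strategy fixed. The key observation is that Player 2's strategy forecloses trade in both contingent markets no matter what Player 1 plays in $[r,v]$: trade as a seller would require $p_1' \leq \bar p_2^* < r \leq c = p_1'$, a contradiction, so $\mathbb{1}\{p_1' \leq \bar p_2^*\}=0$; and trade as a buyer would require $p_2^* \leq \bar p_1' = c \leq v < p_2^*$, again impossible, so $\mathbb{1}\{p_2^* \leq \bar p_1'\}=0$. Thus the material term stays at $0$, incurring no loss. However, since $p_1' = \bar p_1'$ the moral indicator now satisfies $\mathbb{1}\{p_1' \leq \bar p_1'\}=1$, so the deviation yields utility $\kappa\cdot\frac{1}{2}(v-r)>0$ because $\kappa>0$ and $v>r$.

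Comparing the two values shows the deviation is strictly profitable, so the profile cannot be a Nash equilibrium; by symmetry the same argument applies to Player 2. There is no genuine obstacle here: the only point requiring care is verifying that the chosen deviation keeps \emph{both} material indicators at zero, which hinges on the fact that the opponent's no-trade strategy rules out trade in either contingent market regardless of Player 1's move within $[r,v]$, so the moral term is harvested as a pure gain.
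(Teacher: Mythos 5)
Your proof is correct and follows essentially the same route as the paper's: deviate to a strategy with own price equal to (or below) own threshold, chosen inside $(\bar p_2^*, p_2^*)$, so that both material indicators in Expression (\ref{eq:compinfo:hm:ut2}) remain zero while the moral indicator switches on, yielding a strict gain of $\kappa\cdot\frac{1}{2}(v-r)>0$. Your choice $p_1'=\bar p_1'=c\in[r,v]$ is simply a concrete instance of the paper's deviation family $\bar p'_1\in(\bar p^*_2,p^*_2)$, $p'_1\in(\bar p^*_2,\bar p'_1]$.
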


\begin{proof}
	These profiles are described by $p^*_i>v, \overline{p}^*_j<r; $ for all $i,j \in\{1,2\},i\neq j$. Consider, without loss of generality, Player 1. She can profitably deviate by setting $\bar p'_1\in (\overline{p}^*_2,p^*_2)$ and $p'_1\in(\overline{p}^*_2,p'_1]$. The first term of her utility function remains unchanged while the second one increases.
	%	 from $\kappa \dfrac{1}{2}\cdot r$ to $\kappa \dfrac{1}{2}\cdot v$.
\end{proof}
%%%%%%%%%%%%%%%%%%%%%%%%%%%%%%%%%%%%%%%%%%%%%%%%%%%%%%%%%%%%%%%%%%%%%%%%%%%%%%%%%%%%%	
% \begin{lemma}\label{lemma:compinfo:hm:conttrade}
% 	Consider the bilateral trade game with complete information and $\kappa\in(0,1)$. Profiles featuring trade in at least one contingent market at a price below cost or above valuation are not Nash equilibria in pure strategies.
% \end{lemma}
% \begin{proof}
% 	The first kind of profiles have (1) $p^*_i\leq \bar p^*_j<r$ for some $i,j \in \{1,2\}, i \neq j$. The second, (2) $v<p^*_j\leq \bar p^*_i$ for some $i,j \in \{1,2\}, i \neq j$.
	
% 	(1): Assume without loss of generality that $p^*_1\leq \bar p^*_2<r$. Then, Player 1 can profitably deviate to $p'_1 = r, \bar p'_1 = \max\{r,\bar p^*_1\}$. The first term of her utility function increases while the second one does not decrease.
	
% 	(2): Assume without loss of generality that $v < p^*_2\leq \bar p^*_1$. Then, Player 1 can profitably deviate to $\bar p'_1 = p_1' = p^*_2 - \epsilon$. For sufficiently small $\epsilon$, the first term of her utility function strictly increases while the second one does not decrease.
% \end{proof}

\begin{lemma}\label{lemma:compinfo:hm:NT_+_T}
	Consider the bilateral trade game with complete information and $\kappa\in(0,1)$. Profiles featuring no trade in one contingent market and trade in the other one are not Nash equilibria in pure strategies.
\end{lemma}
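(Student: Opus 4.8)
The plan is to show that every profile of this kind admits a player with a strictly profitable deviation, the driving force being that the moral term $\tfrac{\kappa}{2}\mathbb{1}\{p_i\le\bar p_i\}(v-r)$ is strictly positive (as $\kappa>0$ and $r<v$) exactly when an agent sets her own price at or below her own threshold. By the symmetry of $G$ I assume without loss of generality that $G^1$ features no trade ($p_1>\bar p_2$) and $G^2$ features trade ($p_2\le\bar p_1$). By Lemma \ref{lemma:compinfo:hm:contnotrade} it suffices to rule out the candidates in which the no-trade market is ``defended'', namely $p_1>v$ and $\bar p_2<r$, since all others are already excluded. These two inequalities, together with $r<v$, give $\bar p_2<r<v<p_1$, and they are precisely what render the deviations below free of material cost.

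First I would treat the case $p_2>v$, where Player~1 buys at a loss in $G^2$ (her buyer payoff is $v-p_2<0$). She deviates by setting $p'_1=\bar p'_1=t$ for some $t\in(\bar p_2,p_2)$, a nonempty interval because $\bar p_2<r<v<p_2$. This keeps $G^1$ at no trade ($t>\bar p_2$, so her seller payoff stays $0$), turns $G^2$ into no trade ($t<p_2$, eliminating the loss), and switches her moral term on ($t\le t$). Her post-deviation utility $\tfrac{\kappa}{2}(v-r)$ strictly exceeds her original $\tfrac{1-\kappa}{2}(v-p_2)+\tfrac{\kappa}{2}\mathbb{1}\{p_1\le\bar p_1\}(v-r)$, since $v-p_2<0$.

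Next I would handle $p_2\le v$. If Player~1's moral term is off ($p_1>\bar p_1$), she raises her threshold to any $\bar p'_1\ge p_1$: the $G^2$ trade and its payoff $v-p_2\ge0$ are preserved ($p_2\le\bar p_1\le\bar p'_1$), her price and hence $G^1$ are untouched, and her moral term turns on --- a strict gain. If Player~1's moral term is already on ($p_1\le\bar p_1$, hence $\bar p_1\ge p_1>v$), I pass to Player~2 and split on $p_2-r$. When $p_2\ge r$, Player~2 raises her threshold to $\bar p'_2=p_2$; as $p_1>v\ge p_2=\bar p'_2$ this leaves $G^1$ at no trade, her $G^2$ seller payoff $p_2-r\ge0$ is unchanged, and her (previously off, since $\bar p_2<r\le p_2$) moral term switches on. When $p_2<r$, Player~2 is selling below cost and instead raises both her price and threshold to $r$: the $G^2$ trade survives (as $\bar p_1\ge p_1>r$), her seller payoff climbs from $p_2-r<0$ to $0$, $G^1$ stays at no trade ($\bar p'_2=r<p_1$), and her moral term does not fall --- a strict material gain.

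I expect the main obstacle to be this last configuration, in which Player~1 is content (moral term on) and the $G^2$ trade is below cost, so that neither the ``raise your threshold'' nor the ``lower your price'' free-moral deviation is available to the satisfied party; the resolution is to notice that the below-cost seller can move her price up to the break-even level $r$ while retaining both the trade and her moral term. The only genuinely delicate bookkeeping is checking that each deviation leaves the opponent-controlled trade outcome intact, and this is exactly where the reductions $p_1>v$ and $\bar p_2<r$ supplied by Lemma \ref{lemma:compinfo:hm:contnotrade} are used.
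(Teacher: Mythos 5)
Your proof is correct, and it follows the paper's high-level strategy: reduce via Lemma \ref{lemma:compinfo:hm:contnotrade} to profiles where the no-trade market has price above $v$ and threshold below $r$, then exhibit case-by-case unilateral deviations that pair a moral gain (switching on $\mathbb{1}\{p_i\le\bar p_i\}$) with no material change, or a strict material gain with a non-decreasing moral term. The decomposition and one of the deviations are genuinely different, however. The paper splits on the location of the trading price: in $[r,v]$ (the trading seller raises her threshold to her own price, a pure moral gain), below $r$ (she prices herself \emph{out} of the below-cost sale), and above $v$ (the trading buyer prices herself out of the above-valuation purchase). You instead split on the trading price relative to $v$ and then on whether the trading \emph{buyer's} moral term is already on, which forces a deviation the paper never uses: when the buyer is already ``content'' and the seller trades below cost, your seller re-prices at the break-even level $r$ while \emph{keeping} the trade alive (feasible because the buyer's threshold exceeds $p_1>v>r$), rather than exiting it. What your route buys is airtight exhaustiveness: your four cases partition all reduced profiles and each deviation is checked to leave the opponent-controlled market unchanged, whereas the paper's three cases, as written, state the condition ``$p_1\le\bar p_2<r$'' twice (the third case is evidently meant to read $v<p_1\le\bar p_2$) and never literally address the configuration $p_1<r\le\bar p_2$, which is covered only because the case-B deviation happens to work there too. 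The paper's version is terser because its labeling makes the trading seller's moral term automatically off whenever $p_1\ge r$, sparing the conditioning on moral status that your labeling requires.
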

 \begin{proof}
 By Lemma \ref{lemma:compinfo:hm:contnotrade}, we can restrict attention to profiles where the price in the contingent market with no trade is set above $v$, while the threshold is below $r$. Assume, without loss of generality, that $\bar p_1 <r$ and $ p_2> v$. 
 
 Suppose, in addition, that $p_1 \leq \bar p_2, p_1 \in [r,v]$. Player 1 can then profitably deviate to $\bar p_1'= p_1$, as the first term of her utility remains unchanged while the second one unambiguously increases. 
 
 Suppose in turn, that $p_1 \leq \bar p_2<r$. Player 1 can profitably deviate to $p_1' \leq \bar p_1'\in[r,v]$, as the first term of her utility strictly increases while the second one  does not decrease. 
 
 Suppose alternatively, that $p^*_1 \leq \bar p_2<r$. Player 2 can profitably deviate to $ \bar p_2'=p_2'\in(r,v]$, as the first term of her utility strictly increases while the second one  does not decrease. 
 \end{proof}

%%%%%%%%%%%%%%%%%%%%%%%%%%%%%%%%%%%%%%%%%%%%%%%%%%%%%%%%%%%%%%%%%%%%%%%%	

% \begin{lemma}\label{lemma:compinfo:hm:conttrade2}
% 	Consider the bilateral trade game with complete information and $\kappa\in(0,1)$. Profiles featuring trade at a price between cost and valuation in one contingent market and no trade with price above valuation and threshold below cost in the other one, are not Nash equilibria in pure strategies. 
% \end{lemma}

% \begin{proof}
% 	These profiles are described by $\overline{p}^*_i < r$, $p^*_i \leq \overline{p}^*_j\in [r,v]$ and $p^*_j> v;i,j \in \{1,2\}, i\neq j$. Assume without loss of generality that $\overline{p}^*_1 < r$, $p^*_1\leq \overline{p}^*_2, p^*_1 \in [r,v]$ and $p^*_2 > v$. Consider Player 1 and notice that $\overline{p}^*_1<p_1^*$. She can profitably deviate to $\overline{p}'_1=p^*_1$. The first term of her utility function remains unchanged while the second one increases.
% \end{proof}

%%%%%%%%%%%%%%%%%%%%%%%%%%%%%%%%%%%%%%%%%%%%%%%%%%%%%%%%%%%%%%%%%%%%%%%%	
\begin{corollary}\label{corollary:compinfo:hm:notrade}
	From Lemmas \ref{lemma:compinfo:hm:contnotrade}, \ref{lemma:compinfo:hm:notrade} and \ref{lemma:compinfo:hm:NT_+_T}, profiles featuring no trade in at least one contingent market are not equilibria of the bilateral trade game where at least one agent's degree of morality is $\kappa\in(0,1)$. 
\end{corollary}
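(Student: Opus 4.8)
The plan is to treat this as a purely combinatorial covering argument rather than as a fresh deviation argument: the claim is that every profile with no trade in at least one contingent market falls under the hypothesis of one of the three preceding lemmas, so none of them can be a Nash equilibrium. The work is therefore to exhibit an exhaustive case partition of such profiles and to match each case to the lemma that rules it out.

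First I would fix an arbitrary profile and suppose, without loss of generality, that the contingent market in which Player $i$ is the \textit{Seller} and Player $j\neq i$ is the \textit{Buyer} displays no trade, i.e. $p_i > \bar p_j$. I would then partition on where this market's price and threshold sit relative to $v$ and $r$. In the first case, $p_i \leq v$ or $\bar p_j \geq r$; this is exactly the hypothesis of Lemma \ref{lemma:compinfo:hm:contnotrade} (a no-trade market with price below valuation or threshold above cost), so a profitable deviation exists and the profile is not an equilibrium.

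The complementary case is $p_i > v$ together with $\bar p_j < r$, and here I would branch on the status of the other contingent market. If that market also shows no trade, it is either of the same ``price above $v$, threshold below $r$'' form --- in which case the profile is a No-Trade/No-Trade profile and is eliminated by Lemma \ref{lemma:compinfo:hm:notrade} --- or else its price is weakly below $v$ or its threshold weakly above $r$, in which case that market itself triggers Lemma \ref{lemma:compinfo:hm:contnotrade}. If instead the other market shows trade, the profile is a no-trade/trade profile and is excluded by Lemma \ref{lemma:compinfo:hm:NT_+_T}.

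Since these alternatives are exhaustive, every profile with no trade in at least one market is eliminated. The only point I would verify carefully --- and the closest thing here to an obstacle --- is exhaustiveness: namely that ``$p_i > v$ and $\bar p_j < r$'' is precisely the complement, within no-trade markets, of the disjunction ``$p_i \leq v$ or $\bar p_j \geq r$'' governing Lemma \ref{lemma:compinfo:hm:contnotrade}, so that no no-trade configuration slips through the partition. Granting that, the corollary follows immediately from the three lemmas.
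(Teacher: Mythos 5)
Your proposal is correct and coincides with what the paper does implicitly: the corollary is stated as a direct consequence of Lemmas \ref{lemma:compinfo:hm:contnotrade}, \ref{lemma:compinfo:hm:notrade} and \ref{lemma:compinfo:hm:NT_+_T}, and the intended justification is precisely your case partition, since the hypothesis of Lemma \ref{lemma:compinfo:hm:contnotrade} (price weakly below $v$ or threshold weakly above $r$ in a no-trade market) and its complement ($p>v$ and $\bar p<r$) exhaust all no-trade configurations, with the latter handled by Lemma \ref{lemma:compinfo:hm:notrade} or Lemma \ref{lemma:compinfo:hm:NT_+_T} according to whether the other market trades. Your exhaustiveness check is the right (and only) point needing verification, and it holds.
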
	
%%%%%%%%%%%%%%%%%%%%%%%%%%%%%%%%%%%%%%%%%%%%%%%%%%%%%%%%%%%%%%%%%%%%%%%%

\begin{lemma}\label{lemma:compinfo:hm:fulltrade1}
	Consider the bilateral trade game with complete information and $\kappa\in(0,1)$. Full trade with the same price and threshold in each contingent market is a necessary condition for equilibrium.
\end{lemma}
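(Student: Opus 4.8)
The plan is to leverage Corollary \ref{corollary:compinfo:hm:notrade}, which already eliminates every profile with no trade in at least one contingent market. Consequently I may restrict attention to full-trade profiles, i.e. those satisfying $p_i \le \bar p_j$ and $p_j \le \bar p_i$ for $i \neq j$. What remains is to upgrade these two weak inequalities to equalities, i.e. to show that in any equilibrium the seller's price coincides with the buyer's threshold \emph{within} each contingent market.

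I would argue by contradiction. Suppose one inequality is strict, say $p_i < \bar p_j$. Intuitively, Player $i$ in her seller role is leaving surplus on the table: she can raise her price and still trade. The only delicate point is the Kantian term $\tfrac{\kappa}{2}\mathbb{1}\{p_i \le \bar p_i\}(v-r)$, which may switch off if $p_i$ is pushed above Player $i$'s own threshold $\bar p_i$. To neutralize this, I would consider the \emph{joint} deviation $p_i' = \bar p_j$ together with $\bar p_i' = \max\{\bar p_i,\bar p_j\}$, raising the threshold alongside the price so that the moral indicator stays active.

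The verification is then routine bookkeeping on the four indicators. Under this deviation, trade in the market where $i$ sells is preserved since $p_i' = \bar p_j \le \bar p_j$; trade in the market where $i$ buys is preserved since $\bar p_i' \ge \bar p_i \ge p_j$; and the moral term is active since $p_i' = \bar p_j \le \max\{\bar p_i,\bar p_j\} = \bar p_i'$. Computing the change in utility yields
$$U_i' - U_i = (1-\kappa)\tfrac{1}{2}(\bar p_j - p_i) + \tfrac{\kappa}{2}(v-r)\big(1 - \mathbb{1}\{p_i \le \bar p_i\}\big),$$
where the first term is strictly positive because $\kappa < 1$ and $\bar p_j > p_i$, and the second is nonnegative because $r < v$. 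Hence the deviation is strictly profitable, contradicting equilibrium, so $p_i = \bar p_j$. The identical argument applied to Player $j$ (equivalently, to the other contingent market) gives $p_j = \bar p_i$, completing the proof.

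The main obstacle — and the reason a naive ``just raise the price'' argument fails — is precisely the coupling with the moral term: increasing $p_i$ in isolation can destroy the Kantian payoff whenever $\bar p_i < \bar p_j$, leaving the value of the deviation ambiguous. Bundling the price increase with a simultaneous threshold increase removes this ambiguity and makes the material gain unambiguous. It is also worth noting that $\kappa < 1$ is used essentially here, since it is what keeps the material component of the gain from vanishing; this is consistent with the fully Kantian case ($\kappa = 1$) being handled separately.
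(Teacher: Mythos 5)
Your deviation for the within-market step is exactly the paper's: when $p_i^* < \bar p_j^*$, deviate to $p_i' = \bar p_j^*$ together with $\bar p_i' = \max\{\bar p_i^*,\bar p_j^*\}$ so the Kantian indicator stays (or switches) on while the seller-side material payoff strictly increases. That part of your argument is correct, and your bookkeeping of the utility change is accurate.

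However, your proof stops short of the lemma's full conclusion and therefore has a genuine gap. The lemma — as it is used later (the proof of Lemma \ref{lemma:compinfo:hm:fulltrade3} restricts attention to profiles with $p_1 = \bar p_2 = p_2 = \bar p_1$, and Proposition \ref{prop:compinfo:hm:eq} asserts this four-way equality) — requires not only that price equals threshold \emph{within} each contingent market, but also that the common price is the \emph{same across} the two markets. Your argument only delivers $p_1 = \bar p_2$ and $p_2 = \bar p_1$; it does not rule out asymmetric full-trade profiles such as $p_2^* = \bar p_1^* = b < a = p_1^* = \bar p_2^*$, which pass both of your within-market checks. These profiles are nevertheless not equilibria, and the paper handles them with a second, separate deviation: since Player 1 then has $p_1^* > \bar p_1^*$, her moral indicator $\mathbb{1}\{p_1 \le \bar p_1\}$ is off, and she can deviate to $\bar p_1' = p_1^*$. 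This leaves her material payoff unchanged (she still buys from Player 2 at $p_2^* \le \bar p_1'$) while switching the Kantian term on, a strict gain of $\tfrac{\kappa}{2}(v-r) > 0$. Without this second step, your ``completing the proof'' claim is unjustified, and the downstream results that rely on symmetry across markets would not follow.
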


\begin{proof}	
	Suppose first that $p_1^* < \bar p_2^*$. Then, Player 1 can profitably deviate to $p_1' = \bar p_2^*, \bar p_1' = \max\{\bar p_1^*,\bar p_2^*\}$. The first term of her utility function increases while the second one does not decrease.
 In turn, assume that $p^*_2 =  \bar p^*_1 < p^*_1 =  \bar p^*_2$. Then, Player 1 can profitably deviate to $\bar p_1' = p^*_1$. The first term of her utility function does not decrease while the second one increases.
\end{proof}	

% \begin{lemma}\label{lemma:compinfo:hm:fulltrade2}
% 	Consider the bilateral trade game with complete information and $\kappa\in(0,1)$. Profiles featuring full trade but at different prices are not Nash equilibria of the game between two \textit{homo moralis}.
% \end{lemma}

% \begin{proof}	
% 	by Lemma \ref{lemma:compinfo:hm:fulltrade1}, we can focus on profiles where thresholds equal price in both contingent markets. Assume, without loss of generality, that  $p^*_2 =  \bar p^*_1 < p^*_1 =  \bar p^*_2$. Then, Player 1 can profitably deviate to $\bar p_1' = p^*_1$. The first term of her utility function does not decrease while the second one increases.
% \end{proof}	
%	
%	(1) Consider first a profile where $p^*_1 \neq p^*_2$, $p^*_1 \leq \bar p^*_1< \bar p^*_2$ and $p^*_2\leq \bar p^*_1$. Any such profile is not an equilibrium because Player 1 can profitably deviate to $ \bar p'_1= p'_1=\bar p^*_2$. The first term of (\ref{eq:compinfo:hm:ut2}) increases because $p^*_1<\bar p^*_2$, while the second one remains the same.
%	
%	(2) Consider next a profile where $p^*_1 < \bar p^*_2=\bar p^*_1=p^*_2$. Then, Player 1 can profitably deviate to $p'_1= \bar p^*_2=\bar p^*_1=p^*_2$. The first term of (\ref{eq:compinfo:hm:ut2}) increases because $p^*_1<\bar p^*_2$, while the second one remains the same.

\begin{lemma}\label{lemma:compinfo:hm:fulltrade3}
	Consider the bilateral trade game with complete information and $\kappa\in(0,1)$. Of all symmetric profiles with full trade only those with coordinates inside the $[r,v]$ interval are Nash equilibria in pure strategies.
\end{lemma}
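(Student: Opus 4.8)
The plan is to use Lemma \ref{lemma:compinfo:hm:fulltrade1} to reduce the problem to symmetric profiles in which both players play the same pair $(q,q)$ with $q\ge 0$ (such a profile automatically generates full trade, since $q\le q$), and then to determine exactly for which $q$ this is a mutual best response. First I would record the candidate payoff: substituting $(p_i,\bar p_i)=(p_j,\bar p_j)=(q,q)$ into (\ref{eq:compinfo:hm:ut2}) gives seller surplus $q-r$, buyer surplus $v-q$ and moral term $v-r$, so the equilibrium utility is $U^*=\tfrac12(v-r)$ for \emph{every} $q$. The whole content of the lemma is therefore to compare this constant value against the best available deviation as $q$ ranges over $\mathbb{R}_+$.

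For sufficiency, I would fix $q\in[r,v]$ and consider an arbitrary deviation $(p_1,\bar p_1)$ by Player $1$ against the opponent's $(q,q)$. The key observation is that the three terms of (\ref{eq:compinfo:hm:ut2}) can be bounded separately: the seller term $\mathbb{1}\{p_1\le q\}(p_1-r)$ is at most $q-r$ (attained at $p_1=q$, using $q\ge r$), the buyer term $\mathbb{1}\{q\le\bar p_1\}(v-q)$ is at most $v-q$ (using $q\le v$), and the moral term $\mathbb{1}\{p_1\le\bar p_1\}(v-r)$ is at most $v-r$. Weighting by $(1-\kappa)/2$ and $\kappa/2$ and adding, these three upper bounds sum to exactly $\tfrac12(v-r)=U^*$. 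Hence no deviation can strictly improve on $U^*$, so $(q,q)$ is a best reply to itself; symmetry then makes it a mutual best reply, i.e.\ a Nash equilibrium.

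For necessity, I would exhibit a profitable deviation whenever $q\notin[r,v]$. If $q>v$, Player $1$ deviates to $(t,t)$ with $t\in(v,q)$: she still sells (since $t\le q$) at a profit $t-r$, she no longer buys the overpriced unit (since her threshold $t<q=p_2$ switches the buyer indicator off, avoiding the loss $v-q<0$), and because $p_1=\bar p_1=t$ the moral term stays active. Her utility becomes $(1-\kappa)\tfrac12(t-r)+\kappa\tfrac12(v-r)$, which exceeds $U^*$ by $(1-\kappa)\tfrac12(t-v)>0$. Symmetrically, if $q<r$ she deviates to $(t,t)$ with any $t>q$: she stops selling below cost (the seller indicator turns off), keeps buying the good at $q<v$, and preserves the moral term, gaining $(1-\kappa)\tfrac12(r-q)>0$. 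Either way $(q,q)$ is not an equilibrium, which pins the equilibrium set down to $q\in[r,v]$.

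The step requiring the most care is the interaction between a player's two roles through the moral indicator $\mathbb{1}\{p_1\le\bar p_1\}$. In the necessity argument the naive fix for each material problem---lowering the threshold when one is overpaying as a buyer, or raising the price when one is underpricing as a seller---would by itself switch the moral term off and potentially wipe out the gain. The essential point is that moving both coordinates together to a common value $t$ repairs the material inefficiency in the relevant role while keeping $p_1\le\bar p_1$, so the moral gain $\tfrac\kappa2(v-r)$ is retained; this is exactly why the deviations remain profitable for every $\kappa\in(0,1)$. Checking the endpoints $q=r$ and $q=v$, where the sufficiency bound still binds, then completes the characterisation.
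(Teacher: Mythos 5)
Your proof is correct and follows essentially the same route as the paper's: you invoke Lemma \ref{lemma:compinfo:hm:fulltrade1} for the same reduction to profiles with all four coordinates equal, your necessity deviations (moving price and threshold together to a common value $t$ so the moral indicator $\mathbb{1}\{p_1\le\bar p_1\}$ stays on) are exactly the paper's deviations for $q<r$ and $q>v$, and your sufficiency argument is just a more explicit version of the paper's observation that at $q\in[r,v]$ both the material and moral terms are simultaneously at their maxima. The only difference is presentational: your term-by-term bound showing every deviation payoff is at most $\tfrac12(v-r)$ spells out what the paper asserts in one line.
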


\begin{proof}
From Lemma \ref{lemma:compinfo:hm:fulltrade1}, we can focus on symmetric, full-trade profiles where prices equal thresholds.

	Consider a profile where $p = p_1= \bar p_2= p_2=\bar p_1<r$. Player 1 can profitably deviate to $p'_1= p'_1>p$. The first term of her utility function increases while the second one remains the same.
	
	Consider next a profile where $v<p = p_1= \bar p_2= \bar p_1=p_2$. Then, Player 1 can profitably deviate to $\bar p'_1 = \bar p'_1 = p-\epsilon$. For small enough $\epsilon$, the first term of her utility function increases. The second term remains the same.
	
	Take a profile where $r\leq p^* = p^*_1= \bar p^*_2= \bar p^*_1=p^*_2\leq v$. Consider, without loss of generality, Player 1. Given Player 2's strategy, both terms of her utility function are at their maximum. Therefore, there are no profitable deviations possible. Then, the profile is an equilibrium. 
\end{proof}

\begin{proposition}\label{proposition:compinfo:hk:maineq}
	Consider the bilateral trade game with complete information and $\kappa=1$. The set of Nash equilibria in pure strategies is $p^*_i\leq \overline{p}^*_i \forall i \in \{1,2\}$
\end{proposition}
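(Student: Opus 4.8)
The plan is to exploit the defining feature of \textit{homo kantiensis}: when $\kappa = 1$, the first (consequentialist) term in the utility function (\ref{eq:compinfo:hm:ut2}) drops out entirely, leaving only the moral term. First I would substitute $\kappa = 1$ into (\ref{eq:compinfo:hm:ut2}) to obtain
\[
U\left((p_i,\overline{p}_i);(p_j,\overline{p}_j)\right)= \tfrac{1}{2}\,\mathbb{1}\{ p_i \leq \overline{p}_i \}\,(v-r),
\]
and then observe that the right-hand side no longer depends on the opponent's strategy $(p_j,\overline{p}_j)$ in any way. This structural collapse is the crux of the argument and makes the characterisation essentially immediate.

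Next I would pin down each player's best response in isolation. Since the benchmark maintains $r < v$, the factor $(v-r)$ is strictly positive, so Player $i$'s utility equals $\tfrac{1}{2}(v-r) > 0$ whenever $p_i \leq \overline{p}_i$ and equals $0$ otherwise. Hence Player $i$'s utility is maximised precisely on the set $\{(p_i,\overline{p}_i)\in\mathbb{R}^2_+ : p_i \leq \overline{p}_i\}$, and crucially this set of maximisers is identical regardless of what the other player does.

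Finally I would assemble the equilibrium set. Because each player's set of best responses is independent of the opponent's choice, a profile is a mutual best response if and only if each player is individually at a maximiser, i.e.\ if and only if $p^*_i \leq \overline{p}^*_i$ for both $i \in \{1,2\}$. This is exactly the claimed characterisation, so no further work is needed.

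The argument is essentially immediate once the utility is simplified, so I do not anticipate a substantive obstacle. The only points requiring a word of care are, first, to invoke $r < v$ explicitly so that the indicator is worth switching on (guaranteeing a strictly positive payoff when $p_i \leq \overline{p}_i$), and second, to note that the maximum is attained on a nondegenerate set rather than at a single point, so that the equilibrium set is the full product region $\{p^*_i \leq \overline{p}^*_i\}$ and there is no hidden tie-breaking subtlety.
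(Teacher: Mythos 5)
Your proof is correct and follows essentially the same route as the paper's own (very terse) argument: with $\kappa=1$ the utility collapses to $\tfrac{1}{2}\,\mathbb{1}\{p_i \leq \overline{p}_i\}(v-r)$, which is opponent-independent and, since $r<v$, is maximised exactly on $\{p_i \leq \overline{p}_i\}$, so the equilibrium set is the product of these individual maximiser sets. Your write-up merely makes explicit the opponent-independence step that the paper leaves implicit; no gap either way.
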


\begin{proof}
	Given that $r<v$, $i$'s utility function with $\kappa = 1$ is maximised if and only if $p^*_i\leq \overline{p}^*_i$.
\end{proof}	

Agents with degree of morality $\kappa$ equal to 1 (\textit{homo kantiensis}) will only care about her own strategy in order to compute their utility. That is, they will look for a best response against their own strategy and completely disregard what the other player might do. 

Under the parametrisation assumed, the surplus generated by trade is always positive. Therefore, agents are then better off when they propose a price and a threshold such that trade would take place if the other agent were to (hypothetically) post exactly the same ones. Recall, however, that the occurrence or not of trade in this model still depends on whether Player $1$'s threshold is at least as high as the price proposed by Player $2$ and vice-versa. 

A game between two \textit{homo kantiensis} will thus feature equilibria of two types. The first kind are full trade equilibria, while in the second one only contingent trade will be attained.

These equilibria will always feature some trade, either contingent or full. This can be easily shown by noticing that it is always be true that either $\overline{p}^*_2 \leq \overline{p}^*_1$ or $\overline{p}^*_2>\overline{p}^*_1$. In the first case, this implies that $p^*_2\leq \overline{p}^*_1 $ (in equilibrium it is necessarily true that $p^*_2 \leq\overline{p}^*_2\leq \overline{p}^*_1$). This means that Player 2 is selling to Player 1. If, in addition, it is also true that $\overline{p}_2^* \geq p^*_1$, then these profiles feature full trade, but this is not a necessary condition for equilibrium.

When comparing this game with the one between two \textit{homo oeconomicus}, an immediate conclusion is that the coordination failure introduced by the simultaneity in price and threshold decisions is mitigated, as no-trade equilibria are no longer possible. Notwithstanding, full morality does not always produce the more materially efficient result. This is because the game may still feature equilibria where trade is only contingent. These are a direct effect of agents only caring about their own threshold and prices while completely disregarding those of their rival. The result thus points to the fact that partially consequentialistic reasoning keeps \textit{homo moralis} ``anchored'' in reality by actually paying attention to her opponent's choices and thus producing actual trade.

The existence of completely deontological agents who only care about what would be ``the right thing to do'' might seem a little far fetched. However, the above result serves to illustrate a broader point already made in the preceding literature (notably in \cite{alger2017}): a higher degree of morality does not always imply a materially more efficient outcome, and whether it does is very context-dependent. In this particular case, the inefficiency arises because \textit{homo kantiensis} looses her grip on reality in the sense that she is not concerned about whether actual trade occurs, for which it is necessary that players set thresholds at least as high as their rival's prices. In contrast, she is content with just setting a price that she herself would accept, no matter how large or small.

\section{Appendix to Section \ref{sec:incompinfo:valunc}}	
%%%%%%%%%%%%%%%%%%%%%%%%%%%%%%%%%%%%%%%%%%%%%%%%%%%%%%%%%%%%%%%%%%%%%%%%%%%%%%%%%%%%%%%%%%%%%%%%%%%%%%%%%%%%%%%%%%%%%%%%%%%%%%%%%%%%%%%%%%%%%%%%%%%%%%%%%%%%%%%%%%%%%%%%%%%%%%%%%%%%%%%%%%%%%%%%%%%%%%%%
\begin{customprop}{3.2*}\label{prop32star}

Consider the bilateral trade game with valuation uncertainty and $\kappa = 0$. The pure strategy Nash equilibria are:

\begin{itemize}
	\item Full trade/Full trade: $\forall i,j \in \{1,2\}, i\neq j : r \leq p^*_i= \min \{\bar p_{jh}^{*},\bar p_{jl}^{*}\}\leq v_l,\bar p_{ih}^{*} \in [\left(1-\lambda\right)p_{il}^{*}+\lambda r,\left(\frac{1}{\lambda}\right)p_{il}^{*}-\left(\frac{1-\lambda}{\lambda}\right)r]$
	
	\item $A/B$ with $A,B \in \{ \text{Full trade},\text{High valuation}\}, A\neq B$: for $i,j \in \{1,2\},i\neq j$:
	
	\begin{itemize}					
		\item $r\leq p^*_i= \min \{\bar p_{jh}^{*},\bar p_{jl}^{*}\}\leq v_l,\bar p_{ih}^{*} \in [\left(1-\lambda\right)p_{il}^{*}+\lambda r,\left(\frac{1}{\lambda}\right)p_{il}^{*}-\left(\frac{1-\lambda}{\lambda}\right)r];$
		
		\item $v_l<p^*_j=\bar p_{ih}^{*}\leq v_h, 0 \leq \bar p^{l*}_i \leq \lambda \bar p_{ih}^{*} + (1-\lambda) r$
	\end{itemize}

	\item $A/B$ with $A,B \in \{ \text{Full trade},\text{No trade}\}, A\neq B$: for $i,j \in \{1,2\},i\neq j$: 
	
	\begin{itemize}									
		\item $r\leq p^*_i= \min \{\bar p_{hj}^*,\bar p_{lj}^*\}\leq v_l, \bar p_{hi}^* \in [\left(1-\lambda\right) \bar p_{li}^* +\lambda r,\left(\frac{1}{\lambda}\right)\bar p_{li}^*-\left(\frac{1-\lambda}{\lambda}\right)r];$
		
		\item $\max\left\{\bar p_{li}^*,\bar p_{hi}^*\right\}<r, p^*_j>v_h$
	\end{itemize}

	\item $A/B$ with $A,B \in \{ \text{High valuation},\text{No trade}\}, A\neq B$: for $i,j \in \{1,2\},i\neq j$:
	
	\begin{itemize}				
		\item $\max\left\{\bar p_{lj}^*,\bar p_{jh}^*\right\}<r, p^*_i>v_h$
		
		\item $v_l<p^*_j=\bar p_{hi}^*<v_h, 0 \leq \bar p_{li}^* \leq \lambda \bar p_{hi}^* + (1-\lambda) r$
	\end{itemize}
	
	\item High valuation/High valuation: $\forall i,j \in \{1,2\}, i\neq j : v_l<p^*_j=\bar p_{hi}^*<v_h, 0 \leq \bar p_{li}^* \leq \lambda \bar p_{hi}^* + (1-\lambda) r$			

	\item No-trade in both contingent markets: $\forall i,j \in \{1,2\}, i\neq j : \max\left\{\bar p_{lj}^*,\bar p_{hj}^* \right\}<r, p^*_i>v_h$
\end{itemize}
\end{customprop}

\begin{proof}
	The price set by a given player is completely independent of her thresholds and only affects her utility through its interaction with her rival's thresholds. Thus, the bilateral trade game's equilibrium set is all the pairs of profiles that can be formed from the equilibria of the contingent game. 
	
	Consider now the contingent game where the seller sets price $p$ and the buyer sets thresholds $\bar p_l$ and $\bar p_h$. The buyer's best reply correspondences in the contingent game are:

	\begin{equation*}
	\bar p^{h}_{BR}= \argmax_{\bar  p^{h}} U^{B}(	\bar p^{l} ,\bar p^{h}; p) = \begin{cases} \label{eq_buyer_h_br}
	\bar p^{h}_{BR}= p & \text{if }  p \leq v_h \\
	\bar p^{h}_{BR}= p & \text{if }  p > v_h \\
	\end{cases}
	\end{equation*}
	
	\begin{equation*}
	\bar p^{l}_{BR}= \argmax_{\bar  p^{l}} U^{B}(	\bar p^{l} ,\bar p^{h}; p) = \begin{cases} \label{eq_buyer_l_br}
	\bar p^{l}_{BR}= p & \text{if }  p \leq v_l \\
	\bar p^{l}_{BR}= p & \text{if }  p > v_l \\
	\end{cases}
	\end{equation*}
	
	The seller's best reply correspondence in the contingent game is:	\begin{equation*}
	\begin{cases}
	p^{BR} > \max\{\bar p^{h},\bar p^{l}\} & \text{if }\max\{\bar p^{h},\bar p^{l}\}<r \\
	
	p^{BR} = \max\{\bar p^{h},\bar p^{l}\} & \text{if }\max\{\bar p^{h},\bar p^{l}\}\geq r, \min\{\bar p^{h},\bar p^{l}\}< r\\
	
	p^{BR} = \min\{\bar p^{h},\bar p^{l}\} & \text{if } \bar p^{l}\geq r, (1-\lambda)\bar p^{l}+\lambda r \leq \bar p^{h} \leq \frac{1}{\lambda}\bar p^{l}-\frac{1-\lambda}{\lambda}r\\
	
	p^{BR} = \bar p^{l} & \text{if } \bar p^{l}\geq r, \bar p^{h} <  (1-\lambda)\bar p^{l}+\lambda r\\
	
	p^{BR} = \bar p^{h} & \text{if } \bar p^{l}\geq r, \bar p^{h} >  \frac{1}{\lambda}\bar p^{l}-\frac{1-\lambda}{\lambda}r\\
	\end{cases}
	\end{equation*}
	
	They intersect at the following profiles:
		\begin{itemize}
		\item Both consumer types are served: $r \leq p^*=\min\{\bar p_{l},\bar p_{h}\} \leq v_l; \bar p_{l}^*\leq \bar p_{h}^*\leq \frac{\bar p_{l}^*}{\lambda}-\frac{1-\lambda}{\lambda}r $ or $\frac{\bar p_{h}^*-\lambda r}{1-\lambda} \leq \bar p_{h}^*\leq \bar p_{l}^*$		
		
		\item Only high valuation consumers are served: $v_l<p^*=\bar p_{h}^*\leq v_h,  \bar p_{l}^* \leq \lambda \bar p_{h}^* + (1-\lambda) r$
		
		\item No consumers are served: $\max\left\{\bar p_{l}^*,\bar p_{h}^* \right\}<r, p^*>v_h$
	\end{itemize}
%
%	\begin{itemize}
%		\item Both consumer types are served: $r\leq p^*=\bar p_{l}^*\leq v_l,\bar p_{l}^*\leq \bar p_{h}^*\leq \frac{\bar p_{l}^*}{\lambda}-\frac{1-\lambda}{\lambda}r $ and $r\leq p^*=\bar p_{h}^*\leq v_l,\bar p_{h}^*\leq \bar p_{l}^*\leq \frac{\bar p_{h}^*-\lambda r}{1-\lambda}$		
%		
%		\item Only high valuation consumers are served: $v_h<p^*=\bar p_{h}^*<v_h, 0 \leq \bar p_{l}^* \leq \lambda \bar p_{h}^* + (1-\lambda) r$
%		
%		\item No consumers are served: $\max\left\{\bar p_{l}^*,\bar p_{h}^* \right\}<r, p^*>v_h$
%	\end{itemize}

\end{proof}	
%%%%%%%%%%%%%%%%%%%%%%%%%%%%%%%%%%%%%%%%%%%%%%%%%%%%%%%%%%%%%%%%%%%%%%%%%%%%%%%%%%%%%%%%%%%%%%%%%%%%%%%%%%%%%%%%%%%%%%%%%%%%%%%%%%%%%%%%%%%%%%%%%%%%%%%%%%%%%%%%%%%%%%%%%%%%%%%%%%%%%%%%%%%%%%%%%%%%%%%%
\begin{customprop}{3.3*}\label{prop33star}
	Consider the bilateral trade game with heterogeneous buyer valuations and $\kappa \in (0,1)$. The Nash equilibria in pure strategies are $\bar p^*_{hi} \in [\left(1-\lambda\right)p^*_{li}+\lambda r,\left(\frac{1}{\lambda}\right)p^*_{li}-\left(\frac{1-\lambda}{\lambda}\right)r] \forall i,j \in \{1,2\}, i\neq j$.
\end{customprop}

\begin{proof}
The proof follows directly from Lemmas \ref{lemma:incompinfo:valunc:hm:1} to \ref{lemma:incompinfo:valunc:hm:10}, presented below.
\end{proof}

%No trade and price below valuation not eq
\begin{lemma}\label{lemma:incompinfo:valunc:hm:1}
	Consider the bilateral trade game with heterogeneous buyer valuations and $\kappa \in (0,1)$. Profiles where at least one of the contingent markets features no-trade with the high (low) valuation consumer and price below her valuation are not Nash equilibria in pure strategies.
\end{lemma}

\begin{proof}	
	Assume, without loss of generality, that $\bar p_{1Q} <p_2\leq v_Q$ for some $Q\in\{h,l\}$. Then, Player 1 can profitably deviate by setting $\bar p_{1Q}' = p_2$. Her utility function's first term increases while the second one does not decrease. 
\end{proof}	
%%%%%%%%%%%%%%%%%%%%%%%%%%%%%%%%%%%%%%%%%%%%%%%%%%%%%%%%%%%%%%%%%%%%%%%%%%%%%%%%%%%%%%%%%%%%%%%%%%%%%%%%%%%%%%%%%%%%%%%%%%%%%%%%%%%%%%%%%%%%%%%%%%%%%%%%%%%%%%%%%%%%%%%%%%%%%%%%%%%%%%%%%%%%%%%%%%%%%%%%	
%No trade and threshold above cost not eq
\begin{lemma}\label{lemma:incompinfo:valunc:hm:2}
	Consider the bilateral trade game with heterogeneous buyer valuations and $\kappa \in (0,1)$. Profiles where at least one of the contingent markets features no-trade and a threshold above cost are not Nash equilibria in pure strategies.
\end{lemma}

\begin{proof}	
	Assume, without loss of generality, that $r \leq \max\{\bar p_{h2},\bar p_{l2}\}<p_1$. Then, Player 1 can profitably deviate by setting $p_1^{'}=\max\{\bar p_{h2},\bar p_{l2}\}$. Her utility function's first term increases while the second one does not decrease. 
\end{proof}	
%%%%%%%%%%%%%%%%%%%%%%%%%%%%%%%%%%%%%%%%%%%%%%%%%%%%%%%%%%%%%%%%%%%%%%%%%%%%%%%%%%%%%%%%%%%%%%%%%%%%%%%%%%%%%%%%%%%%%%%%%%%%%%%%%%%%%%%%%%%%%%%%%%%%%%%%%%%%%%%%%%%%%%%%%%%%%%%%%%%%%%%%%%%%%%%%%%%%%%%%	
%No trade and no trade
\begin{lemma}\label{lemma:incompinfo:valunc:hm:3}
	Consider the bilateral trade game with heterogeneous buyer valuations and $\kappa \in (0,1)$. Profiles featuring no-trade in both contingent markets are not Nash equilibria in pure strategies.
\end{lemma}	

\begin{proof}	
	By Lemmas \ref{lemma:incompinfo:valunc:hm:1} and \ref{lemma:incompinfo:valunc:hm:2}, we can focus only on profiles where thresholds are below cost and prices above valuation: $\forall i,j \in \{1,2\}, i\neq j: \max\left\{\bar p_{lj},\bar p_{hj}\right\}<r, p_i>v_h$. Then, without loss of generality, Player 1 can profitably deviate to $r<p_1'= \bar p_{h1}'<v_l$. Her utility function's second term increases while the first one does not decrease. 
\end{proof}	

%%%%%%%%%%%%%%%%%%%%%%%%%%%%%%%%%%%%%%%%%%%%%%%%%%%%%%%%%%%%%%%%%%%%%%%%%%%%%%%%%%%%%%%%%%%%%%%%%%%%%%%%%%%%%%%%%%%%%%%%%%%%%%%%%%%%%%%%%%%%%%%%%%%%%%%%%%%%%%%%%%%%%%%%%%%%%%%%%%%%%%%%%%%%%%%%%%%%%%%%
%No trade and exc
\begin{lemma}\label{lemma:incompinfo:valunc:hm:4}
	Consider the bilateral trade game with heterogeneous buyer valuations and $\kappa \in (0,1)$. Profiles featuring no-trade and exclusion are not Nash equilibria in pure strategies.
\end{lemma}	

\begin{proof}	
	By Lemmas \ref{lemma:incompinfo:valunc:hm:1} and \ref{lemma:incompinfo:valunc:hm:2}, we can focus only on profiles where the contingent market with no trade features thresholds below cost and price above valuation. Suppose first that $\max\left\{\bar p_{l1},\bar p_{h1} \right\}<r, p_2>v_h$ and $\bar p_{Q2}  < p_1\leq \bar p_{R2} < r ;Q,R \in \{h,l\}, Q\neq R$. Notice that Player 1 is selling at a price below cost. She can profitably deviate to $r<\bar p^{l'}_1=\bar p^{h'}_1 = p_1'<v_l$. Her utility function's first term increases while the second one does not decrease. 
	
	Suppose alternatively that $\max\left\{\bar p_{l1},\bar p_{h1} \right\}<r, p_2>v_h$ but $\max\left\{\bar p_{Q2},r \right\}  < p_1 \leq \bar p_{R2} < p_2;Q,R \in \{h,l\}, Q\neq R$. Notice that this means that $\max\left\{\bar p_{l1},\bar p_{h1} \right\}<p_1$. Then, Player 1 can profitably deviate to $ \bar p_{h1}'=p_1$. Her utility function's second term increases while the first one does not decrease. 
	
	Finally, assume that $\max\left\{\bar p_{l1},\bar p_{h1} \right\}<r, p_2>v_h$ and $\bar p_{Q2} < p_1,p_2\leq p_1\leq \bar p_{R2};Q,R \in \{h,l\}, Q\neq R$. Notice that this means that Player 2 is buying at a price above the highest valuation $v_h$. She can deviate to $\bar p_{Q2} \leq \bar p_{R2}' = p_2' = p_1-\epsilon,$ with $\epsilon \rightarrow 0$. She thus makes a material loss as a consequence of the price reduction. This loss can be made arbitrarily small. Then, her utility function's first term increases for a sufficiently small $\epsilon$ while the second one does not decrease. 
	
\end{proof}	
%%%%%%%%%%%%%%%%%%%%%%%%%%%%%%%%%%%%%%%%%%%%%%%%%%%%%%%%%%%%%%%%%%%%%%%%%%%%%%%%%%%%%%%%%%%%%%%%%%%%%%%%%%%%%%%%%%%%%%%%%%%%%%%%%%%%%%%%%%%%%%%%%%%%%%%%%%%%%%%%%%%%%%%%%%%%%%%%%%%%%%%%%%%%%%%%%%%%%%%%
\begin{lemma}\label{lemma:incompinfo:valunc:hm:5}
	Consider the bilateral trade game with heterogeneous buyer valuations and $\kappa \in (0,1)$. Profiles featuring no-trade and full-trade are not Nash equilibria in pure strategies.
\end{lemma}	

\begin{proof}	
	By Lemmas \ref{lemma:incompinfo:valunc:hm:1} and \ref{lemma:incompinfo:valunc:hm:2}, we can focus only on profiles where the contingent market with no trade features thresholds below cost and price above valuation. Assume without loss of generality that $\max\left\{\bar p_{l1},\bar p_{h1} \right\}<r, p_2>v_h$ and $p_1\leq \min\left\{\bar p_{l2},\bar p_{l2} \right\}$.
	
	Suppose first that $p_1<r$. Then, Player 1 can profitably deviate to $r<p'_1 = \bar p_{l1}',\bar p_{h1}'< p_2$. Her utility function's first term increases while the second one does not decrease. 
	
	Alternatively, assume that $r\leq p_1\leq v_l$. Notice then that $\max\left\{\bar p_{l1},\bar p_{h1} \right\}<p_1<p_2$. Player 1 can profitably deviate by setting $\bar p_{l1}',\bar p_{h1}' = p_1$. She makes a moral gain and sustains no material losses.
	
	Finally, suppose that $p_1>v_l$. Player 2 can profitably deviate to $r<p'_2 = \bar p_{l2}'< p_1$. Her utility function's first term increases while the second one does not decrease. 
\end{proof}	
%%%%%%%%%%%%%%%%%%%%%%%%%%%%%%%%%%%%%%%%%%%%%%%%%%%%%%%%%%%%%%%%%%%%%%%%%%%%%%%%%%%%%%%%%%%%%%%%%%%%%%%%%%%%%%%%%%%%%%%%%%%%%%%%%%%%%%%%%%%%%%%%%%%%%%%%%%%%%%%%%%%%%%%%%%%%%%%%%%%%%%%%%%%%%%%%%%%%%%%%
\begin{lemma}\label{lemma:incompinfo:valunc:hm:6}
	Consider the bilateral trade game with heterogeneous buyer valuations and $\kappa \in (0,1)$. Profiles featuring any exchange at a price above the high valuation are not Nash equilibria in pure strategies.
\end{lemma}	

\begin{proof}	
	Assume, without loss of generality, that $ v_h < p_2\leq \bar p_{1Q}$ for some $Q \in \{h,l\}$. Suppose first that $p_1<p_2$. Then, Player 1 can deviate to $\bar p_{1Q}' = p_1$. Her utility function's first term increases while the second one does not decrease. In turn, assume that $p_1>p_2$. If $p_2\leq \bar p_{1Q} < p_1$, then Player 1 can deviate to $p_1'=\bar p_{1Q}$ and obtain a moral gain without suffering material losses. If $p_2< p_1 \leq \bar p_{1Q}$,then it is Player 2 who can deviate by setting  $p_2'= \bar p_{2R} = \bar p_{1Q}-\epsilon$ for all $R \in \{h,l\}$, with $\epsilon<\bar p_{1Q}-p_2$. Her utility function's first term increases while the second one does not decrease. 
	
	Next, suppose that prices are equal, so $p_1=p_2$. Then, Player 1 can profitably deviate by setting $\bar p_{1Q}' =  p_1' = p_2-\epsilon$, with $\epsilon \rightarrow 0$. The deviation entails a strictly positive material gain from ceasing to buy at $p_2>v_h$ and a potential loss from selling at the lower price $p_2-\epsilon$ and no moral losses. Since the material loss can be made arbitrarily low by choosing a small enough $\epsilon$, the deviation is profitable. Her utility function's first term increases while the second one does not decrease. 
\end{proof}

%%%%%%%%%%%%%%%%%%%%%%%%%%%%%%%%%%%%%%%%%%%%%%%%%%%%%%%%%%%%%%%%%%%%%%%%%%%%%%%%%%%%%%%%%%%%%%%%%%%%%%%%%%%%%%%%%%%%%%%%%%%%%%%%%%%%%%%%%%%%%%%%%%%%%%%%%%%%%%%%%%%%%%%%%%%%%%%%%%%%%%%%%%%%%%%%%%%%%%%%

\begin{lemma}\label{lemma:incompinfo:valunc:hm:7}
	Consider the bilateral trade game with heterogeneous buyer valuations and $\kappa \in (0,1)$. Profiles with at least some trading in at least one contingent market are not Nash equilibria if the price is below cost.
\end{lemma}	

\begin{proof}	
	Assume, without loss of generality that $p_1<r,p_1\leq \bar p_{2Q}$ for some $Q \in \{h,l\}$. Then, if $\bar p_{1R}< r$ for some $R \in \{h,l\}$, Player 1 can profitably deviate by setting $p_1'=\bar p_{1R}' = r$, whereas if $\bar p_{1R}>r$ for all $R \in \{h,l\}$, she can deviate to $p_1'= r$. Her utility function's first term increases while the second one does not decrease. 
\end{proof}	
%%%%%%%%%%%%%%%%%%%%%%%%%%%%%%%%%%%%%%%%%%%%%%%%%%%%%%%%%%%%%%%%%%%%%%%%%%%%%%%%%%%%%%%%%%%%%%%%

\begin{lemma}\label{lemma:incompinfo:valunc:hm:8}
	Consider the bilateral trade game with heterogeneous buyer valuations and $\kappa \in (0,1)$. Profiles featuring exclusion in one contingent market and full trade in the other one are not Nash equilibria in pure strategies.
\end{lemma}	

\begin{proof}	
	Assume, without loss of generality that $\bar p_{Q2}<p_1<\bar p^{R2}_2$ for $Q,R \in \{h,l\}, Q\neq R$, and $p_2\leq \min \{\bar p^{h*}_1,\bar p^{l*}_1\}$. By Lemmas \ref{lemma:incompinfo:valunc:hm:6} and \ref{lemma:incompinfo:valunc:hm:7}, we can focus on profiles where $r\leq p_i^*\leq v_h \forall i \in \{1,2\}$. Then:
	
	(1) Profiles where $p_2^*<p_1^*$ are not equilibria. Indeed, Player 1 will always be able to deviate from a profile where $p_2^*<p_1^*$ and $\bar p^{Q*}_1<p_1^*$ for some $Q \in \{h,l\}$ to $\bar p^{Q'}_1\geq p_1^*$, as she obtains a moral gain and suffers no material losses. That only leaves  $p_2^*<p_1^* \leq \min \{\bar p^{h*}_1,\bar p^{l*}_1\}$. But then, it is Player 2 who can deviate to $p_2'=\bar p^{Q'}_2=p_1^*-\epsilon$, with $\epsilon \rightarrow 0$. Her utility function's first term increases while the second one does not decrease.  Therefore, we cannot have  $p_2^*<p_1^*$.
	
	(2) Profiles where $p_1^*\leq p_2^*$ are not equilibria. Indeed, Player 1 will always be able to deviate from a profile where $p_1^*\leq p_2^*$ and $v_l<p_2^*$ by setting $p_1'=\bar p^{l'}_2=p_2^*-\epsilon$, with $\epsilon \rightarrow 0$. Her utility function's first term increases while the second one does not decrease.  That only leaves $p_1^*\leq p_2^* \leq v_l$. But then, it is Player 2 who can deviate to $p_2'=\bar p^{Q'}_2=p_1^*$. Her utility function's first term increases while the second one does not decrease. Therefore, we cannot have  $p_1^*\leq p_2^*$.
\end{proof}	

%%%%%%%%%%%%%%%%%%%%%%%%%%%%%%%%%%%%%%%%%%%%%%%%%%%%%%%%%%%%%%%%%%%%%%%%%%%%%%%%%%%%%%%%%%%%%%%%%%%%%%%%%%%%%%%%%%%%%%%%%%%%%%%%%%%%%%%%%%%%%%%%%%%%%%%%%%%%%%%%%%%%%%%%%%%%%%%%%%%%%%%%%%%%%%%%%%%%%%%%

\begin{lemma}\label{lemma:incompinfo:valunc:hm:9}
	Consider the bilateral trade game with with heterogeneous buyer valuations and $\kappa \in (0,1)$. Profiles featuring exclusion in both contingent markets are not Nash equilibria in pure strategies.
\end{lemma}	

\begin{proof}	
	Consider profiles where $\bar p^*_{Qi}<p_j^*<\bar p^*_{Rj}; \forall i,j \in \{1,2\}, i\neq j$ and $Q,R \in \{h,l\}, Q\neq R$. By Lemma \ref{lemma:incompinfo:valunc:hm:1}, we can focus only on the subset of profiles where $v_l<p_i^*\leq v_h  \forall i \in \{1,2\}$. Then:
	
	(1) Profiles where prices are different are not equilibria. Suppose, without loss of generality, that $p_1^* < p_2^*$. Then, if $\bar p^*_{h2} < p_2^*$, Player 2 can deviate to $\bar p_{h2}' = p_2^*$ to obtain a moral gain while suffering no material losses. Thus, we can only have $\bar p^*_{h2} \geq p_2^*$. But then, Player 1 finds it profitable to deviate to $p_1' = \bar p_{l1}' = p_2^* - \epsilon$, with $\epsilon < p_2^* - p_1^*$. Her utility function's first term increases while the second one does not decrease. 
	
	(2) But neither are equilibria profiles where prices are the same. Take $\max \{\bar p^*_{l1},\bar p^*_{l2}\}<p_1^*=p_2^*\leq  \min \{\bar p^*_{h1},\bar p^*_{h2}\}$. Then, without loss of generality, Player 1 can deviate to $p_1' = \bar p^{l'}_1 = p_2^*-\epsilon$, with $\epsilon \rightarrow 0$. She thus obtains a strictly positive moral gain that she trades off against a material loss that can be made arbitrarily small. Her utility function's second term increases while the first one does not decrease. 
	
\end{proof}	

%%%%%%%%%%%%%%%%%%%%%%%%%%%%%%%%%%%%%%%%%%%%%%%%%%%%%%%%%%%%%%%%%%%%%%%%%%%%%%%%%%%%%%%%%%%%%%%%%%%%%%%%%%%%%%%%%%%%%%%%%%%%%%%%%%%%%%%%%%%%%%%%%%%%%%%%%%%%%%%%%%%%%%%%%%%%%%%%%%%%%%%%%%%%%%%%%%%%%%%%

\begin{lemma}\label{lemma:incompinfo:valunc:hm:10}
	Consider the bilateral trade game with heterogeneous buyer valuations and $\kappa \in (0,1)$. Profiles featuring pooling in both contingent markets are Nash equilibria only if $\bar p^*_{hi} \in [\left(1-\lambda\right)p^*_{li}+\lambda r,\left(\frac{1}{\lambda}\right)p^*_{li}-\left(\frac{1-\lambda}{\lambda}\right)r] \forall i,j \in \{1,2\}, i\neq j$.
\end{lemma}	

\begin{proof}		
	By Lemmas \ref{lemma:incompinfo:valunc:hm:6} and \ref{lemma:incompinfo:valunc:hm:7}, we can focus on profiles where prices are weakly above cost and below the high valuation. Take a profile with full trade in both contingent markets, so that $\min \{\bar p^*_{hi},\bar p^*_{li}\}<p_j^*$ and $r\leq p_j^*\leq v_h,\forall i,j\in \{1,2\}, i\neq j$.
	
	(1)	If prices are different, the profile is not a Nash equilibrium. To see it, assume without loss of generality that $p_1^* < p_2^*$. Then, if $\bar p^*_{Q2} < p_2^*$ for some $Q \in \{h,l\}$, Player 2 can profitably deviate to $\bar p_{Q2}' \geq p_2^*$. She obtains a moral gain without suffering any material losses. In turn, if $\bar p^*_{Q2} \geq p_2^*$ for all $Q \in \{h,l\}$ are not Nash equilibria either, as Player 1 can profitably deviate to $p_1' = p_2^*$, obtaining a material gain without experiencing moral losses (her utility function's first term increases while the second one does not decrease).   
	
	(2) If prices are equal but above the low valuation, the profile is not a Nash equilibrium. Indeed, suppose that $v_l < p_1^* = p_2^*$. Then, without loss of generality, Player 1 can deviate to $\bar p^{l'}_1 = p_1' = p_2^*-\epsilon$, with $\epsilon \rightarrow 0$. She obtains a strictly positive material gain by not buying from Player 2 in the low-valuation case. This gain outweighs the material loss brought about by the price reduction if $\epsilon$ is small enough. In addition, she suffers no moral losses. Her utility function's first term increases while the second one does not decrease. 
	
	(3) With equal prices, profiles where a player's lowest threshold is above the common price are not Nash equilibria. Indeed, suppose without loss of generality that $p_1^* = p_2^*<\min\{\bar p^*_{l2},\bar p^*_{h2}\}$. Then, Player 1 can deviate to $p_1' = \bar p_{h1}' = \bar p_{l1}'$ Her utility function's first term increases while the second one does not decrease. 	
	
	(4) With equal prices, if $(\bar p^*_{li},\bar p^*_{hi})$ are such that $\bar p^*_{hi} \notin [\left(1-\lambda\right)p^*_{li}+\lambda r,\left(\frac{1}{\lambda}\right)p^*_{li}-\left(\frac{1-\lambda}{\lambda}\right)r]$ for some $i \in \{1,2\}$ then the profile is not an equilibrium. Indeed, suppose without loss of generality that $\bar p^*_{h2} \notin [\left(1-\lambda\right)p^*_{l2}+\lambda r,\left(\frac{1}{\lambda}\right)p^*_{l2}-\left(\frac{1-\lambda}{\lambda}\right)r]$. Then, Player 1 can deviate to $p_1'= \bar p_{h1}' = \bar p_{l1}' = \max \{p^*_{l2},p^*_{l2}\}$. Her utility function's first term increases while the second one does not decrease. 
	
	(5) Profiles where $r\leq p_i^* = p_j^* \leq v_l$ and $\bar p^*_{hi} \in [\left(1-\lambda\right)p^*_{li}+\lambda r,\left(\frac{1}{\lambda}\right)p^*_{li}-\left(\frac{1-\lambda}{\lambda}\right)r] \forall i,j \in \{1,2\}, i\neq j$ are equilibria. Indeed, notice that given $(p_j^*,p^*_{lj},p^*_{hj})$, $(p_i^*,p^*_{li},p^*_{hi})$ maximises \ref{eq:incompinfo:valunc:payoff}. The strategies are then mutual best responses and therefore, these profiles are Nash equilibria.
	
\end{proof}

%%%%%%%%%%%%%%%%%%%%%%%%%%%%%%%%%%%%%%%%%%%%%%%%%%%%%%%%%%%%%%%%%%%%%%%%%%%%%%%%%%%%%%%%%%%%%%%%%%%%%%%%%%%%%%%%%%%%%%%%%%%%%%%%%%%%%%%%%%%%%%%%%%%%%%%%%%%%%%%%%%%%%%%%%%%%%%%%%%%%%%%%%%%%%%%%%%%%%%%%

\begin{proposition}\label{proposition:valunc:hk:maineq}
	The set of Nash equilibria of the bilateral trade game with heterogeneous buyer valuations and $\kappa = 1$ is $p^*_i\leq \min \{\bar p^*_{hi},\bar p^*_{li}\} \forall i \in \{1,2\}$
\end{proposition}			

\begin{proof}
	Given that $r<v_l<v_h$, Expression (\ref{eq:compinfo:hm:ut3}) with $\kappa = 1$ is maximised if and only if $p^*_i\leq \min \{\bar p^*_{hi},\bar p^*_{li}\}$ .
\end{proof}	
%%%%%%%%%%%%%%%%%%%%%%%%%%%%%%%%%%%%%%%%%%%%%%%%%%%%%%%%%%%%%%%%%%%%%%%%%%%%%%%%%%%%%%%%%%%%%%%%%%%%%%%%%%%%%%%%%%%%%%%%%%%%%%%%%%%%%%%%%%%%%%%%%%%%%%%%%%%%%%%%%%%%%%%%%%%%%%%%%%%%%%%%%%%%%%%%%%%%%%%%
\section{Appendix to Section \ref{sec:incompinfo:valunc}}	
%%%%%%%%%%%%%%%%%%%%%%%%%%%%%%%%%%%%%%%%%%%%%%%%%%%%%%%%%%%%%%%%%%%%%%%%%%%%%%%%%%%%%%%%%%%%%%%%%%%%%%%%%%%%%%%%%%%%%%%%%%%%%%%%%%%%%%%%%%%%%%%%%%%%%%%%%%%%%%%%%%%%%%%%%%%%%%%%%%%%%%%%%%%%%%%%%%%%%%%%
\begin{customprop}{3.2*}\label{prop32star}

Consider the bilateral trade game with valuation uncertainty and $\kappa = 0$. The pure strategy Nash equilibria are:

\begin{itemize}
	\item Full trade/Full trade: $\forall i,j \in \{1,2\}, i\neq j : r \leq p^*_i= \min \{\bar p_{jh}^{*},\bar p_{jl}^{*}\}\leq v_l,\bar p_{ih}^{*} \in [\left(1-\lambda\right)p_{il}^{*}+\lambda r,\left(\frac{1}{\lambda}\right)p_{il}^{*}-\left(\frac{1-\lambda}{\lambda}\right)r]$
	
	\item $A/B$ with $A,B \in \{ \text{Full trade},\text{High valuation}\}, A\neq B$: for $i,j \in \{1,2\},i\neq j$:
	
	\begin{itemize}					
		\item $r\leq p^*_i= \min \{\bar p_{jh}^{*},\bar p_{jl}^{*}\}\leq v_l,\bar p_{ih}^{*} \in [\left(1-\lambda\right)p_{il}^{*}+\lambda r,\left(\frac{1}{\lambda}\right)p_{il}^{*}-\left(\frac{1-\lambda}{\lambda}\right)r];$
		
		\item $v_l<p^*_j=\bar p_{ih}^{*}\leq v_h, 0 \leq \bar p^{l*}_i \leq \lambda \bar p_{ih}^{*} + (1-\lambda) r$
	\end{itemize}

	\item $A/B$ with $A,B \in \{ \text{Full trade},\text{No trade}\}, A\neq B$: for $i,j \in \{1,2\},i\neq j$: 
	
	\begin{itemize}									
		\item $r\leq p^*_i= \min \{\bar p_{hj}^*,\bar p_{lj}^*\}\leq v_l, \bar p_{hi}^* \in [\left(1-\lambda\right) \bar p_{li}^* +\lambda r,\left(\frac{1}{\lambda}\right)\bar p_{li}^*-\left(\frac{1-\lambda}{\lambda}\right)r];$
		
		\item $\max\left\{\bar p_{li}^*,\bar p_{hi}^*\right\}<r, p^*_j>v_h$
	\end{itemize}

	\item $A/B$ with $A,B \in \{ \text{High valuation},\text{No trade}\}, A\neq B$: for $i,j \in \{1,2\},i\neq j$:
	
	\begin{itemize}				
		\item $\max\left\{\bar p_{lj}^*,\bar p_{jh}^*\right\}<r, p^*_i>v_h$
		
		\item $v_l<p^*_j=\bar p_{hi}^*<v_h, 0 \leq \bar p_{li}^* \leq \lambda \bar p_{hi}^* + (1-\lambda) r$
	\end{itemize}
	
	\item High valuation/High valuation: $\forall i,j \in \{1,2\}, i\neq j : v_l<p^*_j=\bar p_{hi}^*<v_h, 0 \leq \bar p_{li}^* \leq \lambda \bar p_{hi}^* + (1-\lambda) r$			

	\item No-trade in both contingent markets: $\forall i,j \in \{1,2\}, i\neq j : \max\left\{\bar p_{lj}^*,\bar p_{hj}^* \right\}<r, p^*_i>v_h$
\end{itemize}
\end{customprop}

\begin{proof}
	The price set by a given player is completely independent of her thresholds and only affects her utility through its interaction with her rival's thresholds. Thus, the bilateral trade game's equilibrium set is all the pairs of profiles that can be formed from the equilibria of the contingent game. 
	
	Consider now the contingent game where the seller sets price $p$ and the buyer sets thresholds $\bar p_l$ and $\bar p_h$. The buyer's best reply correspondences in the contingent game are:

	\begin{equation*}
	\bar p^{h}_{BR}= \argmax_{\bar  p^{h}} U^{B}(	\bar p^{l} ,\bar p^{h}; p) = \begin{cases} \label{eq_buyer_h_br}
	\bar p^{h}_{BR}= p & \text{if }  p \leq v_h \\
	\bar p^{h}_{BR}= p & \text{if }  p > v_h \\
	\end{cases}
	\end{equation*}
	
	\begin{equation*}
	\bar p^{l}_{BR}= \argmax_{\bar  p^{l}} U^{B}(	\bar p^{l} ,\bar p^{h}; p) = \begin{cases} \label{eq_buyer_l_br}
	\bar p^{l}_{BR}= p & \text{if }  p \leq v_l \\
	\bar p^{l}_{BR}= p & \text{if }  p > v_l \\
	\end{cases}
	\end{equation*}
	
	The seller's best reply correspondence in the contingent game is:	\begin{equation*}
	\begin{cases}
	p^{BR} > \max\{\bar p^{h},\bar p^{l}\} & \text{if }\max\{\bar p^{h},\bar p^{l}\}<r \\
	
	p^{BR} = \max\{\bar p^{h},\bar p^{l}\} & \text{if }\max\{\bar p^{h},\bar p^{l}\}\geq r, \min\{\bar p^{h},\bar p^{l}\}< r\\
	
	p^{BR} = \min\{\bar p^{h},\bar p^{l}\} & \text{if } \bar p^{l}\geq r, (1-\lambda)\bar p^{l}+\lambda r \leq \bar p^{h} \leq \frac{1}{\lambda}\bar p^{l}-\frac{1-\lambda}{\lambda}r\\
	
	p^{BR} = \bar p^{l} & \text{if } \bar p^{l}\geq r, \bar p^{h} <  (1-\lambda)\bar p^{l}+\lambda r\\
	
	p^{BR} = \bar p^{h} & \text{if } \bar p^{l}\geq r, \bar p^{h} >  \frac{1}{\lambda}\bar p^{l}-\frac{1-\lambda}{\lambda}r\\
	\end{cases}
	\end{equation*}
	
	They intersect at the following profiles:
		\begin{itemize}
		\item Both consumer types are served: $r \leq p^*=\min\{\bar p_{l},\bar p_{h}\} \leq v_l; \bar p_{l}^*\leq \bar p_{h}^*\leq \frac{\bar p_{l}^*}{\lambda}-\frac{1-\lambda}{\lambda}r $ or $\frac{\bar p_{h}^*-\lambda r}{1-\lambda} \leq \bar p_{h}^*\leq \bar p_{l}^*$		
		
		\item Only high valuation consumers are served: $v_l<p^*=\bar p_{h}^*\leq v_h,  \bar p_{l}^* \leq \lambda \bar p_{h}^* + (1-\lambda) r$
		
		\item No consumers are served: $\max\left\{\bar p_{l}^*,\bar p_{h}^* \right\}<r, p^*>v_h$
	\end{itemize}
%
%	\begin{itemize}
%		\item Both consumer types are served: $r\leq p^*=\bar p_{l}^*\leq v_l,\bar p_{l}^*\leq \bar p_{h}^*\leq \frac{\bar p_{l}^*}{\lambda}-\frac{1-\lambda}{\lambda}r $ and $r\leq p^*=\bar p_{h}^*\leq v_l,\bar p_{h}^*\leq \bar p_{l}^*\leq \frac{\bar p_{h}^*-\lambda r}{1-\lambda}$		
%		
%		\item Only high valuation consumers are served: $v_h<p^*=\bar p_{h}^*<v_h, 0 \leq \bar p_{l}^* \leq \lambda \bar p_{h}^* + (1-\lambda) r$
%		
%		\item No consumers are served: $\max\left\{\bar p_{l}^*,\bar p_{h}^* \right\}<r, p^*>v_h$
%	\end{itemize}

\end{proof}	
%%%%%%%%%%%%%%%%%%%%%%%%%%%%%%%%%%%%%%%%%%%%%%%%%%%%%%%%%%%%%%%%%%%%%%%%%%%%%%%%%%%%%%%%%%%%%%%%%%%%%%%%%%%%%%%%%%%%%%%%%%%%%%%%%%%%%%%%%%%%%%%%%%%%%%%%%%%%%%%%%%%%%%%%%%%%%%%%%%%%%%%%%%%%%%%%%%%%%%%%
\begin{customprop}{3.3*}\label{prop33star}
	Consider the bilateral trade game with heterogeneous buyer valuations and $\kappa \in (0,1)$. The Nash equilibria in pure strategies are $\bar p^*_{hi} \in [\left(1-\lambda\right)p^*_{li}+\lambda r,\left(\frac{1}{\lambda}\right)p^*_{li}-\left(\frac{1-\lambda}{\lambda}\right)r] \forall i,j \in \{1,2\}, i\neq j$.
\end{customprop}

\begin{proof}
The proof follows directly from Lemmas \ref{lemma:incompinfo:valunc:hm:1} to \ref{lemma:incompinfo:valunc:hm:10}, presented below.
\end{proof}

%No trade and price below valuation not eq
\begin{lemma}\label{lemma:incompinfo:valunc:hm:1}
	Consider the bilateral trade game with heterogeneous buyer valuations and $\kappa \in (0,1)$. Profiles where at least one of the contingent markets features no-trade with the high (low) valuation consumer and price below her valuation are not Nash equilibria in pure strategies.
\end{lemma}

\begin{proof}	
	Assume, without loss of generality, that $\bar p_{1Q} <p_2\leq v_Q$ for some $Q\in\{h,l\}$. Then, Player 1 can profitably deviate by setting $\bar p_{1Q}' = p_2$. Her utility function's first term increases while the second one does not decrease. 
\end{proof}	
%%%%%%%%%%%%%%%%%%%%%%%%%%%%%%%%%%%%%%%%%%%%%%%%%%%%%%%%%%%%%%%%%%%%%%%%%%%%%%%%%%%%%%%%%%%%%%%%%%%%%%%%%%%%%%%%%%%%%%%%%%%%%%%%%%%%%%%%%%%%%%%%%%%%%%%%%%%%%%%%%%%%%%%%%%%%%%%%%%%%%%%%%%%%%%%%%%%%%%%%	
%No trade and threshold above cost not eq
\begin{lemma}\label{lemma:incompinfo:valunc:hm:2}
	Consider the bilateral trade game with heterogeneous buyer valuations and $\kappa \in (0,1)$. Profiles where at least one of the contingent markets features no-trade and a threshold above cost are not Nash equilibria in pure strategies.
\end{lemma}

\begin{proof}	
	Assume, without loss of generality, that $r \leq \max\{\bar p_{h2},\bar p_{l2}\}<p_1$. Then, Player 1 can profitably deviate by setting $p_1^{'}=\max\{\bar p_{h2},\bar p_{l2}\}$. Her utility function's first term increases while the second one does not decrease. 
\end{proof}	
%%%%%%%%%%%%%%%%%%%%%%%%%%%%%%%%%%%%%%%%%%%%%%%%%%%%%%%%%%%%%%%%%%%%%%%%%%%%%%%%%%%%%%%%%%%%%%%%%%%%%%%%%%%%%%%%%%%%%%%%%%%%%%%%%%%%%%%%%%%%%%%%%%%%%%%%%%%%%%%%%%%%%%%%%%%%%%%%%%%%%%%%%%%%%%%%%%%%%%%%	
%No trade and no trade
\begin{lemma}\label{lemma:incompinfo:valunc:hm:3}
	Consider the bilateral trade game with heterogeneous buyer valuations and $\kappa \in (0,1)$. Profiles featuring no-trade in both contingent markets are not Nash equilibria in pure strategies.
\end{lemma}	

\begin{proof}	
	By Lemmas \ref{lemma:incompinfo:valunc:hm:1} and \ref{lemma:incompinfo:valunc:hm:2}, we can focus only on profiles where thresholds are below cost and prices above valuation: $\forall i,j \in \{1,2\}, i\neq j: \max\left\{\bar p_{lj},\bar p_{hj}\right\}<r, p_i>v_h$. Then, without loss of generality, Player 1 can profitably deviate to $r<p_1'= \bar p_{h1}'<v_l$. Her utility function's second term increases while the first one does not decrease. 
\end{proof}	

%%%%%%%%%%%%%%%%%%%%%%%%%%%%%%%%%%%%%%%%%%%%%%%%%%%%%%%%%%%%%%%%%%%%%%%%%%%%%%%%%%%%%%%%%%%%%%%%%%%%%%%%%%%%%%%%%%%%%%%%%%%%%%%%%%%%%%%%%%%%%%%%%%%%%%%%%%%%%%%%%%%%%%%%%%%%%%%%%%%%%%%%%%%%%%%%%%%%%%%%
%No trade and exc
\begin{lemma}\label{lemma:incompinfo:valunc:hm:4}
	Consider the bilateral trade game with heterogeneous buyer valuations and $\kappa \in (0,1)$. Profiles featuring no-trade and exclusion are not Nash equilibria in pure strategies.
\end{lemma}	

\begin{proof}	
	By Lemmas \ref{lemma:incompinfo:valunc:hm:1} and \ref{lemma:incompinfo:valunc:hm:2}, we can focus only on profiles where the contingent market with no trade features thresholds below cost and price above valuation. Suppose first that $\max\left\{\bar p_{l1},\bar p_{h1} \right\}<r, p_2>v_h$ and $\bar p_{Q2}  < p_1\leq \bar p_{R2} < r ;Q,R \in \{h,l\}, Q\neq R$. Notice that Player 1 is selling at a price below cost. She can profitably deviate to $r<\bar p^{l'}_1=\bar p^{h'}_1 = p_1'<v_l$. Her utility function's first term increases while the second one does not decrease. 
	
	Suppose alternatively that $\max\left\{\bar p_{l1},\bar p_{h1} \right\}<r, p_2>v_h$ but $\max\left\{\bar p_{Q2},r \right\}  < p_1 \leq \bar p_{R2} < p_2;Q,R \in \{h,l\}, Q\neq R$. Notice that this means that $\max\left\{\bar p_{l1},\bar p_{h1} \right\}<p_1$. Then, Player 1 can profitably deviate to $ \bar p_{h1}'=p_1$. Her utility function's second term increases while the first one does not decrease. 
	
	Finally, assume that $\max\left\{\bar p_{l1},\bar p_{h1} \right\}<r, p_2>v_h$ and $\bar p_{Q2} < p_1,p_2\leq p_1\leq \bar p_{R2};Q,R \in \{h,l\}, Q\neq R$. Notice that this means that Player 2 is buying at a price above the highest valuation $v_h$. She can deviate to $\bar p_{Q2} \leq \bar p_{R2}' = p_2' = p_1-\epsilon,$ with $\epsilon \rightarrow 0$. She thus makes a material loss as a consequence of the price reduction. This loss can be made arbitrarily small. Then, her utility function's first term increases for a sufficiently small $\epsilon$ while the second one does not decrease. 
	
\end{proof}	
%%%%%%%%%%%%%%%%%%%%%%%%%%%%%%%%%%%%%%%%%%%%%%%%%%%%%%%%%%%%%%%%%%%%%%%%%%%%%%%%%%%%%%%%%%%%%%%%%%%%%%%%%%%%%%%%%%%%%%%%%%%%%%%%%%%%%%%%%%%%%%%%%%%%%%%%%%%%%%%%%%%%%%%%%%%%%%%%%%%%%%%%%%%%%%%%%%%%%%%%
\begin{lemma}\label{lemma:incompinfo:valunc:hm:5}
	Consider the bilateral trade game with heterogeneous buyer valuations and $\kappa \in (0,1)$. Profiles featuring no-trade and full-trade are not Nash equilibria in pure strategies.
\end{lemma}	

\begin{proof}	
	By Lemmas \ref{lemma:incompinfo:valunc:hm:1} and \ref{lemma:incompinfo:valunc:hm:2}, we can focus only on profiles where the contingent market with no trade features thresholds below cost and price above valuation. Assume without loss of generality that $\max\left\{\bar p_{l1},\bar p_{h1} \right\}<r, p_2>v_h$ and $p_1\leq \min\left\{\bar p_{l2},\bar p_{l2} \right\}$.
	
	Suppose first that $p_1<r$. Then, Player 1 can profitably deviate to $r<p'_1 = \bar p_{l1}',\bar p_{h1}'< p_2$. Her utility function's first term increases while the second one does not decrease. 
	
	Alternatively, assume that $r\leq p_1\leq v_l$. Notice then that $\max\left\{\bar p_{l1},\bar p_{h1} \right\}<p_1<p_2$. Player 1 can profitably deviate by setting $\bar p_{l1}',\bar p_{h1}' = p_1$. She makes a moral gain and sustains no material losses.
	
	Finally, suppose that $p_1>v_l$. Player 2 can profitably deviate to $r<p'_2 = \bar p_{l2}'< p_1$. Her utility function's first term increases while the second one does not decrease. 
\end{proof}	
%%%%%%%%%%%%%%%%%%%%%%%%%%%%%%%%%%%%%%%%%%%%%%%%%%%%%%%%%%%%%%%%%%%%%%%%%%%%%%%%%%%%%%%%%%%%%%%%%%%%%%%%%%%%%%%%%%%%%%%%%%%%%%%%%%%%%%%%%%%%%%%%%%%%%%%%%%%%%%%%%%%%%%%%%%%%%%%%%%%%%%%%%%%%%%%%%%%%%%%%
\begin{lemma}\label{lemma:incompinfo:valunc:hm:6}
	Consider the bilateral trade game with heterogeneous buyer valuations and $\kappa \in (0,1)$. Profiles featuring any exchange at a price above the high valuation are not Nash equilibria in pure strategies.
\end{lemma}	

\begin{proof}	
	Assume, without loss of generality, that $ v_h < p_2\leq \bar p_{1Q}$ for some $Q \in \{h,l\}$. Suppose first that $p_1<p_2$. Then, Player 1 can deviate to $\bar p_{1Q}' = p_1$. Her utility function's first term increases while the second one does not decrease. In turn, assume that $p_1>p_2$. If $p_2\leq \bar p_{1Q} < p_1$, then Player 1 can deviate to $p_1'=\bar p_{1Q}$ and obtain a moral gain without suffering material losses. If $p_2< p_1 \leq \bar p_{1Q}$,then it is Player 2 who can deviate by setting  $p_2'= \bar p_{2R} = \bar p_{1Q}-\epsilon$ for all $R \in \{h,l\}$, with $\epsilon<\bar p_{1Q}-p_2$. Her utility function's first term increases while the second one does not decrease. 
	
	Next, suppose that prices are equal, so $p_1=p_2$. Then, Player 1 can profitably deviate by setting $\bar p_{1Q}' =  p_1' = p_2-\epsilon$, with $\epsilon \rightarrow 0$. The deviation entails a strictly positive material gain from ceasing to buy at $p_2>v_h$ and a potential loss from selling at the lower price $p_2-\epsilon$ and no moral losses. Since the material loss can be made arbitrarily low by choosing a small enough $\epsilon$, the deviation is profitable. Her utility function's first term increases while the second one does not decrease. 
\end{proof}

%%%%%%%%%%%%%%%%%%%%%%%%%%%%%%%%%%%%%%%%%%%%%%%%%%%%%%%%%%%%%%%%%%%%%%%%%%%%%%%%%%%%%%%%%%%%%%%%%%%%%%%%%%%%%%%%%%%%%%%%%%%%%%%%%%%%%%%%%%%%%%%%%%%%%%%%%%%%%%%%%%%%%%%%%%%%%%%%%%%%%%%%%%%%%%%%%%%%%%%%

\begin{lemma}\label{lemma:incompinfo:valunc:hm:7}
	Consider the bilateral trade game with heterogeneous buyer valuations and $\kappa \in (0,1)$. Profiles with at least some trading in at least one contingent market are not Nash equilibria if the price is below cost.
\end{lemma}	

\begin{proof}	
	Assume, without loss of generality that $p_1<r,p_1\leq \bar p_{2Q}$ for some $Q \in \{h,l\}$. Then, if $\bar p_{1R}< r$ for some $R \in \{h,l\}$, Player 1 can profitably deviate by setting $p_1'=\bar p_{1R}' = r$, whereas if $\bar p_{1R}>r$ for all $R \in \{h,l\}$, she can deviate to $p_1'= r$. Her utility function's first term increases while the second one does not decrease. 
\end{proof}	
%%%%%%%%%%%%%%%%%%%%%%%%%%%%%%%%%%%%%%%%%%%%%%%%%%%%%%%%%%%%%%%%%%%%%%%%%%%%%%%%%%%%%%%%%%%%%%%%

\begin{lemma}\label{lemma:incompinfo:valunc:hm:8}
	Consider the bilateral trade game with heterogeneous buyer valuations and $\kappa \in (0,1)$. Profiles featuring exclusion in one contingent market and full trade in the other one are not Nash equilibria in pure strategies.
\end{lemma}	

\begin{proof}	
	Assume, without loss of generality that $\bar p_{Q2}<p_1<\bar p^{R2}_2$ for $Q,R \in \{h,l\}, Q\neq R$, and $p_2\leq \min \{\bar p^{h*}_1,\bar p^{l*}_1\}$. By Lemmas \ref{lemma:incompinfo:valunc:hm:6} and \ref{lemma:incompinfo:valunc:hm:7}, we can focus on profiles where $r\leq p_i^*\leq v_h \forall i \in \{1,2\}$. Then:
	
	(1) Profiles where $p_2^*<p_1^*$ are not equilibria. Indeed, Player 1 will always be able to deviate from a profile where $p_2^*<p_1^*$ and $\bar p^{Q*}_1<p_1^*$ for some $Q \in \{h,l\}$ to $\bar p^{Q'}_1\geq p_1^*$, as she obtains a moral gain and suffers no material losses. That only leaves  $p_2^*<p_1^* \leq \min \{\bar p^{h*}_1,\bar p^{l*}_1\}$. But then, it is Player 2 who can deviate to $p_2'=\bar p^{Q'}_2=p_1^*-\epsilon$, with $\epsilon \rightarrow 0$. Her utility function's first term increases while the second one does not decrease.  Therefore, we cannot have  $p_2^*<p_1^*$.
	
	(2) Profiles where $p_1^*\leq p_2^*$ are not equilibria. Indeed, Player 1 will always be able to deviate from a profile where $p_1^*\leq p_2^*$ and $v_l<p_2^*$ by setting $p_1'=\bar p^{l'}_2=p_2^*-\epsilon$, with $\epsilon \rightarrow 0$. Her utility function's first term increases while the second one does not decrease.  That only leaves $p_1^*\leq p_2^* \leq v_l$. But then, it is Player 2 who can deviate to $p_2'=\bar p^{Q'}_2=p_1^*$. Her utility function's first term increases while the second one does not decrease. Therefore, we cannot have  $p_1^*\leq p_2^*$.
\end{proof}	

%%%%%%%%%%%%%%%%%%%%%%%%%%%%%%%%%%%%%%%%%%%%%%%%%%%%%%%%%%%%%%%%%%%%%%%%%%%%%%%%%%%%%%%%%%%%%%%%%%%%%%%%%%%%%%%%%%%%%%%%%%%%%%%%%%%%%%%%%%%%%%%%%%%%%%%%%%%%%%%%%%%%%%%%%%%%%%%%%%%%%%%%%%%%%%%%%%%%%%%%

\begin{lemma}\label{lemma:incompinfo:valunc:hm:9}
	Consider the bilateral trade game with with heterogeneous buyer valuations and $\kappa \in (0,1)$. Profiles featuring exclusion in both contingent markets are not Nash equilibria in pure strategies.
\end{lemma}	

\begin{proof}	
	Consider profiles where $\bar p^*_{Qi}<p_j^*<\bar p^*_{Rj}; \forall i,j \in \{1,2\}, i\neq j$ and $Q,R \in \{h,l\}, Q\neq R$. By Lemma \ref{lemma:incompinfo:valunc:hm:1}, we can focus only on the subset of profiles where $v_l<p_i^*\leq v_h  \forall i \in \{1,2\}$. Then:
	
	(1) Profiles where prices are different are not equilibria. Suppose, without loss of generality, that $p_1^* < p_2^*$. Then, if $\bar p^*_{h2} < p_2^*$, Player 2 can deviate to $\bar p_{h2}' = p_2^*$ to obtain a moral gain while suffering no material losses. Thus, we can only have $\bar p^*_{h2} \geq p_2^*$. But then, Player 1 finds it profitable to deviate to $p_1' = \bar p_{l1}' = p_2^* - \epsilon$, with $\epsilon < p_2^* - p_1^*$. Her utility function's first term increases while the second one does not decrease. 
	
	(2) But neither are equilibria profiles where prices are the same. Take $\max \{\bar p^*_{l1},\bar p^*_{l2}\}<p_1^*=p_2^*\leq  \min \{\bar p^*_{h1},\bar p^*_{h2}\}$. Then, without loss of generality, Player 1 can deviate to $p_1' = \bar p^{l'}_1 = p_2^*-\epsilon$, with $\epsilon \rightarrow 0$. She thus obtains a strictly positive moral gain that she trades off against a material loss that can be made arbitrarily small. Her utility function's second term increases while the first one does not decrease. 
	
\end{proof}	

%%%%%%%%%%%%%%%%%%%%%%%%%%%%%%%%%%%%%%%%%%%%%%%%%%%%%%%%%%%%%%%%%%%%%%%%%%%%%%%%%%%%%%%%%%%%%%%%%%%%%%%%%%%%%%%%%%%%%%%%%%%%%%%%%%%%%%%%%%%%%%%%%%%%%%%%%%%%%%%%%%%%%%%%%%%%%%%%%%%%%%%%%%%%%%%%%%%%%%%%

\begin{lemma}\label{lemma:incompinfo:valunc:hm:10}
	Consider the bilateral trade game with heterogeneous buyer valuations and $\kappa \in (0,1)$. Profiles featuring pooling in both contingent markets are Nash equilibria only if $\bar p^*_{hi} \in [\left(1-\lambda\right)p^*_{li}+\lambda r,\left(\frac{1}{\lambda}\right)p^*_{li}-\left(\frac{1-\lambda}{\lambda}\right)r] \forall i,j \in \{1,2\}, i\neq j$.
\end{lemma}	

\begin{proof}		
	By Lemmas \ref{lemma:incompinfo:valunc:hm:6} and \ref{lemma:incompinfo:valunc:hm:7}, we can focus on profiles where prices are weakly above cost and below the high valuation. Take a profile with full trade in both contingent markets, so that $\min \{\bar p^*_{hi},\bar p^*_{li}\}<p_j^*$ and $r\leq p_j^*\leq v_h,\forall i,j\in \{1,2\}, i\neq j$.
	
	(1)	If prices are different, the profile is not a Nash equilibrium. To see it, assume without loss of generality that $p_1^* < p_2^*$. Then, if $\bar p^*_{Q2} < p_2^*$ for some $Q \in \{h,l\}$, Player 2 can profitably deviate to $\bar p_{Q2}' \geq p_2^*$. She obtains a moral gain without suffering any material losses. In turn, if $\bar p^*_{Q2} \geq p_2^*$ for all $Q \in \{h,l\}$ are not Nash equilibria either, as Player 1 can profitably deviate to $p_1' = p_2^*$, obtaining a material gain without experiencing moral losses (her utility function's first term increases while the second one does not decrease).   
	
	(2) If prices are equal but above the low valuation, the profile is not a Nash equilibrium. Indeed, suppose that $v_l < p_1^* = p_2^*$. Then, without loss of generality, Player 1 can deviate to $\bar p^{l'}_1 = p_1' = p_2^*-\epsilon$, with $\epsilon \rightarrow 0$. She obtains a strictly positive material gain by not buying from Player 2 in the low-valuation case. This gain outweighs the material loss brought about by the price reduction if $\epsilon$ is small enough. In addition, she suffers no moral losses. Her utility function's first term increases while the second one does not decrease. 
	
	(3) With equal prices, profiles where a player's lowest threshold is above the common price are not Nash equilibria. Indeed, suppose without loss of generality that $p_1^* = p_2^*<\min\{\bar p^*_{l2},\bar p^*_{h2}\}$. Then, Player 1 can deviate to $p_1' = \bar p_{h1}' = \bar p_{l1}'$ Her utility function's first term increases while the second one does not decrease. 	
	
	(4) With equal prices, if $(\bar p^*_{li},\bar p^*_{hi})$ are such that $\bar p^*_{hi} \notin [\left(1-\lambda\right)p^*_{li}+\lambda r,\left(\frac{1}{\lambda}\right)p^*_{li}-\left(\frac{1-\lambda}{\lambda}\right)r]$ for some $i \in \{1,2\}$ then the profile is not an equilibrium. Indeed, suppose without loss of generality that $\bar p^*_{h2} \notin [\left(1-\lambda\right)p^*_{l2}+\lambda r,\left(\frac{1}{\lambda}\right)p^*_{l2}-\left(\frac{1-\lambda}{\lambda}\right)r]$. Then, Player 1 can deviate to $p_1'= \bar p_{h1}' = \bar p_{l1}' = \max \{p^*_{l2},p^*_{l2}\}$. Her utility function's first term increases while the second one does not decrease. 
	
	(5) Profiles where $r\leq p_i^* = p_j^* \leq v_l$ and $\bar p^*_{hi} \in [\left(1-\lambda\right)p^*_{li}+\lambda r,\left(\frac{1}{\lambda}\right)p^*_{li}-\left(\frac{1-\lambda}{\lambda}\right)r] \forall i,j \in \{1,2\}, i\neq j$ are equilibria. Indeed, notice that given $(p_j^*,p^*_{lj},p^*_{hj})$, $(p_i^*,p^*_{li},p^*_{hi})$ maximises \ref{eq:incompinfo:valunc:payoff}. The strategies are then mutual best responses and therefore, these profiles are Nash equilibria.
	
\end{proof}

%%%%%%%%%%%%%%%%%%%%%%%%%%%%%%%%%%%%%%%%%%%%%%%%%%%%%%%%%%%%%%%%%%%%%%%%%%%%%%%%%%%%%%%%%%%%%%%%%%%%%%%%%%%%%%%%%%%%%%%%%%%%%%%%%%%%%%%%%%%%%%%%%%%%%%%%%%%%%%%%%%%%%%%%%%%%%%%%%%%%%%%%%%%%%%%%%%%%%%%%

\begin{proposition}\label{proposition:valunc:hk:maineq}
	The set of Nash equilibria of the bilateral trade game with heterogeneous buyer valuations and $\kappa = 1$ is $p^*_i\leq \min \{\bar p^*_{hi},\bar p^*_{li}\} \forall i \in \{1,2\}$
\end{proposition}			

\begin{proof}
	Given that $r<v_l<v_h$, Expression (\ref{eq:compinfo:hm:ut3}) with $\kappa = 1$ is maximised if and only if $p^*_i\leq \min \{\bar p^*_{hi},\bar p^*_{li}\}$ .
\end{proof}	
%%%%%%%%%%%%%%%%%%%%%%%%%%%%%%%%%%%%%%%%%%%%%%%%%%%%%%%%%%%%%%%%%%%%%%%%%%%%%%%%%%%%%%%%%%%%%%%%%%%%%%%%%%%%%%%%%%%%%%%%%%%%%%%%%%%%%%%%%%%%%%%%%%%%%%%%%%%%%%%%%%%%%%%%%%%%%%%%%%%%%%%%%%%%%%%%%%%%%%%%
\section{Appendix to Section \ref{sec:inef}}

\noindent \textbf{Proof of Proposition \ref{prop:incompinfo:selfish:maineqi}}

The prices set by a given player are completely independent of her threshold and only affect her utility through their interaction with her rival's threshold. Thus, the bargaining game's equilibrium set is all the pairs of profiles that can be formed from the equilibria of the contingent game. 

Consider now the contingent game. The seller's best reply correspondences are:
%
%
%	
%\begin{lemma}\label{lemma:incompinfo:selfish:conteq:inef}
%	Consider the contingent game with quality uncertainty between two \textit{homo oeconomicus} agents.  Assume that $0<v_l<r_l<r_h<v_h$. Then:
%	\begin{enumerate}
%\item if $r_h\leq v^e$, the Nash equilibria in pure strategies are:
%
%\begin{itemize}
%	\item Pooling: $\overline{p}^*= p^*_{g}=p^*_{b}\in[r_h,v^e]$
%	
%	\item No trade: $\overline{p}^*< r_l, p^*_{b} > v_l, p^*_{g}> \min\left\{\max\left\{p^*_{b},\dfrac{1-\lambda}{\lambda}(v_l-p^*_{b})\right\},v_h\right\}$		
%	
%\end{itemize}
%
%	
%	
%	
%	\item If $r_h > v^e$, the Nash equilibria in pure strategies of the contingent game are:
%	\begin{itemize}
%		\item No trade: $\overline{p}^*< r_l, p^*_{b} > v_l, p^*_{g}> \min\left\{\max\left\{p^*_{b},\dfrac{1-\lambda}{\lambda}(v_l-p^*_{b})\right\},v_h\right\}$
%	\end{itemize}
%		\end{enumerate}
%\end{lemma}	
%
%
\begin{equation*}
p^{BR}_{h} = \argmax_{p_{h}} U^{S}(p_{h},p_{l};\bar p) = \begin{cases} \label{eq_seller_h_bri}
p^{BR}_{h}= \overline{p} & \text{if }  \overline{p}_1 \geq r_h \\
p^{BR}_{h}>\overline{p}  & \text{if } \overline{p} <r_h\\
\end{cases}
\end{equation*}

\begin{equation*}
p^{BR}_{l} = \argmax_{p_l} U^{S}(p_{h},p_{l};\bar p) = \begin{cases} \label{eq_seller_l_bri}
p^{BR}_{l}= \overline{p} & \text{if }  \overline{p} \geq r_l \\
p^{BR}_{l}>\overline{p}  & \text{if } \overline{p} <r_l\\
\end{cases}
\end{equation*}

The buyer's best reply correspondence is:	

\begin{equation*}
\begin{cases}
\bar p^{BR} \geq \max\{p_{h},p_{l}\} & \text{if }0\leq p_{l}\leq v_l, 0 \leq p_{h} \leq v_h \\
\bar p^{BR}\geq p_{h} & \text{if }v_l < p_{l} \leq v^e, p_{l}  \leq p_{h} \leq v_h + \dfrac{1-\lambda}{\lambda}(v_l-p_{l})\\
\bar p^{BR}\in [p_{l},p_{h})  & \text{if } 0\leq p_{l}\leq v_l, v_h< p_{h} \\
\bar p^{BR}\in [p_{h},p_{l})  & \text{if } v_l < p_{l}, 0\leq \min\{p_{l},v_h\}\\	
\bar p^{BR} < \min\{p_{h},p_{l}\}& \text{if } v_l < p_{l}, p_{h} > \min\left\{\max\left\{v_h + \dfrac{1-\lambda}{\lambda}(v_l-p_{l}),p_{l}\right\},v_h\right\} \\
\end{cases}
\end{equation*}

If $ \lambda \geq  \lambda_e$, the  best replies intercept whenever: $$\overline{p}^*=p_{h}^*=p_{l}^*\in[r_h,v_e],  \text{ and }$$ $$\overline{p}^*< r_l, p_{l}^* > v_l, p_{h}^*> \min\left\{\max\left\{p_{l}^*,v_h +\dfrac{1-\lambda}{\lambda}(v_l-p_{l}^*)\right\},v_h\right\}.$$		
If $ \lambda <  \lambda_e$, they do so only when: $$\overline{p}^*< r_l, p_{l}^* > v_l, p_{h}^*> \min\left\{\max\left\{p_{l}^*,v_h + \dfrac{1-\lambda}{\lambda}(v_l-p_{l}^*)\right\},v_h\right\}.$$

%
%\begin{figure}[H]
%	\centering
%	\includegraphics[width=0.7\linewidth]{ho-inef-BR}
%	\caption{Buyer best reply correspondence in the contingent game}
%	\label{fig:ho-inef-br}
%\end{figure}

%%%%%%%%%%%%%%%%%%%%%%%%%%%%%%%%%%%%%%%%%%%%%%%%%%%%%%%%%%%%%%%%%%%%%%%%%%%%%%%%%%%%%%%%%%
\begin{lemma}\label{lemma:incompinfo:inef:hm:notrade1}
	Consider the quality uncertainty game between two \textit{homo moralis} and assume that $0<v_l<r_l<r_h<v_h$. Then, no-trade profiles where the weakly largest threshold is below own high quality price or above own bad quality price are not Nash equilibria in pure strategies.
\end{lemma}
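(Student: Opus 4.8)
The plan is to fix a No Trade/No Trade profile and exhibit a profitable deviation for whichever player holds the weakly largest threshold. Write $\overline{p}_1 \geq \overline{p}_2$ without loss of generality, so Player $1$ is that player. By Definition \ref{def:adv}, the two defining inequalities of such a profile are $\overline{p}_2 < \min\{p_{1h},p_{1l}\}$ and $\overline{p}_1 < \min\{p_{2h},p_{2l}\}$; combined with $\overline{p}_2 \leq \overline{p}_1$ they yield $\overline{p}_2 < \min\{p_{2h},p_{2l}\}$, a fact I will use to guarantee that the relevant deviation interval is nonempty. Throughout, both players obtain zero material payoff, so Player $1$'s utility reduces to her moral term.

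The key observation is that, since $v_h - r_h > 0$ and $v_l - r_l < 0$ under the maintained assumption $v_l < r_l < r_h < v_h$, Player $1$'s moral term from Expression (\ref{eq:compinfo:hm:ut4}), namely $\tfrac{\kappa}{2}\big[\lambda \mathbb{1}\{p_{1h}\leq \overline{p}_1\}(v_h-r_h)+(1-\lambda)\mathbb{1}\{p_{1l}\leq\overline{p}_1\}(v_l-r_l)\big]$, is strictly positive only when $p_{1h}\leq \overline{p}_1 < p_{1l}$, and is at most zero otherwise. The lemma's hypothesis is exactly the negation of this ideal configuration, split into its two halves: either $\overline{p}_1 < p_{1h}$ (the high-quality term is switched off) or $\overline{p}_1 > p_{1l}$ (the low-quality term is switched on, contributing a strict negative). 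I would treat these two cases separately.

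If $\overline{p}_1 > p_{1l}$, I raise the low-quality price to any $p_{1l}' > \overline{p}_1$. Since $p_{1l}' > \overline{p}_1 \geq \overline{p}_2$ no low-quality sale is created, and the high-quality and buyer-side indicators are untouched, so the material payoff stays at zero while the strictly negative low-quality moral term is removed — a strict improvement. If instead $\overline{p}_1 < p_{1h}$, I pick any $x$ in the open interval $(\overline{p}_2,\min\{p_{2h},p_{2l}\})$, which is nonempty by the inequality derived above, and deviate to $p_{1h}' = \overline{p}_1' = x$ with $p_{1l}'$ set above $x$. Because $x > \overline{p}_2$ neither quality is sold, and because $x < \min\{p_{2h},p_{2l}\}$ nothing is bought, so the material payoff remains zero; but $p_{1h}' \leq \overline{p}_1'$ now switches the high-quality moral term on, yielding the strictly positive value $\tfrac{\kappa}{2}\lambda(v_h-r_h)$, which dominates the old, nonpositive moral payoff.

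The one delicate point is this second deviation: naively lowering $p_{1h}$ down to $\overline{p}_1$ can accidentally trigger a high-quality sale precisely when $\overline{p}_1 = \overline{p}_2$, so I must instead move the high-quality price \emph{and} the threshold together into the gap between $\overline{p}_2$ and $\min\{p_{2h},p_{2l}\}$. This is exactly where the ``weakly largest threshold'' hypothesis is essential, since it is what forces $\overline{p}_2 < \min\{p_{2h},p_{2l}\}$ and hence makes the gap nonempty; this coordinated adjustment of price and threshold is the same ``moral gain without material loss'' mechanism used throughout the paper. In either case Player $1$ strictly raises her utility, so no such profile can be a Nash equilibrium.
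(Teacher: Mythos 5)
Your proof is correct and follows essentially the same route as the paper's: the same two-way split according to which moral indicator is misconfigured, and the same ``moral gain with zero material change'' deviations, with your uniform treatment of the case $\bar p_1 < p_{1h}$ (moving $p_{1h}$, $\bar p_1$ and $p_{1l}$ jointly into the gap $(\bar p_2,\min\{p_{2h},p_{2l}\})$) replacing the paper's finer split into the sub-cases $\bar p_2<\bar p_1$ and $\bar p_2=\bar p_1$. One slip to fix: your second case should read $\bar p_1 \geq p_{1l}$ rather than $\bar p_1 > p_{1l}$, since the indicator $\mathbb{1}\{p_{1l}\leq \bar p_1\}$ is already switched on at equality and this boundary case must also be eliminated for Lemma \ref{lemma:incompinfo:inef:hm:notrade2} to restrict attention to profiles with $p_{2h}\leq\bar p_2<p_{2l}$ (the paper's case (2) accordingly assumes $p_{1l}\leq\bar p_1$) --- your deviation $p_{1l}'>\bar p_1$ handles it verbatim.
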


\begin{proof}
	Assume that $\bar{p}_i<\min\{p_{jh},p_{jl}\} \forall i,j\in\{1,2\}, i\neq j$ and, without loss of generality, that $\bar{p}_2\leq \bar{p}_1$. Then:
	
	(1) Suppose additionally that $\bar{p}_1 < p_{1h}$. If $\bar{p}_2 < \bar{p}_1$, Player 1 can profitably deviate to $p_{1h}'=\bar p_1$. In turn, if $\bar{p}_2=\bar{p}_1$, she can profitably deviate to $\bar p_{1h}'=p'_1\in(\bar p_1,\min\{p_{1l},p_{2h},p_{2l}\})$. In both deviations, the first term of her utility function does not change while the second one strictly increases.
	
	(2) Suppose, alternatively to (1), that $p_{1l}\leq \bar{p}_1$ (given the assumption that $\bar{p}_i<\min\{p_{jh},p_{jl}\} \forall i,j\in\{1,2\}, i\neq j$, this means that $\bar{p}_2 < \bar{p}_1$. Then, for $\bar{p}_2 < \bar{p}_1$, Player 1 can profitably deviate to $p_{1l}' > \bar{p}_1$. The first term of her utility function does not change while the second one strictly increases.
\end{proof}

%%%%%%%%%%%%%%%%%%%%%%%%%%%%%%%%%%%%%%%%%%%%%%%%%%%%%%%%%%%%%%%%%%%%%%%%%%%%%%%%%%%%%%%%%%
\begin{lemma}\label{lemma:incompinfo:inef:hm:notrade2}
	Consider the quality uncertainty game between two \textit{homo moralis} and assume that $0<v_l<r_l<r_h<v_h$. Then, profiles where there is no trade in either contingent market are not Nash equilibria in pure strategies.	
\end{lemma}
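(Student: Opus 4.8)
The plan is to combine Lemma \ref{lemma:incompinfo:inef:hm:notrade1} with a single ``mimicking'' deviation. By that lemma I may restrict attention to no-trade profiles in which the weakly largest threshold lies between the corresponding player's own high- and low-quality prices. Labelling the players so that $\bar p_2 \le \bar p_1$, this yields the chain $\bar p_2 < p_{1h} \le \bar p_1 \le p_{1l}$ together with $\bar p_1 < \min\{p_{2h}, p_{2l}\}$ (the two no-trade conditions). In particular both of Player 2's prices lie strictly above her own threshold $\bar p_2$, so her moral term and her material payoff both vanish and she obtains utility $0$; this is the player I would make deviate.

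The key step is to let Player 2 capture the high-quality surplus by imitating Player 1's high-quality position. First I would treat the generic case $p_{1h} < p_{1l}$: Player 2 deviates to $p_{2h}' = p_{1h}$, leaves $p_{2l}$ unchanged, and raises her threshold to some $\bar p_2' \in [p_{1h}, p_{1l})$. Then high quality trades in both contingent markets at the common price $p_{1h}$ — she sells to Player 1 (since $p_{1h} \le \bar p_1$) and buys from Player 1 (since $p_{1h} \le \bar p_2'$) — while low quality trades in neither market and her low-quality moral indicator stays off (because $\bar p_2' < p_{1l}$ and $p_{2l} > \bar p_1 \ge \bar p_2'$). The point is that the seller surplus $\lambda(p_{1h}-r_h)$ and the buyer surplus $\lambda(v_h - p_{1h})$ sum to $\lambda(v_h - r_h)$, so the price $p_{1h}$ cancels and the material component becomes $(1-\kappa)\frac{1}{2}\lambda(v_h-r_h)$; adding the moral component $\kappa\frac{1}{2}\lambda(v_h - r_h)$ gives total utility $\frac{1}{2}\lambda(v_h-r_h) > 0$, strictly above the original $0$.

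It remains to dispose of the degenerate case in which no such $\bar p_2'$ exists, namely $p_{1h} = \bar p_1 = p_{1l}$. Here I would instead exhibit a deviation for Player 1: since $p_{1l} \le \bar p_1$, her low-quality moral indicator is on and contributes $(1-\lambda)(v_l - r_l) < 0$; raising $p_{1l}$ marginally above $\bar p_1$ switches it off (a strict moral gain) without altering any trade — low quality was not being sold either, as $\bar p_2 < p_{1l}$ — so her material payoff is unchanged and her utility strictly increases. Together the two deviations cover every no-trade profile, so none is a Nash equilibrium.

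The main obstacle — and the reason the deviation must be chosen with care — is the value-destroying nature of low-quality trade ($v_l < r_l$): any threshold increase or price decrease risks triggering a low-quality exchange or switching on the negative low-quality moral term, which could swamp the high-quality gain. The construction sidesteps this by keeping $\bar p_2'$ strictly below $p_{1l}$ and $p_{2l}$ above $\bar p_1$, and the only configuration in which this is impossible ($p_{1h}=\bar p_1=p_{1l}$) is precisely the one that makes Player 1's own low-quality moral term active, handing her the alternative deviation.
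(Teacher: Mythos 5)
Your proof is correct, but it follows a genuinely different route from the paper's. Both arguments open identically, using Lemma \ref{lemma:incompinfo:inef:hm:notrade1} to bracket the weakly largest threshold between that player's own high- and low-quality prices; from there, however, the paper runs three sub-cases (unequal thresholds with the bracketed high-quality price weakly below $v_h$, unequal thresholds with it above $v_h$, and equal thresholds), and in each one the small-threshold player makes a \emph{one-sided} deviation — either raising her threshold to buy the high-quality good, or lowering her high-quality price to sell it — whose profitability hinges on comparing that price to $v_h$ and $r_h$. Your mimicking deviation is instead \emph{two-sided}: by setting $p_{2h}'=p_{1h}$ and a threshold weakly above $p_{1h}$, Player 2 trades the high quality in both contingent markets at the same price, so the price cancels between her buyer and seller roles and her utility jumps from $0$ to $\tfrac{1}{2}\lambda(v_h-r_h)>0$ no matter where $p_{1h}$ sits relative to $v_h$ or $r_h$. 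This buys you uniformity: the paper's case split on price levels disappears, and the equal-threshold case never arises because your restriction $\bar p_2 < p_{1h}\le\bar p_1$ already forces $\bar p_2<\bar p_1$; the only residual case is the degenerate bracket $p_{1h}=\bar p_1=p_{1l}$, which you dispatch with the same kind of moral deviation the paper uses inside Lemma \ref{lemma:incompinfo:inef:hm:notrade1}. One small repair is needed: choosing $\bar p_2'\in[p_{1h},p_{1l})$ does not by itself guarantee $\bar p_2'\le\bar p_1$, which your argument invokes to keep Player 2's low-quality moral indicator off (you need $p_{2l}>\bar p_2'$, and you only know $p_{2l}>\bar p_1$); simply take $\bar p_2'=p_{1h}$, which satisfies every requirement at once. (Both your proof and the paper's implicitly assume $\lambda\in(0,1)$ so that the relevant gains are strict.)
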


\begin{proof}
	Assume that $\bar{p}_i<\min\{p_{jh},p_{jl}\}\forall i,j\in\{1,2\}, i\neq j$ and, without loss of generality that $\bar p_1\leq \bar p_2$. First, consider the case where $\bar p_1<\bar p_2$. Then, by Lemma \ref{lemma:incompinfo:inef:hm:notrade1}, only profiles where $p_{2h}\leq \bar p_2<p_{2l}$ can be candidate equilibria. The profiles left when $\bar p_1<\bar p_2$ are then those where $\bar p_1<p_{2h}\leq \bar p_2<\min\{p_{2l},p_{1h},p_{1l}\}$. However, notice that if $p_{2h}\leq v_h$, then Player 1 finds it (weakly) profitable to increase her own threshold to $\bar p'_1=p_{2h}$. After this deviation, the only change is that Player 1 now acquires Player 2's high quality good at price $p_{2h}\leq v_h$, without buying the bad quality item. The first term of her utility function increases while the second one does not change. Thus, no-trade profiles featuring  $\bar p_1<\bar p_2$ and  $\bar p_2\leq v_h$ are not equilibria.
	
	It remains to check whether profiles such that  $\bar p_1<\bar p_2$ and  $p_{2h} > v_h$ are equilibria. The answer is also negative because we have that $p_{2h}\leq \bar p_2 \implies \bar p_2 > v_h>r_h$. With Player 2's threshold above $r_h$, Player 1 can profitably deviate by lowering her own high quality price to $p_{1h}'=\bar p_2$ and sell her good quality item for a profit, without experiencing neither ``moral'' losses nor gains, as both her prices remain above her threshold. The first term of her utility function increases while the second one does not change. I can then affirm that no-trade profiles where players set different thresholds are not equilibria of this game.
	
	Finally, suppose that $\bar p_1=\bar p_2$. Player 1 can profitably deviate to $p_{1h}'=\bar p_1'\in (\bar p_2, \min\{p_{1l},p_{2l},p_{2h}\})$. In this way, she gets to sell the high quality good to ``herself'' without modifying anything else, thus obtaining a ``moral'' gain. The first term of her utility function does not change while the second one strictly increases.
\end{proof}

%%%%%%%%%%%%%%%%%%%%%%%%%%%%%%%%%%%%%%%%%%%%%%%%%%%%%%%%%%%%%%%%%%%%%%%%%%%%%%%%%%%%%%%%%%
\begin{lemma}\label{lemma:incompinfo:inef:hm:notrade3}
	Consider the quality uncertainty game between two \textit{homo moralis} and assume that $0<v_l<r_l<r_h<v_h$. Then, profiles with trade of both qualities in one contingent market and no trade in the other one where agents setting different thresholds are not Nash equilibria in pure strategies.	
\end{lemma}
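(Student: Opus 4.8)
The plan is to show that no Full trade/No trade profile with $\bar p_1 \neq \bar p_2$ survives, by exhibiting a profitable deviation in every configuration. Without loss of generality, label the markets so that contingent market $1$ (Player $1$ selling) has full trade and market $2$ (Player $2$ selling) has no trade; thus $\max\{p_{1h},p_{1l}\}\le \bar p_2$ and $\bar p_1<\min\{p_{2h},p_{2l}\}$, and by hypothesis $\bar p_1\neq\bar p_2$. The driving feature of this parametrisation is that in the moral term of (\ref{eq:compinfo:hm:ut4}) the high-quality surplus $v_h-r_h$ is positive while the low-quality surplus $v_l-r_l$ is \emph{negative}: morally, each agent wants her own high price weakly below, and her own low price strictly above, her own threshold. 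I would use this sign structure together with the strict threshold gap as the resource that generates the deviation.

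First I would record reductions valid at any candidate equilibrium. (i) Player $1$ must sell low quality above cost, $p_{1l}\ge r_l$: otherwise raising $p_{1l}$ above $\max\{\bar p_1,\bar p_2\}$ stops the loss-making low-quality sale (a strict material gain when $p_{1l}<r_l$) while switching off the low-quality moral indicator, so the moral term does not fall. (ii) In Case B below ($\bar p_1>\bar p_2$) the same logic applied to high quality, but raising $p_{1h}$ only into the gap $(\bar p_2,\bar p_1]$, forces $p_{1h}\ge r_h$; combined with $p_{1h}\le\bar p_2$ this yields $\bar p_2\ge r_h$. (iii) In Case A ($\bar p_2>\bar p_1$) Player $2$'s freedom to place her own prices anywhere above $\bar p_1$ without triggering trade pins down her moral configuration: she sets $p_{2h}\in(\bar p_1,\bar p_2]$ to collect the high-quality moral gain and $p_{2l}>\bar p_2$ to shed the low-quality moral loss, so any candidate equilibrium has $\bar p_1<p_{2h}\le\bar p_2<p_{2l}$.

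I would then split on the threshold order. In Case B the ordering is $p_{1h},p_{1l}\le\bar p_2<\bar p_1<p_{2h},p_{2l}$, so both of Player $1$'s moral indicators are on and Player $2$'s are off. Here Player $2$ has a clean deviation: lowering $p_{2h}$ to $\bar p_2$ triggers the sale of her high-quality good to Player $1$ (since $\bar p_2<\bar p_1$), yielding a material gain proportional to $\lambda(\bar p_2-r_h)\ge 0$ (as $\bar p_2\ge r_h$) and simultaneously turning on a strictly positive moral gain proportional to $\lambda(v_h-r_h)$, with nothing else affected; the deviation is strictly profitable. In Case A the target is instead Player $1$ buying Player $2$'s high-quality good, which by (iii) is priced at $p_{2h}$ inside the gap $(\bar p_1,\bar p_2]$: raising $\bar p_1$ to $p_{2h}$ triggers exactly the high-quality purchase (as $p_{2h}\le\bar p_2<p_{2l}$), delivering a gain proportional to $\lambda(v_h-p_{2h})$, while any incidental low-quality moral loss is neutralised by co-adjusting $p_{1l}$ into $(p_{2h},\bar p_2]$, so the low-quality sale and its profit are preserved.

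The main obstacle is Case A, where the sign of $v_h-p_{2h}$ is not fixed: Player $2$ is indifferent over the exact level of $p_{2h}\in(\bar p_1,\bar p_2]$, so $p_{2h}>v_h$ (hence $\bar p_2>v_h$) is possible, and then the buying deviation is not directly profitable. The resolution I anticipate is to peel this off as a sub-case: when $\bar p_2>v_h$, Player $1$ sells high quality into a threshold above $v_h$, so either her high-quality moral indicator is already off, in which case she strictly gains by raising $p_{1h}$ toward $\bar p_2$ at no moral cost, or Player $2$ is buying high quality at a loss and strictly gains by withdrawing her threshold. Carefully tracking which single indicator each deviation flips, and checking that every move touches only sign-definite components, is the delicate part; the strict inequality $\bar p_1\neq\bar p_2$ is exactly what guarantees a nonempty gap in which to place the deviating price or threshold.
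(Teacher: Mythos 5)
Your reductions (i) and (iii) are sound, and your Case B ($\bar p_1>\bar p_2$) argument is correct --- indeed it differs from the paper's, which has Player 2 harvest a pure moral gain by repositioning $p_{2h}$ and $\bar p_2$ together above $\bar p_1$ without changing any actual trade, whereas you trigger the high-quality sale and use $\bar p_2\ge r_h$ from your reduction (ii); both work. The genuine gap is in Case A, precisely in the sub-case you flag as the obstacle ($p_{2h}>v_h$). Your dichotomy --- either Player 1's high-quality moral indicator is off, so she raises $p_{1h}$ at no moral cost, or Player 2 is buying high quality at a loss and withdraws --- is not exhaustive. The uncovered configuration is $p_{1h}\le\bar p_1$ (indicator \emph{on}) together with $p_{1h}\le v_h$: raising $p_{1h}$ above $\bar p_1$ then flips the indicator off and costs $\tfrac{\kappa}{2}\lambda(v_h-r_h)$, so it is not free; Player 2 buys the high-quality good at $p_{1h}\le v_h$, so she is \emph{not} buying at a loss; and your main threshold-raising deviation yields $\lambda(v_h-p_{2h})<0$. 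Concretely, with $v_l=1$, $r_l=2$, $r_h=3$, $v_h=4$, take $\bar p_1=p_{1h}=3$, $p_{1l}=3.5$, $\bar p_2=5$, $p_{2h}=4.5$, $p_{2l}=6$. This profile satisfies full trade/no trade, $\bar p_1\neq\bar p_2$, and all of your reductions, yet for $\lambda=0.9$ every deviation you propose fails: the co-adjusted threshold raise nets $\lambda(4-4.5)+(1-\lambda)(5-3.5)<0$, horn one is inapplicable (the indicator is on), and horn two's premise is false. The profile is of course not an equilibrium, but your argument does not exhibit a profitable deviation for it.

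What is missing is the paper's stronger reduction, which should replace your (i) in Case A: any candidate equilibrium must have $p_{1l}=\bar p_2$ \emph{exactly}, because raising $p_{1l}$ all the way to $\bar p_2$ preserves the low-quality sale, is a strict material gain whenever $p_{1l}<\bar p_2$, and can only switch Player 1's low-quality moral indicator off --- which is a moral gain here, since $v_l<r_l$. This kills the numerical profile above immediately, and in the residual sub-case it pins Player 2's low-quality purchase at price $\bar p_2\ge p_{2h}>v_h>v_l$, a sure loss; her withdrawal deviation (lowering $\bar p_2$ and $p_{2h}$ together into $(\bar p_1,\bar p_2)$, which keeps her high-quality moral indicator on and sheds both loss-making purchases) then closes the case. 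This is essentially how the paper finishes, with a final split on $p_{1h}$ versus $\bar p_2$ and on $\bar p_2$ versus $v_h$. Your instinct to co-adjust prices with thresholds is the right one; the reduction you need is ``low price pinned to the rival's threshold,'' not merely ``low price above cost.''
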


\begin{proof}
	Assume without loss of generality that $\bar{p}^*_1 < \min\{p_{2h}^*,p_{2l}^*\}$, $\bar{p}^*_2\geq \max\{p_{1h}^*,p_{1l}^*\}$ and $\bar{p}^*_1 \neq \bar{p}^*_2$. Then:
	
	(1) Suppose first that $\bar p_2^* < \bar p_1^*$. This implies that $\bar p_2^* <\min\{p_{2h}^*,p_{2l}^*\}$. But then, Player 2 can profitably deviate to $\bar p_1^* < \bar p_2' = p_{2h}' < p_{2l}'$. Her utility function's first term remains unchanged while the second one increases. 	
	
	(2) Assume now that $\bar p_1^* < \bar p_2^*$. Notice first that a profile complying with these assumptions and also with $p_{1l}^* < \bar p_2^*$ cannot be an equilibrium. Indeed, Player 1 can profitably deviate to $p_{1l}' = \bar p_2^*$ and make a material improvement while experiencing no moral losses, or even gains (her utility function's first term increases  while the second one does not decrease). Next, remark that profiles where $p_{2l}^*\leq \bar p_2^*$ or $p_{2h}^*> \bar p_2^*$ are not equilibria. In the former case, Player 2 can $p_{2l}'> \bar p_2^*$ whilst in the latter she can do so to $p_{2h}'\leq  \bar p_2^*$. The deviations entail a moral gain without material losses (her utility function's first term remains unchanged while the second one increases).
	
	We are thus left with profiles where $\bar p_1^* <  p_{2h}^* \leq \bar p_2^* = p_{1l}^* < p_{2l}^*$ and $p_{1h}^* \leq \bar p_2^*$.	No profiles of this kind are equilibria if $p_{1h}^* < \bar p_2^*$. Player 1 can profitably deviate to  $p_{1h}' = \bar p_1' = p_{2h}^* - \epsilon$. She makes a material gain and sustains no moral losses. 
	
	Profiles with $p_{1h}^* = \bar p_2^*$  are not equilibria either. Consider first the case where $p_{1h}^* = \bar p_2^* \leq v_h$. Then we must also have $p_{2h}^* = \bar p_2^*$ since if $p_{2h}^* < \bar p_2^*$, Player 1 can deviate to $\bar p_1' = p_{2h}^* $ and make a material gain. Therefore, the remaining candidates are $p_{1l}^* = p_{1h}^* = p_{2h}^* = \bar p_2^* < p_{2l}^*$. These are not equilibria, as Player 1 can deviate to $ p_{1h}' = \bar p_1' = \bar p_2^* - \epsilon$. She makes a moral gain and sustains a material loss that can be made smaller by choosing the appropriate $\epsilon$. Assume in turn that $p_{1h}^* = \bar p_2^* > v_h$. Then, Player 2 can deviate to $\bar p_2' = p_{2h}' \in (\bar p_1^*,\bar p_2^*)$. She makes a material gain while avoiding moral losses.
\end{proof}

%%%%%%%%%%%%%%%%%%%%%%%%%%%%%%%%%%%%%%%%%%%%%%%%%%%%%%%%%%%%%%%%%%%%%%%%%%%%%%%%%%%%%%%%%%%%%%%%%%%%%%%%%%%%%%%%%%%%%%%%%%%%%%%%%%%%%%%%%%%%%%%%%%%%%%%%%%%%%%%%%%%%%%%%%%%%%%%%%%%%%%%%%%%%%%%%%%%%%%%%%%%%%%%%%%%%%%%%%%%%%%%%%%%%%%%%%%%%%%%%%%%%%%%%%%%%%%%%%%%%%%%%%%%%%%%%%%%%%%%%%%%%%%%%%%%%%%%%%%%%%%%%%%%%%%%%%%%%%%%%%%%%%%%%%%%%%%%%%%%%%%%%%%%%%%%%%%%%%%%%%%%%%%%%%%%%%%%%%%%%%%%%%%%%%%%%%%%%%%%%%%%%%%%%%%%%%%%%%%%%%%%%%%%%%%%%%%%%%%%%%%%%%%%%%%%%%%%%%%%%%%%%%%%%%%%%%%%%%%%%%%%%%%%%%%%%%%%%%%%%%%%%%%%%%%%%%%%%%%%%%%%%%%%%%%%%%%%%%%%%

%%%%%%%%%%%%%%%%%%%%%%%%%%%%%%%%%%%%%%%%%%%%%%%%%%%%%%%%%%%%%%%%%%%%%%%%%%%%%%%%%%%%%%%%%%FIXXXXXXXX

\begin{lemma}\label{lemma:incompinfo:inef:hm:notrade4}
	Consider the quality uncertainty game between two \textit{homo moralis} and assume that $0<v_l<r_l<r_h<v_h$. Then, profiles with trade of both qualities in one contingent market, no-trade in the other one and agents setting the same threshold are not Nash equilibria in pure strategies.	
	%	
	%	Assume that $0<v_l<r_l<r_h<v_h$. Then, profiles with $\bar{p}^*_i<\min\{p_{jh}^*,p_{jl}^*\}$, $\bar{p}^*_j\geq \min\{p_{ig}^*,p_{ib}^*\}$ and $\bar p_j^*= p_i^*$ (with $i\neq j, i,j\in\{1,2\}$) are not Nash equilibria in pure strategies of the game between two \textit{homo moralis}.
\end{lemma}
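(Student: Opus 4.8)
The plan is to reduce the configuration to a canonical form and then, in every case, exhibit a strictly profitable unilateral deviation that uses only the high-quality margin. By relabelling players, I would assume full trade occurs in the market where Player 1 is the seller and no trade in the market where Player 2 is the seller, and write $\bar p := \bar p_1 = \bar p_2$ for the common threshold. The defining inequalities are then $\max\{p_{1h}, p_{1l}\} \le \bar p < \min\{p_{2h}, p_{2l}\}$; in particular $p_{1h} \le \bar p < p_{2h}$. The key structural remark is that, because $\bar p_1 = \bar p_2$, for each player the inequality triggering an \emph{actual} high-quality sale (price $\le$ the opponent's threshold) coincides with the one switching on her high-quality \emph{moral} term (price $\le$ her own threshold): Player 1 has both high and low moral indicators equal to $1$, Player 2 has both equal to $0$.

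The argument then splits on the sign of $p_{1h}-r_h$. If $p_{1h}\ge r_h$, I would let Player 2 imitate Player 1's high price, deviating to $p_{2h}'=p_{1h}$: since $p_{1h}\le\bar p=\bar p_1$ Player 1 now buys the high good from her, and since $p_{1h}\le\bar p=\bar p_2$ her high moral term switches on, with nothing else moving, so her utility rises by $\frac{\lambda}{2}\big[(1-\kappa)(p_{1h}-r_h)+\kappa(v_h-r_h)\big]>0$ (both summands non-negative, $\kappa(v_h-r_h)>0$). If instead $p_{1h}<r_h$, I would let Player 1 \emph{decouple} her genuine sale from her moral term: she raises $p_{1h}$ and her own threshold $\bar p_1$ together to a common value in the non-empty interval $(\bar p,\min\{p_{2h},p_{2l}\})$, leaving $p_{1l}$ untouched. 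The real high-quality sale to Player 2 then ceases (its price now exceeds $\bar p_2=\bar p$), eliminating the negative material term $\lambda(p_{1h}-r_h)$, while the high moral term stays on (the new price still lies below the new threshold); the low-quality sale, the low moral term, and the still-empty buyer side are unchanged, and no purchase is triggered because $\bar p_1'<\min\{p_{2h},p_{2l}\}$. Her utility rises by $\frac{\lambda}{2}(1-\kappa)(r_h-p_{1h})>0$. The two cases are exhaustive, so no such profile is a Nash equilibrium.

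The delicate step is the second deviation, and it is precisely needed at the boundary where the high-quality ``package'' (material plus moral) has zero net value, so that the copying deviation of the first case is only weakly profitable. The resolution exploits the asymmetry built into the moral term: an actual sale is governed by the \emph{opponent's} threshold, whereas the moral credit is governed by one's \emph{own} threshold, so when the high good is being sold below cost Player 1 can retain the moral benefit of ``being willing to trade with her own type'' while withdrawing from the loss-making real transaction, simply by lifting her own threshold above her own raised ask price. I would finish by checking that the case split leaves no gap (if $p_{1h}\ge r_h$ and $p_{1h}<r_h$ both failed they would be contradictory, so exactly one deviation always applies), and by noting that the proof uses only $r_h<v_h$, never the socially-undesirable assumption $v_l<r_l$ — the impossibility is driven by high-quality coordination, not by the sign of the low-quality surplus.
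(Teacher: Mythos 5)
Your proof is correct: both deviations are strictly profitable (the first by $\tfrac{\lambda}{2}\bigl[(1-\kappa)(p_{1h}-r_h)+\kappa(v_h-r_h)\bigr]$, the second by $\tfrac{\lambda}{2}(1-\kappa)(r_h-p_{1h})$), and the case split on the sign of $p_{1h}-r_h$ is exhaustive, so the lemma follows. Your route, however, differs from the paper's. The paper disposes of all cases with a single deviation by the no-trade-side seller (your Player 2): she moves to $(\bar p_2',p_{2h}',p_{2l}')$ with $\bar p_1 < p_{2h}' \leq \bar p_2' < p_{2l}'$, i.e. she raises her own threshold, tucks her high-quality ask underneath it, and keeps her low-quality ask above it. Every new price remains strictly above Player 1's unchanged threshold, so no actual transaction is created or destroyed, and Player 1's asks remain below the raised threshold, so purchases are unaffected; the sole change is that Player 2's high-quality moral indicator switches on, worth $\tfrac{\kappa}{2}\lambda(v_h-r_h)>0$. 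This is precisely the decoupling trick of your second case — moral credit is governed by one's own threshold, actual sales by the opponent's — but applied to the player for whom it is a pure moral gain irrespective of where $p_{1h}$ sits relative to $r_h$, rather than to Player 1, for whom it pays only when the high good is being sold below cost. Your choice of deviating player is what forces the case split and the auxiliary imitation deviation $p_{2h}'=p_{1h}$. Both arguments are valid and, as you note, both use only $v_h>r_h$; the paper's is more economical (one deviation, no cases), while yours makes explicit the additional structural fact that on the full-trade side a below-cost high-quality sale can be unilaterally withdrawn at no moral cost.
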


\begin{proof}
	Assume without loss of generality that $ \min\{p_{1h}^*,p_{1l}^*\} \leq \bar{p}^*_2=\bar{p}^*_1<\min\{p_{2h}^*,p_{2l}^*\}$. Then, Player 2 can profitably deviate to $(\bar p_2',p_{2h}',p_{2l}')$ such that $\bar{p}^*_2=\bar{p}^*_1<p_{2h}'\leq \bar p_2'<p_{2l}'$. This simple deviation allows her to achieve a `moral'' gain by commencing to trade with ``herself'' the good quality object, thus breaking the equality in thresholds assumed at first.
	
\end{proof}
%%%%%%%%%%%%%%%%%%%%%%%%%%%%%%%%%%%%%%%%%%%%%%%%%%%%%%%%%%%%%%%%%%%%%%%%%%%%%%%%%%%%%%%%%%%%%%%%%%%%%%%%%%%%%%%%%%%%%%%%%%%%%%%%%%%%%%%%%%%%%%%%%%%%%%%%%%%%%%%%%%%%%%%%%%%%%%%%%%%%%%%%%%%%%%%%%%%%%%%%%%%%%%%%%%%%%%%%%%%%%%%%%%%%%%%%%%%%%%%%%%%%%%%%%%%%%%%%%%%%%%%%%%%%%%%%%%%%%%%%%%%%%%%%%%%%%%%%%%%%%%%%%%%%%%%%%%%%%%%%%%%%%%%%%%%%%%%%%%%%%%%%%%%%%%%%%%%%%%%%%%%%%%%%%%%%%%%%%%%%%%%%%%%%%%%%%%%%%%%%%%%%%%%%%%%%%%%%%%%%%%%%%%%%%%%%%%%%%%%%%%%%%%%%%%%%%%%%%%%%%%%%%%%%%%%%%%%%%%%%%%%%%%%%%%%%%%%%%%%%%%%%%%%%%%%%%%%%%%%%%%%%%%%%%%%%%%%%%%%%

%\noindent\textbf{Profiles where only the bad quality is exchanged in at least one contingent market at prices strictly below the threshold are not equilibria:}

\begin{lemma}\label{lemma:incompinfo:inef:hm:bad}
	Consider the quality uncertainty game between two \textit{homo moralis} and assume that $0<v_l<r_l<r_h<v_h$. Then, profiles where only the bad quality is exchanged in at least one contingent market at a price strictly below the threshold are not Nash equilibria in pure strategies.	
	%	with $p_{jl}^*< \bar p^*_i < p_{jh}^*,i\neq j, i,j\in\{1,2\}$ are not Nash equilibria in pure strategies of the game between two \textit{homo moralis}.
\end{lemma}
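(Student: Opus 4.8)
The plan is to exhibit a profitable single-price deviation for the seller in the contingent market that displays low-quality-only trade. Working up to relabelling of players, suppose the market in which $i$ is the \textit{Seller} and $j$ the \textit{Buyer} has the ``Low quality'' outcome with the trading price strictly below the threshold, i.e. $p_{il} < \bar p_j < p_{ih}$ (the strict right inequality is the ``Low quality'' requirement of Definition \ref{def:adv}, the strict left one is the hypothesis of the lemma). The single variable I would move is $i$'s low-quality price: consider the deviation $p_{il}' = \bar p_j$, leaving $p_{ih}$, $\bar p_i$ and the whole of $j$'s strategy unchanged.

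First I would check the material (first) term of $U$ in (\ref{eq:compinfo:hm:ut4}). Since $p_{il}' = \bar p_j \le \bar p_j < p_{ih}$, the outcome type is preserved: the low-quality item still changes hands and the high-quality one still does not, so among the terms involving $i$'s prices the only active indicator is the low-quality seller term, whose value rises from $(1-\lambda)(p_{il}-r_l)$ to $(1-\lambda)(\bar p_j - r_l)$. As $\bar p_j > p_{il}$ this is a strict increase, and nothing else in the material term depends on $p_{il}$, so the first term strictly increases.

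Next I would verify the moral (second) term, which is where the undesirable-trade assumption does the work. The only moral component touched by $p_{il}$ is $(1-\lambda)\mathbb{1}\{p_{il}\le\bar p_i\}(v_l-r_l)$, and here $v_l - r_l < 0$. Raising $p_{il}$ can therefore only help: if $\bar p_j \le \bar p_i$ the indicator stays at $1$ and the term is unchanged; if $\bar p_j > \bar p_i$ the indicator falls to $0$, which (weakly, and strictly when $p_{il}\le\bar p_i<\bar p_j$) removes a negative contribution. In every case the moral term weakly increases. Combined with the strict material gain, the deviation strictly raises $U$, so no such profile can be a Nash equilibrium.

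The only point requiring care, and the conceptual crux of the lemma, is the sign bookkeeping in the moral term: under $v_l<r_l$ a moralist positively dislikes having her own low-quality price at or below her own threshold, so pushing that price up is never morally costly. This is exactly the opposite of the desirable-trade case of Section \ref{subsec:incompinfo:quality}, and it is what makes raising $p_{il}$ unambiguously profitable, even though the same move is impossible when $p_{il}=\bar p_j$ (the boundary case deliberately excluded from this statement and handled separately).
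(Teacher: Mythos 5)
Your proof is correct and follows essentially the same route as the paper: the paper's own argument is exactly the deviation $p_{il}' = \bar p_j$, asserting a strict material gain with no moral loss. Your additional bookkeeping on the moral term (that $v_l - r_l < 0$ makes raising $p_{il}$ weakly morally beneficial in every threshold configuration) is a careful verification of what the paper states without elaboration.
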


\begin{proof}
	Assume without loss of generality that  $p_{1l}^*< \bar p^*_2 < p_{1h}^*$. Then, Player 1 can profitably deviate to $p_{1l}' = \bar p^*_2$. She makes a material gain without suffering moral losses.
\end{proof}	
%%%%%%%%%%%%%%%%%%%%%%%%%%%%%%%%%%%%%%%%%%%%%%%%%%%%%%%%%%%%%%%%%%%%%%%%%%%%%%%%%%%%%%%%%%%%%%%%%%%%%%%%%%%%%%%%%%%%%%%%%%%%%%%%%%%%%%%%%%%%%%%%%%%%%%%%%%%%%%%%%%%%%%%%%%%%%%%%%%%%%%%%%%%%%%%%%%%%%%%%%%%%%%%%%%%%%%%%%%%%%%%%%%%%%%%%%%%%%%%%%%%%%%%%%%%%%%%%%%%%%%%%%%%%%%%%%%%%%%%%%%%%%%%%%%%%%%%%%%%%%%%%%%%%%%%%%%%%%%%%%%%%%%%%%%%%%%%%%%%%%%%%%%%%%%%%%%%%%%%%%%%%%%%%%%%%%%%%%%%%%%%%%%%%%%%%%%%%%%%%%%%%%%%%%%%%%%%%%%%%%%%%%%%%%%%%%%%%%%%%%%%%%%%%%%%%%%%%%%%%%%%%%%%%%%%%%%%%%%%%%%%%%%%%%%%%%%%%%%%%%%%%%%%%%%%%%%%%%%%%%%%%%%%%%%%%%%%%%%%%
%\noindent\textbf{Profiles where only the bad quality is exchanged in at least one contingent market at prices above low quality valuation are not equilibria:}

\begin{lemma}\label{lemma:incompinfo:inef:hm:bad1}
	Consider the quality uncertainty game between two \textit{homo moralis} and assume that $0<v_l<r_l<r_h<v_h$. Then, profiles where only the bad quality is exchanged in at least one contingent market at a price above the low quality valuation are not Nash equilibria in pure strategies.
	%profiles with $v_l<\bar p^*_i = p_{jl}^*< p_{jh}^*,i\neq j, i,j\in\{1,2\}$ are not Nash equilibria in pure strategies of the game between two \textit{homo moralis}.
\end{lemma}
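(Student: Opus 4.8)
The plan is to exhibit a profitable deviation for the \emph{buyer} in the contingent market that displays the offending outcome. Using Definition \ref{def:adv} and relabelling players if necessary, I would assume without loss of generality that the first contingent market is of the ``Low quality'' type with $v_l < p_{1l}^* \le \bar p_2^* < p_{1h}^*$, so Player $2$ (the buyer there) purchases the low quality good at price $p_{1l}^* > v_l$. Since Lemma \ref{lemma:incompinfo:inef:hm:bad} already rules out the strict case $p_{1l}^* < \bar p_2^*$, I may restrict attention to $p_{1l}^* = \bar p_2^* =: p^*$, with $v_l < p^* < p_{1h}^*$.

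First I would record the structural fact from Expression (\ref{eq:compinfo:hm:ut4}): Player $2$'s threshold $\bar p_2$ enters \emph{only} her buyer payoff (through what she buys from Player $1$) and her moral term (through $\mathbb{1}\{p_{2h}\le\bar p_2\}$ and $\mathbb{1}\{p_{2l}\le\bar p_2\}$); it does not enter her seller payoff, which depends on $(p_{2h},p_{2l},\bar p_1)$. Because $p^* > v_l$, she is buying the low quality item at a strict material loss $(1-\lambda)(p^*-v_l)>0$, so lowering her threshold below $p^*$ eliminates that loss --- a discrete material gain --- while leaving her seller payoff untouched. The only possible countervailing effect is on the moral term, whose high quality component $\lambda(v_h-r_h)>0$ is active precisely when $p_{2h}\le\bar p_2$; its low quality component $(1-\lambda)(v_l-r_l)<0$ can only help, since lowering $\bar p_2$ can only switch that indicator off.

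The argument then reduces to a short case split on $p_{2h}$. If $p_{2h}\ne p^*$ I would keep her prices fixed and lower only the threshold to some $\bar p_2' < p^*$, taking $\bar p_2'\ge p_{2h}$ when $p_{2h}<p^*$ (when $p_{2h}>p^*$ her high quality moral indicator is already zero, so any $\bar p_2'<p^*$ works): she stops buying the low quality good, her high quality moral indicator is unchanged, and the low quality moral term weakly improves --- an unambiguous gain. The delicate case is $p_{2h}=\bar p_2^*=p^*$, where dropping the threshold below $p^*$ would forfeit the benefit $\lambda(v_h-r_h)$. Here I would have her lower \emph{both} $\bar p_2$ and $p_{2h}$ to $p^*-\epsilon$: this preserves $\mathbb{1}\{p_{2h}\le\bar p_2\}=1$ and again weakly improves the low quality moral term, so the moral term does not fall; her seller payoff changes by at most $-\lambda\epsilon$ --- if her high quality good was sold to Player $1$ it is now sold at a price lower by $\epsilon$, and for $\epsilon<p^*-\bar p_1$ a good that was unsold stays unsold --- while she reaps the same discrete buyer-side gain $(1-\lambda)(p^*-v_l)$. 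Since $\kappa\in(0,1)$ keeps the material weight $(1-\kappa)$ strictly positive, the net change is at least $(1-\kappa)\tfrac12\big[(1-\lambda)(p^*-v_l)-\lambda\epsilon\big]$, which is positive for small $\epsilon$.

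I expect the main obstacle to be exactly this last case: the coupling, through the Kantian term, between the buyer's threshold and the moral benefit she derives from her \emph{own} high quality offer. The tandem adjustment of $p_{2h}$ and $\bar p_2$ is what neutralises that coupling, and checking that the induced material loss is only $O(\epsilon)$ --- in particular that lowering $p_{2h}$ does not discontinuously create an unsold-to-sold transition that swamps the gain --- is the one point requiring care.
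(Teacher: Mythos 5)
Your proof is correct and follows essentially the same route as the paper's: after invoking Lemma \ref{lemma:incompinfo:inef:hm:bad} to reduce to the case where the low-quality price equals the buyer's threshold, you case-split on the buyer's own high-quality price relative to that threshold and, in the delicate equality case, use the same tandem reduction of the threshold and high-quality price by $\epsilon$. The only difference is presentational: you make explicit the $O(\epsilon)$ bookkeeping (in particular ruling out an unsold-to-sold transition of the buyer's own high-quality good), which the paper leaves implicit in its choice of an ``appropriate $\epsilon$''.
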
	

\begin{proof}
	Assume without loss of generality that $v_l < p_{2l}^* \leq  \bar p^*_1 < p_{2h}^*$. By Lemma \ref{lemma:incompinfo:inef:hm:bad}, we can focus on the case $v_l < p_{2l}^* =  \bar p^*_1 < p_{2h}^*$. Then:
	
	(1) Suppose in addition that $p_{1h}^*< \bar p^*_1$. Then Player 1 can profitably deviate to $\bar p_1'\in (p_{1h}^*,\bar p_1^*)$. She makes a material gain while avoiding a moral loss.
	
	(2) Assume, alternatively, that $p_{1h}^*= \bar p^*_1$. Then Player 1 can deviate to $\bar p_1'=\bar p'_1<\bar p_1^*-\epsilon$. The deviation entails a material gain from stopping the trade of the bad quality item at a price above valuation and a potential loss from reducing her own high quality price. However, the loss can be made smaller than the gain by choosing an appropriate $\epsilon$. Since there are no moral losses involved, the deviation is profitable.
	%
	%	This change is actually beneficial to Player 1 because she stops buying the low quality item at a price above $v_l$, which brings about a strictly positive material gain. Moreover, reducing her threshold does not lead her to trade the bad quality item with ``herself'' if she was not doing so already. Since she is also reducing her high quality price, she does not stop trading this object with ``herself''. Therefore, she does not experience ``moral'' losses. Finally, the deviation might involve a small material loss stemming from the fact that Player 1 marginally reduces her high quality price and thus sells this object for a lower price to Player 2. However, the difference $\bar p^*-\bar p'$ can be made arbitrarily small, whilst the gain from not buying at $p_{2l}^*>v_l$ is a strictly positive quantity. I can then conclude that deviation is overall profitable for Player 1.
	
	(3) Assume, in turn, that  $p_{1h}^* > \bar p^*_1$. Then, Player 1 can profitably deviate to $\bar p'_1 < \bar p^*_1$. She makes a material gain while avoiding a moral loss.
\end{proof}
%%%%%%%%%%%%%%%%%%%%%%%%%%%%%%%%%%%%%%%%%%%%%%%%%%%%%%%%%%%%%%%%%%%%%%%%%%%%%%%%%%%%%%%%%%%%%%%%%%%%%%%%%%%%%%%%%%%%%%%%%%%%%%%%%%%%%%%%%%%%%%%%%%%%%%%%%%%%%%%%%%%%%%%%%%%%%%%%%%%%%%%%%%%%%%%%%%%%%%%%%%%%%%%%%%%%%%%%%%%%%%%%%%%%%%%%%%%%%%%%%%%%%%%%%%%%%%%%%%%%%%%%%%%%%%%%%%%%%%%%%%%%%%%%%%%%%%%%%%%%%%%%%%%%%%%%%%%%%%%%%%%%%%%%%%%%%%%%%%%%%%%%%%%%%%%%%%%%%%%%%%%%%%%%%%%%%%%%%%%%%%%%%%%%%%%%%%%%%%%%%%%%%%%%%%%%%%%%%%%%%%%%%%%%%%%%%%%%%%%%%%%%%%%%%%%%%%%%%%%%%%%%%%%%%%%%%%%%%%%%%%%%%%%%%%%%%%%%%%%%%%%%%%%%%%%%%%%%%%%%%%%%%%%%%%%%%%%%%%%%

%\noindent\textbf{Profiles where only the bad quality is exchanged in at least one contingent market at prices below low quality cost are not equilibria:}

\begin{lemma}\label{lemma:incompinfo:inef:hm:bad2}
	Consider the quality uncertainty game between two \textit{homo moralis} and assume that $0<v_l<r_l<r_h<v_h$. Then, profiles where only the bad quality is exchanged in at least one contingent market at prices below low quality cost are not Nash equilibria in pure strategies.
	%
	%	Assume that $0<v_l<r_l<r_h<v_h$. Then, profiles with $\bar p^*_i = p_{jl}^*<\min\{r_l,p_{jh}^*\},i\neq j, i,j\in\{1,2\}$ are not Nash equilibria in pure strategies of the game between two \textit{homo moralis}.
\end{lemma}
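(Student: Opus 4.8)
The plan is to produce a single profitable deviation for the seller who is exchanging the low quality item below its cost. Without loss of generality, suppose that in the contingent market where player $1$ is the seller only the bad quality is traded and it is sold below cost, i.e. $p_{1l}^* \le \bar p_2^* < p_{1h}^*$ together with $p_{1l}^* < r_l$. The conceptual observation driving everything is that, under the parametrization $v_l < r_l$, the material and the moral incentives point in the \emph{same} direction: selling below cost is a material loss on the low-quality sale, and since $v_l - r_l < 0$ the low-quality exchange also makes a non-positive contribution to the Kantian term. Unlike the desirable-trade case there is no trade-off to balance, which is what makes this the clean sub-case.

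First I would fix the deviation: player $1$ raises her low-quality price to some $p_{1l}' > \max\{\bar p_1^*,\bar p_2^*\}$, holding $p_{1h}^*$ and $\bar p_1^*$ unchanged. Then I would track the two components of her utility separately. For the material (first) term: because $p_{1l}' > \bar p_2^*$ the loss-making sale ceases, so this term strictly rises, the gross gain being $(1-\lambda)(r_l - p_{1l}^*) > 0$; her high-quality sale and her entire buyer payoff are untouched since none of them depend on $p_{1l}$, and the other contingent market is unaffected. For the moral (second) term: since $p_{1l}' > \bar p_1^*$ the low quality no longer trades in the counterfactual, so the non-positive quantity $\tfrac{1}{2}(1-\lambda)\mathbb{1}\{p_{1l}^* \le \bar p_1^*\}(v_l - r_l)$ is removed, whence the moral term does not decrease (it strictly increases when the indicator was one). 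With $\kappa \in (0,1)$ both weights are strictly positive, so total utility strictly rises and the deviation is profitable.

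The main thing to watch — more bookkeeping than genuine obstacle — is confirming that the deviation touches only the intended terms: a price increase can never \emph{start} trade, so the high-quality outcome in both markets and player $1$'s buyer role stay exactly as before, and raising above both thresholds is precisely what guarantees the low quality stops trading in the actual market (the material gain) and in the counterfactual (the moral non-loss). I would close by noting that this lemma is the final piece after Lemmas \ref{lemma:incompinfo:inef:hm:bad} and \ref{lemma:incompinfo:inef:hm:bad1}: those confine any surviving ``low quality'' profile to $p_{il}^* = \bar p_j^* \le v_l$, which given $v_l < r_l$ forces $p_{il}^* < r_l$, so the three lemmas jointly rule out every profile with low-quality-only trade in any contingent market.
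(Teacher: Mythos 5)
Your proof is correct and takes essentially the same route as the paper's: the seller trading the low-quality item below cost raises her low-quality price, which produces a strict material gain, while the assumption $v_l<r_l$ guarantees the Kantian term cannot fall (turning off the counterfactual low-quality trade is weakly beneficial), so the deviation is profitable for any $\kappa\in(0,1)$. The only cosmetic difference is that the paper first invokes Lemma \ref{lemma:incompinfo:inef:hm:bad} to reduce to the case $p_{il}^*=\bar p_j^*$ and then deviates to a price above $r_l$, whereas you deviate above both thresholds directly, which makes your version self-contained but does not change the substance of the argument.
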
	

\begin{proof}
	Assume without loss of generality that  $p_{2l}^* \leq \bar p^*_1  < \min\{r_l,p_{2h}^*\}$. By Lemma \ref{lemma:incompinfo:inef:hm:bad}, we can focus on the case $p_{2l}^* = \bar p^*_1 < \min\{r_l,p_{2h}^*\}$. Then, Player 2 can profitably deviate to $p_{2l}' > r_l$. She makes a material gain while avoiding a moral loss.
\end{proof}	
%%%%%%%%%%%%%%%%%%%%%%%%%%%%%%%%%%%%%%%%%%%%%%%%%%%%%%%%%%%%%%%%%%%%%%%%%%%%%%%%%%%%%%%%%%%%%%%%%%%%%%%%%%%%%%%%%%%%%%%%%%%%%%%%%%%%%%%%%%%%%%%%%%%%%%%%%%%%%%%%%%%%%%%%%%%%%%%%%%%%%%%%%%%%%%%%%%%%%%%%%%%%%%%%%%%%%%%%%%%%%%%%%%%%%%%%%%%%%%%%%%%%%%%%%%%%%%%%%%%%%%%%%%%%%%%%%%%%%%%%%%%%%%%%%%%%%%%%%%%%%%%%%%%%%%%%%%%%%%%%%%%%%%%%%%%%%%%%%%%%%%%%%%%%%%%%%%%%%%%%%%%%%%%%%%%%%%%%%%%%%%%%%%%%%%%%%%%%%%%%%%%%%%%%%%%%%%%%%%%%%%%%%%%%%%%%%%%%%%%%%%%%%%%%%%%%%%%%%%%%%%%%%%%%%%%%%%%%%%%%%%%%%%%%%%%%%%%%%%%%%%%%%%%%%%%%%%%%%%%%%%%%%%%%%%%%%%%%%%%%

\begin{corollary}\label{cor:incompinfo:inef:hm:bad2}
	If we assume $v_l < r_l$, profiles where only the bad quality is exchanged in at least one contingent market are not equilibria of this game. Indeed, Lemmas \ref{lemma:incompinfo:inef:hm:bad}, \ref{lemma:incompinfo:inef:hm:bad1} and \ref{lemma:incompinfo:inef:hm:bad2} show that the threshold of the contingent market featuring only trade of the bad quality cannot be neither above $v_l$ nor below $r_l$ to be equilibria. 
\end{corollary}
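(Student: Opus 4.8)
The plan is to obtain the statement as an immediate synthesis of the three preceding lemmas, with the order assumption $v_l < r_l$ providing the final contradiction. First I would fix, without loss of generality, a contingent market in which only the bad quality is traded, say the one in which $i$ is the seller and $j \neq i$ the buyer, so that by Definition \ref{def:adv} this ``Low quality'' outcome realizes precisely when $p_{il} \leq \bar p_j < p_{ih}$. Since the game is symmetric across the two contingent markets, establishing impossibility for this choice suffices to cover the case of either market. The target is to show that $p_{il}$ cannot be positioned anywhere compatible with equilibrium.

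Next I would invoke the three lemmas in sequence to pin down the admissible location of the bad-quality price. Lemma \ref{lemma:incompinfo:inef:hm:bad} forces $p_{il} = \bar p_j$: strict inequality $p_{il} < \bar p_j$ admits a profitable material deviation (the seller raising $p_{il}$ up to $\bar p_j$) with no moral loss, so such configurations are excluded. Having reduced attention to $p_{il} = \bar p_j$, Lemma \ref{lemma:incompinfo:inef:hm:bad1} rules out $p_{il} > v_l$ and Lemma \ref{lemma:incompinfo:inef:hm:bad2} rules out $p_{il} < r_l$. Together these three constraints force the equilibrium low-quality price to satisfy $r_l \leq p_{il} \leq v_l$.

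The conclusion then follows at once from the maintained assumption $v_l < r_l$: the interval $[r_l, v_l]$ is empty, so no value of $p_{il}$ meets all the necessary conditions, and hence no profile featuring a ``Low quality'' outcome in any contingent market can be a Nash equilibrium. I expect no genuine obstacle at this level, since all of the deviation arguments have already been discharged inside the three lemmas; the only substantive content of the corollary is the observation that the lower bound $p_{il} \geq r_l$ (imposed by the seller's incentive not to sell below cost) and the upper bound $p_{il} \leq v_l$ (imposed by the buyer's incentive not to buy above valuation) become mutually incompatible exactly when trade of the low quality is socially undesirable. The one point worth stating carefully is that the reduction to $p_{il} = \bar p_j$ must precede the application of Lemmas \ref{lemma:incompinfo:inef:hm:bad1} and \ref{lemma:incompinfo:inef:hm:bad2}, as both of those are phrased for the price-equals-threshold case.
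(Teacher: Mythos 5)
Your proposal is correct and follows the paper's own argument essentially verbatim: reduce to price equal to threshold via Lemma \ref{lemma:incompinfo:inef:hm:bad}, exclude values above $v_l$ and below $r_l$ via Lemmas \ref{lemma:incompinfo:inef:hm:bad1} and \ref{lemma:incompinfo:inef:hm:bad2}, and conclude from the emptiness of $[r_l,v_l]$ under $v_l<r_l$. One small reading note: Lemmas \ref{lemma:incompinfo:inef:hm:bad1} and \ref{lemma:incompinfo:inef:hm:bad2} are actually stated for the general case $p_{il}\leq \bar p_j$ (their proofs, not their statements, invoke the reduction to equality), so your insistence on performing that reduction first is harmless but not strictly required.
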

%%%%%%%%%%%%%%%%%%%%%%%%%%%%%%%%%%%%%%%%%%%%%%%%%%%%%%%%%%%%%%%%%%%%%%%%%%%%%%%%%%%%%%%%%%%%%%%%%%%%%%%%%%%%%%%%%%%%%%%%%%%%%%%%%%%%%%%%%%%%%%%%%%%%%%%%%%%%%%%%%%%%%%%%%%%%%%%%%%%%%%%%%%%%%%%%%%%%%%%%%%%%%%%%%%%%%%%%%%%%%%%%%%%%%%%%%%%%%%%%%%%%%%%%%%%%%%%%%%%%%%%%%%%%%%%%%%%%%%%%%%%%%%%%%%%%%%%%%%%%%%%%%%%%%%%%%%%%%%%%%%%%%%%%%%%%%%%%%%%%%%%%%%%%%%%%%%%%%%%%%%%%%%%%%%%%%%%%%%%%%%%%%%%%%%%%%%%%%%%%%%%%%%%%%%%%%%%%%%%%%%%%%%%%%%%%%%%%%%%%%%%%%%%%%%%%%%%%%%%%%%%%%%%%%%%%%%%%%%%%%%%%%%%%%%%%%%%%%%%%%%%%%%%%%%%%%%%%%%%%%%%%%%%%%%%%%%%%%%%%

%\noindent \textbf{Profiles where one player is selling only the good quality item to the other one, who in turn does not sell anything to her, are not equilibria:}
\begin{lemma}\label{lemma:incompinfo:inef:hm:bad3}	
	Consider the quality uncertainty game between two \textit{homo moralis} and assume that $0<v_l<r_l<r_h<v_h$. Then, profiles with only good quality trade in one contingent market and no trade in the other one are not Nash equilibria in pure strategies.	
	%	where one player is selling only the good quality item to the other one, who in turn does not sell anything to her	
	%	Assume that $0<v_l<r_l<r_h<v_h$. Then, profiles with $\bar p^*_i< \min\{p_{jh}^*,p_{jl}^*\}$ and $p_{ig}^*\leq \bar p^*_j <p_{ib}^*, i\neq j, i,j\in\{1,2\}$ are not Nash equilibria in pure strategies of the game between two \textit{homo moralis}.
\end{lemma}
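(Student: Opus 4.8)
The plan is to fix notation by assuming, without loss of generality, that the contingent market in which $1$ is the Seller and $2$ the Buyer is the one with only high-quality trade, so that $p_{1h}^*\le\bar p_2^*<p_{1l}^*$, while in the market where $2$ sells to $1$ there is no trade, so $\bar p_1^*<\min\{p_{2h}^*,p_{2l}^*\}$. With these trade patterns fixed I would read each player's utility off (\ref{eq:compinfo:hm:ut4}): Player $1$ earns the selling surplus $(1-\kappa)\tfrac12\lambda(p_{1h}^*-r_h)$ plus the moral term $\tfrac{\kappa}{2}[\lambda\mathbb{1}\{p_{1h}^*\le\bar p_1^*\}(v_h-r_h)+(1-\lambda)\mathbb{1}\{p_{1l}^*\le\bar p_1^*\}(v_l-r_l)]$, and Player $2$ earns the buying surplus $(1-\kappa)\tfrac12\lambda(v_h-p_{1h}^*)$ plus the analogous moral term in $(p_{2h}^*,p_{2l}^*,\bar p_2^*)$. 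The structural fact I would exploit is that Player $2$'s two selling prices are \emph{materially inert} (there is no trade in her selling market, and this persists as long as the prices exceed $\bar p_1^*$), and likewise Player $1$'s low price $p_{1l}^*$ is materially inert (she sells only the high quality); hence these variables are governed purely by the moral term.

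Next I would carry out two reductions. Because $v_l<r_l$, the low-quality moral surplus $v_l-r_l$ is negative while the high-quality surplus $v_h-r_h$ is positive, so each player strictly prefers her own low price above, and her own high price weakly below, her own threshold. Using only the inert variables I therefore reduce to profiles where (i) $p_{2l}^*>\bar p_2^*$ and $p_{1l}^*>\bar p_1^*$ (otherwise the owner of that price raises it, shedding a negative moral term at no material cost), and (ii) whenever $\bar p_1^*<\bar p_2^*$, also $\bar p_1^*<p_{2h}^*\le\bar p_2^*$, so that Player $2$ captures the full high-quality moral surplus $\lambda(v_h-r_h)$ (if she could but did not, she lowers $p_{2h}$ into $(\bar p_1^*,\bar p_2^*]$). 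I then split on the ordering of the two thresholds. In \emph{Branch B}, $\bar p_1^*\ge\bar p_2^*$, Player $2$ cannot capture any high-quality moral surplus, since $p_{2h}^*>\bar p_1^*\ge\bar p_2^*$; the deviation is for her to raise her own threshold to $\bar p_2'\in(\bar p_1^*,p_{1l}^*)$ and reposition $p_{2h}'\in(\bar p_1^*,\bar p_2']$, $p_{2l}'>\bar p_2'$, which keeps her buying the high quality in market $1$ (as $\bar p_2'>\bar p_1^*\ge\bar p_2^*\ge p_{1h}^*$) and triggers no low-quality purchase (as $\bar p_2'<p_{1l}^*$), so her material payoff is unchanged while her moral term jumps from $0$ to $\kappa\lambda(v_h-r_h)>0$.

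In \emph{Branch A}, $\bar p_1^*<\bar p_2^*$, Player $2$ sits at her moral maximum with $\bar p_1^*<p_{2h}^*\le\bar p_2^*<p_{2l}^*$. If $p_{1h}^*<p_{2h}^*$, Player $1$ raises the pair $(p_{1h},\bar p_1)$ in tandem to a common value $t\in(\max\{p_{1h}^*,\bar p_1^*\},p_{2h}^*)$: her selling revenue strictly rises (the price climbs while she keeps selling, as $t\le\bar p_2^*$), her high-quality moral surplus stays captured (as $p_{1h}'=\bar p_1'$), and she triggers neither a purchase (as $t<p_{2h}^*\le p_{2l}^*$) nor a low-quality sale (as $t<p_{1l}^*$). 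If instead $p_{1h}^*\ge p_{2h}^*$, then mutually desirable high-quality trade goes unrealised in market $2$: when $p_{2h}^*\le v_h$, Player $1$ raises $\bar p_1$ to $p_{2h}^*$ and buys the high quality at a gain (no low purchase, as $p_{2l}^*>\bar p_2^*\ge p_{2h}^*$), whereas when $p_{2h}^*>v_h$ we have $p_{1h}^*>v_h$, so Player $2$ is paying above her valuation in market $1$ and profitably lowers $\bar p_2$ to halt that purchase while keeping her captured moral surplus. Every branch thus yields a strictly profitable unilateral deviation, so no such profile is an equilibrium.

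The main obstacle is the coupling built into (\ref{eq:compinfo:hm:ut4}): each price and each threshold enters several indicator functions at once, across both contingent markets and in the moral term, so every candidate deviation must be checked not to switch on an unwanted low-quality trade nor to forfeit an already-captured high-quality surplus — this is precisely what forces the split on the threshold ordering and on $p_{1h}^*$ versus $p_{2h}^*$. The only remaining delicate points are boundary ties — for instance $p_{1h}^*=p_{2h}^*$ with both above $v_h$, or a price exactly equal to $v_h$ or $r_h$ — where a single player's deviation is only weakly improving; these I would dispatch by an arbitrarily small perturbation or by assigning the deviation to the other player, exactly as in the tie-breaking arguments of Lemma~\ref{lemma:incompinfo:inef:hm:notrade3}.
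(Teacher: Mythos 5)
Your proposal is correct, and its skeleton coincides with the paper's: fix the trade pattern without loss of generality, shed the negative low-quality moral terms attached to the materially inert prices (your reduction (i) is exactly the paper's preliminary reductions), then split on the ordering of the thresholds, with the case $\bar p_1^*\ge\bar p_2^*$ (your Branch B, the paper's cases of $\bar p_2^*<\bar p_1^*$ and $\bar p_2^*=\bar p_1^*$) killed by the same deviation in which Player 2 repositions $(\bar p_2,p_{2h},p_{2l})$ inside $(\bar p_1^*,p_{1l}^*)$ to capture the high-quality moral surplus at zero material cost. Where you genuinely depart is the case $\bar p_1^*<\bar p_2^*$: the paper splits on $\bar p_2^*\gtrless r_h$ and deviates either by having Player 1 start selling her low-quality item at $p_{1l}'=\bar p_2^*\ge r_h>r_l$ (a deviation you never use) or by having her withdraw a below-cost high-quality sale; you instead split on $p_{1h}^*\gtrless p_{2h}^*$ and on $p_{2h}^*\gtrless v_h$, deviating by raising Player 1's sale price in tandem with her threshold, by realizing the unrealized high-quality trade in market 2, or by having Player 2 halt an above-valuation purchase while keeping $p_{2h}'=\bar p_2'$. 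Your route buys something concrete: the paper's deviation for $\bar p_2^*<r_h$ places $p_{1h}'$ in $\bigl(\bar p_2^*,\min\{p_{2h}^*,p_{2l}^*\}\bigr)$, an interval that can be empty since the paper's reductions only guarantee $p_{2l}^*>\bar p_2^*$, not $p_{2h}^*>\bar p_2^*$; your reduction (ii) pins $\bar p_1^*<p_{2h}^*\le\bar p_2^*$ and your sub-cases A1/A2 cover exactly that configuration. The price you pay is the boundary tie $p_{2h}^*=v_h<p_{1h}^*$, which falls in your ``$p_{2h}^*\le v_h$'' branch where the prescribed Player-1 deviation is only weakly improving; as you anticipate, it is closed by handing the deviation to Player 2, who strictly gains by stopping her purchase at $p_{1h}^*>v_h$, so this is a presentational loose end rather than a gap.
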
		
\begin{proof}
	Assume without loss of generality that  $\bar p^*_1 < \min\{p_{2h}^*,p_{2l}^*\}$ and $p_{1h}^*\leq \bar p^*_2 <p_{1l}^*$. 
	
	(1) Consider the case where, in addition, $\bar p^*_2 < \bar p^*_1$. Firstly, Player 1 can profitably deviate from any profile where $\bar p^*_1\geq p_{1l}^*$ to $p_{1l}'>\bar p^*_1$.She makes a moral gain without sustaining material losses. Therefore, I can restrict attention to those where it is also true that $\bar p^*_1<p_{1l}^*$. But then, notice that Player 2 can make a ``moral'' gain by deviating to $p_{2l}' > \bar p'_2 = p_{2h}' \in (\bar p^*_1,p_{1l}^*)$.
	
	(2) In turn, suppose that $\bar p^*_1 < \bar p^*_2$. Firstly, notice that if $p_{2l}^*\leq \bar p^*_2$ Player 2 can profitably deviate to $p_{2l}' > \bar p^*_2$ and make a moral gain without suffering material losses. I then focus on cases where  $p_{2l}^* > \bar p^*_2$. If $\bar p^*_2 < r_h$, Player 1 can profitably deviate to $p_{1h}'\in(\bar p_2^*,\min\{p_{2h}^*,p_{2l}^*\})$. She makes a material gain without sustaining moral losses. If $\bar p^*_2\geq r_h$, Player 1 can profitably deviate to $p_{1l}'=\bar p^*_2$. Again, she makes a material gain without sustaining moral losses. 
	%	
	%	The deviation allows Player 1 to avoid selling her high quality object to Player 2 at a price below her cost. In addition, if it was also true that $\bar p^*_1\leq p_{2h}^*\leq\bar p^*_2$, she would also acquire Player 2's high quality item at a price below her valuation (because $r_h<v_h$). 
	%	
	%	In this way, she does not make ``moral'' losses because $\bar p^*_1<\bar p^*_2 = p_1b'$, but gets to sell her low quality good at a price above $r_h$ which is in turn above the low quality cost $r_l$.
	%	The proposed deviation does not bring about any material gains or losses but allows Player 2 to set a threshold as high as her own high quality price and thus trade it with ``herself'', garnering a ``moral'' gain.
	%
	%	The proposed deviation does not bring about any material gains or losses but allows Player 2 to set a threshold as high as her own high quality price and thus trade it with ``herself'', garnering a ``moral'' gain.
	
	(3) Assume now that$\bar p^*_2 = \bar p^*_1$. Then, Player 2 can profitably deviate to $ p_{2l}' > \bar p_2' = p_{2h}' \in (\bar p^*_1,p_{1l}^*)$ and $p_{2h}'=\bar p'_2$. she makes a material gain without sustaining moral losses. 
\end{proof}

%%%%%%%%%%%%%%%%%%%%%%%%%%%%%%%%%%%%%%%%%%%%%%%%%%%%%%%%%%%%%%%%%%%%%%%%%%%%%%%%%%%%%%%%%%%%%%%%%%%%%%%%%%%%%%%%%%%%%%%%%%%%%%%%%%%%%%%%%%%%%%%%%%%%%%
%%%%%%%%%%%%%%%%%%%%%%%%%%%%%%%%%%%%%%%%%%%%%%%%%%%%%%%%%%%%%%%%%%%%%%%%%%%%%%%%%%%%%%%%%%%%%%%%%%%%%%%%%%%%%%%%%%%%%%%%%%%%%%%%%%%%%%%%%%%%%%%%%%%%%%
%%%%%%%%%%%%%%%%%%%%%%%%%%%%%%%%%%%%%%%%%%%%%%%%%%%%%%%%%%%%%%%%%%%%%%%%%%%%%%%%%%%%%%%%%%%%%%%%%%%%%%%%%%%%%%%%%%%%%%%%%%%%%%%%%%%%%%%%%%%%%%%%%%%%%%
%\noindent \textbf{Only profiles where players set the same high quality price and threshold remain as equilibrium candidates when there is only good quality trade in both contingent markets}:

\begin{lemma}\label{lemma:incompinfo:inef:hm:bad4}
	Consider the quality uncertainty game between two \textit{homo moralis} and assume that $0<v_l<r_l<r_h<v_h$. Then, only profiles where players set the same high quality price and threshold remain as equilibrium candidates when there is only good quality trade in both contingent markets.
	%	
	%	profiles with only good quality trade in one contingent market and no trade in the other one are not Nash equilibria in pure strategies.	
	%	
	%	Assume that $0<v_l<r_l<r_h<v_h$. Then, of all profiles where $p_{ig}^*\leq \bar p^*_j <p_{ib}^* \forall i,j\in\{1,2\},i\neq j$, only those featuring $p_{ig}^*=p_{jh}^*= \bar p^*_i = \bar p^*_j < \min\{p_{ib}^*,p_{jl}^*\}\forall i,j\in\{1,2\},i\neq j$ remain as candidate Nash equilibria in pure strategies of the game between two \textit{homo moralis}.
\end{lemma}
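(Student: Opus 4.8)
The plan is to fix an arbitrary profile of the High quality/High quality type and show that, unless each player sets her own high-quality price equal to her own threshold, a profitable deviation exists. By Definition \ref{def:adv} such a profile satisfies $p_{1h}^*\le\bar p_2^*<p_{1l}^*$ in the contingent market where Player $1$ sells and $p_{2h}^*\le\bar p_1^*<p_{2l}^*$ in the one where Player $2$ sells. The target is the ``diagonal'' conclusion $p_{1h}^*=\bar p_1^*=p_{2h}^*=\bar p_2^*$, and it suffices to prove the within-player equalities $p_{ih}^*=\bar p_i^*$: substituting $p_{1h}^*=\bar p_1^*$ and $p_{2h}^*=\bar p_2^*$ into the two market inequalities yields $\bar p_1^*\le\bar p_2^*$ and $\bar p_2^*\le\bar p_1^*$, so all four quantities collapse to a single value.

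First I would record the two opposing moral forces created by the last line of (\ref{eq:compinfo:hm:ut4}) under $0<v_l<r_l<r_h<v_h$: since $v_h-r_h>0$ a player collects $\tfrac{\kappa\lambda}{2}(v_h-r_h)$ exactly when $p_{ih}\le\bar p_i$, so she wants her high-quality price weakly below her own threshold, whereas since $v_l-r_l<0$ she wants her low-quality price strictly above her threshold to shed the loss $\tfrac{\kappa(1-\lambda)}{2}(v_l-r_l)$. The first substantive step combines the seller's material incentive with the former force: if $p_{ih}<\min\{\bar p_i,\bar p_j\}$, Player $i$ raises $p_{ih}$ to $\min\{\bar p_i,\bar p_j\}$, which keeps the sale alive ($p_{ih}\le\bar p_j$) and keeps the moral high-quality term switched on ($p_{ih}\le\bar p_i$) while strictly increasing her seller margin (here $\kappa<1$ matters). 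This rules out every profile except those in which each high-quality price sits at the smaller threshold; taking $\bar p_2^*\le\bar p_1^*$ without loss of generality, the market inequality pins $p_{1h}^*=\bar p_2^*=:A$, leaving candidates with $p_{2h}^*\in[A,B]$ and $\bar p_1^*=:B\ge A$. If $B=A$ we are on the diagonal and done.

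The crux is eliminating the asymmetric candidates $A<B$, which I would split according to the position of $p_{1l}^*$ relative to $p_{2h}^*$. If $p_{1l}^*>p_{2h}^*$, the lower-threshold player performs a \emph{simultaneous} adjustment: Player $2$ moves $p_{2h}$ and $\bar p_2$ together to a common level $c$ just below $\min\{B,p_{1l}^*\}$ (a nonempty range since $A<B$ and $A=\bar p_2^*<p_{1l}^*$). Holding $p_{2h}=\bar p_2=c$ preserves her moral high-quality gain, keeps her selling to Player $1$ (as $c\le B=\bar p_1^*$) and buying Player $1$'s high quality at $A$, and choosing $c<p_{1l}^*$ together with $c<B<p_{2l}^*$ guarantees no low-quality trade is triggered; the move either strictly raises her margin or recoups the strictly positive term $\tfrac{\kappa\lambda}{2}(v_h-r_h)$, which dominates any infinitesimal margin forgone, so it is profitable. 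If instead $p_{1l}^*\le p_{2h}^*$ (hence $p_{1l}^*\le B=\bar p_1^*$), then Player $1$ is carrying her low-quality moral loss while making no low-quality sale (since $p_{1l}^*>A=\bar p_2^*$), so she simply raises $p_{1l}$ above $\bar p_1^*$, shedding the loss at zero material cost. The two cases are exhaustive, so no asymmetric profile survives and only the diagonal remains.

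I expect the main obstacle to be exactly this final bookkeeping: each candidate deviation must be checked not to flip any of the four indicators in (\ref{eq:compinfo:hm:ut4}) in a way that creates an offsetting material loss, which forces the split on where $p_{1l}^*$ and $p_{2l}^*$ lie relative to the thresholds and requires verifying that the admissible interval for the common level $c$ is nonempty in every subcase. The economic content is transparent --- a seller leaving a strictly positive high-quality margin can always scoop it up while sliding her own threshold along to retain the Kantian gain, provided she stays below both low-quality prices --- but converting it into a clean, exhaustive case list is where the work concentrates, and it is precisely this reduction to the diagonal that sets up the subsequent lemma selecting the unique common price.
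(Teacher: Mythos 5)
Your proof is correct and reaches the same conclusion as the paper --- all four of $p_{1h}^*,\bar p_1^*,p_{2h}^*,\bar p_2^*$ collapse to one common value --- but by a genuinely different decomposition. The paper proceeds by sequential normalization: it first pushes the larger-threshold player's low-quality price above her own threshold (a pure moral gain, after which all later threshold-raising deviations are safe from accidentally triggering low-quality purchases), then pins each high-quality price to the threshold it trades against via material price-raising deviations, and finally eliminates the remaining knife-edge configuration (each price equal to the rival's threshold, thresholds distinct) with a pure threshold raise that switches on the deviator's high-quality Kantian term at zero material cost. You instead pin only the price of the seller facing the smaller threshold, leave the other seller's price free in $[A,B]$, and dispose of the asymmetric case $A<B$ with a binary split on $p_{1l}^*$ versus $p_{2h}^*$: one branch uses a simultaneous price-and-threshold move by Player 2 (a vanishing margin loss traded against the fixed gain $\tfrac{\kappa\lambda}{2}(v_h-r_h)$, or else a strict margin gain), the other uses the low-quality moral-gain move that the paper deploys globally as its very first step, but only where it is actually needed. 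The trade-off is clear: the paper's up-front normalization lets every subsequent deviation ignore the low-quality prices, while your route must carry the constraint $c<p_{1l}^*$ explicitly through the simultaneous move (which you do, verifying nonemptiness of the admissible interval for $c$ in each subcase); in exchange your endgame is shorter, and your opening remark that the within-player equalities $p_{ih}^*=\bar p_i^*$ together with the two market inequalities already force the diagonal is a clean structural observation the paper does not make. Both arguments are exhaustive deviation hunts built from the same three elementary moves, so the difference is organizational rather than conceptual; both are valid.
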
	

\begin{proof}
	Assume without loss of generality that  $\bar p^*_1 < \bar p^*_2$. Then, Player 2 can always profitably deviate from a profile where $p_{2l}^*  \leq \bar p^*_2$ to $p_{2l}' > \bar p_2^*$, making a moral loss without suffering material losses. Therefore, I consider only profiles where $\bar p^*_2 < p_{2l}^*$. 
	
	Next, notice that Player 1 can profitably deviate from any profile where $p_{1h}^* < \bar p^*_2$ by setting $\bar p_1' = p_{1h}' =\bar p_2^*$. She makes a material gain without experiencing moral losses. As a consequence, I further restrict attention to profiles where $p_{2h}^* \leq \bar p^*_1 < p_{1h}^* =\bar p^*_2 < \min\{p_{2l}^*,p_{1l}^*\}$. Furthermore, Player 2 can profitably deviate whenever $p_{2h}^* < \bar p^*_1 < p_{1h}^* =\bar p^*_2 < \min\{p_{2l}^*,p_{1l}^*\}$ to $p_{2h}' =  \bar p^*_1$, which leaves only $p_{2h}^* = \bar p^*_1 < p_{1h}^* = \bar p^*_2 < \min\{p_{2l}^*,p_{1l}^*\}$. But notice that Player 1 can then deviate to $\bar p'_1 = p_{1h}^* =  \bar p^*_2$ and obtain a material gain without sustaining moral losses.
	
	It follows that out of all possible profiles complying with $p_{ig}^*\leq \bar p^*_j < p_{ib}^* \forall i,j\in\{1,2\},i\neq j$, the only equilibrium candidates are those where  $p_{ig}^* =  \bar p^*_j = p_{jh}^* =\bar p^*_i < \min\{p_{ib}^*,p_{jl}^*\}$.
\end{proof}
%%%%%%%%%%%%%%%%%%%%%%%%%%%%%%%%%%%%%%%%%%%%%%%%%%%%%%%%%%%%%%%%%%%%%%%%%%%%%%%%%%%%%%%%%%%%%%%%%%%%%%%%%%%%%%%%%%%%%%%%%%%%%%%%%%%%%%%%%%%%%%%%%%%%%%
%%%%%%%%%%%%%%%%%%%%%%%%%%%%%%%%%%%%%%%%%%%%%%%%%%%%%%%%%%%%%%%%%%%%%%%%%%%%%%%%%%%%%%%%%%%%%%%%%%%%%%%%%%%%%%%%%%%%%%%%%%%%%%%%%%%%%%%%%%%%%%%%%%%%%%
%%%%%%%%%%%%%%%%%%%%%%%%%%%%%%%%%%%%%%%%%%%%%%%%%%%%%%%%%%%%%%%%%%%%%%%%%%%%%%%%%%%%%%%%%%%%%%%%%%%%%%%%%%%%%%%%%%%%%%%%%%%%%%%%%%%%%%%%%%%%%%%%%%%%%%
%\noindent \textbf{Only profiles where players set the same high quality price and threshold and these are between high quality cost and valuation remain as equilibrium candidates when there is only good quality trade in both contingent markets}:
\begin{lemma}\label{lemma:incompinfo:inef:hm:bad5}
	Consider the quality uncertainty game between two \textit{homo moralis} and assume that $0 < v_l < r_l < r_h < v_h$. Then, only profiles where players set the same high quality price and threshold and these are between high quality cost and valuation remain as equilibrium candidates when there is only good quality trade in both contingent markets.
	%	
	%	Assume that $0<v_l<r_l<r_h<v_h$ and define $p^*: p^* =p_{jh}^* = \bar p^*_i = p_{ig}^* =\bar p^*j$. Then, of all profiles where $p_{ig}^*\leq \bar p^*_j <p_{ib}^* \forall i,j\in\{1,2\},i\neq j$, only those featuring $r_h\leq p^*\leq v_h$ and $ p^*< \min\{p_{ib}^*,p_{jl}^*\}\forall i,j\in\{1,2\},i\neq j$ remain as candidate Nash equilibria in pure strategies of the game between two \textit{homo moralis}.
\end{lemma}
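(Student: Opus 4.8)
The plan is to start from the candidate profiles already isolated in Lemma \ref{lemma:incompinfo:inef:hm:bad4} and merely pin down the admissible range of the common price. Write $p := p_{1h}^* = p_{2h}^* = \bar p_1^* = \bar p_2^*$ for the common high-quality price and threshold, recalling that these candidates also satisfy $p < \min\{p_{1l}^*,p_{2l}^*\}$, so the low quality is never traded. First I would evaluate the utility (\ref{eq:compinfo:hm:ut4}) at such a profile. Since each player both sells and buys the high quality at $p$ and trades no low quality, the ex-ante material payoff collapses to $\tfrac{1}{2}\big[\lambda(p-r_h)+\lambda(v_h-p)\big]=\tfrac{\lambda(v_h-r_h)}{2}$, and because $p_{ih}^*=\bar p_i^*$ the high-quality indicator in the moral term is on, so that term equals $\tfrac{\kappa}{2}\lambda(v_h-r_h)$. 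The total is therefore $\tfrac{\lambda(v_h-r_h)}{2}$, independent of both $p$ and $\kappa$. This benchmark value is what every candidate deviation must beat.

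Next I would rule out $p>v_h$. Here the player buys the high quality above her valuation, so I would have Player 1 deviate by simultaneously lowering her own high-quality price and her threshold to a common value $q\in(v_h,p)$, leaving $p_{1l}^*$ untouched. Setting $p_{1h}'=\bar p_1'=q$ keeps $p_{1h}'\le\bar p_1'$, so the moral term is unchanged; meanwhile $q<p=\bar p_2^*$ means she still sells her high quality (now at $q>v_h>r_h$), and $q<p=p_{2h}^*$ means she stops the loss-making purchase. A direct computation yields a utility change of $(1-\kappa)\tfrac{\lambda}{2}(q-v_h)>0$, so the deviation is strictly profitable and $p\le v_h$ is necessary. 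A symmetric argument disposes of $p<r_h$, where the player sells the high quality below cost: I would have her raise $p_{1h}'=\bar p_1'=q$ to a common value $q\in(p,\min\{p_{1l}^*,p_{2l}^*\})$. Again $p_{1h}'=\bar p_1'$ preserves the moral term; $q>p=\bar p_2^*$ stops the loss-making sale while $q>p=p_{2h}^*$ still lets her buy the high quality at $p$, and the ceiling $q<\min\{p_{1l}^*,p_{2l}^*\}$ prevents any low-quality trade from switching on. The utility change works out to $(1-\kappa)\tfrac{\lambda}{2}(r_h-p)>0$, so $p\ge r_h$ is necessary. Combining the two cases leaves only candidates with $r_h\le p\le v_h$, as claimed.

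The main obstacle, and the point requiring the most care, is that each deviation must move the own price and the own threshold \emph{together}. Deviating in the threshold alone (to cut off an unfavourable purchase) or in the price alone (to cut off a loss-making sale) would flip the high-quality indicator in the Kantian term off, incurring a moral loss of $\tfrac{\kappa}{2}\lambda(v_h-r_h)$ that can outweigh the material gain when $\kappa$ is large. Tying $p_{1h}'$ to $\bar p_1'$ neutralises this moral loss and makes the deviation profitable for \emph{every} $\kappa\in(0,1)$, which is exactly what is needed since the interval $[r_h,v_h]$ must be forced independently of the degree of morality. Once these two joint deviations are specified, the remaining work is a pair of routine sign checks.
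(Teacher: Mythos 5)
Your proof is correct and takes essentially the same route as the paper: both arguments rule out $p>v_h$ and $p<r_h$ via deviations that move the high-quality price and the own threshold \emph{jointly} (downwards in the first case, upwards into $(p,\min\{p_{1l}^*,p_{2l}^*\})$ in the second), precisely so that the Kantian indicator stays on and the comparison reduces to a pure material-payoff sign check. Your only departure is picking the deviation value $q\in(v_h,p)$ directly rather than the paper's $p-\epsilon$ limit argument, which is a harmless (if anything cleaner) variation.
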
	
\begin{proof}
	Define $p^*: p^* = p_{jh}^* = \bar p^*_i = p_{ig}^* =\bar p^*_j\forall i,j\in\{1,2\},i\neq j$. Suppose first that $v_h < p^* < \min\{p_{1l}^*,p_{2l}^*\}$. Then, without loss of generality, Player 1 can profitably deviate to $\bar p'_1 = p_{1h}' < p^* - \epsilon$. She makes a material loss (she sells the high quality item at a lower price) that can be made smaller than the material gain by choosing an appropriate $\epsilon$. In addition, she does not make any ``moral'' losses.
	
	In turn, assume that $p^* < \min\{r_h,p_{2l}^*,p_{1l}^*\}$. Without loss of generality, Player 1 can profitably deviate by setting $p_1' = p_{1h}'\in (p^*,\min\{p_{2l}^*,p_{1l}^*\})$. She obtains a material gain without sustaining moral losses.
\end{proof}
%%%%%%%%%%%%%%%%%%%%%%%%%%%%%%%%%%%%%%%%%%%%%%%%%%%%%%%%%%%%%%%%%%%%%%%%%%%%%%%%%%%%%%%%%%%%%%%%%%%%%%%%%%%%%%%%%%%%%%%%%%%%%%%%%%%%%%%%%%%%%%%%%%%%%%
%%%%%%%%%%%%%%%%%%%%%%%%%%%%%%%%%%%%%%%%%%%%%%%%%%%%%%%%%%%%%%%%%%%%%%%%%%%%%%%%%%%%%%%%%%%%%%%%%%%%%%%%%%%%%%%%%%%%%%%%%%%%%%%%%%%%%%%%%%%%%%%%%%%%%%
%%%%%%%%%%%%%%%%%%%%%%%%%%%%%%%%%%%%%%%%%%%%%%%%%%%%%%%%%%%%%%%%%%%%%%%%%%%%%%%%%%%%%%%%%%%%%%%%%%%%%%%%%%%%%%%%%%%%%%%%%%%%%%%%%%%%%%%%%%%%%%%%%%%%%%
%\noindent \textbf{Characterisation of equilibria where there is only good quality trade in both contingent markets}:
\begin{lemma}\label{lemma:incompinfo:inef:hm:bad6}
	Consider the quality uncertainty game between two \textit{homo moralis} and assume that $0 < v_l < r_l < r_h < v_h$. Define $p^*: p^* =p_{ig}^* = \bar p^*_j = p_{jh}^* =\bar p^*_i \forall i,j\in\{1,2\},i\neq j$. Then, profiles where $r_h\leq p^* \leq \min\left\{(1-\lambda)r_l + \lambda v_h,\frac{r_l - \kappa v_l}{1-\kappa} \right\}$, $ p^* < \min\{p_{ib}^*,p_{jl}^*\}$ are Nash equilibria in pure strategies if and only if:
	
	$$\lambda v_h + (1-\lambda )r_l\geq r_h \iff \lambda \geq \frac{r_h-r_l}{v_h-r_l}, \text{ and}$$
	
	$$\frac{r_l - \kappa v_l}{1-\kappa} \geq r_h \iff \kappa \geq \frac{r_h-r_l}{r_h-v_l}.$$
\end{lemma}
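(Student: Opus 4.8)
The plan is to exploit the symmetry of the profile and verify that the common strategy is a best reply to itself, so that the whole argument reduces to one player's optimization. Write the candidate as both players choosing $(p_h,p_l,\bar p)=(p^*,p_l^*,p^*)$ with $p_l^*>p^*$. Substituting into (\ref{eq:compinfo:hm:ut4}), in each contingent market only the high quality changes hands, so each player's equilibrium utility collapses to $\tfrac{\lambda}{2}(v_h-r_h)$: the seller and buyer material terms sum to $(1-\kappa)\tfrac{\lambda}{2}[(p^*-r_h)+(v_h-p^*)]=(1-\kappa)\tfrac{\lambda}{2}(v_h-r_h)$ (the $p^*$ terms cancelling) and the moral term adds $\kappa\tfrac{\lambda}{2}(v_h-r_h)$. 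I would then fix Player $2$ at this strategy and characterize Player $1$'s best reply over $(p_{1h},p_{1l},\bar p_1)$.

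The engine of the proof is reading off the sign of each term in (\ref{eq:compinfo:hm:ut4}). Buying the low quality is always strictly harmful, since its price $p_{2l}^*>p^*\ge r_h>r_l>v_l$ exceeds $v_l$; hence every best reply sets $\bar p_1<p_{2l}^*$ and the low-quality buyer term drops out. The remaining decisions — sell high ($p_{1h}\le p^*$), sell low ($p_{1l}\le p^*$), buy high ($\bar p_1\ge p^*$), moral-high ($p_{1h}\le\bar p_1$), moral-low ($p_{1l}\le\bar p_1$) — are coupled only through the moral term, in which moral-high carries the positive weight $(v_h-r_h)$ while moral-low carries the \emph{negative} weight $(v_l-r_l)$. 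I would then isolate the three binding deviations. First, if $p^*<r_h$, Player $1$ raises $p_{1h}$ above $p^*$ to stop selling the high quality at a loss, while raising $\bar p_1\ge p_{1h}$ (still below $p_{2l}^*$) to retain both the moral-high gain and the purchase of Player $2$'s high quality; this nets $(1-\kappa)\tfrac{\lambda}{2}(r_h-p^*)>0$, so $p^*\ge r_h$ is necessary. Second, Player $1$ may begin selling her low quality at $p^*$ keeping $\bar p_1=p^*$: she gains $(1-\kappa)\tfrac{1-\lambda}{2}(p^*-r_l)$ but incurs the moral-low loss $\kappa\tfrac{1-\lambda}{2}(v_l-r_l)$, and this is unprofitable exactly when $p^*\le\frac{r_l-\kappa v_l}{1-\kappa}$. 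Third, she may instead sell the low quality at $p^*$ but drop $\bar p_1$ just below $p^*$, forgoing the high-quality purchase so as to deactivate moral-low; letting the accompanying cut to $p_{1h}$ vanish, this is unprofitable exactly when $p^*\le(1-\lambda)r_l+\lambda v_h$.

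To complete the characterization I would show these are the only deviations that can bind, via a short case split on whether Player $1$ sells the low quality and on whether her moral-low indicator is active. When she does not sell the low quality and keeps moral-low inactive, her best attainable payoff is exactly $\tfrac{\lambda}{2}(v_h-r_h)$ for any $\bar p_1\in[p^*,p_{2l}^*)$ and any large $p_{1l}$; when she sells the low quality, the two remaining configurations reproduce the second and third deviations above, whose (limiting) payoffs are $\le\tfrac{\lambda}{2}(v_h-r_h)$ precisely under the two upper bounds. Hence the common strategy is a mutual best reply, i.e. a Nash equilibrium, if and only if $r_h\le p^*\le\min\{(1-\lambda)r_l+\lambda v_h,\ \frac{r_l-\kappa v_l}{1-\kappa}\}$.

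Finally I would translate the nonemptiness of this interval into the stated conditions: it is nonempty iff both upper bounds weakly exceed $r_h$. Rearranging gives $(1-\lambda)r_l+\lambda v_h\ge r_h\iff\lambda\ge\frac{r_h-r_l}{v_h-r_l}$ (using $v_h-r_l>0$) and $\frac{r_l-\kappa v_l}{1-\kappa}\ge r_h\iff\kappa\ge\frac{r_h-r_l}{r_h-v_l}$ (using $1-\kappa>0$ and $r_h-v_l>0$), which is the claim. I expect the main obstacle to be the exhaustiveness of the best-reply step: the deviation space is three-dimensional and piecewise linear, the moral term couples the threshold to both selling decisions, and one must carefully handle the supremum-type deviation — selling the low quality with a vanishing cut to the high-quality price and threshold — that is approached but never attained, so that equality in the bound $p^*=(1-\lambda)r_l+\lambda v_h$ still leaves no profitable deviation.
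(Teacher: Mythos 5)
Your proposal is correct and follows essentially the same route as the paper: you identify the same two critical deviations (activating low-quality trade while accepting the moral-low loss, and activating it while cutting the threshold and high-quality price by a vanishing $\epsilon$), obtain the same profitability thresholds $\frac{r_l-\kappa v_l}{1-\kappa}$ and $(1-\lambda)r_l+\lambda v_h$, and translate nonemptiness of the price interval into the two stated conditions on $\lambda$ and $\kappa$. The only difference is that where the paper simply asserts that these are the ``only two potentially profitable deviations'' (leaning on the preceding lemmas for the restriction to symmetric profiles with $p^*\in[r_h,v_h]$), you supply an explicit dominance and case-split argument for exhaustiveness, which strengthens rather than changes the proof.
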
		

%Then, profiles such that $r_h\leq p^*\leq \min\left\{(1-\lambda)r_l + \lambda v_h,\frac{r_l - \kappa v_l}{1-\kappa} \right\} \text{ and }p^*<\min\{p_{jl}^*,p_{ib}^*\}$ are Nash equilibria in pure strategies.

\begin{proof}
	Consider a profile where $r_h \leq p^*\leq v_h$ and $ p^* < \min\{p_{1l}^*,p_{2l}^*\}$. Take, without loss of generality, Player 1. She has only two potentially profitable deviations available. On the one hand, she can deviate to $p_{1l}' = p^*$. This allows her to sell the bad quality object to Player 1 at a price above cost $r_l$, but also entails a ``moral'' loss as she would trade the bad quality item with ``herself''. 
	
	On the other hand, she can deviate to $p_{1l}' = p^*$ and $p_{1h}' = \bar p'_1 = p^* - \epsilon$. This allows her to sell the low quality good to Player 2 at a price above cost while not selling it to ``herself'' (avoiding moral losses). In contrast, she stops buying the good quality item from Player 2 at a price below her valuation (because she set a threshold below the latter's price), thus making a material loss. Finally, there is also a material loss that stems from lowering the high quality price but this can be made smaller than the material gain by choosing an appropriate value for $\epsilon$.
	
	\noindent Utility at the initial profile is:
	\begin{equation}\label{eq_ut_pmoral_1g1g_inef0}
	\begin{aligned}
	U^{hm}\big((p^*,p_{1l}^*,p^*),(p^*,p_{2l}^*,p^*)\big)&= \dfrac{1-\kappa}{2}\bigg[\lambda (v_h - p^*)  + \lambda (p^* - r_h) \bigg]+ \dfrac{\kappa}{2}\bigg[\lambda (v_h - r_h)\bigg].
	\end{aligned}
	\end{equation}
	
	\noindent Meanwhile, utilities obtained at respectively the first and second deviations described are:
	%1
	\begin{equation}\label{eq_ut_pmoral_1g1g_inef1}
	\begin{aligned}
	U^{hm}\big((p^*,p^*,p^*),(p^*,p_{2l}^*,p^*)\big)&= \dfrac{1-\kappa}{2}\bigg[\lambda (v_h - p^*)  + \lambda( p^*-r_h) + (1-\lambda)(p^*-r_l)\bigg] +\\&
	\dfrac{\kappa}{2}\bigg[\lambda (v_h-r_h) + (1-\lambda) (v_l-r_l)\bigg],
	\end{aligned}
	\end{equation}
	%2
	\begin{equation}\label{eq_ut_pmoral_1g1g_inef2}
	\begin{aligned}
	U^{hm}\big((p_{1h}',p^*,\bar p_1'),(p^*,p_{2l}^*,p^*)\big)&= \dfrac{1-\kappa}{2}\bigg[\lambda( p_{1h}'-r_h)+ (1-\lambda) (p^*-r_l)\bigg]+\\&
	\dfrac{\kappa}{2}\bigg[\lambda (v_h - r_h)\bigg].
	\end{aligned}
	\end{equation}
	\noindent Comparing (\ref{eq_ut_pmoral_1g1g_inef0}) with (\ref{eq_ut_pmoral_1g1g_inef1}), we conclude that the first deviation is profitable if and only if:
	\begin{equation}\label{eq_ut_pmoral_1g1g_price1}
	p^*>\frac{r_l - \kappa v_l}{1-\kappa}.
	\end{equation}
	\noindent Meanwhile, the comparison of (\ref{eq_ut_pmoral_1g1g_inef0}) with (\ref{eq_ut_pmoral_1g1g_inef2}) yields that the second one is beneficial if and only if:
	\begin{equation}\label{eq_ut_pmoral_1g1g_price2}
	p^*>r_l + \frac{\lambda}{1-\lambda}(v_h - p_{1h}').
	\end{equation}
	
	\noindent 	By taking into account that $p_{1h}'$ can be made arbitrarily close to $p^*$ we can take the limit of the expression at the right-hand side of (\ref{eq_ut_pmoral_1g1g_price2}) and get:
	\begin{equation}\label{eq_ut_pmoral_1g1g_price2a}
	\lim\limits_{p_{1h}'\rightarrow p^*} r_l + \frac{\lambda}{1-\lambda}(v_h - p_{1h}') =r_l + \frac{\lambda}{1-\lambda}(v_h - p^*).
	\end{equation}
	
	\noindent The second deviation is then profitable if and only if:
	\begin{equation}\label{eq_ut_pmoral_1g1g_price2b}
	p^*>(1-\lambda)r_l + \lambda v_h.
	\end{equation}
	
	\noindent Therefore, the game has equilibria in pure strategies where only the good quality is traded in both contingent markets if and only if:
	\begin{equation}\label{cond1}
	\lambda v_h + (1-\lambda )r_l\geq r_h \iff \lambda \geq \frac{r_h-r_l}{v_h-r_l}, \text{ and}
	\end{equation}
	\begin{equation}\label{cond2}
	\frac{r_l - \kappa v_l}{1-\kappa} \geq r_h \iff \kappa \geq \frac{r_h-r_l}{r_h-v_l},
	\end{equation}
	\noindent They are characterised by:
	$$ r_h  \leq p^*= p_{1h}^* = p_{2h}^* = \bar p^*_2=\bar p^*_1\leq \min\left \{(1-\lambda)r_l + \lambda v_h,\frac{r_l - \kappa v_l}{1-\kappa}\right \} \text{ and }p^*<\min\{p_{2l}^*,p_{1l}^*\}.$$
\end{proof}
%%%%%%%%%%%%%%%%%%%%%%%%%%%%%%%%%%%%%%%%%%%%%%%%%%%%%%%%%%%%%%%%%%%%%%%%%%%%%%%%%%%%%%%%%%%%%%%%%%%%%%%%%%%%%%%%%%%%%%%%%%%%%%%%%%%%%%%%%%%%%%%%%%%%%%%%%%%%%%%%%%%%%%%%%%%%%%%%%%%%%%

%%%%%%%%%%%%%%%%%%%%%%%%%%%%%%%%%%%%%%%%%%%%%%%%%%%%%%%%%%%%%%%%%%%%%%%%%%%%%%%%%%%%%%%%%%%%%%%%%%%%%%%%%%%%%%%%%%%%%%%%%%%%%%%%%%%%%%%%%%%%%%%%%%%%%%%%%%%%%%%%%%%%%%%%%%%%%%%%%%%%%%
%
%\noindent \textbf{Profiles with trade of the good quality in one contingent market and full trade in the other one, where players set different thresholds are not equilibria}:

\begin{lemma}\label{lemma:incompinfo:inef:hm:bad7}
	Consider the quality uncertainty game between two \textit{homo moralis} and assume that $0<v_l<r_l<r_h<v_h$. Then, profiles with trade of the good quality in one contingent market and full trade in the other one are not equilibria if players set different thresholds and prices for the traded qualities. 
	%	
	%	
	%	Assume that $0<v_l<r_l<r_h<v_h$. Then, of all profiles where $p_{jh}^*\leq \bar p^*_i <p_{jl}^*, \max\{p_{ig}^*,p_{ib}^*\} \bar p^*_j \forall i,j\in\{1,2\},i\neq j$, only those featuring $r_h\leq p^*=\bar p^*_i = \bar p^*_j = p_{jh}^* = p_{ig}^* = p_{ib}^* \leq v^e, p^*<p_{jl}^*\forall i,j\in\{1,2\},i\neq j$ remain as candidate Nash equilibria in pure strategies of the game between two \textit{homo moralis}.
\end{lemma}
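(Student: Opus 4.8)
The plan is to fix, without loss of generality, the labelling so that the good-quality-only market is the one in which Player~1 is the seller and the full-trade market is the one in which Player~2 sells. By Definition~\ref{def:adv} this means $p_{1h}^* \leq \bar p_2^* < p_{1l}^*$ and $\max\{p_{2h}^*,p_{2l}^*\} \leq \bar p_1^*$. I will then rule out every such profile in which the two thresholds differ or in which the prices of the traded goods (namely $p_{1h}^*$ in the first market and $p_{2h}^*,p_{2l}^*$ in the second) are not all equal, by exhibiting a profitable deviation in each configuration.

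First I would record the ``clean'' deviations available to Player~1, who sells only the good quality. Since her low-quality price never affects her selling payoff here ($p_{1l}^*>\bar p_2^*$) and her purchases from Player~2 are governed by $\bar p_1^*$ alone, she can costlessly realign her own threshold so as to self-trade the good quality but not the bad one: raising $\bar p_1^*$ up to $p_{1h}^*$ (if $p_{1h}^*>\bar p_1^*$) or raising $p_{1l}^*$ above $\bar p_1^*$ produces a strictly positive change in her Kantian term with no material change, since $v_h-r_h>0$ while $v_l-r_l<0$. This forces $p_{1h}^* \leq \bar p_1^* < p_{1l}^*$ in any candidate. Next come the material price-raising deviations: a seller whose traded price lies strictly below the relevant buyer threshold can raise it, gaining materially while leaving the same indicator functions active, which pushes $p_{1h}^*$ up to $\min\{\bar p_1^*,\bar p_2^*\}$ and $p_{2h}^*,p_{2l}^*$ up to $\bar p_1^*$ whenever this does not cross a player's own threshold. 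A symmetric preliminary shows $p_{2l}^*\geq r_l$, since otherwise Player~2 would withdraw the loss-making bad-quality good for a joint material and moral gain.

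The crucial extra lever is that in this parametrisation $v_l<r_l$, so any bad-quality sale in the full-trade market occurs at $p_{2l}^*\geq r_l>v_l$; hence Player~1, as the buyer there, acquires the bad quality at a strict material loss. She can therefore profitably lower $\bar p_1^*$ just below $p_{2l}^*$ to cease buying it, and this move is strictly profitable unless lowering the threshold would simultaneously cut off her good-quality purchase or her good-quality self-trade, i.e.\ unless $p_{2l}^*\leq\max\{p_{2h}^*,p_{1h}^*\}$. Feeding these constraints into a case split on whether $\bar p_1^*<\bar p_2^*$, $\bar p_1^*>\bar p_2^*$, or $\bar p_1^*=\bar p_2^*$, and combining them with boundary (``$\epsilon$'') deviations that trade an arbitrarily small material loss against a discrete moral or material gain, collapses every non-symmetric profile: the thresholds must coincide and each traded good must change hands at that common threshold value, so $\bar p_1^*=\bar p_2^*=p_{1h}^*=p_{2h}^*=p_{2l}^*$.

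I expect the main obstacle to be disentangling Player~2's coupled incentives in the full-trade market. She simultaneously earns a material profit on the bad-quality sale (because $p_{2l}^*\geq r_l$) but suffers a moral loss from self-trading that same bad quality (because $v_l-r_l<0$ enters her Kantian term), while Player~1 loses materially on the very same transaction. Because these effects point in different directions and their net sign is genuinely parameter-dependent—this is precisely why the equilibrium region later turns out to be $\kappa\in[\kappa_1,\kappa_2(\lambda)]$ rather than all of $(0,1)$—I cannot rely on a single dominant deviation. Instead, for each non-symmetric sub-case I must track carefully which indicator functions flip and assign the deviation to whichever player faces an unambiguously parameter-free profitable move: either a moral gain with no material loss, or a material gain against an $\epsilon$-small material loss and no moral loss. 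Keeping this bookkeeping coherent across the threshold-ordering and price-ordering sub-cases is where the real work lies.
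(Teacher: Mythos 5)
Your proposal is correct and follows essentially the same route as the paper's proof: a case split on the ordering of the two thresholds, with every asymmetric configuration eliminated by the same families of profitable deviations (costless realignment of one's own threshold or untraded low-quality price for a pure Kantian gain, exploiting $v_h-r_h>0$ and $v_l-r_l<0$; raising a traded price up to the relevant buyer threshold for a material gain; and $\epsilon$-moves trading an arbitrarily small material loss against a discrete moral gain), ending at the same surviving candidates $\bar p_1^*=\bar p_2^*=p_{1h}^*=p_{2h}^*=p_{2l}^*<p_{1l}^*$. The only differences are organizational — you extract a system of necessary conditions before the case split, and you add one extra lever (the full-trade buyer lowering her threshold to stop purchasing the bad quality at $p_{2l}^*\geq r_l>v_l$) that the paper reserves for the subsequent lemmas — but these do not change the substance of the argument.
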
	

%%%%%%%%%%%%%%%%%%%%%%%%%%%%%%%%%%%%%%%%%%%%%%%%%%%%%%%%%%%%%%%%%%%%%%%%%%%%%
% 
%%%%%%%%%%%%%%%%%%%%%%%%%%%%%%%%%%%%%%%%%%%%%%%%%%%%%%%%%%%%%%%%%%%%%%%%%%%%%
\begin{proof}

	Assume without loss of generality, that $p_{2h}^*\leq \bar p^*_1 <p_{2l}^*, \max\{p_{1h}^*,p_{1l}^*\} \leq \bar p^*_2$.
	
	(1) Suppose first that in addition to the above, $\bar p^*_2 < \bar p^*_1$. Then Player 2 could always deviate profitably from profiles where $p_{2h}^* < \bar p^*_1$ just by setting $p_{2h}' = \bar p_2' = \bar p^*_1$. This is enough to discard profiles with $\bar p^*_2 <\bar p^*_1$. 
	
	(2) Assume in turn, that $\bar p^*_1 < \bar p^*_2$. No profile where $p_{2l}^* \leq \bar p_2^*$ can be an equilibrium, as Player 2 can deviate to $\bar p^*_2 < p_{2l}'$ and obtain a moral gain without suffering material losses. Therefore, I can circumscribe to profiles where $p_{2l}^* > \bar  p^*_2$. Moreover, $p_{1l}^* < \bar p^*_2$ cannot be part of an equilibrium, as Player 1 can profitably deviate to $p_{1l}' = \bar p^*_2$. She obtains a moral gain without suffering material losses. So we are left with $p_{2h}^* \leq  \bar p^*_1 < \bar p^*_2 = p_{1l}^*  < p_{2l}^*$ and $p_{1h}^*\leq \bar p^*_2$. If $p_{1h}^* < \bar p^*_2$, Player 1 can profitably deviate by setting $p_{1h}' =  \bar p'_1 \in (p_{1h}^*,\bar p^*_2)$. She makes a material gain without suffering moral losses. In turn, if $p_{1h}^* = \bar p^*_2$, she can deviate to $p_{1h}' = \bar p'_1 = \bar p^*_2 - \epsilon$. She experiences a strictly positive moral gain and a material loss. However, the latter can be made smaller than the former by choosing an appropriate $\epsilon$. Then, the thresholds cannot be different.
	
	(3) Finally, suppose that $\bar p^*_1 = \bar p^*_2$ but $p_{1h}^*$, $p_{1l}^*$ or $p_{2h}^*$ are strictly below $\bar p^*_1 = \bar p^*_2$. Then, the agent can just increase her price up to the thresholds and make a material gain while sustaining no moral losses.
\end{proof}

\begin{corollary}\label{cor:incompinfo:inef:hm:badrg}
	A necessary condition for the existence of equilibria with trade of the good quality in one contingent market and full trade in the other one is that $\lambda v_h + (1-\lambda )v_l\geq r_h \iff  \lambda \geq  \lambda_e$.
\end{corollary}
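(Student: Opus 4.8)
The plan is to show that any equilibrium of the stated type forces the common trading price to lie between $r_h$ and $v_e$, so that $r_h \le v_e$, which is exactly $\lambda \ge \lambda_e$. By Lemma \ref{lemma:incompinfo:inef:hm:bad7} I may restrict attention to profiles in which both players post a common threshold equal to the common high-quality price; call it $p^*$. Thus high quality is traded at $p^*$ in both contingent markets, and in the full-trade market the seller additionally sells the low quality. Since her low-quality material surplus $(1-\lambda)(p_{1l}-r_l)$ is increasing in $p_{1l}$ while the associated moral term is unchanged as long as $p_{1l}\le \bar p=p^*$, her best response is to raise the low-quality price up to the threshold, so the full-trade buyer pays $p^*$ for each quality. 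Label Player $1$ the full-trade seller (hence high-quality buyer in the other market) and Player $2$ the high-quality-only seller (hence full-trade buyer).

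First I would establish $p^*\ge r_h$. Suppose instead $p^*<r_h$, so Player $1$ is selling high quality below cost in the full-trade market. She can deviate by raising her high-quality selling price just above $p^*$ (so that Player $2$ stops buying it) while simultaneously raising her own threshold above $p^*$. This halts the loss-making high-quality sale, a strict material gain of $\tfrac{1-\kappa}{2}\lambda(r_h-p^*)>0$, and leaves every other component of her utility unchanged: she keeps selling low quality, keeps buying high quality at $p^*$, and keeps both of her own Kantian indicators active, since her prices remain weakly below her raised threshold. The deviation is profitable, a contradiction, so $p^*\ge r_h$.

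Next I would establish $p^*\le v_e$. The full-trade buyer (Player $2$) earns expected material surplus $\tfrac{1-\kappa}{2}(v_e-p^*)$ on her purchases. If $p^*>v_e$ this is strictly negative, and I claim she can profitably exit. The naive deviation of merely lowering her threshold below $p^*$ fails, because it also switches off her own high-quality Kantian term (she is the high-quality seller). The correct deviation is joint: she lowers her threshold \emph{and} her own high-quality price together, to $p^*-\epsilon$. This stops all of her purchases in the full-trade market, a material gain tending to $\tfrac{1-\kappa}{2}(p^*-v_e)>0$, while keeping her high-quality self-trade indicator active — she still ``sells high quality to herself'' at the lower price, so the moral term is unchanged — and the only residual cost, from shading her own high-quality sale to the other player, is $O(\epsilon)$. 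For small $\epsilon$ the deviation is strictly profitable, a contradiction, so $p^*\le v_e$.

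Combining the two bounds gives $r_h\le p^*\le v_e$, hence $v_e\ge r_h$, i.e.\ $\lambda(v_h-v_l)\ge r_h-v_l$, which is $\lambda\ge\lambda_e$ by the definition in (\ref{le}). The step I expect to be delicate is the second one: the point is precisely that a buyer locked into a loss-making bundle cannot escape by lowering her threshold alone, because her Kantian term penalizes abandoning a trade she would endorse under universalization; recognizing that the moral term is preserved by co-moving her own price is what makes the exit deviation profitable and yields $p^*\le v_e$.
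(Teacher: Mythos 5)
Your proof is correct and follows essentially the same route as the paper: after invoking Lemma \ref{lemma:incompinfo:inef:hm:bad7} to reduce to symmetric candidates with a common price $p^*$, your two bounding deviations (the full-trade seller raising her high-quality price and threshold together when $p^*<r_h$, and the full-trade buyer co-moving her threshold and own high-quality price down to $p^*-\epsilon$ when $p^*>v_e$) are exactly the deviations the paper uses in Lemma \ref{lemma:incompinfo:inef:hm:badrg}, from which the corollary follows since $[r_h,v_e]$ must be non-empty.
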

%%%%%%%%%%%%%%%%%%%%%%%%%%%%%%%%%%%%%%%%%%%%%%%%%%%%%%%%%%%%%%%%%%%%%%%%%%%%%%%%%%%%%%%%%%%%%%%%%%%%%%%%%%%%%%%%%%%%%%%
%%%%%%%%%%%%%%%%%%%%%%%%%%%%%%%%%%%%%%%%%%%%%%%%%%%%%%%%%%%%%%%%%%%%%%%%%%%%%%%%%%%%%%%%%%%%%%%%%%%%%%%%%%%%%%%%%%%%%%%

\begin{lemma}\label{lemma:incompinfo:inef:hm:badrg}
	Consider the quality uncertainty game between two \textit{homo moralis} and assume that $0<v_l<r_l<r_h<v_h$. Take all profiles with trade of the good quality in one contingent market and full trade in the other one where agents set the same thresholds and prices for traded qualities. Then, only those with prices above high quality cost and below expected valuation remain as equilibrium candidates.
	%	
	%	
	%	Assume that $0<v_l<r_l<r_h<v_h$. Then, of all profiles where $p_{jh}^*\leq \bar p^*_i <p_{jl}^*, \max\{p_{ig}^*,p_{ib}^*\} \bar p^*_j \forall i,j\in\{1,2\},i\neq j$, only those featuring $r_h\leq p^*=\bar p^*_i = \bar p^*_j = p_{jh}^* = p_{ig}^* = p_{ib}^* \leq v^e, p^*<p_{jl}^*\forall i,j\in\{1,2\},i\neq j$ remain as candidate Nash equilibria in pure strategies of the game between two \textit{homo moralis}.
\end{lemma}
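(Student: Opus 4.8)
The plan is to start from the equilibrium candidates isolated in Lemma \ref{lemma:incompinfo:inef:hm:bad7}, namely the symmetric profiles in which every traded quality changes hands at a single common value. Fixing notation, I take (without loss of generality) the first contingent market to feature full trade and the second to feature only good-quality trade, so the candidate is described by a single number $p^*$ with $p^* = p_{1h}^* = p_{1l}^* = p_{2h}^* = \bar p_1^* = \bar p_2^*$ and $p_{2l}^* > p^*$. The goal is to rule out $p^* < r_h$ and $p^* > v_e$, leaving only $r_h \le p^* \le v_e$. The single idea driving both deviations is that an agent can move one of her \emph{own} prices together with her \emph{own} threshold, thereby redirecting the trade she actually carries out with her opponent while leaving the Kantian self-trade term in (\ref{eq:compinfo:hm:ut4}) exactly unchanged; this isolates a pure material gain and makes the deviation profitable for \emph{every} $\kappa\in(0,1)$, which is what the ``candidates'' phrasing requires.

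For the lower bound, suppose $p^* < r_h$. Then the full-trade seller (Player $1$) is selling the high-quality item at $p^* < r_h$, i.e.\ at a loss. I would have her deviate to $p_{1h}' = \bar p_1' = p^* + \delta$ while keeping $p_{1l}^* = p^*$, choosing $\delta>0$ small enough that $p^*+\delta < p_{2l}^*$. Against the fixed strategy of Player $2$, this stops the loss-making high-quality sale in market $1$ (a strictly positive material gain proportional to $\lambda(r_h - p^*)$), leaves the low-quality sale to Player $2$ intact since $p_{1l}^* = p^* \le \bar p_2^*$, leaves her high-quality purchase from Player $2$ intact while adding no new low-quality purchase since $p^* \le \bar p_1' < p_{2l}^*$, and, crucially, keeps both qualities self-trading since $p_{1h}', p_{1l}^* \le \bar p_1'$, so the moral term does not move. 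The net change is a strictly positive material gain, so the profile is not an equilibrium.

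For the upper bound, suppose $p^* > v_e$. Then the full-trade buyer (Player $2$) is purchasing in market $1$ at an expected loss, since her expected buyer surplus there is $v_e - p^* < 0$. I would have her deviate to $p_{2h}' = \bar p_2' = p^* - \epsilon$ for small $\epsilon>0$. This lowers her threshold below $p^* = p_{1h}^* = p_{1l}^*$, so she ceases \emph{all} purchases in market $1$ (a material gain of $p^* - v_e > 0$); it keeps her selling the high quality to Player $1$, now at $p^*-\epsilon \le \bar p_1^* = p^*$, an arbitrarily small loss of order $\lambda\epsilon$; it leaves the low quality untraded on her side since $p_{2l}^* > p^*-\epsilon$; and again it preserves the moral term, because high quality still self-trades ($p_{2h}' = \bar p_2'$) and low quality still does not ($p_{2l}^* > \bar p_2'$). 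For $\epsilon$ small the material gain dominates the vanishing loss, so the deviation is profitable and the profile is not an equilibrium.

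Combining the two cases leaves exactly the candidates with $r_h \le p^* \le v_e$, which is the claim. The step I expect to require the most care is verifying that each deviation genuinely leaves the self-trade term untouched: this is what makes profitability independent of $\kappa$, and it rests on the bookkeeping of moving a price and one's own threshold in lock-step while controlling $\delta$ (resp.\ $\epsilon$) against the slack $p_{2l}^* - p^*$, so that no unintended low-quality trade switches on or off. The observation that $r_h \le v_e$ forces $\lambda \ge \lambda_e$ then ties the surviving price range to the non-emptiness condition already recorded in Corollary \ref{cor:incompinfo:inef:hm:badrg}.
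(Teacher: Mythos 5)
Your proof is correct and takes essentially the same route as the paper's: the identical two lock-step deviations (Player 1 raising $p_{1h}$ and $\bar p_1$ together into $(p^*,p_{2l}^*)$ when $p^*<r_h$; Player 2 lowering $p_{2h}$ and $\bar p_2$ together to $p^*-\epsilon$ when $p^*>v_e$), each isolating a strictly positive material gain while leaving the Kantian self-trade term unchanged. Your write-up is in fact more explicit than the paper's about the bookkeeping that keeps the moral term fixed and hence makes profitability independent of $\kappa$.
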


\begin{proof}
	By Lemma \ref{lemma:incompinfo:inef:hm:badrg}, we can focus on profiles where $p_{ig}^* = p_{ib}^* = p_{jh}^* = \bar p^*_i = \bar p^*_j < p_{jl}^*, i,j \in \{1,2\}, i\neq j$. 
	
	Assume without loss of generality, that $p_{1h}^* = p_{1l}^* = p_{2h}^* = \bar p^*_1 = \bar p^*_2 < p_{2l}^*$ and define $p^* = p_{1h}^* = p_{1l}^* = p_{2h}^* = \bar p^*_1 = \bar p^*_2$.
	
	Suppose first that  $p^* < r_h$. Then, without loss of generality, Player 1 can deviate to $p_{1h}' = \bar p'_1 \in (p^*,p_{2l}^*)$. She makes a material gain and avoids moral losses.
	
	In turn, assume that  $p^* > v^e$. Then, Player 2 can deviate to $p_{2h}' = \bar p'_2 = \bar p^*_1 - \epsilon$. She makes a material gain from ceasing to buy both qualities at a price above the expected valuation and a loss from reducing her price, while experiencing no moral losses. Since the loss can be made smaller than the gain by choosing an appropriate $\epsilon$, the deviation is profitable.
\end{proof}
%%%%%%%%%%%%%%%%%%%%%%%%%%%%%%%%%%%%%%%%%%%%%%%%%%%%%%%%%%%%%%%%%%%%%%%%%%%%%%%%%%%%%%%%%%%%%%%%%%%%%%%%%%%%%%%%%%%%%%%
%%%%%%%%%%%%%%%%%%%%%%%%%%%%%%%%%%%%%%%%%%%%%%%%%%%%%%%%%%%%%%%%%%%%%%%%%%%%%%%%%%%%%%%%%%%%%%%%%%%%%%%%%%%%%%%%%%%%%%%

%%%%%%%%%%%%%%%%%%%%%%%%%%%%%%%%%%%%%%%%%%%%%%%%%%%%%%%%%%%%%%%%%%%%%%%%%%%%%%%%%%%%%%%%%%%%%%%%%%%%%%%%%%%%%%%%%%%%%%%
%%%%%%%%%%%%%%%%%%%%%%%%%%%%%%%%%%%%%%%%%%%%%%%%%%%%%%%%%%%%%%%%%%%%%%%%%%%%%%%%%%%%%%%%%%%%%%%%%%%%%%%%%%%%%%%%%%%%%%%
%%%%%%%%%%%%%%%%%%%%%%%%%%%%%%%%%%%%%%%%%%%%%%%%%%%%%%%%%%%%%%%%%%%%%%%%%%%%%%%%%%%%%%%%%%%%%%%%%%%%%%%%%%%%%%%%%%%%%%%
%%%%%%%%%%%%%%%%%%%%%%%%%%%%%%%%%%%%%%%%%%%%%%%%%%%%%%%%%%%%%%%%%%%%%%%%%%%%%%%%%%%%%%%%%%%%%%%%%%%%%%%%%%%%%%%%%%%%%%%

%\noindent \textbf{Characterisation of equilibria with trade of the good quality in one contingent market and full trade in the other one}:
\begin{lemma}\label{lemma:incompinfo:inef:hm:bad8}
	Consider the quality uncertainty game between two \textit{homo moralis} and assume that $0<v_l<r_l<r_h<v_h$. Then, profiles with full trade in one contingent market and trade of only the good quality in the other one if and only if:
	
	$$\lambda \geq \lambda_2 $$
	
	$$\kappa \in [\kappa_1 , \kappa_a(\lambda)] $$
	
	They are characterised by: $$\frac{r_l - \kappa v_l}{1-\kappa}=p_{ig}^*=p_{jh}^*=p_{ib}^*=\bar p^*_j = \bar p^*_i < p_{jl}^* \text{ for } i,j\in\{1,2\},i\neq j.$$ 
\end{lemma}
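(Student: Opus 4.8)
The plan is to take the candidate profile in the statement and verify it is a mutual best response exactly when the stated inequalities hold, after first invoking the preceding lemmas to discard every other configuration of this type. Write $p^\ast := \frac{r_l - \kappa v_l}{1-\kappa}$ and consider the profile in which the player whose contingent market has full trade plays $(p^\ast,p^\ast,p^\ast)$, while the other plays $(p^\ast, p_{jl}, p^\ast)$ with $p_{jl}>p^\ast$, so that in her market only the good quality trades. Lemma \ref{lemma:incompinfo:inef:hm:bad7} already shows that any equilibrium with full trade in one market and good-quality-only trade in the other must have both players posting a common price and threshold on the traded goods, so this one-parameter family (indexed by the common price) is the only candidate; Lemma \ref{lemma:incompinfo:inef:hm:badrg} further restricts that price to $[r_h, v_e]$.

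First I would pin down the price via the two decisive deviations, both on the low-quality margin. The full-trade seller sells her bad item to her partner at a price $>r_l$ (a material gain) while carrying the negative moral term $(1-\lambda)(v_l-r_l)<0$ in her own objective; raising her low-quality price above the common threshold removes both. Comparing utilities, this deviation is unprofitable iff $(1-\kappa)(p^\ast-r_l)+\kappa(v_l-r_l)\ge 0$, i.e. $p^\ast\ge\frac{r_l-\kappa v_l}{1-\kappa}$. Symmetrically, the good-quality-only seller can lower her low-quality price to the common level and begin trading the bad item, which is unprofitable iff $p^\ast\le\frac{r_l-\kappa v_l}{1-\kappa}$. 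Since these are deviations of the two different players, both must fail in equilibrium, forcing $p^\ast=\frac{r_l-\kappa v_l}{1-\kappa}$ exactly, the price in the statement; note $p^\ast>r_l>v_l$ under the maintained assumption.

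Next I would extract the range conditions from the remaining margins. For the lower bound, if $p^\ast<r_h$ the full-trade seller sells her good item below cost: she can raise her high-quality price just above the threshold to stop that sale while raising her own threshold by the same amount, which keeps the good quality trading in her moral term and keeps her purchase from the partner intact, so the move is a pure material gain. Ruling it out requires $p^\ast\ge r_h$, equivalent by substitution to $\kappa\ge\kappa_1$. For the upper bound, the good-quality-only seller buys both items from her partner at $p^\ast$, and since $p^\ast>v_l$ her bad-quality purchase is at a loss; because the partner posts one price on both goods she cannot drop that purchase alone, but she can lower her threshold together with her own high-quality price in lockstep (preserving the moral term), ceasing to buy either good for an essentially costless material gain whenever $p^\ast>v_e$. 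Ruling this out requires $p^\ast\le v_e$, which becomes $\kappa\le\kappa_a(\lambda)$; the interval $[\kappa_1,\kappa_a(\lambda)]$ is nonempty precisely when $\lambda\ge\lambda_2$. For the converse I would check that at the pinned price and inside this range none of these deviations — nor the buyer-side and good-quality-sale deviations of the opposite player — yields a strict gain, so the profile is a genuine equilibrium.

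The step I expect to be most delicate is the upper-bound argument. The profitable buyer-side deviation is not the naive one of merely lowering the threshold, which forfeits the good-quality moral surplus and is typically not profitable, but the joint move lowering the threshold and the high-quality selling price together so that the moral term is preserved. Identifying this as the binding deviation, and verifying that no compound deviation does strictly better, is what yields the sharp cutoff at $v_e$ (hence the exact forms of $\kappa_a(\lambda)$ and $\lambda_2$) rather than the looser bound produced by a threshold-only move.
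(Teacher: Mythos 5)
Your reduction to the symmetric one-price family via Lemmas \ref{lemma:incompinfo:inef:hm:bad7} and \ref{lemma:incompinfo:inef:hm:badrg}, your pinning of the price at $p^\ast=\frac{r_l-\kappa v_l}{1-\kappa}$ through the two opposing low-quality-margin deviations, and your lower bound $p^\ast\geq r_h\iff\kappa\geq\kappa_1$ all match the paper's argument and are correct. The gap is in the upper bound, precisely the step you flagged as delicate. The binding deviation for the good-quality-only seller is not the one you identify (lowering threshold and high-quality price in lockstep, which is indeed profitable iff $p^\ast>v_e$); it is the \emph{compound} deviation in which she additionally cuts her low-quality price down to $p^\ast$, so that she simultaneously (i) starts selling the bad item to her partner at $p^\ast>r_l$, (ii) stops buying both items, and (iii) keeps the moral term intact because her low price now sits above her (slightly lowered) own threshold. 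Computing its payoff, this move is profitable iff
\begin{equation*}
(1-\lambda)(p^\ast-r_l) > v_e-p^\ast \iff p^\ast > \frac{v_e+(1-\lambda)r_l}{2-\lambda} =: C,
\end{equation*}
and since $r_l<v_e$ one has $C<v_e$ strictly. So the correct upper bound is $p^\ast\leq C$, not $p^\ast\leq v_e$, and it is the condition $B\leq C$ (with $B=\frac{r_l-\kappa v_l}{1-\kappa}$) that is equivalent to $\kappa\leq\kappa_a(\lambda)=\frac{\lambda(v_h-v_l)+v_l-r_l}{\lambda(v_h-r_l)+r_l-v_l}$.

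Your claim that $p^\ast\leq v_e$ ``becomes $\kappa\leq\kappa_a(\lambda)$'' is also algebraically false: $B\leq v_e$ works out to $\kappa\leq\frac{\lambda(v_h-v_l)+v_l-r_l}{\lambda(v_h-v_l)}$, whose denominator is strictly smaller than that of $\kappa_a(\lambda)$ (because $(1-\lambda)(r_l-v_l)>0$), so it is a strictly weaker restriction. Consequently your argument would certify as equilibria all profiles with $\kappa\in\bigl(\kappa_a(\lambda),\,\tfrac{\lambda(v_h-v_l)+v_l-r_l}{\lambda(v_h-v_l)}\bigr]$, where in fact the compound deviation above is strictly profitable and no equilibrium exists; it would also yield the wrong nonemptiness threshold in place of $\lambda_2$. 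To repair the proof you must enumerate the good-quality-only seller's deviations including the compound one (as the paper does), obtain the three cutoffs $A$, $B$, $C$, and then use the implications $r_h\leq v_e\implies C\leq v_e$ and $B\leq C\implies C\leq A$ to reduce the equilibrium condition to $B\in[r_h,C]$, which delivers exactly $\kappa\in[\kappa_1,\kappa_a(\lambda)]$ and $\lambda\geq\lambda_2$.
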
	

\begin{proof}	
	(1) By Lemmas \ref{lemma:incompinfo:inef:hm:bad7} and \ref{lemma:incompinfo:inef:hm:badrg}, the remaining candidate equilibria are such that $r_h \leq p^* = \bar p^*_i = \bar p^*_j = p_{jh}^* = p_{ig}^* = p_{ib}^* \leq v_e, p^*<p_{jl}^*\forall i,j\in\{1,2\},i\neq j$. Corollary \ref{cor:incompinfo:inef:hm:badrg} establishes that $\lambda \geq \lambda_e$ is a necessary condition for the existence of this kind of equilibria.
	
	(2) Assume, without loss of generality, that $r_h\leq p^*=\bar p^*_1 = \bar p^*_2 = p_{2h}^* = p_{1h}^* = p_{1l}^* \leq v_e, p^*<p_{2l}^*$. Player 1's utility at this kind of profiles is specified in Equation (\ref{eq_ut_pmoral1_1g2_inef0}), while that of Player 2 in (\ref{eq_ut_pmoral2_1g2_inef0}).
	
	\begin{equation}\label{eq_ut_pmoral1_1g2_inef0}
	\begin{aligned}
	U^{hm}_1\big((p^*,p^*,p^*),(p^*,p_{2l}^*,p^*)\big)&= \dfrac{1-\kappa}{2}\bigg[\lambda (v_h - p^*) + \lambda (p^* - r_h) + (1-\lambda)(p^* - r_l)\Bigg]+ \\& \dfrac{\kappa}{2}\Bigg[\lambda (v_h-r_h) + (1-\lambda) (v_l-r_l)\bigg],
	\end{aligned}
	\end{equation}
	
	\begin{equation}\label{eq_ut_pmoral2_1g2_inef0}
	\begin{aligned}
	U^{hm}_2\big((p^*,p_{2l}^*,p^*),(p^*,p^*,p^*)\big)&= \dfrac{1-\kappa}{2}\bigg[\lambda (v_h - p^*) + (1-\lambda)(v_l - p^*)+ \lambda (p^* - r_h) + (1-\lambda)r_l \bigg] + \\& \dfrac{\kappa}{2}\bigg[\lambda (v_h - r_h) \bigg].
	\end{aligned}
	\end{equation}
	
	Player 1 may deviate from the profiles described in two potentially beneficial ways: either by marginally reducing her threshold and high quality price or by increasing her low quality price. The first deviation leads Player 1 to stop trading the bad quality item with ``herself'', but also to stop buying the high quality good from Player 2 whilst also selling her own good quality object for a slightly lower price. In turn, the second one leads to her not selling the low quality good to Player 2 but avoiding the exchange of this same item with ``herself''. The utility achieved by Player 1 after the first deviation is described in Expression (\ref{eq_ut_pmoral1_1g2_inef_dev1}), where $p_{1h}'$ is Player 1's high quality price and threshold after deviating. Meanwhile, Player 1's utility at to the second deviation is shown in Equation (\ref{eq_ut_pmoral1_1g2_inef_dev2}), with $p_{1l}'$ representing the low quality price set by Player 1 at the deviation.
	
	\begin{equation}\label{eq_ut_pmoral1_1g2_inef_dev1}
	\begin{aligned}
	U^{hm}_1\big((p_{1h}',p^*,p_{1h}'),(p^*,p_{2l}^*,p^*)\big)&= \dfrac{1-\kappa}{2}\bigg[\lambda (p_{1h}'-r_h) + (1-\lambda)(p^* - r_l)\Bigg]+ \dfrac{\kappa}{2}\Bigg[\lambda (v_h - r_h) \bigg],
	\end{aligned}
	\end{equation}	
	
	\begin{equation}\label{eq_ut_pmoral1_1g2_inef_dev2}
	\begin{aligned}
	U^{hm}_1\big((p^*,p_{1l}',p^*),(p^*,p_{2l}^*,p^*)\big)&= \dfrac{1-\kappa}{2}\bigg[\lambda (v_h - p^*)  + \lambda (p^*-r_h) \Bigg]+ \\ &\dfrac{\kappa}{2}\Bigg[\lambda (v_h - r_h) \bigg].
	\end{aligned}
	\end{equation}
	
	The first deviation is profitable whenever $$(1 - \kappa)\left(\lambda(v_h - r_h)\right) + \kappa(1 - \lambda) (v_l - r_l)> (1 - \kappa)\lambda (p_{1h}' - r_h).$$ 
	
	Recall that the price reduction from $p^*$ to $p_{1h}'$ is a marginal change and Player 1 can make them arbitrarily similar. Taking the limit  as $p_{1h}'$ approaches $p^*$ of the right hand side in the above expression and solving for $p^*$ gives:
	\begin{equation}\label{eq_ut_pmoral1_1g2_inef_price1}
	p^* > v_h + \frac{\kappa (1-\lambda)}{(1-\kappa)\lambda}(v_l-r_l) = A.
	\end{equation}
	\noindent In turn, the second deviation is beneficial to Player 1 whenever: 
	\begin{equation}\label{eq_ut_pmoral1_1g2_inef_price2}
	p^* < \frac{r_l - \kappa v_l}{1-\kappa} = B.	
	\end{equation}
	
	I now turn to deviations available to Player 2. Firstly, she can reduce her low quality price to $p^*$. In this way, she gets to sell her low quality good to Player 1 at price $p^*$ but also exchanges it with ``herself'', making a ``moral'' loss. Secondly, she can, in addition to lowering her low quality good to $p^*$, slightly reduce her threshold and high quality price to $p_{2h}'$. Thus, she gets to sell her low quality good to Player 1 at price $p^*$ and avoid trading it with `herself''. She also avoids buying Player 1's low quality object. In contrast, she stops buying the good quality item from Player 1. Player 2's utility after the first deviation is shown in (\ref{eq_ut_pmoral2_1g2_inef_dev1}), while the second one is displayed in (\ref{eq_ut_pmoral2_1g2_inef_dev2}).
	
	\begin{equation}\label{eq_ut_pmoral2_1g2_inef_dev1}
	\begin{aligned}
	U^{hm}_2\big((p^*,p^*,p^*),(p^*,p^*,p^*)\big)&= \dfrac{1-\kappa}{2}\bigg[\lambda (v_h-p^*) + (1-\lambda)(v_l-p^*) + \lambda (p^*-r_h) + (1-\lambda)(p^*-r_l)\Bigg] \\ & \dfrac{\kappa}{2}\Bigg[\lambda (v_h-r_h) + (1-\lambda) (v_l-r_l)\bigg],
	\end{aligned}
	\end{equation}
	\begin{equation}\label{eq_ut_pmoral2_1g2_inef_dev2}
	\begin{aligned}
	U^{hm}_2\big((p_{2h}',p^*,p_{2h}'),(p^*,p^*,p^*)\big)&= \dfrac{1-\kappa}{2}\bigg[\lambda (p_{2h}' - r_h) + (1-\lambda)(p^*-r_h)\Bigg]+ \dfrac{\kappa}{2}\Bigg[\lambda( v_h -r_h) \bigg].
	\end{aligned}
	\end{equation}
	
	\noindent The first deviation is profitable to Player 2 whenever:
	\begin{equation}\label{eq_ut_pmoral2_1g2_inef_price1}
	p^* > \frac{r_l - \kappa v_l}{1-\kappa} = B.
	\end{equation}		
	\noindent In turn, the second one is beneficial to Player 2 when:
	\begin{equation}\label{eq_ut_pmoral2_1g2_inef_price2}
	p^* > \frac{v^e+(1-\lambda)r_l}{2-\lambda} = C.
	\end{equation}		
	
	\noindent From expressions (\ref{eq_ut_pmoral1_1g2_inef_price2}) and (\ref{eq_ut_pmoral2_1g2_inef_price1}), the only possible value for $p^*$ in this type of equilibria is:
	$$
	p^* = \frac{r_l - \kappa v_l}{1-\kappa}.
	$$
	
	\noindent From (\ref{eq_ut_pmoral1_1g2_inef_price1}) and (\ref{eq_ut_pmoral2_1g2_inef_price2}) we obtain two new conditions for the existence of these equilibria:
	
	$$
	\frac{r_l - \kappa v_l}{1-\kappa} \leq v_h + \frac{\kappa (1-\lambda)}{(1-\kappa)\lambda}(v_l-r_l),
	$$
	
	$$
	\frac{r_l - \kappa v_l}{1-\kappa} \leq \frac{v^e+(1-\lambda)r_l}{2-\lambda}.
	$$
	
	(3)	Putting together all of the above, we have equilibria with full trade in one contingent market and trade of only the good quality in the other one if and only if:
	\begin{equation}\label{condition1}
	B \in \left[r_h, \min \left\{A,C,v^e \right\} \right].
	\end{equation}
	
	(4)	Notice that: $$r_h \leq v^e \implies C \leq v^e.$$
	
	(5)	We can also show that: $$B \leq C \implies C  \leq A.$$
	
	\noindent Firstly, $$B \leq C \iff \kappa \leq \frac{\lambda (v_h -v_l) + v_l - r_l}{\lambda (v_h - r_l) + r_l - v_l} = \kappa_a(\lambda). $$
	
	\noindent Secondly, $$C \leq A \iff \kappa \leq \frac{2 \lambda v_h - \lambda (r_l + v_l)}{2 \left(\lambda \left(v_h - r_l\right) + r_l - v_l \right) =  \kappa_c(\lambda) } .$$
	
	\noindent Finally, $ \kappa_a(\lambda) \leq \kappa_c(\lambda) \iff \lambda \leq 2$, which is true by definition. 
	
	(6) So we can re-write (\ref{condition1}) as:
	\begin{equation}\label{condition2}
	B \in \left[r_h,  C\right] \iff \kappa \in \left[\frac{r_h-r_l}{r_h-v_l}, \kappa_a(\lambda) \right].
	\end{equation}	
	
	(7) The condition for the above interval to be non-empty is: 
	
	\begin{equation}\label{condition3}
	\lambda \geq \frac{r_h-r_l+r_h-v_l}{r_h-r_l+v_h-v_l},
	\end{equation}	
	
	\noindent which is larger than $\frac{r_h-v_l}{v_h-v_l}$ (the value of $\lambda$ necessary for $r_h\leq v^e$). 
\end{proof}
%%%%%%%%%%%%%%%%%%%%%%%%%%%%%%%%%%%%%%%%%%%%%%%%%%%%%%%%%%%%%%%%%%%%%%%%%%%%%%%%%%%%%%%%%%%%%%%%%%%%%%%%%%%%%%%%%%%%%%%%
%%%%%%%%%%%%%%%%%%%%%%%%%%%%%%%%%%%%%%%%%%%%%%%%%%%%%%%%%%%%%%%%%%%%%%%%%%%%%%%%%%%%%%%%%%%%%%%%%%%%%%%%%%%%%%%%%%%%%%%%
%%%%%%%%%%%%%%%%%%%%%%%%%%%%%%%%%%%%%%%%%%%%%%%%%%%%%%%%%%%%%%%%%%%%%%%%%%%%%%%%%%%%%%%%%%%%%%%%%%%%%%%%%%%%%%%%%%%%%%%%
%%%%%%%%%%%%%%%%%%%%%%%%%%%%%%%%%%%%%%%%%%%%%%%%%%%%%%%%%%%%%%%%%%%%%%%%%%%%%%%%%%%%%%%%%%%%%%%%%%%%%%%%%%%%%%%%%%%%%%%%
%\noindent \textbf{Profiles with full trade where players set different thresholds are not equilibria}:

\begin{lemma}\label{lemma:incompinfo:inef:hm:bad9}
	Consider the quality uncertainty game between two \textit{homo moralis} and assume that $0<v_l<r_l<r_h<v_h$. Then, of all profiles with full trade, only those where players set the same thresholds and prices remain as candidate Nash equilibria in pure strategies.
\end{lemma}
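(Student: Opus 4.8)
The plan is to show that any full-trade profile failing symmetry admits a profitable deviation, mirroring the case analysis of Lemmas \ref{lemma:incompinfo:inef:hm:bad7} and \ref{lemma:incompinfo:inef:hm:bad8}. Throughout I lean on two facts. First, a seller's material payoff is strictly increasing in any own price that remains weakly below the relevant buyer threshold, so any price sitting strictly below that threshold can be raised for a material gain. Second, in player $i$'s moral term the high-quality component $\lambda(v_h-r_h)$ is positive while the low-quality component $(1-\lambda)(v_l-r_l)$ is negative, each being active precisely when the corresponding own price is weakly below $\bar p_i$. The negative low-quality moral component (a consequence of $v_l<r_l$) is what breaks the naive ``raise every price up to the threshold'' argument and is the crux of the proof.

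First I would equalize the thresholds, arguing by contradiction that, without loss of generality, $\bar p_1>\bar p_2$. Because player $1$'s prices are then strictly below her own threshold, raising them is a clean material gain that leaves her moral indicators untouched, so equilibrium forces $p_{1h}=p_{1l}=\bar p_2$. For player $2$, raising her low price up to $\bar p_1$ yields a material gain and simultaneously switches off the negative low-quality moral term, so $p_{2l}=\bar p_1$; the analogous move for her high price is clean except at the boundary $p_{2h}=\bar p_2$, leaving $p_{2h}\in\{\bar p_2,\bar p_1\}$. If $p_{2h}<\bar p_1$, then player $1$ is buying player $2$'s low quality at the loss-making price $\bar p_1>v_l$, and she profits by lowering her threshold just below $\bar p_1$, dropping only that purchase while her own (lower) prices keep all her indicators fixed; this rules out $p_{2h}=\bar p_2$ and forces the extreme candidate $p_{2h}=p_{2l}=\bar p_1$, $p_{1h}=p_{1l}=\bar p_2$.

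I would then dispatch this extreme candidate by splitting on the location of $\bar p_1$. If $\bar p_1>v_e$, player $1$ lowers her threshold to $\bar p_2$ and ceases buying both of player $2$'s goods, gaining $\bar p_1-v_e$. If $\bar p_1<r_h$, player $2$ is selling high quality at a loss and profits by pricing it above $\bar p_1$, gaining $\lambda(r_h-\bar p_1)$. Finally, if $r_h\le\bar p_1\le v_e$, player $2$ raises her threshold above $\bar p_1$, switching on a \emph{net positive} moral term $v_e-\bigl(\lambda r_h+(1-\lambda)r_l\bigr)>0$ while leaving all her trades intact. These three cases exhaust the real line, so $\bar p_1>\bar p_2$ is impossible; by the symmetric argument $\bar p_1=\bar p_2$.

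Writing $\bar p:=\bar p_1=\bar p_2$, a last application of the first fact raises any remaining price strictly below $\bar p$ up to $\bar p$ without disturbing any moral indicator, forcing $p_{1h}=p_{1l}=p_{2h}=p_{2l}=\bar p$, so only the symmetric profiles survive. The main obstacle throughout is tracking the sign reversal in the low-quality moral term: it is what makes raising the low price doubly attractive for the high-threshold player (pushing the asymmetry to its extreme), and it is what forces the final three-way split on $\bar p_1$ relative to $r_h$ and $v_e$, each subcase being closed by a loss-making-trade or moral-activation deviation rather than by a single uniform move.
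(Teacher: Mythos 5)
Your route differs from the paper's in a way worth noting: the paper works exclusively with the player whose goods sell against the \emph{higher} threshold, driving both of her prices up to that threshold and then letting her cut her own threshold and high-quality price by $\epsilon$ --- a discrete moral gain traded against an $\epsilon$-small material loss --- so that the other player's prices never need to be pinned down at all; you instead pin down both players' prices and close each resulting candidate with a case-specific deviation. Most of your steps check out, including the final three-way split on $\bar p_1$: in the sub-case $r_h\le\bar p_1\le v_e$ the moral term you activate is indeed positive, since $v_e\ge r_h>\lambda r_h+(1-\lambda)r_l$ follows from $r_l<r_h$.

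There is, however, one genuine gap, in the step ruling out $p_{2h}=\bar p_2$. You assert that Player 1 is buying the low-quality good ``at the loss-making price $\bar p_1>v_l$'', but nothing in your argument establishes $\bar p_1>v_l$; at that point $\bar p_1$ is only known to exceed $\bar p_2$, and it may perfectly well satisfy $\bar p_1\le v_l$. In that event your deviation (Player 1 lowering her threshold to just below $\bar p_1$) changes no moral indicator and produces a material change of $(1-\lambda)(\bar p_1-v_l)\le 0$, so it is not profitable, and the candidate profiles with $\bar p_2<\bar p_1\le v_l$, $p_{1h}=p_{1l}=p_{2h}=\bar p_2$, $p_{2l}=\bar p_1$ survive your argument. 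The patch is immediate: if $\bar p_1\le v_l<r_l$, Player 2 is selling her low-quality item below cost, so raising $p_{2l}$ strictly above $\bar p_1$ gains her $(1-\lambda)(r_l-\bar p_1)>0$ materially with no moral change (her low-quality indicator was already off, since $p_{2l}=\bar p_1>\bar p_2$). With that sub-case added, your proof goes through and establishes the same conclusion as the paper's.
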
	

\begin{proof}		
	
	Assume without loss of generality, that $\bar p^*_1 < \bar p^*_2$. Then:
	
	(1) Profiles where $p_{1l}^* < \bar p^*_2$ can immediately be discarded as candidate equilibria, since Player 1 can profitably deviate to $p_{1l}' = \bar p^*_2$ and make a material benefit without suffering moral losses.
	
	(2) Within profiles where $p_{1l}^* = \bar p^*_2$, those where $p_{1h}^* < \bar p^*_2$ are also not equilibria, as Player 1 can profitably deviate to $p_{1h}' = \bar p'_1 \in (p_{1h}^*,\bar p^*_2)$ and make a material benefit without suffering moral losses.
	
	(3) From profiles where $p_{1h}^* = p_{1l}^* = \bar p^*_2$, Player 1 can profitably deviate to $\bar p'_1 = p_{1h}' = \bar p^*_2 - \epsilon$. She makes a moral gain and a material loss. The latter can be made smaller than the latter by choosing an appropriate value for $\epsilon$ and thus, the deviation is profitable.
	
	(4) Finally, profiles where $p_{2Q}^* < \bar p^*_1 = p_{1h}^* = p_{1l}^* = \bar p^*_2$ for some $Q \in \{g,b\}$ are not equilibria, as Player 2 can profitably deviate to $p_{2Q}' = \bar p^*_1$, making a material gain without facing moral losses.
\end{proof}

The proof of Lemma \ref{lemma:incompinfo:inef:hm:bad9} entails an interesting remark. Indeed, profiles where $r_h\leq p_{2l}^* = p_{2h}^* = \bar p^*_1 < p_{1l}^* = p_{1h}^* = \bar p^*_2 \leq v^e$ are equilibria in the bargaining game between two \textit{homo oeconomicus}. It is \textit{homo moralis}' ``impulse'' to set a high quality price she herself would be willing to accept that drives her to deviate, even though the deviation has no material consequences on either player, but only a moral gain for the deviating agent.

I turn now to profiles where both players set the same threshold. Lemma \ref{lemma:incompinfo:inef:hm:bad10} shows that the only equilibria with full trade are those where players set the same prices and thresholds and characterises them.   

\begin{lemma}\label{lemma:incompinfo:inef:hm:bad9prime}
	Consider the quality uncertainty game between two \textit{homo moralis} and assume that $0<v_l<r_l<r_h<v_h$. Then, full trade profiles with symmetric prices and thresholds are not equilibria if they are below the high quality cost or above the expected valuation. 
\end{lemma}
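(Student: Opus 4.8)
The plan is to prove that a symmetric full-trade profile survives only if its common price lies in $[r_h,v_e]$, by exhibiting an explicit profitable deviation whenever the price falls below $r_h$ or rises above $v_e$. By Lemma \ref{lemma:incompinfo:inef:hm:bad9} I may restrict attention to profiles in which both players use the same strategy $(p_h,p_l,\overline p)$ with $p_h,p_l\le\overline p$. I first reduce to $p_h=p_l=\overline p=:p^*$: if either price were strictly below $\overline p$, the seller could raise it up to $\overline p$, which strictly increases the corresponding material selling term (the sale to the opponent still occurs since $\overline p\le\overline p$) while leaving the buyer terms and both moral indicators $\mathbb{1}\{p_{ih}\le\overline p_i\}$, $\mathbb{1}\{p_{il}\le\overline p_i\}$ equal to one. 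Hence every candidate symmetric full-trade equilibrium collapses to a single common value $p^*$ for both prices and both thresholds.

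For the case $p^*>v_e$ the plan is to let one player lower her whole triple to $(p^*-\epsilon,p^*-\epsilon,p^*-\epsilon)$. Because all three coordinates move together, both moral indicators stay equal to one and the moral term is unchanged; as a seller she keeps selling both qualities (now at $p^*-\epsilon\le p^*$), losing only an $O(\epsilon)$ amount, namely $\tfrac{1}{2}(1-\kappa)\epsilon$; but her threshold now lies strictly below the opponent's prices, so she stops buying and saves the strictly negative expected buyer surplus, an $O(1)$ gain equal to $\tfrac{1}{2}(1-\kappa)(p^*-v_e)$. Choosing $\epsilon<p^*-v_e$ makes the deviation strictly profitable.

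The delicate case --- and the main obstacle --- is $p^*<r_h$, where the seller trades the high-quality good below cost. A naive deviation that merely raises the high-quality price above the opponent's threshold stops this loss-making sale but also destroys the strictly positive moral high-quality surplus $\tfrac{1}{2}\kappa\lambda(v_h-r_h)$, so it pays off only when $p^*$ is far enough below $r_h$. The key idea is to raise the \emph{threshold} in step with the high-quality price: deviate to $(q,p^*,q)$ with $q>p^*$. Then $p_{ih}'=q>\overline p=p^*$ kills the below-cost sale to the opponent, yielding the material gain $\tfrac{1}{2}(1-\kappa)\lambda(r_h-p^*)>0$, while $\overline p_i'=q$ keeps $p_{ih}'=q\le\overline p_i'$, so the moral high-quality indicator stays equal to one and the moral term is untouched; the low-quality sale ($p^*\le p^*$) and both buyer purchases ($p^*\le q$) are likewise unaffected. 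The net effect is a pure material improvement of $\tfrac{1}{2}(1-\kappa)\lambda(r_h-p^*)>0$, so the profile fails to be an equilibrium for \emph{every} $p^*<r_h$.

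Combining the three steps shows that no symmetric full-trade profile with common price outside $[r_h,v_e]$ is a Nash equilibrium, which is the claim. The only subtle point is the recognition, in the $p^*<r_h$ case, that coupling the threshold to the high-quality price turns what appears to be a material-gain-versus-moral-loss trade-off into a clean material improvement with no moral cost; the $p^*>v_e$ case and the reduction step are routine sign comparisons.
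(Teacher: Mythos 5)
Your proof is correct and takes essentially the same route as the paper's: in the key case $p^*<r_h$ your deviation $(q,p^*,q)$ — raising the high-quality price and the own threshold in tandem so the moral indicators stay at one — is exactly the paper's deviation, and in the case $p^*>v_e$ lowering the whole triple by $\epsilon$ is a minor variant of the paper's deviation (which lowers only the threshold and high-quality price), both trading an $O(\epsilon)$ selling loss against the $O(1)$ gain from ceasing to buy above $v_e$. Your preliminary reduction to a single common value $p^*$ is harmless extra work, since the paper simply reads ``symmetric prices and thresholds'' as all six coordinates being equal.
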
	

\begin{proof}
	Define $p^* = \bar p^*_1 = p_{1l}^* = p_{1h}^* = \bar p^*_2 = p_{2l}^* = p_{2h}^*$. 
	
	(1) Assume that $p^* < r_h$. Then, without loss of generality, Player 1 can profitably deviate to $p_{1h}' = \bar p'_1 > p^*$. She makes a material gain without suffering moral losses.
	
	(2) Assume alternatively that $v^e < p^*$. Then, without loss of generality, Player 1 can profitably deviate to $\bar p'_1 = p_{1h}' = p^*-\epsilon$. She makes no moral losses, makes a material gain from not buying both qualities for a price larger than the expected valuation and a loss from reducing her high quality price. However, the loss can be made smaller than the gain by choosing an appropriate value for $\epsilon$. 
\end{proof}

%\noindent \textbf{Characterisation of equilibria with full trade}:
\begin{lemma}\label{lemma:incompinfo:inef:hm:bad10}	
	Consider the quality uncertainty game between two \textit{homo moralis} and assume that $0<v_l<r_l<r_h<v_h$. Then, full trade profiles are Nash equilibria in pure strategies if and only if:
	
	$$ \lambda \geq \frac{r_h- v_l}{v_h- v_l},$$
	
	$$\kappa \leq \min\left\{\kappa_a(\lambda) ,\kappa_b(\lambda)\right\}.$$	
	
	They are characterised by:
	$$ p^*=\bar p^*_i = p_{jh}^* = p_{jl}^* \in \left[\max\left\{r_h, \frac{r_l - \kappa v_l}{1-\kappa}\right\},v^e + \frac{\kappa (1-\lambda)}{1-\kappa}(v_l - r_l)\right], \forall i,j\in\{1,2\},i\neq j.$$

\end{lemma}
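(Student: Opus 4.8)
The plan is to build on the two reductions already established. By Lemma \ref{lemma:incompinfo:inef:hm:bad9} the only surviving full-trade candidates are the fully symmetric ones, so I may write $p^* := \bar p^*_i = p_{ih}^* = p_{il}^*$ for both players; by Lemma \ref{lemma:incompinfo:inef:hm:bad9prime} any such candidate with $p^* < r_h$ or $p^* > v_e$ is not an equilibrium. I would therefore restrict to $r_h \leq p^* \leq v_e$ and, invoking symmetry, check only Player $1$'s deviations. At the candidate profile all indicators are active, so the material and moral brackets coincide and $U_1 = \frac{1}{2}\big[\lambda(v_h - r_h) + (1-\lambda)(v_l - r_l)\big]$; note that the low-quality self-surplus $(1-\lambda)(v_l - r_l)$ is strictly negative under $v_l < r_l$, which is what drives the two binding deviations.

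The heart of the proof is to isolate exactly two threatening deviations and compute when each fails. First, raising the low-quality price above $p^*$ forfeits the seller margin $(1-\lambda)(p^* - r_l)>0$ but removes the negative moral term $(1-\lambda)(v_l - r_l)$; comparing utilities shows this is profitable iff $p^* < \frac{r_l - \kappa v_l}{1-\kappa}$, so ruling it out gives the lower endpoint $p^* \geq \frac{r_l - \kappa v_l}{1-\kappa}$. Second, marginally lowering the threshold and the high-quality price to $p^* - \epsilon$ while keeping the low-quality price at $p^*$ stops the agent from buying either quality (cost $v_e - p^*$, with only an $O(\epsilon)$ loss on the high-quality sale) yet preserves both sales and drops only the low-quality self-trade; letting $\epsilon \to 0$, the change is $\frac{1-\kappa}{2}(p^* - v_e) + \frac{\kappa}{2}(1-\lambda)(r_l - v_l)$, which is positive iff $p^* > v_e + \frac{\kappa(1-\lambda)}{1-\kappa}(v_l - r_l)$, yielding the upper endpoint.

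I would then dismiss every remaining deviation as dominated or strictly loss-making: raising the high-quality price loses both a positive seller margin and the positive moral term $\lambda(v_h - r_h)$; lowering any price while still trading is a pure material loss; raising the threshold changes nothing; and the mixed adjustments (additionally halting the low-quality sale, or additionally abandoning the high-quality self-trade) are each dominated by one of the two displayed deviations, since they sacrifice a strictly positive margin or moral term with no offsetting gain. Combining the two bounds with $r_h \leq p^* \leq v_e$ gives the stated interval $p^* \in \big[\max\{r_h, \frac{r_l - \kappa v_l}{1-\kappa}\},\, v_e + \frac{\kappa(1-\lambda)}{1-\kappa}(v_l - r_l)\big]$. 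Finally I would translate non-emptiness into parameters: requiring the upper endpoint to dominate $r_h$ rearranges (using $v_l - r_l < 0$ and $\lambda > \lambda_1$) to $\kappa \leq \kappa_b(\lambda) = \frac{\lambda(v_h - v_l) + v_l - r_h}{\lambda(v_h - r_l) + r_l - r_h}$, requiring it to dominate $\frac{r_l - \kappa v_l}{1-\kappa}$ rearranges to $\kappa \leq \kappa_a(\lambda)$, and both force $v_e \geq r_h$, i.e. $\lambda \geq \lambda_e = \frac{r_h - v_l}{v_h - v_l}$; this reproduces exactly $\lambda \geq \frac{r_h - v_l}{v_h - v_l}$ and $\kappa \leq \min\{\kappa_a(\lambda), \kappa_b(\lambda)\}$.

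The hardest part will be the exhaustiveness of the deviation analysis rather than any single calculation: because each player controls three variables $(p_{ih}, p_{il}, \bar p_i)$ whose effects enter through discontinuous indicators, I must argue carefully that the two displayed deviations are the binding ones and that every other combined move is weakly dominated by them or by not deviating. The subsequent algebra reconciling the two price bounds with the closed forms of $\kappa_a(\lambda)$ and $\kappa_b(\lambda)$ is routine, but the sign of $v_l - r_l$ must be tracked throughout to keep the inequality directions correct.
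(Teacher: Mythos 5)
Your proposal is correct and follows essentially the same route as the paper's proof: reduce to symmetric full-trade profiles with $p^* \in [r_h, v_e]$ via Lemmas \ref{lemma:incompinfo:inef:hm:bad9} and \ref{lemma:incompinfo:inef:hm:bad9prime}, isolate the same two binding deviations (raising the low-quality price, and jointly lowering the threshold and high-quality price by $\epsilon$), obtain the identical bounds $\frac{r_l-\kappa v_l}{1-\kappa}$ and $v_e + \frac{\kappa(1-\lambda)}{1-\kappa}(v_l-r_l)$, and translate non-emptiness of the resulting interval into $\lambda \geq \lambda_e$ and $\kappa \leq \min\{\kappa_a(\lambda),\kappa_b(\lambda)\}$. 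Your explicit dismissal of the remaining deviations and your shortcut to $\lambda \geq \lambda_e$ (via $r_h \leq D \leq v_e$, rather than the paper's analysis of the roots of $\kappa_a$ and $\kappa_b$) are minor presentational differences, not a different argument.
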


\begin{proof}	
	
	%	(1) Any full trade profile with equal thresholds where a player is setting a price strictly below them can be discarded as an equilibrium. It suffices for such a player to increase that price to the common threshold and thus be able to sell for a higher price without suffering any ``moral'' losses. That leaves only profiles where $p^*=\bar p^*_i = p_{jh}^* = p_{jl}^*, \forall i,j\in\{1,2\},i\neq j$. 
	%	
	%	(2) I can further restrict these profiles to those where $r_h\leq p^*\leq v^e$. If it was not the case and $p^*< r_h$, either player could profitable deviate just increasing her high quality price and her threshold to match it. she would thus avoid selling her high quality item to the other player for a price below cost, while not experiencing any `moral'' losses. In turn, if $v^e< p^*$, then either agent can unilaterally deviate by marginally reducing her threshold and good quality price. The latter means that she will sell her good quality item for a slightly lower price. Notwithstanding, as the difference between the starting price and the new price can be made arbitrarily small, the deviation is profitable.	
	
	When considering only profiles where $r_h\leq p^*\leq v^e$, there are no unilateral deviations that could prove beneficial for all values of $\kappa$ and $\lambda$. Consider (without loss of generality) Player 1. The first potentially profitable deviation could be for her to increase her low quality price. In this way, she would stop trading the bad quality good with ``herself'' (thus making a ``moral'' gain), but she would also stop selling it to Player 2, making a material loss. The second and final one is to slightly reduce her threshold and good quality price. This would allow her to stop trading the bad quality item with ``herself'' and reap a ``moral'' benefit. In contrast, she would stop buying from Player 2 at prices below the expected valuation $v^e$ and would also sell her high quality good for an arbitrarily smaller price. Player 1's utility at the candidate profile is:
	
	\begin{equation}\label{eq_ut_pmoral_2_inef0}
	\begin{aligned}
	U^{hm}\big((p^*,p^*,p^*),(p^*,p^*,p^*)\big)&= \dfrac{1-\kappa}{2}\bigg[\lambda (v_h - p^*)  +  (1-\lambda)(v_l - p^*) + \lambda (p^*-r_h) + (1-\lambda)( p^*-r_l) \bigg] +\\ 
	&\dfrac{\kappa}{2}\Bigg[\lambda (v_h-r_h) + (1-\lambda) (v_l-r_l)\bigg].
	\end{aligned}
	\end{equation}
	
	\noindent Meanwhile, her utility at respectably the first and second deviation is shown in expressions (\ref{eq_ut_pmoral_2_inef0_dev1}) and (\ref{eq_ut_pmoral_2_inef0_dev2}):
	
	\begin{equation}\label{eq_ut_pmoral_2_inef0_dev1}
	\begin{aligned}
	U^{hm}\big((p^*,p_{1l}',p^*),(p^*,p^*,p^*)\big)&= \dfrac{1-\kappa}{2}\bigg[\lambda (v_h - p^*)  + (1-\lambda)(v_l - p^*) + \lambda( p^*-r_h )\bigg]+ \\ 
	&\dfrac{\kappa}{2}\bigg[\lambda (v_h-r_h) \bigg]
	\end{aligned}
	\end{equation} 
	\begin{equation}\label{eq_ut_pmoral_2_inef0_dev2}
	\begin{aligned}
	U^{hm}\big((p_{1h}',p^*,\bar p'_1),(p^*,p^*,p^*)\big)&= \dfrac{1-\kappa}{2}\bigg[ \lambda (p_{1h}'-r_h) + (1-\lambda)(p^*-r_l) \bigg]+ \\ 
	&\dfrac{\kappa}{2}\Bigg[\lambda (v_h - r_h)\bigg]
	\end{aligned}
	\end{equation} 
	
	\noindent The first deviation is profitable if and only if:
	\begin{equation}\label{eq_ut_pmoral1_2_inef_price1}
	p^*<\frac{r_l - \kappa v_l}{1-\kappa}= B.
	\end{equation}
	
	\noindent In turn, the second deviation is profitable whenever: 
	$$
	(1-\kappa)\bigg[\lambda (v_h - p^*)  +  (1-\lambda)(v_l - p^*) + \lambda (p^* - r_h)  \bigg] + \kappa(1-\lambda)(v_l - r_l) < (1-\kappa)\lambda(p_{1h}'-r_h) 
	$$
	\noindent Taking the limit  as $p_{1h}'$ approaches $p^*$ of the right hand side in the above expression and solving for $p^*$ gives:
	%
	%	$\kappa > \frac{v_l-p^*+\lambda (v_h - v_l + p^* - p_{1h}')}{r_l-p^*+\lambda (v_h - r_l + p^* - p_{1h}')}$. Taking the limit of this expression when $p_{1h}'\rightarrow p^*$ and solving for $p^*$, the deviation is convenient to the player if:
	%
	%
	\begin{equation}\label{eq_ut_pmoral1_2_inef_price2}
	p^* > v^e + \frac{\kappa (1-\lambda)(v_l - r_l)}{1-\kappa}= D.
	\end{equation}
	
	\noindent	Putting together all of the above, we have full-trade equilibria if and only if:
	\begin{equation}\label{condition4}
	p^* \in \left[\max \left\{r_h,B\right\}, \min \left\{v^e,D \right\} \right].
	\end{equation}	
	
	\noindent Delving into (\ref{condition4}), first notice that $D\leq v^e$ for all $\lambda \in [0,1]$ and $\kappa \in (0,1)$. As a consequence, we can write it as:
	\begin{equation}\label{condition5}
	p^* \in \left[\max \left\{r_h,B\right\}, D \right].
	\end{equation}
	
	\noindent As a consequence, a necessary and sufficient condition for these equilibria to exist is that both $B\leq D$ and $r_h \leq D.$  	
	%\noindent In addition, we have that $B \leq v^e$ if and only if:  
	%\begin{equation}\label{condition5}
	%\kappa \leq \frac{\lambda (v_h-v_l) + v_l -r_l}{\lambda (v_h-v_l)} = \bar \kappa_2.
	%\end{equation}
	
	\noindent We have that $B\leq D$ if and only if:  
\end{proof}
\begin{equation}\label{condition6}
\kappa \leq \frac{\lambda (v_h - v_l) + v_l - r_l}{\lambda (v_h - r_l) + r_l - v_l} =  \kappa_a(\lambda),
\end{equation}

\noindent while $r_h\leq D$: 
\begin{equation}\label{condition7}
\kappa \leq \frac{\lambda (v_h - v_l)+v_l - r_h}{\lambda (v_h - r_l) + r_l - r_h} = \kappa_b(\lambda).
\end{equation}

The function $\kappa_a(\lambda)$ is continuous and strictly increasing in $\lambda$. Its root is at $\lambda_a = \frac{r_l-v_l}{v_h-v_l} \in (0,1)$. For $\lambda>\lambda_a$, $\kappa_b(\lambda)>0 \iff \lambda>\lambda_e$. Finally, $\lambda_a<\lambda_e$. Therefore, the necessary and sufficient condition to have $ \kappa_a(\lambda)>0$ and $\kappa_b(\lambda)>0$ is $\lambda > \lambda_e$.

Notice that given $\lambda > \lambda_e$, $\kappa_a(\lambda)\geq \kappa_b(\lambda)$ (respectively, $\kappa_a(\lambda) < \kappa_b(\lambda)$) if and only if $\lambda \leq \lambda_2$  (respectively, $\lambda > \lambda_2$).

We can thus conclude with two necessary and sufficient conditions for the existence of equilibria with full trade: $$ \lambda \geq \lambda_e \text{ and }\kappa \leq \min\{\kappa_a(\lambda),\kappa_b(\lambda)\}.$$

\begin{customprop}{4.5*}\label{prop45star}
	Consider the quality uncertainty game between two \textit{homo moralis} and assume that $0<v_l<r_l<r_h<v_h$.
	
	\begin{itemize}
		
		\item Profiles with full trade in one contingent market and trade of only the good quality in the other one are Nash equilibria in pure strategies if and only if:
		
		$$\lambda \geq \lambda_2 $$
		
		$$\kappa \in [\kappa_1 , \kappa_a(\lambda)] $$
		
		They are characterised by: $$\frac{r_l - \kappa v_l}{1-\kappa}=p_{ig}^*=p_{jh}^*=p_{ib}^*=\bar p^*_j = \bar p^*_i < p_{jl}^* \text{ for } i,j\in\{1,2\},i\neq j.$$ 
		
		\item Full trade profiles are Nash equilibria in pure strategies if and only if:
		
		$$ \lambda \geq \frac{r_h- v_l}{v_h- v_l},$$
		
		$$\kappa \leq \min\left\{\kappa_a(\lambda) ,\kappa_b(\lambda)\right\}.$$	
		
		They are characterised by:
		$$ p^*=\bar p^*_i = p_{jh}^* = p_{jl}^* \in \left[\max\left\{r_h, \frac{r_l - \kappa v_l}{1-\kappa}\right\},v^e + \frac{\kappa (1-\lambda)}{1-\kappa}(v_l - r_l)\right], \forall i,j\in\{1,2\},i\neq j.$$ 
		
	\item Profiles where only good quality is traded in both contingent markets are Nash equilibria in pure strategies if and only if:
		\begin{equation}\label{cond1}
		\lambda v_h + (1-\lambda )r_l\geq r_h \iff \lambda \geq \frac{r_h-r_l}{v_h-r_l}, \text{ and}
		\end{equation}
		\begin{equation}\label{cond2}
		\frac{r_l - \kappa v_l}{1-\kappa} \geq r_h \iff \kappa \geq \frac{r_h-r_l}{r_h-v_l},
		\end{equation}
		\noindent They are characterised by:
		$$ r_h  \leq p^*= p_{1h}^* = p_{2h}^* = \bar p^*_2=\bar p^*_1\leq \min\left \{(1-\lambda)r_l + \lambda v_h,\frac{r_l - \kappa v_l}{1-\kappa}\right \} \text{ and }p^*<\min\{p_{2l}^*,p_{1l}^*\}.$$
	\end{itemize}
\end{customprop}
%%%%%%%%%%%%%%%%%%%%%%%%%%%%%%%%%%%%%%%%%%%%%%%%%%%%%%%%%%%%%%%%%%%%%%%%%%%%%%%%%%%%%%%%%%%%%%%%%%%%%%%%%%%%%%%%%%%%%%%%%%%%%%%%%%%%%%%%%%%%%%%%%%%%%%%%%%%%
\begin{proposition}\label{prop:incompinfo:inef:hk:maineq}
	Consider the quality uncertainty game between two \textit{homo kantiensis} and assume that $0<v_l<r_l<r_h<v_h$. Then, the set of Nash equilibria in pure strategies is $\left\{(p_{ig},p_{ib},\overline{p}_i)):p_{ig}^* \leq \bar p^*_i < p_{ib}^*\} \forall i \in \{1,2\} \right\}$.
\end{proposition}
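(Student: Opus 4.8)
The plan is to exploit the fact that with $\kappa = 1$ the utility in (\ref{eq:compinfo:hm:ut4}) collapses to its moral component, mirroring the arguments already used for Propositions \ref{proposition:compinfo:hk:maineq} and \ref{proposition:valunc:hk:maineq}. Setting $\kappa = 1$ annihilates the first (consequentialist) bracket, so that, writing $p_{ig},p_{ib}$ for player $i$'s high- and low-quality prices,
\[
U = \tfrac{1}{2}\left[\lambda\,\mathbb{1}\{p_{ig}\leq \bar p_i\}(v_h - r_h) + (1-\lambda)\,\mathbb{1}\{p_{ib}\leq \bar p_i\}(v_l - r_l)\right].
\]
The decisive structural observation is that this expression depends only on player $i$'s own triple $(p_{ig},p_{ib},\bar p_i)$ and is completely independent of the opponent's strategy. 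Consequently each player faces an individual maximization problem, her best-reply correspondence is constant in the rival's play, and the Nash equilibria are precisely the profiles in which both players independently maximize $U$ (indeed, they are dominant-strategy equilibria).

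Second, I would solve that individual maximization. Under the maintained assumption $0 < v_l < r_l < r_h < v_h$ the two net surpluses have opposite signs: $v_h - r_h > 0$ while $v_l - r_l < 0$. Hence $U$ is maximized by switching the high-quality indicator on and the low-quality indicator off, that is, by choosing $p_{ig} \leq \bar p_i$ together with $p_{ib} > \bar p_i$; the attained maximum is $\tfrac{1}{2}\lambda(v_h - r_h)$. Any strategy violating either inequality is strictly dominated: if $p_{ig} > \bar p_i$ the player forgoes the positive high-quality surplus, while if $p_{ib} \leq \bar p_i$ she incurs the negative low-quality surplus. These two conditions combine into the single chain $p_{ig} \leq \bar p_i < p_{ib}$, which is exactly the claimed characterization of each player's optimum.

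Finally I would assemble the equilibrium set: since the best-reply set of each player is $\{(p_{ig},p_{ib},\bar p_i):p_{ig}\leq \bar p_i < p_{ib}\}$ independently of the opponent, the Nash equilibria are all profiles in which both players lie in this set, i.e. $p_{ig}^*\leq \bar p^*_i < p_{ib}^*$ for all $i\in\{1,2\}$. I do not expect any genuine obstacle here, as this is the most direct of the $\kappa = 1$ results. The only points demanding care are notational (the translation $g\leftrightarrow h$, $b\leftrightarrow l$ between the proposition statement and Definition \ref{def:adv}) and the sign reversal relative to Section \ref{subsec:incompinfo:quality}: because $v_l < r_l$ in the present regime, the purely Kantian agent now deliberately prices the low-quality good out of her own reach, so that the only exchange she would will to be universal is that of the high-quality item. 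As in the earlier $\kappa = 1$ results, whether actual trade materializes still hinges on the interaction of own prices with the rival's threshold, but this is irrelevant for pinning down the equilibrium set.
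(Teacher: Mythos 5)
Your proposal is correct and follows essentially the same route as the paper's own (much terser) proof: with $\kappa=1$ the utility reduces to the moral term, which depends only on the player's own triple, and since $v_h-r_h>0$ while $v_l-r_l<0$ it is maximized exactly when $p_{ig}^*\leq \bar p^*_i < p_{ib}^*$, independently of the opponent. Your expanded treatment (noting the dominant-strategy structure and assembling the product set) simply makes explicit what the paper compresses into one sentence.
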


\begin{proof}
	Given that $r_h<v_h$ and $r_l>v_l$, the agent's utility function is maximised if and only if $p_{ig}^* \leq \bar p^*_i < p_{ib}^*$.
\end{proof}

\begin{corollary}\label{cor:incompinfo:inef:hk:maineq}
	The equilibria of the game between two \textit{homo kantiensis} always feature trade of the good quality item in at least one contingent market and no trade of the low quality good in the other one.
\end{corollary}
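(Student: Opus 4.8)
The plan is to read the claim off directly from the explicit equilibrium set already pinned down in Proposition \ref{prop:incompinfo:inef:hk:maineq}, so that no optimization is needed: everything reduces to comparing prices and thresholds across the two contingent markets. Recall that in every pure-strategy equilibrium $p_{ig}^* \leq \bar p_i^* < p_{ib}^*$ holds for each $i\in\{1,2\}$, and that realized trade in a contingent market is governed by comparing the \emph{seller's} posted prices with the \emph{buyer's} threshold: in $G^1$ (Player $1$ selling) the good (resp.\ bad) quality is exchanged iff $p_{1g}^*\leq\bar p_2^*$ (resp.\ $p_{1b}^*\leq\bar p_2^*$), and symmetrically in $G^2$ the relevant comparison is against $\bar p_1^*$. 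The one point to keep straight is this cross-indexing: each agent's condition from the proposition involves \emph{her own} threshold, whereas realized trade involves her \emph{opponent's} threshold.

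First I would use the symmetry of the game to assume without loss of generality that $\bar p_1^*\leq\bar p_2^*$. Then in the contingent market $G^1$ the good quality is traded, since $p_{1g}^*\leq\bar p_1^*\leq\bar p_2^*$; this establishes trade of the high-quality item in at least one contingent market.

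Next I would turn to the other market $G^2$ and show the low quality is not exchanged there: because $p_{2b}^*>\bar p_2^*\geq\bar p_1^*$, the low-quality price strictly exceeds the governing buyer threshold $\bar p_1^*$, so the low-quality good does not change hands in $G^2$. As $G^1$ and $G^2$ are distinct markets, the two conclusions together yield the statement. The boundary case $\bar p_1^*=\bar p_2^*$ is absorbed by the same chain of (weak) inequalities, so it needs no separate treatment.

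There is no genuine obstacle here, as the corollary is an immediate consequence of the characterization; the only thing requiring care is the bookkeeping, namely correctly matching each contingent market to the opponent's threshold that governs its trade, and verifying that the market in which good quality is guaranteed to trade is indeed different from the one in which low quality is guaranteed not to trade. The WLOG ordering of the two thresholds makes both of these transparent.
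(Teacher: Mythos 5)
Your proof is correct and follows essentially the same route as the paper: both arguments order the two thresholds (WLOG $\bar p_1^*\leq\bar p_2^*$) and then chain each player's own equilibrium condition $p_{ig}^*\leq\bar p_i^*<p_{ib}^*$ from Proposition \ref{prop:incompinfo:inef:hk:maineq} with that ordering to get good-quality trade in one contingent market and no low-quality trade in the other. The paper states this very tersely; you simply spell out the cross-indexing (opponent's threshold governs realized trade) that the paper leaves implicit.
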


Corollary \ref{cor:incompinfo:inef:hk:maineq} is due to the fact that it will always be true that either $\bar p^*_1\leq \bar p^*_2$ or $\bar p^*_1>\bar p^*_2$. As a consequence, these equilibria always feature trade of the good quality item in at least one of the contingent markets and no trade of the low quality good in the other one. I can thus assert that profiles where no player is trading with the other one are not equilibria in this model, and neither are situations where both Players trade both items with each other (full trade).

\section{Appendix to Section \ref{sec:alt}}

\noindent \textbf{Proof of Proposition \ref{prop:compinfo:alt:maineq}}

 The prices set by a given player are completely independent of her threshold and only affect her utility through their interaction with her rival's threshold. Thus, the bargaining game's equilibrium set is all the pairs of profiles that can be formed from the equilibria of the contingent game. 

Consider now the contingent game. The buyer's and seller's best reply correspondences are:

	\begin{equation*}
	\bar p^{BR} = \argmax_{\bar p_1} U^{B}(\bar p;p) = \begin{cases} 
	\overline{p}\geq p & \text{if }  (v - p) + \alpha(p-r) \geq 0\iff \dfrac{v-\alpha r}{1-\alpha}=v^{alt}\geq p\\
	\overline{p} < p & \text{otherwise}\\
	\end{cases} 
	\end{equation*}
	
	\begin{equation*}
	p^{BR} = \argmax_{p} U^{S}(p;\bar p) = \begin{cases} 
	\overline{p}= p & \text{if }   (v - p) + \alpha(p-r) \geq 0\iff  \dfrac{r-\alpha v}{1-\alpha}=r^{alt}\leq p\\
	\overline{p} < p & \text{otherwise} \\
	\end{cases}
	\end{equation*}
	
	These correspondences intercept whenever $\overline{p}^*= p^*\in[r^{alt},v^{alt}]$ or $\overline{p}^*<r^{alt},p^*>v^{alt}$. When $\alpha \geq \frac{r}{v}$, $r^{alt}\leq 0$ and therefore there does not exist a non-negative threshold that can be lower than $r^{alt}$.
	
%%%%%%%%%%%%%%%%%%%%%%%%%%%%%%%%%%%%%%%%%%%%%%%%%
%%%%%%%%%%%%%%%%%%%%%%%%%%%%%%%%%%%%%%%%%%%%%%%%%

	\noindent \textbf{Proof of Proposition \ref{proposition:incompinfo:valunc:alt}}
	
	The prices set by a given player are completely independent of her threshold and only affect her utility through their interaction with her rival's threshold. Thus, the bargaining game's equilibrium set is all the pairs of profiles that can be formed from the equilibria of the contingent game. 
	
	The buyer's best reply correspondences in the contingent game are:

	\begin{equation*}
	\bar p^{h}_{BR}= \argmax_{\bar  p^{h}} U^{B}(\bar p^{l} ,\bar p^{h}; p) = \begin{cases} \label{eq_buyer_h_br_valunc}
	\bar p^{h}_{BR}= p & \text{if }  p \leq v_h^{alt} \\
	\bar p^{h}_{BR}= p & \text{if }  p > v_h^{alt} \\
	\end{cases}
	\end{equation*}
	
	\begin{equation*}
	\bar p^{l}_{BR}= \argmax_{\bar  p^{l}} U^{B}(\bar p^{l} ,\bar p^{h}; p) = \begin{cases} \label{eq_buyer_l_br_valunc}
	\bar p^{l}_{BR}= p & \text{if }  p \leq v_l^{alt} \\
	\bar p^{l}_{BR}= p & \text{if }  p > v_l^{alt} \\
	\end{cases}
	\end{equation*}
	
	The seller's best reply correspondence in the contingent game is:	\begin{equation*}
	\begin{cases}
	p^{BR} > \max\{\bar p^{h},\bar p^{l}\} & \text{ if }\max\{\bar p^{h},\bar p^{l}\} < r_h^{alt} \\
	
	p^{BR} = \bar p^{h} & \text{ if } \bar p^{h}\geq r_h^{alt},\bar p^{l} \leq \min\{\bar p^{h},\lambda \bar p^{h} + (1-\lambda)\}r_l^{alt}\\ 
	
	 & \text{ or } \bar p^{h}\geq r_e^{alt},\bar p^{l} \in [\bar p^{h},\frac{p^{h*}_j-r_h^{alt}}{1-\lambda}]\\ 
	
	p^{BR} = \bar p^{l} & \text{ if } \bar p^{h}\geq r_l^{alt},\bar p^{l} \in [\frac{p^{h*}_j-r_h^{alt}}{1-\lambda},\bar p^{h}]\\ 
	
		 & \text{ or } \bar p^{l} \geq r_l^{alt},\bar p^{h} \leq \left(1-\lambda\right)\bar p^{l}+\frac{\lambda}{\left(1-\alpha\right)}r_h^{alt}\\ 
		 
	\end{cases}
	\end{equation*}
	
	They intersect at the equilibrium profiles described.

\end{appendices}
\end{document}